\newcommand{\WO}{\textsf{W[1]}}
\newcommand{\Oh}{\mathcal{O}}
\newcommand{\TJ}{\textsf{TJ}}
\newcommand{\TS}{\textsf{TS}}
\newtheorem{theorem}{Theorem}
\newtheorem{observation}{Observation}
\newtheorem{question}{Question}
\newtheorem{proposition}{Proposition}
\newtheorem{lemma}{Lemma}
\newtheorem{corollary}{Corollary}
\title{On girth and the parameterized complexity of token sliding\\ and token jumping\footnote{The first two authors are supported by ANR project GrR (ANR-18-CE40-0032). The third author is supported in part by the Slovenian Research Agency (research project N1-0102). The fifth author is supported by URB project ``A theory of change through the lens of reconfiguration''.}}
\author{
Valentin Bartier
\thanks{Univ. Grenoble Alpes, CNRS, Grenoble INP, G-SCOP, France, e-mail: valentin.bartier@grenoble-inp.fr}
\and
Nicolas Bousquet
\thanks{CNRS, LIRIS, Universit\'e de Lyon, Universit\'e Claude Bernard Lyon 1, Lyon, France, e-mail: nicolas.bousquet@univ-lyon1.fr}
\and
Cl\'ement Dallard
\thanks{FAMNIT, University of Primorska, Koper, Slovenia, e-mail: clement.dallard@famnit.upr.si}
\and
Kyle Lomer
\thanks{Department of Computer Science, American University of Beirut, Lebanon, e-mail: kjl00@mail.aub.edu}
\and
Amer E. Mouawad
\thanks{Department of Computer Science, American University of Beirut, Lebanon, e-mail: aa368@aub.edu.lb}
}
\date{}
\begin{document}
\maketitle
\thispagestyle{empty}

\begin{abstract}
In the {\sc Token Jumping} problem we are given a graph $G = (V,E)$ and two independent sets $S$ and $T$ of $G$, each of size $k \geq 1$.
The goal is to determine whether there exists a sequence of $k$-sized independent sets in $G$, $\langle S_0, S_1, \ldots, S_\ell \rangle$, such that for every $i$,
$|S_i| = k$, $S_i$ is an independent set, $S = S_0$, $S_\ell = T$, and $|S_i \Delta S_{i+1}| = 2$.
In other words, if we view each independent set as a collection of tokens placed on a subset
of the vertices of $G$, then the problem asks for a sequence of independent sets which transforms $S$ to $T$ by individual token jumps which
maintain the independence of the sets. This problem is known to be PSPACE-complete on very restricted graph classes, e.g., planar bounded degree graphs and graphs of
bounded bandwidth. A closely related problem is the {\sc Token Sliding} problem, where instead of allowing a token to jump to any vertex of the graph we instead
require that a token slides along an edge of the graph. {\sc Token Sliding} is also known to be PSPACE-complete on the aforementioned graph classes.
We investigate the parameterized complexity of both problems on several graph classes, focusing on the effect of excluding certain cycles from the input graph.
In particular, we show that both {\sc Token Sliding} and {\sc Token Jumping} are fixed-parameter tractable on $C_4$-free bipartite graphs when parameterized by $k$.
For {\sc Token Jumping}, we in fact show that the problem admits a polynomial kernel on $\{C_3,C_4\}$-free graphs. In the case of {\sc Token Sliding}, we also
show that the problem admits a polynomial kernel on bipartite graphs of bounded degree. We believe both of these results to be of independent interest.
We complement these positive results by showing that, for any constant $p \geq 4$, both problems are W[1]-hard on $\{C_4, \dots, C_p\}$-free graphs
and {\sc Token Sliding} remains W[1]-hard even on bipartite graphs.
\end{abstract}

\section{Introduction}\label{sec-intro}
Many algorithmic questions present themselves in the following form:
given the description of a system state and the description of a state we would ``prefer'' the system to be in, is it possible to transform the system
from its current state into the more desired one without ``breaking'' the system in the process? Such questions, with some generalizations and specializations, have
received a substantial amount of attention under the so-called \emph{combinatorial reconfiguration framework}~\cite{DBLP:journals/tcs/BrewsterMMN16,H13,Wrochna15}.
Historically, the study of reconfiguration questions predates the field of computer science, as
many classic one-player games can be formulated as reachability questions~\cite{JS79,KPS08}, e.g., the $15$-puzzle and Rubik's cube.
More recently, reconfiguration problems have emerged from computational problems in
different areas such as graph theory~\cite{CHJ08,IDHPSUU11,IKD12}, constraint satisfaction~\cite{GKMP09,DBLP:conf/icalp/MouawadNPR15},
computational geometry~\cite{DBLP:journals/comgeo/LubiwP15}, and
even quantum complexity theory~\cite{DBLP:conf/icalp/GharibianS15}. We refer the reader to the surveys by van den Heuvel~\cite{H13}
and Nishimura~\cite{DBLP:journals/algorithms/Nishimura18} for more background on combinatorial reconfiguration.

\paragraph*{Independent Set Reconfiguration.} In this work, we focus on the reconfiguration of independent sets.
Given a simple undirected graph $G$, a set of vertices $S \subseteq V(G)$ is an \emph{independent set} if the vertices
of this set are all pairwise non-adjacent. Finding an independent set of maximum cardinality, i.e., the {\sc Independent Set} problem,
is a fundamental problem in algorithmic graph theory and is known to be not only \NP-hard, but also \WO-hard and
not approximable within $\Oh(n^{1-\epsilon})$, for any $\epsilon > 0$, unless $\P = \NP$~\cite{DBLP:journals/toc/Zuckerman07}.
Moreover, {\sc Independent Set} is known to remain \WO-hard on graphs excluding $C_4$ (the cycle on four vertices) as an induced subgraph~\cite{DBLP:conf/iwpec/BonnetBCTW18}.

We view an independent set as a collection of tokens placed on the vertices of a graph such that no two tokens are adjacent.
This gives rise to (at least) two natural adjacency relations between independent sets (or token configurations), also called \emph{reconfiguration steps}.
These two reconfiguration steps, in turn, give rise to two combinatorial reconfiguration problems.
In the {\sc Token Jumping} (TJ) problem, introduced by Kami\'{n}ski et al.~\cite{KMM12}, a single reconfiguration step consists of first removing a token on some vertex $u$ and then immediately adding it back on any other vertex $v$, as long as no two tokens become adjacent. The token is said to \emph{jump} from vertex $u$ to vertex $v$.
In the {\sc Token Sliding} (TS) problem, introduced by Hearn and Demaine~\cite{DBLP:journals/tcs/HearnD05}, two independent sets are adjacent if one can be obtained from the other by a token jump from vertex $u$ to vertex $v$ with the additional requirement of $uv$ being an edge of the graph. The token is then said to \emph{slide} from vertex $u$ to vertex $v$ along the edge $uv$. Note that, in both the TJ and TS problems, the size of independent sets is fixed.
Generally speaking, in the {\sc Token Jumping} and {\sc Token Sliding} problems,
we are given a graph $G$ and two independent sets $S$ and $T$ of $G$. The goal is to determine
whether there exists a sequence of reconfiguration steps -- a \emph{reconfiguration sequence} -- that
transforms $S$ into $T$ (where the reconfiguration step depends on the problem).

Both problems have been extensively studied under the combinatorial reconfiguration framework, albeit under
different names~\cite{DBLP:journals/corr/BonamyB16,DBLP:conf/swat/BonsmaKW14,MR3295823,DBLP:conf/isaac/Fox-EpsteinHOU15,DBLP:conf/tamc/ItoKOSUY14,DBLP:journals/ieicet/ItoNZ16,KMM12,DBLP:conf/wads/LokshtanovMPRS15,DBLP:conf/isaac/MouawadNR14}.
It is known that both problems are \PSPACE-complete, even on restricted graph classes such as
graphs of bounded bandwidth (and then pathwidth)~\cite{DBLP:journals/corr/Wrochna14} and planar graphs~\cite{DBLP:journals/tcs/HearnD05}. In general {\sc Token Sliding} is more complicated to decide than {\sc Token Jumping}. However {\sc Token Sliding} and {\sc Token Jumping} can be decided in polynomial time on trees~\cite{MR3295823}, interval graphs~\cite{DBLP:journals/corr/BonamyB16}, bipartite permutation and bipartite distance-hereditary graphs~\cite{DBLP:conf/isaac/Fox-EpsteinHOU15}  or line graphs~\cite{IDHPSUU11}.
Lokshtanov and Mouawad~\cite{DBLP:journals/talg/LokshtanovM19} showed that, in bipartite graphs, {\sc Token Jumping} is \NP-complete
while {\sc Token Sliding} remains \PSPACE-complete. In split graphs, {\sc Token Jumping} is a trivial problem while {\sc Token Sliding} is PSPACE-complete~\cite{DBLP:conf/stacs/Belmonte0LMOS19}. In addition to the classes above, {\sc Token Jumping} can be decided in polynomial
time for even-hole-free graphs~\cite{KaminskiMM12}.

\begin{table}
\centering
\renewcommand{\arraystretch}{1.75}
\begin{tabular}{||c c c||}
 \hline
 Graph Class & \textsc{Token Jumping} & \textsc{Token Sliding} \\
 \hline\hline
 $\{C_3,C_4\}$-free graphs & FPT (Section~\ref{sec:pos0}) & Open \\
 \hline
 $C_4$-free graphs & W[1]-hard (Section~\ref{sec:neg1}) & W[1]-hard (Section~\ref{sec:neg1})  \\
 \hline
 Bipartite graphs & Open & W[1]-hard (Section~\ref{sec:neg2})  \\
 \hline
 Bipartite $C_4$-free graphs & FPT (Section~\ref{sec:pos0}) & FPT (Section~\ref{sec:pos3})  \\
 \hline
\end{tabular}
\caption{Parameterized complexity of {\sc Token Jumping} and {\sc Token Sliding} on several graph classes.}
\label{tab:complexity}
\end{table}

In this paper we focus on the parameterized complexity of the {\sc Token Jumping} and {\sc Token Sliding} problems on graphs where some cycles with prescribed length are forbidden.
Given an NP-hard problem, parameterized complexity permits to refine the notion of hardness: does it come from the whole instance or from a small parameter?
A problem $\Pi$ is \emph{FPT} (Fixed Parameterized Tractable) parameterized by $k$ if one can solve it in time $f(k) \cdot poly(n)$. In other words, the combinatorial
explosion can be restricted to a parameter $k$. In the rest of the paper, our parameter $k$ will be the size of the independent set (i.e.\ number of tokens).

Both {\sc Token Jumping} and {\sc Token Sliding} are known to be $W[1]$-hard\footnote{Informally, it means that they are very unlikely to admit an FPT algorithm.}
parameterized by $k$ on general graphs~\cite{DBLP:conf/wads/LokshtanovMPRS15}.
On the positive side, Lokshtanov et al. showed~\cite{DBLP:conf/wads/LokshtanovMPRS15} that {\sc Token Jumping} is FPT on bounded degree graphs.
{\sc Token Jumping} is also known to be FPT on strongly $K_{\ell,\ell}$-free graphs~\cite{IKO14,BousquetMP17}, a graph
being strongly $K_{\ell,\ell}$-free if it does not contain any $K_{\ell,\ell}$ as a subgraph.

\paragraph{Our result.} (For a complete overview of our results, see Table~\ref{tab:complexity}). In this paper, we focus on what happens if we consider
graphs that do not admit a (finite or infinite) collection of cycles of prescribed lengths. Such graph classes contain bipartite graphs (odd-hole-free graphs),
even-hole-free graphs and triangle-free graphs. Our main goal was to understand which cycles make the independent set reconfiguration problems hard.
Our main technical result consists in showing that {\sc Token Sliding} is $W[1]$-hard paramerized by $k$ on bipartite graphs with a reduction from
{\sc Multicolored Independent Set}. We were not able to adapt our reduction for {\sc Token Jumping} and left it as an open question:

\begin{question}\label{q1}
 \sloppy%
 Is {\sc Token Jumping} FPT parameterized by $k$ on bipartite graphs?
\end{question}

On the positive side, we prove that {\sc Token Jumping} admits a quadratic kernel (i.e.\ an equivalent instance of size $O(k^2)$ can be found in polynomial time) for $\{ C_3,C_4 \}$-free graphs while it is W[1]-hard if we restrict to $\{ C_4,\ldots,C_p \}$-free graphs for a fixed constant $p$ (the same hardness result also holds for {\sc Token Sliding}). Note that the fact that the problem is FPT on graphs of girth\footnote{The \emph{girth} of a graph is the length of its shortest cycle.} at least $5$ graphs also follows from FPT algorithms for strongly $K_{3,\ell}$-free graphs of~\cite{IKO14}, but even if a polynomial kernel can be derived from their result, the degree of our polynomial is better.
We were no able to remove the $C_4$ condition in order to obtain a parameterized algorithm for triangle-free graphs. If an FPT algorithm exists for triangle-free graphs, it would, in particular answer Question~\ref{q1}.

\begin{question}
  Is {\sc Token Jumping} FPT parameterized by $k$ on triangle-free graphs?
\end{question}

We then focus on {\sc Token Sliding}. While FPT algorithms are (relatively) easy to design on sparse graphs for {\sc Token Jumping}, they are much harder for {\sc Token Sliding}.
In particular, it is still open to determine if {\sc Token Sliding} is FPT on planar graphs or $H$-minor free graphs while they follow for instance
from~\cite{IKO14,BousquetMP17} for {\sc Token Jumping}. Our main positive result is that {\sc Token Sliding} on bipartite $C_4$-free graphs
(i.e.\ bipartite graphs of girth at least $6$) admits a polynomial kernel. Our proof is in two parts, first we show that {\sc Token Sliding} on bipartite graphs with
bounded degree admits a polynomial kernel and then show that, if the graphs admits a vertex of large enough degree then the answer is always positive.
So {\sc Token Sliding} is W[1]-hard on bipartite graphs but FPT on bipartite $C_4$-free graphs.
In our positive results, $C_4$-freeness really plays an important role (neighborhoods of the neighbors of a vertex $x$ are almost disjoint).
It would be interesting to know if forbidding $C_4$ is really important or whether it is only helpful with our proof techniques.
In particular, does {\sc Token Sliding} admit an FPT algorithm on bipartite $C_{2p}$-free graphs for some $p \ge 3$?
In our hardness reduction for bipartite graphs, all (even) cycles can appear and then such a result can hold.
Recall that we prove that {\sc Token Jumping} admits a polynomial kernel for graphs of girth at least $6$.
It would be interesting to see if our result on bipartite $C_4$-free graphs can be extended to this class.

\begin{question}
  Is {\sc Token Sliding} FPT parameterized by $k$ on graphs of girth at least $5$? Or, slightly weaker, is it FPT on graphs of girth at least $p$, for some constant $p$.
\end{question}

Note that the fact that the girth is at least $5$ is needed since {\sc Token Sliding} is W[1]-hard on bipartite graphs (which have girth at least $4$).
Let us finally briefly discuss some cases where we forbid an infinite number of cycles. We have already discussed the case where odd cycles are forbidden.
One can wonder what happens if even cycles are forbidden. It is shown in~\cite{KaminskiMM12} that {\sc Token Jumping} can be decided in polynomial time
for even-hole-free graphs (which is remarkable since computing a maximum independent set in this class is open).
However, as far as we know, the complexity status of the problem is open for {\sc Token Sliding}.
More generally, one can wonder what happens when we forbid all the cycles of length $p$ mod $q$ for every pair of integers $p,q$.

\section{Preliminaries}\label{sec-prelim}
We denote the set of natural numbers by $\mathbb{N}$.
For $n \in \mathbb{N}$ we let $[n] = \{1, 2, \dots, n\}$.

\paragraph{Graphs.}
We assume that each graph $G$ is finite, simple, and undirected. We let $V(G)$ and $E(G)$ denote the vertex set and edge set of $G$, respectively.
The {\em open neighborhood} of a vertex $v$ is denoted by $N_G(v) = \{u \mid uv \in E(G)\}$ and the
{\em closed neighborhood} by $N_G[v] = N_G(v) \cup \{v\}$.
For a set of vertices $Q \subseteq V(G)$, we define $N_G(Q) = \{v \not\in Q \mid uv \in E(G), u \in Q\}$ and $N_G[Q] = N_G(Q) \cup Q$.
The subgraph of $G$ induced by $Q$ is denoted by $G[Q]$, where $G[Q]$ has vertex set
$Q$ and edge set $\{uv \in E(G) \mid u,v \in Q\}$. We let $G - Q = G[V(G) \setminus Q]$.

A {\em walk} of length $\ell$ from $v_0$ to $v_\ell$ in $G$ is a vertex sequence $v_0, \ldots, v_\ell$, such that
for all $i \in \{0, \ldots, \ell-1\}$, $v_iv_{i + 1} \in E(G)$.
It is a {\em path} if all vertices are distinct. It is a {\em cycle}
if $\ell \geq 3$, $v_0 = v_\ell$, and $v_0, \ldots, v_{\ell - 1}$ is a path.
A path from vertex $u$ to vertex $v$ is also called a {\em $uv$-path}.
For a pair of vertices $u$ and $v$ in $V(G)$, by $\textsf{dist}_G(u,v)$ we denote the {\em distance} or length of a shortest $uv$-path
in $G$ (measured in number of edges and set to $\infty$ if $u$ and $v$ belong to different connected components).
The {\em eccentricity} of a vertex $v \in V(G)$, $\textsf{ecc}(v)$, is equal to $\max_{u \in V(G)}(\textsf{dist}_G(u,v))$.
The {\em radius} of $G$, $\textsf{rad}(G)$, is equal to $\min_{v \in V(G)}(\textsf{ecc}(v))$.
The {\em diameter} of $G$, $\textsf{diam}(G)$, is equal to $\max_{v \in V(G)}(\textsf{ecc}(v))$.
For $r \geq 0$, the {\em $r$-neighborhood} of a vertex $v \in V(G)$ is defined as
$N^r_G[v] = \{u \mid dist_G(u, v) = r\}$. We write
$B(v, r) = \{u \mid dist_G(u, v) \leq r\}$ and call it a {\em ball of radius $r$ around $v$};
for $S \subseteq V(G)$, $B(S, r) = \bigcup_{v \in S} B(v, r)$.

A graph $G$ is {\em bipartite} if the vertex set of $G$ can be partitioned into two disjoint sets $L$ (the left part) and $R$ (the right part), i.e. $V(G) = L \cup R$,
where $G[L]$ and $G[R]$ are edgeless.
Given two graphs $G$ and $H$, we say that
$G$ is \emph{$H$-free} if $G$ does not contain $H$ as an induced subgraph.

\paragraph{Reconfiguration.}
In the {\sc Token Jumping} problem we are given a graph $G = (V,E)$ and two independent sets $S$ and $T$ of $G$, each of size $k \geq 1$.
The goal is to determine whether there exists a sequence of $k$-sized independent sets in $G$, $\langle S_0, S_1, \ldots, S_\ell \rangle$, such that
$|S_i| = k$, $S_i$ is an independent set ($\forall i$), $S = S_0$, $S_\ell = T$, and $|S_i \Delta S_{i+1}| = 2$.
In other words, if we view each independent set as a collection of tokens placed on a subset
of the vertices of $G$, then the problem asks for a sequence of independent sets which transforms $S$ to $T$ by individual token jumps which
maintain the independence of the sets. For two independent sets $S$ and $T$, we write $S \leftrightsquigarrow T$ in $G$ if there exists a sequence
of jumps that transforms $S$ to $T$ in $G$. For the closely related problem of {\sc Token Sliding}, instead of allowing a token to jump to any vertex of the
graph we instead  require that a token slides along an edge of the graph. We use the same terminology for both problems as it will be clear from context which
problem we are referring to. Note that both {\sc Token Jumping} and {\sc Token Sliding} can be expressed in terms of a \emph{reconfiguration graph} $\mathcal{R}_Q(G,k)$,
where $Q \in \{\TS, \TJ\}$. Both $\mathcal{R}_{\TJ}(G,k)$ and $\mathcal{R}_{\TS}(G,k)$ contain a node for each independent set of $G$ of size exactly $k$.
We add an edge between two nodes whenever the independent set corresponding to one node can be obtained from the other by a single reconfiguration step. That is,
a single token jump corresponds to an edge in $\mathcal{R}_{\TJ}(G,k)$ and a single token slide corresponds to an edge in $\mathcal{R}_{\TS}(G,k)$.
Given two nodes $S$ and $T$ in $\mathcal{R}_{\TJ}(G,k)$ ($\mathcal{R}_{\TS}(G,k)$), the {\sc Token Jumping} problem ({\sc Token Sliding} problem) asks whether $S$ and $T$
belong to the same connected component of $\mathcal{R}_{\TJ}(G,k)$ ($\mathcal{R}_{\TS}(G,k)$).

\section{Positive results}\label{sec-positive}

\subsection{{\sc Token Jumping} on $\{C_3,C_4\}$-free graphs and bipartite $C_4$-free graphs}\label{sec:pos0}

We say that a class of graphs $\mathcal{G}_\varepsilon$ is \emph{$\varepsilon$-sparse}, for some $\varepsilon>0$, if for every graph
$G \in \mathcal{G}$ with $n$ vertices, the number of edges in $G$ is at most $n^{2-\varepsilon}$.
By extension, $G$ is said to be $\varepsilon$-sparse.
Given an instance $\mathcal{I} = (G,S,T,k)$ of {\sc Token Jumping}, let $H = G - N_G[S \cup T]$ and $J$ denote the graph induced by $N_G[S \cup T]$.
In the remainder of this section, we show that $\mathcal{I}$ is a yes-instance
whenever (at least) one of the following two conditions is true:

\begin{enumerate}
\item $H$ is $\varepsilon$-sparse and contains more than $k(2k)^{1/\varepsilon}$ vertices, or
\item $J$ is $\{C_3,C_4\}$-free and contains a vertex of degree at least $3k$.
\end{enumerate}

\begin{lemma}\label{lem-large-sparse}
Let $\mathcal{I} = (G,S,T,k)$ be an instance of {\sc Token Jumping} and let $H = G - N_G[S \cup T]$.
If $H$ is an $\varepsilon$-sparse graph with more than $k(2k)^{1/\varepsilon}$ vertices then $\mathcal{I}$ is a yes-instance.
Moreover, the length of the shortest reconfiguration sequence from $S$ to $T$ is at most $2k$.
\end{lemma}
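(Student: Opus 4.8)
The plan is to reduce the whole statement to the existence of a single independent set of size $k$ lying entirely inside $H$, and then to relay the tokens through it. The first step is to record the key structural fact: by definition $V(H)\cap N_G[S\cup T]=\emptyset$, so no vertex of $H$ is adjacent to, or equal to, any vertex of $S\cup T$. In particular $G$ has no edge between $V(H)$ and $S$, and no edge between $V(H)$ and $T$.

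The second step is to extract a large independent set from $H$ using $\varepsilon$-sparsity. Writing $n_H=|V(H)|$ and $m_H=|E(H)|\le n_H^{2-\varepsilon}$, the average degree of $H$ is at most $2n_H^{1-\varepsilon}$, so by the Tur\'an / Caro--Wei bound $H$ contains an independent set of size at least $n_H/(2n_H^{1-\varepsilon}+1)$. A short elementary computation shows that the hypothesis $n_H>k(2k)^{1/\varepsilon}$ forces this quantity to be at least $k$, yielding an independent set $I\subseteq V(H)$ with $|I|=k$. This is precisely where the value of the threshold is used, and I expect this estimate — though routine — to be the only place that requires any care with constants.

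The third step is to build the reconfiguration sequence explicitly. Fix arbitrary enumerations $S=\{s_1,\dots,s_k\}$ and $I=\{v_1,\dots,v_k\}$; note $S\cap I=\emptyset$ since $I\subseteq V(H)$. For $j=1,\dots,k$ in turn, jump the token on $s_j$ to $v_j$. After $j$ such jumps the configuration is $\{v_1,\dots,v_j\}\cup\{s_{j+1},\dots,s_k\}$, which has size $k$ and is independent: the $v_i$'s are pairwise non-adjacent (they lie in $I$), the remaining $s_i$'s are pairwise non-adjacent (they lie in $S$), and there is no edge between the two groups by the structural fact above; moreover $v_j$ is unoccupied immediately before the jump, so the move is legal. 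After $k$ jumps the configuration is $I$. Repeating the same argument with $T$ in place of $S$ produces a further $k$ jumps transforming $I$ into $T$. Concatenating the two parts gives $S\leftrightsquigarrow T$ via a sequence of at most $2k$ jumps, establishing both assertions of the lemma. The relay argument is entirely routine once the structural fact is isolated, so the sparsity estimate in the second step is the only real obstacle.
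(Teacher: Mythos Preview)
Your proposal is correct and follows essentially the same route as the paper: exhibit an independent set $I$ of size $k$ inside $H$ (which by construction has no edges to $S\cup T$), then relay $S\to I\to T$ in at most $2k$ jumps. The only difference is cosmetic: the paper extracts $I$ by a direct greedy induction (repeatedly pick a minimum-degree vertex, delete its closed neighbourhood, and observe that $\varepsilon$-sparsity forces that degree below $n/k$), whereas you invoke the Tur\'an/Caro--Wei bound; these are two presentations of the same counting argument, and your threshold computation does go through.
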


\begin{proof}
First, consider an $\varepsilon$-sparse graph $H'$ with $n > (2k)^{1/\varepsilon}$ vertices.
We claim that $H'$ contains a vertex with degree less than $\frac{n}{k}$.
Assume otherwise, i.e., suppose that the minimum degree in $H'$ is at least $\frac{n}{k}$.
Then, $|E(H')| \geq \frac{n^2}{2k}$. Moreover, since $H'$ is $\varepsilon$-sparse, it holds that $|E(H')| \leq n^{2-\varepsilon}$.
However, $\frac{n^2}{2k} \leq n^{2-\varepsilon}$ if and only if $n \leq (2k)^{1/\varepsilon}$, a contradiction.

Now, we shall prove, by induction on $k$, that $H$ contains an independent set of size at least $k$.
The statement holds for $k = 1$ (since $H$ must contain at least one vertex).
Now, consider the case where $k > 1$ and let $z$ be a vertex with minimum degree in $H$.
Following the above claim, $z$ has degree less than $\frac{n}{k}$.
Note that the graph $H' = H - N[z]$ contains at least $(k - 1) \frac{n}{k} \geq (k - 1) \frac{k(2k)^{1/\varepsilon}}{k} =  (k - 1) (2k)^{1/\varepsilon}$ vertices.
By the induction hypothesis, $H'$ contains an independent set $X$ of size at least $k-1$.
Thus, $X \cup \{z\}$ is an independent set in $H$ of size at least $k$.

Hence, we can tranform $S$ to $T$ by simply jumping all the tokens in $S$ to an independent set $X \subseteq V(G) \setminus (S \cup T)$ and then
from $X$ we jump the tokens (one by one) to $T$. This completes the proof.
\end{proof}

\begin{lemma}\label{lem-neighborhood-large-deg}
Let $\mathcal{I} = (G,S,T,k)$ be an instance of {\sc Token Jumping} and let $J$ denote the graph induced by $N_G[S \cup T]$.
If $J$ is $\{C_3,C_4\}$-free and contains a vertex $v$ of degree at least $3k$, then $\mathcal{I}$ is a yes-instance.
Moreover, the length of the shortest reconfiguration sequence from $S$ to $T$ is at most $2k$.
\end{lemma}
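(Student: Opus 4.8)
The plan is to use the high-degree vertex $v$ in $J = G[N_G[S \cup T]]$ to "park" a token, thereby freeing up room. The key structural observation is that since $J$ is $\{C_3,C_4\}$-free, the neighbors of $v$ have pairwise almost-disjoint neighborhoods: more precisely, for any two distinct neighbors $u_1, u_2$ of $v$, we have $N_J(u_1) \cap N_J(u_2) = \{v\}$ (a common neighbor other than $v$ would create a $C_4$, and $u_1u_2 \in E$ would create a $C_3$). Consequently, any independent set of size $k$ can be adjacent to (dominate) at most $k$ of the neighbors of $v$ through edges not incident to $v$ — wait, more carefully: a token on a vertex $w$ can be adjacent to several neighbors of $v$ only if $w = v$ or... actually a single vertex $w \neq v$ adjacent to two neighbors $u_1, u_2$ of $v$ would give a $C_4$ ($v u_1 w u_2 v$), so each vertex other than $v$ is adjacent to at most one neighbor of $v$. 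Hence an independent set $I$ of size $k$ with $v \notin I$ blocks at most $k$ of the $\geq 3k$ neighbors of $v$ (and if $v \in I$ it blocks all of them but then the relevant count is different).

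First I would argue we can bring a token onto $v$. Since $|S| = k$ and $v$ has at least $3k$ neighbors in $J$, at least $3k - k = 2k$ neighbors of $v$ are not in $S$ and are not blocked by $S \setminus \{$any token adjacent to them$\}$; using the disjointness fact, at most $k-1$ neighbors of $v$ are adjacent to tokens of $S$ other than a token on $v$ itself, so there is a neighbor $u$ of $v$ free of all tokens, allowing us to slide/jump — here since this is Token Jumping, we can jump any token of $S$ onto $v$ provided $v$ is not adjacent to the remaining $k-1$ tokens; if $v$ is adjacent to some token of $S$, first move that token off $v$'s neighborhood: again disjointness guarantees $v$ has $\geq 3k$ neighbors, only $k-1$ of which can be blocked, so there's a safe landing spot for it. Iterating, we reach an independent set containing $v$. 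Symmetrically, $T$ can be transformed to an independent set containing $v$.

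Now I would show that once a token sits on $v$, the remaining $k-1$ tokens can be routed freely. With a token fixed on $v$, all $\geq 3k$ neighbors of $v$ are forbidden, but the remaining $k-1$ tokens live in $G - N_G[v]$; I claim we can move them one at a time to any target position, in particular to $T \setminus \{v\}$ (after we've also transformed $T$ to contain $v$). The point is that when relocating one of these $k-1$ tokens, it needs to avoid the $k-2$ other tokens plus $v$; and any vertex we want to route it to that happens to be a neighbor of $v$ can instead be handled because — hmm, the target $T$ might not contain $v$ at all. So the cleaner route: transform $S \leadsto S'$ with $v \in S'$, transform $T \leadsto T'$ with $v \in T'$, and then transform $S' \leadsto T'$. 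For the last step, keep a token on $v$ throughout and move the other $k-1$ tokens of $S'$ to those of $T'$ one by one; at each step the moving token must avoid $k-1$ others — but it can always jump onto a free neighbor of $v$ if needed as an intermediate, since among $\geq 3k$ neighbors of $v$, at most $k-1$ are blocked by other tokens, and a token placed on a neighbor $u$ of $v$ is independent from $v$? No — $u \sim v$. So we cannot place a second token on $N(v)$ while $v$ is occupied.

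**Main obstacle.** The genuinely delicate part is the final reconfiguration $S' \leadsto T'$ with the token on $v$ frozen: we have only $k-1$ tokens moving in $G - N_G[v]$ and must show this subproblem is always solvable, which is not obvious for Token Jumping in a general graph. I expect the intended argument instead keeps $v$ as a "hub": to move a token currently on some vertex $x$ to its final position $y$, first jump it to $v$ — no wait, $v$ is occupied. The resolution is likely that we do \emph{not} freeze $v$; rather, we show directly that $S \leadsto T$ in at most $2k$ steps by: (i) for each token of $S$ not already where a token of $T$ is, jump it to $v$'s neighborhood is blocked... Let me restate the expected clean argument: since $v$ has $\geq 3k$ neighbors with pairwise-$\{v\}$-intersecting neighborhoods, the set $N_G[v]$ contains an independent set $W$ of size $k$ consisting of $v$ together with $k-1$ of its neighbors? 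No, those neighbors are adjacent to $v$. Rather $N^2_G[v] \setminus N_G[v]$, the "second neighborhood," is large and induces, via the disjointness, many independent vertices. The cleanest version: the $\geq 3k$ neighbors $u_1, \dots, u_{3k}$ of $v$, together with the fact that each $u_i$ has its own private neighbors, let us find an independent set of size $k$ among $\{u_1,\dots,u_{3k}\}$ avoiding $S$ — then jump all of $S$ onto it, then onto $T$, giving $\leq 2k$ steps, exactly as in Lemma \ref{lem-large-sparse}. So the real content is: \emph{$J$ (or $G$) contains an independent set of size $k$ disjoint from $S \cup T$ and reachable}, and the main obstacle is verifying that among the $\geq 3k$ neighbors of $v$ one can pick $k$ that form an independent set and can each be reached by a jump without collisions — which follows because $\{u_1,\dots,u_{3k}\}$ is itself independent (any edge $u_iu_j$ would, with $v$, form a $C_3$), at most $k$ of them lie in $S \cup T$... wait $|S \cup T| \le 2k$, so at least $k$ of the $3k$ neighbors avoid $S \cup T$, they form an independent set, and we jump $S$ to it then to $T$ in $\le 2k$ moves. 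I would write the proof along these lines, the one subtlety being to confirm the intermediate configurations stay independent during the jumps, which again uses that each vertex outside $\{v\}$ sees at most one $u_i$.
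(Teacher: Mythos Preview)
Your final paragraph lands on the right idea and is close to the paper's argument, but there is a small gap in how you choose the intermediate set $X$. You pick any $k$ neighbours of $v$ that avoid $S\cup T$; this alone does \emph{not} guarantee that the intermediate configurations are independent, because a single $u_i\in X$ may still be adjacent to several vertices of $S$ (the fact ``each vertex other than $v$ sees at most one $u_i$'' bounds degrees on the $S$-side of the conflict graph, not on the $X$-side). One can rescue your choice by a greedy ordering argument---move the token on $v$ first if $v\in S$, and then at each step the $\le k-1$ remaining $S$-tokens each block at most one of the $\ge 1$ remaining free $u_i$'s---but you did not spell this out.

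The paper avoids this wrinkle by selecting $X$ more carefully. It first observes that every vertex in $N_G(S\cup T)$ has degree at most $2k$ in $J$ (each $w\in S\cup T$ contributes at most one to the count, by $\{C_3,C_4\}$-freeness), so the high-degree vertex $v$ must lie in $S\cup T$. Then, using exactly your observation that any two neighbours of $v$ share only $v$ as a common neighbour, each of the $\le 2k-1$ vertices of $(S\cup T)\setminus\{v\}$ is adjacent to at most one neighbour of $v$; hence at least $3k-(2k-1)\ge k$ neighbours of $v$ have $v$ as their \emph{only} neighbour in $S\cup T$. Taking these as $X$ makes the jumping trivial: move the token on $v$ first (if $v\in S$), then the remaining $S$-tokens to $X$ in any order, then from $X$ to $T$ (moving to $v$ last if $v\in T$). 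So the approach is the same as yours, but the stronger selection criterion for $X$ removes the need for any ordering argument.
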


\begin{proof}
Fix $w \in S \cup T$.
First, observe that for any $u \in N(S \cup T)$, $u$ is either ajdacent to $w$ and no neighbor of $w$ (otherwise $J$ would contain a $C_3$), or $u$ is adjacent to at most one neighbor of $w$ (otherwise $J$ would contain a $C_4$).
Therefore, every vertex in $N(S \cup T)$ has degree at most $2k$.
As $J$ is $C_3$-free, $N_J(w)$ is an independent set.
Furthermore, for any $u,v \in N_J(w)$, $u \neq v$, we have $N_J(u) \cap N_J(v) = \{w\}$, that is, $w$ is the only common neighbor of $u$ and $v$ in $J$; otherwise, $J$ would contain $C_4$.
Hence, if $w$ has at least $3k$ neighbors, then at least $k$ of them only have $w$ as a neighbor in $S \cup T$.
Thus, we can jump the tokens on $S$ to $N(w)$, starting with the token on $w$, if any.
Then, we can jump the tokens on the vertices in $T$.
Clearly, the length of such a reconfiguration sequence is at most $2k$.
\end{proof}

\begin{proposition}\label{thm-sparse}
Let $\mathcal{I} = (G,S,T,k)$ be an instance of {\sc Token Jumping}, let $H = G - N_G[S \cup T]$, and let $J$ denote the graph induced by $N_G[S \cup T]$.
If $H$ is $\varepsilon$-sparse, $\varepsilon > 0$, and $J$ is $\{C_3,C_4\}$-free then $\mathcal{I}$ admits a kernel with $\Oh(k^2 + k^{1 + 1/\varepsilon})$ vertices.
\end{proposition}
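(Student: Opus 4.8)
The plan is to package Lemmas~\ref{lem-large-sparse} and~\ref{lem-neighborhood-large-deg} into two polynomial-time reduction rules and then observe that an instance surviving both rules is already small. First I would compute $H = G - N_G[S\cup T]$ and $J = G[N_G[S\cup T]]$; by hypothesis $H$ is $\varepsilon$-sparse and $J$ is $\{C_3,C_4\}$-free. Fix once and for all a canonical yes-instance $\mathcal{I}_{\mathrm{yes}}$, namely the edgeless graph on $k$ vertices with $S = T$ its whole vertex set (this is trivially a yes-instance and, conveniently, lies in both graph classes). \textbf{Rule 1:} if $|V(H)| > k(2k)^{1/\varepsilon}$, output $\mathcal{I}_{\mathrm{yes}}$; this is safe by Lemma~\ref{lem-large-sparse}. \textbf{Rule 2:} if $J$ has a vertex of degree at least $3k$, output $\mathcal{I}_{\mathrm{yes}}$; this is safe by Lemma~\ref{lem-neighborhood-large-deg}. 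If neither rule fires, output $\mathcal{I}$ unchanged.

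It remains to bound $|V(G)|$ when neither rule applies. Since $V(G)$ is the disjoint union of $V(H) = V(G)\setminus N_G[S\cup T]$ and $V(J) = N_G[S\cup T]$, we have $|V(G)| = |V(H)| + |V(J)|$. Rule 1 not firing gives $|V(H)| \le k(2k)^{1/\varepsilon} = \Oh(k^{1+1/\varepsilon})$. For $V(J)$, note that every neighbour in $G$ of a vertex $w \in S\cup T$ lies in $N_G[S\cup T] = V(J)$, so $\deg_J(w) = \deg_G(w)$, and Rule 2 not firing gives $\deg_J(w) < 3k$ for all $w \in S \cup T$. Hence $|N_G(S\cup T)| \le \sum_{w\in S\cup T}\deg_J(w) < 2k\cdot 3k = 6k^2$, so $|V(J)| \le |S\cup T| + |N_G(S\cup T)| < 2k + 6k^2 = \Oh(k^2)$. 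Altogether $|V(G)| = \Oh(k^2 + k^{1+1/\varepsilon})$. Building $H$ and $J$, counting their vertices, and searching for a vertex of degree at least $3k$ are all polynomial, so this is a genuine kernelization.

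I do not expect a real obstacle: the proposition is essentially the bookkeeping that combines the two preceding lemmas. The only points that need a little care are making the fallback output a legitimate {\sc Token Jumping} instance (the edgeless graph on $k$ vertices works) and the counting bound for $|V(J)|$, where one must use that the open neighbourhood of $S\cup T$ is exactly $V(J)\setminus(S\cup T)$, so that degrees in $J$ agree with degrees in $G$ on $S\cup T$.
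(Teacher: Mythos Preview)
Your proof is correct and follows essentially the same route as the paper: invoke Lemma~\ref{lem-large-sparse} to conclude yes when $|V(H)|>k(2k)^{1/\varepsilon}$, invoke Lemma~\ref{lem-neighborhood-large-deg} to conclude yes when $J$ has a vertex of degree at least $3k$, and otherwise bound $|V(G)|$ by summing $|V(H)|\le k(2k)^{1/\varepsilon}$ and $|V(J)|\le 2k+2k(3k-1)=\Oh(k^2)$. You are slightly more explicit than the paper about the kernelization formalism (specifying the fallback yes-instance and noting polynomial running time), but the argument is the same.
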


\begin{proof}
If $H$ contains more than $k(2k)^{1/\varepsilon}$ vertices then $\mathcal{I}$ is a yes-instance (by Lemma~\ref{lem-large-sparse}).
If $J$ contains a vertex of degree $3k$ or more then, again, $\mathcal{I}$ is a yes-instance (by Lemma~\ref{lem-neighborhood-large-deg}).
Putting it all together, we have $|S \cup T| \leq 2k$, $|N_G(S \cup T)| \leq 2k(3k-1) = 6k^2-2k = \Oh(k^2)$, and
$|V(G) \setminus N_G[S \cup T]| \leq k(2k)^{1/\varepsilon} = \Oh(k^{1 + 1/\varepsilon})$.
\end{proof}

\begin{theorem}\label{c3 c4 free}
{\sc Token Jumping} parameterized by $k$ admits a kernel with at most $\Oh(k^{2})$ vertices on $\{C_3,C_4\}$-free graphs as well as bipartite $C_4$-free graphs.
\end{theorem}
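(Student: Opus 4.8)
The plan is to deduce Theorem~\ref{c3 c4 free} directly from Proposition~\ref{thm-sparse} by exhibiting a suitable value of $\varepsilon$ for each of the two graph classes. Recall that Proposition~\ref{thm-sparse} requires two things of an instance $\mathcal{I} = (G,S,T,k)$: that $H = G - N_G[S\cup T]$ be $\varepsilon$-sparse for some $\varepsilon > 0$, and that $J = G[N_G[S\cup T]]$ be $\{C_3,C_4\}$-free; under these hypotheses it yields a kernel with $\Oh(k^2 + k^{1+1/\varepsilon})$ vertices. So the whole argument reduces to checking that, on each of the two target classes, \emph{every} induced subgraph is $\varepsilon$-sparse for $\varepsilon = 1/2$ (which makes the bound $\Oh(k^2 + k^{3}) $\dots{} wait, we actually want $1/\varepsilon \le 1$, so $\varepsilon = 1$), and that $J$ inherits the relevant cycle-freeness.

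First I would observe that $\{C_3,C_4\}$-freeness and bipartite $C_4$-freeness are both hereditary properties: any induced subgraph of a $\{C_3,C_4\}$-free graph is $\{C_3,C_4\}$-free, and any induced subgraph of a bipartite $C_4$-free graph is again bipartite and $C_4$-free. In particular, for an instance on a $\{C_3,C_4\}$-free graph $G$, both $H$ and $J$ are $\{C_3,C_4\}$-free, so the hypothesis on $J$ in Proposition~\ref{thm-sparse} is met. For an instance on a bipartite $C_4$-free graph $G$, the graph $J$ is bipartite and $C_4$-free; since a bipartite graph has no $C_3$ either, $J$ is $\{C_3,C_4\}$-free as well, so again the hypothesis on $J$ holds. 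It remains to handle the sparsity of $H$.

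Next I would invoke the standard Kővári–Sós–Turán / Moore-bound estimate: a graph on $n$ vertices with no $C_3$ and no $C_4$ (equivalently, girth at least $5$, equivalently no two vertices with two common neighbors and no edge inside a neighborhood) has at most $\tfrac{1}{2}(1 + \sqrt{4n-3})\,n / 2 = O(n^{3/2})$ edges; more simply, $|E| \le \tfrac12 n^{3/2} + O(n)$, which is at most $n^{2-\varepsilon}$ for $\varepsilon = 1/2$ once $n$ is larger than a constant. The same $O(n^{3/2})$ edge bound holds for bipartite $C_4$-free graphs, since the usual double-counting of paths of length two (cherries) uses only the absence of $C_4$ and gives $\sum_v \binom{d(v)}{2} \le \binom{n}{2}$, hence $|E| = O(n^{3/2})$. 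Either way, $H$ is $\varepsilon$-sparse with $\varepsilon = 1/2$, so Proposition~\ref{thm-sparse} applies and produces a kernel with $\Oh(k^2 + k^{1+1/\varepsilon}) = \Oh(k^2 + k^{3})$ vertices. To get the promised $\Oh(k^2)$ bound I would instead note that we are free to choose $\varepsilon$ as large as the sparsity allows: since $O(n^{3/2})$ edges means in particular $o(n^2)$ edges, for every fixed $\varepsilon < 1/2$ the graph is $\varepsilon$-sparse for $n$ past a constant, but this still only gives $k^{1+1/\varepsilon}$ with $1/\varepsilon > 2$. The cleaner route is to re-run the proof of Lemma~\ref{lem-large-sparse} with the sharper Moore-type bound: a $\{C_3,C_4\}$-free (or bipartite $C_4$-free) graph with minimum degree $\delta$ has at least $1 + \delta^2$ vertices (a vertex, its $\delta$ neighbors, and the $\delta(\delta-1)$ distinct second neighbors forced by $C_4$-freeness), so if $H$ has more than, say, $4k^2$ vertices it contains a vertex of degree less than $2k$, and the same induction as in Lemma~\ref{lem-large-sparse} yields an independent set of size $k$ in $H$; thus if $|V(H)| > \Oh(k^2)$ the instance is a yes-instance, and combined with $|N_G[S\cup T]| = \Oh(k^2)$ from Lemma~\ref{lem-neighborhood-large-deg} we get a kernel with $\Oh(k^2)$ vertices.

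The only mild subtlety — and the step I would be most careful about — is making sure that the structural guarantees (no $C_3$, no $C_4$, or bipartiteness) needed to run Lemma~\ref{lem-large-sparse}'s induction actually hold for the specific induced subgraph $H' = H - N[z]$ encountered at each recursive step; but this is immediate from heredity. A secondary point is that Lemma~\ref{lem-neighborhood-large-deg} as stated needs $J$ to be $\{C_3,C_4\}$-free, which we have verified in both cases. Hence both classes satisfy the hypotheses of (a slightly sharpened) Proposition~\ref{thm-sparse}, and the quadratic kernel follows.
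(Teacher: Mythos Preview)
Your final argument is correct, but the route differs from the paper's. Both proofs bound $|N_G[S\cup T]|$ by $\Oh(k^2)$ via Lemma~\ref{lem-neighborhood-large-deg} and then argue that if $H = G - N_G[S\cup T]$ is large it contains an independent set of size $k$. For this second step the paper simply cites Kim's theorem on the triangle-free Ramsey number, which says that any $C_3$-free graph on $\Omega(k^2/\log k)$ vertices already contains an independent set of size $k$; this immediately caps $|V(H)|$ at $\Oh(k^2)$. You instead invoke the Moore bound for girth-$5$ graphs ($n \ge 1 + \delta^2$, hence $\delta \le \sqrt{n}$) and rerun the greedy induction of Lemma~\ref{lem-large-sparse}: repeatedly delete the closed neighborhood of a minimum-degree vertex, losing at most $O(\sqrt{n}) = O(k)$ vertices per round, so $\Theta(k^2)$ vertices suffice for $k$ rounds. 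This is more elementary and entirely self-contained, whereas the paper's proof is a one-line citation but relies on a substantially deeper result. Your detour through Proposition~\ref{thm-sparse} with $\varepsilon = 1/2$ (yielding only $\Oh(k^3)$) is unnecessary and should be dropped; the Moore-bound induction is the actual proof. Also, a small quibble: from $n > 4k^2$ and $n \ge 1 + \delta^2$ you get $\delta \le 2k$, not $\delta < 2k$, so adjust the constant accordingly (any $c > 1$ in place of $4$ works for the asymptotic statement).
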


\begin{proof}
Let $\mathcal{I} = (G,S,T,k)$ be an instance of {\sc Token Jumping} such that $G$ is $\{C_3,C_4\}$-free.
Let $H = G - N_G[S \cup T]$ and $J$ denote the graph induced by $N_G[S \cup T]$.
Since $J$ is $\{C_3,C_4\}$-free, \cref{lem-neighborhood-large-deg} implies that if $J$ contains more than $6k^2 - 2k$ vertices, then $\mathcal{I}$ is a yes-instance.
Kim showed that a $C_3$-free graph with $\Oh(k^2 / (\log{k}))$ vertices contains an independent set of size at least $k$~\cite{MR1369063}.
Hence, if $H$ contains more than $\Oh(k^2 / (\log{k}))$ vertices, then $\mathcal{I}$ is a yes-instance.
Thus, $G$ contains at most $\Oh(k^{2})$ vertices.
The same result holds for bipartite $C_4$-free graphs since they are $\{C_3,C_4\}$-free.
\end{proof}

\subsection{{\sc Token Sliding} on bounded-degree bipartite graphs}
Unlike the case of {\sc Token Jumping}, it is not known whether {\sc Token Sliding} is fixed-parameter tractable (parameterized by $k$) on graphs of bounded degree.
In this section we show that it is indeed the case for bounded-degree bipartite graphs. This result, interesting in its own right, will be crucial
for proving that {\sc Token Sliding} is fixed-parameter tractable on bipartite $C_4$-free graphs in the next section.
We start with a few definitions and needed lemmas.

Let $R(G,I) = \{v \mid v \in \bigcap_{I' \mid I \leftrightsquigarrow I'}{I'}\}$ be the subset of $I$ containing all of the tokens $v$ such that
$v \in I'$ for all $I'$ reachable from $I$. In other words, the tokens on vertices of $R(G,I)$ can never move in any reconfiguration sequence starting from $I$. We call vertices in $R(G,I)$ \emph{rigid with respect to $G$ and $I$}.
An independent set $I$ is said to be \emph{unlocked} if $R(G,I) = \emptyset$. Given a graph $G$ and $r \geq 1$, a set $S \subseteq V(G)$ is called an \emph{$r$-independent set}, or \emph{$r$-independent} for short, if $B(v, r) \cap S = \{v\}$, for all $v \in S$.
Note that a $1$-independent set is a standard independent set and
a $r$-independent set, $r > 1$, is a set where the shortest path
between any two vertices of the set contains at least $r$ vertices (excluding the endpoints).

For a vertex $v \in V(G)$ and a set $S \subseteq V(G) \setminus \{v\}$, we
let $D(v, S)$ denote the set of vertices in $S$ that are closest to $v$. That is, $D(v, S)$ is the set of vertices in $S$ whose distance to $v$ is minimum.
We say $D(v, S)$ is \emph{frozen} if $|D(v, S)| \geq 2$ and it
is not possible to slide a single token in $D(v, S)$ to obtain $S'$ such that either $v \in S'$ or $|D(v, S')| = 1$.
Note that, in time polynomial in $n = |V(G)|$, it can be verified whether $D(v, S)$ is frozen by simply checking, for each vertex $u \in D(v, S)$,
whether $u$ can slide to a vertex $w$ which is closer to $v$ (or to $v$ itself if $u$ is adjacent to $v$).

\begin{lemma}[\cite{DBLP:conf/isaac/Fox-EpsteinHOU15}]\label{lem-rigid-bipartite}
$S \leftrightsquigarrow T$ in $G$ if and only if $R(G,S) = R(G,T)$ and
$(S \setminus R(G,S)) \leftrightsquigarrow (T \setminus R(G,S))$ in $G - N[R(G,S)]$. Moreover,
if $G$ is bipartite then $R(G,S)$ and $R(G,T)$ can be computed in time linear in $|V(G)| = n$.
\end{lemma}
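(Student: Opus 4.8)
The plan is to establish the reconfiguration equivalence and the linear‑time claim separately; throughout put $R=R(G,S)$.

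\emph{The reconfiguration equivalence.} Suppose first that $S\leftrightsquigarrow T$. Then $S$ and $T$ lie in the same connected component of $\mathcal{R}_{\TS}(G,k)$, so the family of independent sets reachable from $S$ equals the one reachable from $T$; intersecting these two equal families gives $R(G,S)=R(G,T)$. Along any reconfiguration sequence from $S$ to $T$ every configuration is reachable from $S$ and hence contains $R$, so the tokens on $R$ never move; deleting $R$ from each configuration of the sequence produces independent sets avoiding $N_G[R]$ (an independent set containing $R$ avoids $N_G(R)$), in which consecutive sets still differ by a single slide along an edge with both endpoints outside $N_G[R]$ — that is, a valid sequence in $G-N_G[R]$ from $S\setminus R$ to $T\setminus R$. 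Conversely, assume $R(G,S)=R(G,T)$, so $R\subseteq T$ because $R(G,T)\subseteq T$ always, and take a reconfiguration sequence from $S\setminus R$ to $T\setminus R$ in $G-N_G[R]$. Re‑adding $R$ to every configuration yields a sequence in $G$ of $k$‑element independent sets — each is the disjoint union of an independent set contained in $V(G)\setminus N_G[R]$ with the independent set $R$, with no edges between the two parts — whose consecutive sets still differ by a slide along an edge of $G-N_G[R]\subseteq G$, starting at $S$ and ending at $(T\setminus R)\cup R=T$; hence $S\leftrightsquigarrow T$.

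\emph{Computing the rigid set when $G$ is bipartite.} I would first record that $v\in R(G,S)$ exactly when, in every configuration reachable from $S$, the token on $v$ has no \emph{private neighbour} (a neighbour whose only occupied neighbour is $v$): otherwise a single slide of $v$ onto such a private neighbour reaches a configuration without $v$, and conversely $v$ can never be the token that moves. Then compute $S\setminus R(G,S)$ by propagation: maintain a set $U$ of ``released'' tokens (initially $\emptyset$) and, for each $u\notin S$, a counter $c(u)=|(N_G(u)\cap S)\setminus U|$; whenever some $c(u)$ reaches $1$, add the unique surviving token‑neighbour of $u$ to $U$ and decrement the counters of its neighbours, iterating to a fixpoint $U^{\ast}$. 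The claim is $R(G,S)=S\setminus U^{\ast}$ (and likewise for $T$). Using a queue of the vertices whose counter has hit $1$, this runs in $O(|V(G)|+|E(G)|)$ time, which is linear for the bounded‑degree bipartite graphs to which the lemma is applied, since each token‑incident edge is handled a bounded number of times. One inclusion is easy: by induction on the length of a reconfiguration sequence from $S$, every token of $S\setminus U^{\ast}$ stays on its original vertex, for if such a token at $v$ slid to a neighbour $u$ then $u$ would have to be free, yet $v\notin U^{\ast}$ forces $(N_G(u)\cap S)\setminus U^{\ast}$ to contain some token $w\neq v$ (otherwise the propagation rule would have released $v$ through $u$), and by the induction hypothesis $w$ still occupies its vertex and blocks $u$; hence $S\setminus U^{\ast}\subseteq R(G,S)$.

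The reverse inclusion $U^{\ast}\cap R(G,S)=\emptyset$ is the main obstacle, and it is where bipartiteness is essential. I would argue by induction on the round at which a token enters $U^{\ast}$: a first‑round token has a private neighbour in $S$ and escapes in one slide; for a token $v$ released at round $t+1$ through a vertex $u$, every other token‑neighbour of $u$ was released earlier, and one must free $u$ by clearing all of these blockers simultaneously, carrying the recursion with the invariant that not‑yet‑released tokens stay put and routing each blocker out to the vertex that triggered its own release. Bipartiteness is precisely what makes the cascade conflict‑free: blockers sharing the common neighbour $u$ lie on one side of the bipartition, so the vertices they are routed to lie on the opposite single side and are therefore pairwise non‑adjacent, hence can be occupied one after another without obstructing one another or $u$; organising this (and treating the two sides of $S$ consistently, since a token crossing to the other side may afterwards obstruct the reconfiguration on that side) is the delicate bookkeeping. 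Once it is set up, following the cascade in order of release rounds produces, for each $v\in U^{\ast}$, a reconfiguration sequence from $S$ reaching a configuration that omits $v$, which completes the proof.
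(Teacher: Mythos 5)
You should first note that the paper itself gives no proof of this lemma: it is imported verbatim from Fox-Epstein et al.~\cite{DBLP:conf/isaac/Fox-EpsteinHOU15}, so your attempt can only be judged on its own merits. The first half of your argument (the equivalence) is correct and complete: since every configuration reachable from $S$ contains $R(G,S)$, the tokens of $R(G,S)$ never vacate their vertices along any sequence, all slides take place on edges with both endpoints outside $N_G[R(G,S)]$, and stripping/re-adding $R(G,S)$ translates sequences back and forth between $G$ and $G-N_G[R(G,S)]$. Your propagation algorithm for the bipartite part is also a sensible formalisation, and your proof of the inclusion $S\setminus U^{\ast}\subseteq R(G,S)$ (which indeed needs no bipartiteness, because $S$ being independent forces the landing vertex of a first illegal slide to lie outside $S$) is sound.

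The genuine gap is the converse inclusion, i.e.\ that every released token is non-rigid: this is precisely where bipartiteness enters and it is the crux of the lemma, yet you leave it as ``delicate bookkeeping'' to be ``set up''. As written, the fact that each blocker of the trigger $u$ can individually escape does not give a single configuration, reachable from $S$, in which \emph{all} of $N(u)\cap S\setminus\{v\}$ is vacated simultaneously while $v$ has not moved; and the obstruction you explicitly worry about (a token crossing to the other side and later blocking moves there) shows the route is not pinned down. The missing observation that closes it is that the cascade needed to free a \emph{fixed} trigger $u$ is one-sided: the set of tokens that must move is the closure of $N(u)\cap S\setminus\{v\}$ under ``the other $S$-neighbours of my own trigger'', and by bipartiteness every such token lies on the side opposite $u$ while every trigger used lies on $u$'s side. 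Each vertex triggers at most one release, so these triggers are pairwise distinct, pairwise non-adjacent, distinct from and non-adjacent to $u$, and none of them is adjacent to any token released later than the token it triggers (otherwise its counter could not have reached $1$) --- in particular none is adjacent to $v$. Hence moving the cascade tokens to their triggers in increasing order of release time is a valid sequence of slides, after which $v$ slides to $u$; without this (or an equivalent) argument the characterisation $R(G,S)=S\setminus U^{\ast}$, and with it the computability claim, is unproved. A minor further point: your algorithm runs in $\Oh(n+m)$, which matches the usual reading of ``linear time'', but your hedge about bounded-degree graphs does not cover all uses in the paper --- the lemma is also invoked on bipartite $C_4$-free graphs of a priori unbounded degree in Theorem~\ref{thm-c4free-bipartite}.
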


\begin{lemma}[\cite{DBLP:conf/isaac/Fox-EpsteinHOU15}]\label{lem-switch-bipartite}
Let $G = (L \cup R, E)$ be a bipartite graph and let $S$ be an unlocked
independent set of $G$. Then, in time linear in $n$, we can compute
a reconfiguration sequence $\langle S = I_0, I_1, \ldots, I_\ell \rangle$ where $I_\ell \cap L = \emptyset$ and $\ell = |S \cap L|$.
\end{lemma}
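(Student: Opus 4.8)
The plan is to prove the two statements of Lemma~\ref{lem-switch-bipartite} together: starting from an unlocked independent set $S$ in the bipartite graph $G = (L \cup R, E)$, I want to push every token currently sitting in $L$ over to $R$, and do so in exactly $|S \cap L|$ slides. The natural order is to process the tokens of $S \cap L$ one at a time, maintaining the invariant that the tokens already moved lie in $R$, the tokens not yet processed lie where they started, and the current configuration is still unlocked (equivalently, $R(G, I_i) = \emptyset$ for every intermediate set $I_i$). The heart of the argument is: given an unlocked independent set $I$ with $I \cap L \neq \emptyset$, I can slide one token from $L$ to $R$ in a single step and land in an independent set that is still unlocked. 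Iterating this $|S\cap L|$ times gives a sequence of length exactly $\ell = |S \cap L|$, and the final set meets $L$ in the empty set, as required.

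To extract a single good move, I would use the characterization of rigidity together with the definition of ``unlocked''. Since $R(G, I) = \emptyset$, no token is rigid; in particular, pick any token $u \in I \cap L$. Because $u$ is not rigid, there is \emph{some} reconfiguration sequence from $I$ in which $u$ moves; I want to argue that $u$ can be made to move \emph{immediately}, i.e.\ that $u$ has a free neighbor in $R$ (a neighbor with no other token of $I$ in its closed neighborhood). Here I expect to invoke Lemma~\ref{lem-rigid-bipartite} and the bipartite structure: in a bipartite graph the set of rigid tokens is computable and has a clean local description, so non-rigidity of $u$ should translate into the existence of an available adjacent vertex. If $u$'s only ``escape routes'' required other tokens to move first, one can chase the dependency and either reach $u$ directly or find a different token in $L$ that is immediately slidable; since the graph is finite and bipartite this terminates. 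The slide $u \to w$ with $w \in R$ is then performed, and I must check the resulting set $I'$ is again unlocked. This is where I would again lean on Lemma~\ref{lem-rigid-bipartite}: sliding a token along an edge in a bipartite graph does not create rigidity out of nothing when we started from an unlocked set — intuitively, whatever sequence witnessed non-rigidity of the other tokens from $I$ can be adapted after the single slide.

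After establishing the one-step claim, the lemma follows by straightforward induction on $|S \cap L|$: apply the claim to get $I_1$ with $|I_1 \cap L| = |S \cap L| - 1$ and $I_1$ unlocked, recurse, and concatenate. The length bookkeeping is automatic since each application contributes exactly one slide. For the running time, each step needs: detecting an immediately slidable $L$-token and its target (polynomial, indeed linear with the right data structures, scanning neighborhoods), and — if we want to re-verify unlockedness — recomputing $R(G, I_i)$, which by Lemma~\ref{lem-rigid-bipartite} is linear in $n$; a naive implementation gives $O(n \cdot \ell) = O(n^2)$, so to hit the claimed linear bound I would instead argue that unlockedness is \emph{preserved} by the move (so no recomputation is needed after the initial check) and that the total work across all $\ell \le n$ slides is $O(n)$ because each token of $L$ is moved once and the search for the next slidable token can be organized to touch each edge a bounded number of times overall.

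The main obstacle I anticipate is the one-step claim, specifically two sub-points: (i) upgrading ``$u$ is not rigid'' (which a priori only says $u$ moves \emph{eventually}, possibly after many preparatory slides of other tokens) to ``some $L$-token moves \emph{now}'', and (ii) showing the chosen slide keeps the configuration unlocked so the induction goes through. Both should be consequences of the bipartite rigidity machinery of Fox-Epstein et al.\ (Lemma~\ref{lem-rigid-bipartite}): rigidity in bipartite graphs has a monotone, locally-checkable structure, and I would expect the cleanest route is to prove directly that in an unlocked bipartite configuration every connected ``component of movable tokens'' that meets $L$ contains an $L$-token adjacent to a free $R$-vertex, and that moving it leaves the component structure (hence unlockedness) intact. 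If a fully self-contained argument gets delicate, the fallback is to cite Lemma~\ref{lem-switch-bipartite} as stated in~\cite{DBLP:conf/isaac/Fox-EpsteinHOU15} — but the proposal above is how I would reconstruct it.
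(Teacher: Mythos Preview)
The paper does not give its own proof of Lemma~\ref{lem-switch-bipartite}: it is quoted verbatim from~\cite{DBLP:conf/isaac/Fox-EpsteinHOU15} and left unproved (the very next sentence says that the \emph{following} lemma's proof is repeated ``for completeness'', by contrast). So there is no in-paper argument to compare your proposal against.

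As for the proposal itself: your overall plan---iteratively slide one $L$-token to $R$, maintain unlockedness, and count exactly $|S\cap L|$ slides---is the right shape and matches the spirit of the Fox-Epstein--Hoang--Otachi--Uehara argument. You have also correctly pinpointed the one genuine obstacle, namely the ``one-step claim'' that an unlocked bipartite configuration with $I\cap L\neq\emptyset$ always has an $L$-token with a free $R$-neighbor, and that performing that slide preserves unlockedness. What you have written for this step is a heuristic (``chase the dependency'', ``lean on Lemma~\ref{lem-rigid-bipartite}'') rather than an argument; in particular, non-rigidity of a given $u$ only guarantees that $u$ moves in \emph{some} sequence, and your text does not yet show why the \emph{first} move of such a sequence can always be taken to be an $L\!\to\!R$ slide rather than an $R\!\to\!L$ one. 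This is exactly where the bipartite rigidity characterization from~\cite{DBLP:conf/isaac/Fox-EpsteinHOU15} does the real work, and a complete write-up would need to invoke (or reprove) that characterization explicitly. In short: correct outline, honest about its gap, but the gap is not closed within the proposal.
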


The next lemma was also proved in~\cite{DBLP:conf/isaac/Fox-EpsteinHOU15} but we repeat the proof here both for completeness and since
we will use similar ideas in subsequent proofs.

\begin{lemma}[\cite{DBLP:conf/isaac/Fox-EpsteinHOU15}]\label{lem-move-bipartite}
Let $G = (L \cup R, E)$ be a connected bipartite graph and let $S$ be an unlocked
independent set of $G$. Let $v \in V(G) \setminus S$ and let $D(v, S) \subseteq L$ (or symmetrically $D(v, S) \subseteq R$).
Then, in time linear in $|V(G)| = n$, one can find a reconfiguration sequence $\langle S = I_0, I_1, \ldots, I_\ell \rangle$ where $v \in I_\ell$
and $\ell$ is at most $|S \cap L| - 1$ (or symmetrically $|S \cap R| - 1$) plus the distance between $v$ and a token of $D(v, S)$.
\end{lemma}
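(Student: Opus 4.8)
The plan is to prove Lemma~\ref{lem-move-bipartite} by induction on the distance $d$ between $v$ and the set $D(v,S)$, using the two previously established lemmas (Lemmas~\ref{lem-switch-bipartite} and~\ref{lem-rigid-bipartite}) as black boxes and exploiting the bipartite structure throughout. The base case is $d=1$: here some token $u \in D(v,S)$ is adjacent to $v$, so we would like to slide it directly onto $v$. The obstacle is that $v$ may have other tokens of $S$ in its neighborhood. But since $G$ is bipartite and $D(v,S) \subseteq L$, the neighbors of $v$ lie in $L$ (so $v \in R$), and every token of $S \cap N(v)$ is at distance exactly $1$ from $v$, hence belongs to $D(v,S)$. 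So first we clear all tokens off $L$ except we must be careful — actually the cleaner route is: apply Lemma~\ref{lem-switch-bipartite} to move all $|S \cap L|$ tokens of $S$ off of $L$ and onto $R$, obtaining $I'$ with $I' \cap L = \emptyset$; this in particular frees the neighborhood of $v$ of tokens, and now we can slide any one token adjacent to $v$ onto $v$ in one more step. That gives $\ell \le |S \cap L| - 1 + 1 = |S\cap L|$ when $d = 1$... but we want the bound with the ``$-1$''. The refinement is to not move the token of $D(v,S)$ that will end up on $v$ during the switching phase, so that its single slide onto $v$ is ``absorbed'' and the count becomes $(|S \cap L| - 1) + 1 = |S \cap L|$; matching the claimed bound $(|S\cap L|-1) + d = |S\cap L| - 1 + 1$, so actually the claimed bound $\ell \le |S \cap L| - 1 + d$ is met for $d=1$. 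So the bookkeeping must be done carefully: switch the $|S\cap L| - 1$ tokens other than the chosen $u$, then slide $u$ the $d=1$ steps to $v$.

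For the inductive step $d \ge 2$: pick $u \in D(v,S)$ realizing the distance, and a shortest $u$–$v$ path $u = p_0, p_1, \dots, p_d = v$. I would first argue that one token can be moved one step closer. The natural move is to slide $u$ to $p_1$; this is legal iff $N(p_1) \cap S \subseteq \{u\}$ after clearing obstructions, and again one uses Lemma~\ref{lem-switch-bipartite} to clear the side of the bipartition not containing $u$ (note $p_1$ is on the opposite side from $u$, hence on the side we are about to clear — wait, if $u\in L$ then $p_1 \in R$, so clearing $L$ does not directly help with $p_1$'s neighbors, which lie in $L$). So instead: by Lemma~\ref{lem-switch-bipartite}, move all $|S \cap L|$ tokens off $L$; now $S' \cap L = \emptyset$, $S' \subseteq R$, and in particular the new "$D(v,S')$" — the closest tokens to $v$ — consists of $R$-vertices. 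Then recurse: we now need the symmetric statement (with roles of $L,R$ swapped) applied to $S'$, where the distance from $v$ to the closest token has changed. The difficulty is controlling how the distance evolves and ensuring we actually make progress; the cleanest formulation is a single joint induction on $d$ where at each stage we do one switch (costing the number of tokens on one side) plus move the designated token one step, reducing $d$ by $1$, and the side-counts alternate between $|S\cap L|$ and $|S\cap R|$. Summing a geometric-looking telescoped cost must be shown to stay within $(|S\cap L| - 1) + d$; this is the step where I expect the arithmetic and the exact choice of which tokens to leave stationary to require the most care.

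The main obstacle, then, is the accounting: naively alternating full switches of $L$ and $R$ costs roughly $d \cdot \max(|S\cap L|, |S\cap R|)$, which is far worse than the claimed $(|S\cap L|-1)+d$. To get the sharp bound one cannot re-switch at every step. The right idea is to do the switch \emph{once}, at the very beginning, to make the token-bearing side of $v$'s neighborhood empty, and then observe that sliding a single chosen token $u$ along the shortest path $p_0,\dots,p_d$ toward $v$ never creates a conflict, because each intermediate vertex $p_i$ with $i$ of the "bad" parity has an empty neighborhood on the relevant side (inherited from the switch) and each $p_i$ of the other parity is itself empty and its neighbors, being at distance $\le d$ from $v$ and lying on the path, can be assumed token-free by the rigidity/distance structure — this is exactly where $D(v,S)$ being the unique closest set and $S$ being unlocked is used, possibly invoking Lemma~\ref{lem-rigid-bipartite} to rule out that an intermediate vertex is blocked. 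Carefully: after the one switch we spend $|S\cap L| - 1$ slides (all tokens on $L$ except $u$ move to $R$; $u$ stays since $u \in L$ could be the closest token — or if $D(v,S)\subseteq L$ and we cleared $L$, then $u$ moved, so instead we clear $R$; this is where the hypothesis $D(v,S)\subseteq L$ vs $\subseteq R$ in the statement matters), and then $d$ slides walking the chosen token in. Total $\ell \le (|S\cap L| - 1) + d$, as required. I would write the switch step via Lemma~\ref{lem-switch-bipartite} applied on the opposite side, verify unlockedness is preserved along the way (or restart the argument from each intermediate independent set, all of which are reachable hence still have the same rigid set, namely empty, by Lemma~\ref{lem-rigid-bipartite}), and conclude. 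The whole construction is clearly polynomial — indeed linear — in $n$, since each invoked subroutine is linear and we invoke a bounded number of them plus $O(n)$ extra slides.
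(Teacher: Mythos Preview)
Your plan has a real gap. The final strategy you commit to is: (i) apply Lemma~\ref{lem-switch-bipartite} once to clear one side of the bipartition except for the designated token $u$, and then (ii) slide $u$ along a fixed shortest path $u=p_0,p_1,\dots,p_d=v$. Both steps are problematic.

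For (i), Lemma~\ref{lem-switch-bipartite} moves \emph{all} tokens of $S\cap L$ to $R$; it gives no control over which token moves last, so you cannot simply decree that $u$ is the one left behind after $|S\cap L|-1$ slides. For (ii), even granting that $u$ is the only token left in $L$, the second slide along the path (from $p_1\in R$ to $p_2\in L$) need not be legal: $N(p_2)\subseteq R$, and after your switch every other token sits in $R$, so nothing prevents one of them from being adjacent to $p_2$. Your assertion that the relevant neighbours ``lie on the path'' and ``can be assumed token-free by the rigidity/distance structure'' is unjustified --- the neighbours of $p_2$ need not lie on the chosen path, and a token originally in $L$ at distance $\ge d$ from $v$ may, after one slide into $R$, land at distance $d-1$ and block $p_2$.

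What you are missing is the \emph{unique closest token} invariant, which is what actually makes the walk toward $v$ safe. The paper splits into two cases. If $|D(v,S)|=1$ (or $D(v,S)$ is not frozen), the unique closest token $u$ slides one step toward $v$: any other token in $N(p_1)$ would itself be at distance $\le d$ from $v$, contradicting uniqueness; after the slide the moved token is again uniquely closest, and one iterates for $d$ slides total. If $D(v,S)$ is frozen (so $|D(v,S)|\ge 2$), one runs the sequence from Lemma~\ref{lem-switch-bipartite} only up to the first index $j$ at which $I_j$ has a unique closest token to $v$; since the tokens of $D(v,S)\subseteq L$ leave $L$ one at a time, such a $j\le |S\cap L|-1$ exists, and one then falls back to the first case. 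The total length is $j+d\le (|S\cap L|-1)+d$. The point is that the switch is not used to ``empty a side'' but to \emph{manufacture} uniqueness of the closest token; once you have that, the distance argument --- not the bipartite structure per se --- certifies each subsequent slide.
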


\begin{proof}
There are two cases to consider:\newline
(1) If there is a unique token $u \in D(v, S) \subseteq S$ which is closest to $v$ then the reconfiguration sequence
is constructed by repeatedly moving the token on $u$ to a vertex which is closer to $v$.
Let $w$ be any vertex in $N(u)$ where some shortest path from $u$ to $v$ passes through $w$.
Since $u$ is uniquely closest to $v$, it must be the case that $N(w) \cap S = \{u\}$.
Hence, we construct $I_1 = (I \setminus \{u\}) \cup \{w\}$; as $w$ is now uniquely closest to $v$ the process can be iterated.
The same strategy can be applied if $D(v, S)$ is not frozen.\newline
(2) Assume $D(v, S)$ is frozen. Let $d$ denote the distance from $v$ to any vertex $u \in D(v, S)$. Without loss of generality, we can assume
that $D(v, S) \subseteq L$ (the other case is symmetric). We apply Lemma~\ref{lem-switch-bipartite} which guarantees (in linear time) the existence
of a computable reconfiguration sequence $\langle S = I_0, I_1, \ldots, I_\ell \rangle$ where $I_\ell \cap L = \emptyset$ and $\ell = |S \cap L|$.
There exists an index $j$, with $j < \ell < |S \cap L|$, where $I_j$ has a unique token $u$ which is closest to $v$. This follows from
the fact that some tokens of $D(v, S)$ will move to be at distance $d + 1$ from $v$ (possibly all but one)
leaving a vertex $u$ uniquely closest to $d$.
Therefore, we can now apply the same strategy as in the previous case. The reconfiguration sequence will
be of length at most $j + d$, as needed.
\end{proof}

Let $\mathcal{I} = (G = (V, E), S, T, k)$ be an instance of {\sc Token Sliding} where $G$ is a bipartite graph of bounded degree $\Delta$.
We assume, without loss of generality, that $G$ is connected;
as otherwise we can solve the problem independently on each component of $G$ (and there are at most $k$ components containing tokens).
Moreover, given Lemma~\ref{lem-rigid-bipartite}, we can assume, without loss of generality, that $S$ and $T$ are unlocked.
In other words, we assume that it has been verified that $R(G,S) = R(G,T)$ and $N[R(G,S)]$ has been deleted from $G$.
We now give a slightly different version of Lemma~\ref{lem-move-bipartite} better suited for our needs.

\begin{lemma}\label{lem-move-far}
Let $G$ be a connected bipartite graph and let $S$ be an unlocked
independent set of $G$. Let $v$ be a vertex in $V(G) \setminus S$ such that $N_G[v] \cap S = \emptyset$.
Let $D(v, S) \subseteq L$ (or symmetrically $D(v, S) \subseteq R$) such that $\textsf{dist}_G(u,v) = d$, for all $u \in D(v, S)$.
Then, in time linear in $|V(G)| = n$, we can find a reconfiguration sequence $\langle S = I_0, I_1, \ldots, I_\ell \rangle$,
where $I_\ell = (S \setminus \{u\}) \cup \{v\}$ for some vertex $u$ in $D(v, S)$
and $\ell$ is at most $2(|S| - 1) + d$.
\end{lemma}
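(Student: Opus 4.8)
The plan is to reduce Lemma~\ref{lem-move-far} to Lemma~\ref{lem-move-bipartite} by a two-phase argument. The statement of Lemma~\ref{lem-move-far} differs from Lemma~\ref{lem-move-bipartite} in two ways: we additionally require $N_G[v] \cap S = \emptyset$ (so $v$ is not dominated by a token), and we want the final set to be exactly $(S \setminus \{u\}) \cup \{v\}$ for some $u \in D(v,S)$, with a length bound of $2(|S|-1) + d$ rather than $(|S\cap L| - 1) + d$. The gain is that we get precise control over which token moves and the shape of the target, at the cost of a factor of roughly $2$ in the length (which is why we allow $2(|S|-1)$ instead of $|S\cap L| - 1$).

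First I would invoke Lemma~\ref{lem-move-bipartite} directly: since $S$ is unlocked and $D(v,S) \subseteq L$, there is a linear-time-computable reconfiguration sequence $\langle S = J_0, J_1, \ldots, J_m \rangle$ with $v \in J_m$ and $m \le (|S \cap L| - 1) + d$. Now $J_m$ is an independent set containing $v$, but it may differ from $S$ on many vertices — the proof of Lemma~\ref{lem-move-bipartite} in case~(2) performs the full ``switch'' from $L$ to $R$ before reaching $v$. The key observation is that in that argument the sequence only stops at index $j + d$ where $j < |S \cap L|$, and crucially, from the configuration $I_j$ only a single token (the unique closest one $u'$) is moved along a shortest path to $v$; all the other tokens that moved did so as part of the switch of Lemma~\ref{lem-switch-bipartite}, and the switch is reversible. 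So the second phase is: from $J_m$, reverse the switch steps to bring every token except the one now sitting on $v$ back to its original position in $S$. This is possible because the switch sequence of Lemma~\ref{lem-switch-bipartite} is a genuine reconfiguration sequence, hence traversable in reverse, and the token on $v$ never participated in the switch (it was the designated moving token). The net effect is a sequence from $S$ to $(S \setminus \{u\}) \cup \{v\}$ where $u \in D(v,S)$ is the token that ended up on $v$.

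For the length bound: the forward part costs at most $(|S\cap L| - 1) + d$. The reversal of the switch costs at most $|S \cap L|$ (the switch moved each left-token once). Adding these and bounding $|S \cap L| \le |S|$ crudely gives roughly $2|S| + d$; a slightly more careful accounting — noting that one of the left tokens is $u$, which is accounted for in the ``$d$'' term and should not be double-counted in the reversal — yields the claimed $2(|S|-1) + d$. I would also need to separately handle case~(1) of Lemma~\ref{lem-move-bipartite}, where $D(v,S)$ is a singleton or not frozen: there no switch is performed, a single token walks to $v$, the sequence has length at most $(|S\cap L| - 1) + d$ already of the right form, and the bound $2(|S|-1)+d$ holds trivially. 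The extra hypothesis $N_G[v] \cap S = \emptyset$ is used only to guarantee $d \ge 2$, so that $v$ is genuinely ``far'' and the closest-token structure is well defined when we start moving a token toward it.

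The main obstacle I anticipate is making the reversal argument fully rigorous: one must check that after reaching $v$, the token on $v$ is truly disjoint from the set of tokens manipulated during the switch, so that reversing the switch steps is still legal (no token collides with the one on $v$, and $v$ is not in the closed neighborhood of any reverted token at the relevant moments). This follows because $v \notin N_G[S]$ by hypothesis and the switch of Lemma~\ref{lem-switch-bipartite} keeps all tokens within the bipartition classes of $G$ — but it should be spelled out, perhaps by observing that the only token ever placed on $v$ or its neighbors is the designated one $u'$, which is excluded from the switch. If that separation is cleanly established, the rest is bookkeeping on lengths, and everything is clearly computable in linear time since both Lemma~\ref{lem-switch-bipartite} and Lemma~\ref{lem-move-bipartite} are.
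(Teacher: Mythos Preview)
Your proposal is correct and follows essentially the same approach as the paper: apply the switch of Lemma~\ref{lem-switch-bipartite} until a unique closest token $u$ emerges (your ``invoke Lemma~\ref{lem-move-bipartite}''), walk that token to $v$, then reverse the switch; the paper writes this explicitly as $\gamma = \alpha \cdot \beta \cdot \alpha^{-1}$. The one place where the paper is more concrete than your sketch is the legality of the reversal: rather than the informal ``the only token ever placed on $v$ or its neighbors is the designated one,'' the paper argues via parity and distance that every token moved by the switch ends up at distance at least $3$ (respectively $4$) from $v$ depending on whether $v$ and $D(v,S)$ lie on the same side of the bipartition, using $N_G[v]\cap S=\emptyset$ to force $d\ge 2$ or $d\ge 3$---so the hypothesis is doing real work in the reversal step, not ``only'' guaranteeing $d\ge 2$ as you first wrote (though you correct this in your final paragraph).
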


\begin{proof}
Similarly to the proof of Lemma~\ref{lem-move-bipartite}, there are two cases to consider:\newline
(1) If there is a unique token $u \in D(v, S)$ which is closest to $v$ or $D(v, S)$ is not frozen
then the reconfiguration sequence obtained by repeatedly moving the token on $u$ to a vertex which is closer to $v$ gives us the required sequence.
Since no other token is moved, we have $I_\ell = (S \setminus \{u\}) \cup \{v\}$. \newline
(2) In the other case, we have $D(v, S) \geq 2$ and $D(v, S)$ is frozen. We assume, without loss of generality, that $D(v, S) \subseteq L$.
We apply Lemma~\ref{lem-switch-bipartite} which returns a reconfiguration sequence $\langle S = I_0, I_1, \ldots, I_\ell \rangle$
where $I_\ell \cap L = \emptyset$ and $\ell = |S \cap L|$.
There exists an index $j$, with $j < \ell < |S \cap L|$, where $I_j$ has a unique token $u \in D(v, S)$ which is closest to $v$.
Let $\alpha = \langle I_0, I_1, \ldots, I_j \rangle$. Note that $\alpha$ slides exactly $j$ distinct tokens (not including $u$) from $L$ to $R$.
We let $M_\alpha$ denote these tokens. Moreover, $\alpha$ is reversable. Hence, we let $\alpha^{-1}$ denote the sequence consisting
of applying the slides of $\alpha$ in reverse order.
Now, we construct a sequence $\beta$ of slides that moves the token on $u$ to $v$.
Recall that this is a sequence of exactly $d$ slides that repeatedly slides the same token.
We denote the resulting independent set (after applying $\alpha \cdot \beta$) by $I_\beta$.
We claim that $\gamma = \alpha \cdot \beta \cdot \alpha^{-1}$ is the required sequence that transforms $S$ to $(S \setminus \{u\}) \cup \{v\}$.
To see why $\gamma$ is a valid reconfiguration sequence, it suffices to show that $N_G[M_\alpha] \cap N_G[v] = \emptyset$.
Since $N_G[v] \cap S = \emptyset$, we know that $d \geq 2$ if both $v$ and $D(v, S)$ are contained in $L$ (or $R$) and $d \geq 3$ otherwise.
If $\{v\}, D(v, S) \subseteq L$ (or $\{v\}, D(v, S) \subseteq R$) then every vertex in $M_\alpha$ is at distance at least three from $v$, as needed.
Finally, if $v \in L$ and $D(v, S) \subseteq R$ (or $v \in R$ and $D(v, S) \subseteq L$) then every vertex in $M_\alpha$ is at distance at least four from $v$.
\end{proof}

\begin{lemma}\label{lem-move-2ind}
If $G$ is a connected graph and $S$ and $T$ are any two $2$-independent sets of $G$ such that $S \cup T$ is also $2$-independent then $S \leftrightsquigarrow T$ in $G$.
\end{lemma}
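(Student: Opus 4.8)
The plan is to argue by induction on $|S \setminus T|$. If $|S \setminus T| = 0$ then $S = T$ (as $|S| = |T|$) and we are done, so assume $S \neq T$. It will be enough to reconfigure $S$ to some independent set $S'$ with $S' \subseteq S \cup T$ and $|S' \setminus T| < |S \setminus T|$: since $S \cup T$ is $2$-independent and subsets of $2$-independent sets are $2$-independent, both $S'$ and $S' \cup T$ are $2$-independent, so the inductive hypothesis applies to the pair $(S', T)$ and gives $S' \leftrightsquigarrow T$, hence $S \leftrightsquigarrow T$.

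The main tool would be the following \emph{pulling} statement: if $U$ is a $2$-independent set and $b$ is a vertex with $N_G[b] \cap U = \emptyset$, then for every token $u \in U$ at minimum distance from $b$ there is a reconfiguration sequence from $U$ to $(U \setminus \{u\}) \cup \{b\}$ that moves only the token on $u$. To prove this: if $u$ is the unique closest token to $b$, slide its token along a shortest path $u = p_0, p_1, \dots, p_d = b$; for $j \ge 1$ we have $\textsf{dist}_G(p_j, b) = d - j < d$ while every other token $u'$ satisfies $\textsf{dist}_G(u', b) \ge d + 1$, so $\textsf{dist}_G(p_j, u') \ge 2$ by the triangle inequality and every slide is legal, with $b$ itself non-adjacent to the remaining tokens by hypothesis. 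If more than one token is closest to $b$, first slide $u$ a single step toward $b$ — legal because $U$ is $2$-independent, so the new vertex has distance $\ge 2$ from every other token — making that token uniquely closest, and then finish as in the first case.

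For the inductive step, choose $a \in S \setminus T$ and $b \in T \setminus S$ minimizing $d := \textsf{dist}_G(a,b)$ and fix a shortest path $P = (a = p_0, \dots, p_d = b)$; note $b \notin N_G[S]$ because $S \cup T$ is $2$-independent. Say a token of $S \setminus \{a\}$ is \emph{blocking} if it equals or is adjacent to some internal vertex $p_j$, $1 \le j \le d-1$. A short triangle-inequality argument (using $\textsf{dist}_G(x,a) \ge 3$ for any other token $x$, since $S$ is $2$-independent, to get $j \ge 2$) shows that a blocking token $x$ has $\textsf{dist}_G(x,b) \le d - 1 < d$, so by minimality of $d$ it must lie in $S \cap T$. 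If there are no blocking tokens, sliding $a$'s token directly along $P$ reaches $b$ and produces $S' = (S \setminus \{a\}) \cup \{b\}$ with $|S' \setminus T| = |S \setminus T| - 1$, as needed. The remaining case — clearing the blocking tokens of $S \cap T$ out of the way — is the step I expect to be the main obstacle. The idea is to first route the blocking token nearest $b$ into $b$ (a "smaller" task, since it starts at distance $< d$ from $b$), which frees a vertex of $T$, and then route $a$'s token into that freed vertex; each such move keeps the configuration inside $S \cup T$ at the checkpoints, so the checkpoint configurations remain $2$-independent and the pulling statement stays applicable — informally, a line of tokens shifts toward $b$, each sliding into the vertex just vacated by the next. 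The delicate point is to choose a measure — lexicographically combining $|S \setminus T|$ with, say, $d$ or the number of blocking tokens — that strictly decreases at every recursive call; this needs care precisely because the configurations passed through \emph{inside} a single pull need not be $2$-independent (only the checkpoints are). Once the progress step is established, the outer induction on $|S \setminus T|$ completes the proof.
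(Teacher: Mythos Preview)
Your approach matches the paper's: both proceed by induction on the symmetric difference, pick a closest pair $(a,b)\in (S\setminus T)\times(T\setminus S)$, observe that any token obstructing the chosen shortest $a$--$b$ path must lie in $S\cap T$, and clear those obstructing tokens by a cascade toward $b$. Your pulling statement is correct and is exactly the mechanism the paper uses (implicitly) for the unobstructed case.

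Where you hesitate --- the cascade and its termination measure --- the paper's resolution is simpler than the lexicographic scheme you anticipate. Rather than recursing on ``pull the nearest blocking token to $b$, then aim $a$ at the freed spot'', the paper performs the whole cascade \emph{along the fixed path $P$} in one pass: list the affected vertices $a_1,\ldots,a_p\in S\cap T$ in the order $P$ meets (the neighbourhoods of) them, then slide $a_p$ to $b$ along the final segment of $P$, $a_{p-1}$ to $a_p$ along the preceding segment, and so on, finishing with $a$ to $a_1$. Each segment is free of other tokens' neighbourhoods precisely because the $a_i$ are the only tokens adjacent to $P$ and they are being vacated in order; and each checkpoint set is contained in $S\cup T$, hence $2$-independent. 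The net effect of the whole cascade is $(S\setminus\{a\})\cup\{b\}$, so $|S\setminus T|$ drops by one and the plain induction suffices --- no secondary measure is needed.

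One small gap to watch in your recursive formulation: your pulling statement lets you move a token at \emph{global} minimum distance to $b$, but the ``blocking token nearest $b$'' need not be globally nearest (a non-blocking token of $S\cap T$ could sit closer to $b$ without touching $P$). Cascading along $P$ directly, as the paper does, avoids this issue entirely.
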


\begin{proof}
We proceed by induction on $|S \Delta T| = |(S \setminus T) \cup (T \setminus S)|$, i.e., the size of the symmetric difference between $S$ and $T$.
If $|S \Delta T| = 0$ then $S = T$ and there is nothing to prove. Hence, we assume that the statement is true for $|S \Delta T| = q > 0$. We compute
a shortest path between all pairs of vertices $(u,v)$ in $G$, where $u \in S\setminus T$ and $v \in T \setminus S$. We let $(u,v)$
denote a pair where the distance is minimized and we fix a shortest path between $u$ and $v$. There are two cases to consider:\newline
(1) If $S \cap T = \emptyset$ then we can simply slide $u$ to $v$ along the shortest path and we are done. To see why, recall
that both $S$ and $T$ are $2$-independent. Hence, they are both unlocked and if there is more than one vertex in $S \setminus T$ that
is closest to $v$ then we can simply slide $u$ into one of its neighbors, say $w$, that is closer to $v$ to obtain a unique vertex which is closest
to $v$; none of those neighbors are adjacent to a vertex in $S$ since $S$ is $2$-independent. Now, assume that there exists a vertex $x$
along the shortest path from $w$ to $v$ such that $x \in N(y)$, $y \in S$. This contradicts the choice of $u$ since $y$ is closer to $v$.\newline
(2) If $S \cap T \neq \emptyset$ then there are two cases. When the shortest path from $u$ to $v$ does not contain any vertex
in $N_G[S \cap T]$ then we apply the same reasoning as above. Otherwise, let $W = w_1, w_2, \ldots, w_q$ denote the vertices in
$N_G[S \cap T]$ along the shortest path from $u$ to $v$ (sorted in the order in which they are visited).
We divide $W$ into three sets $X = W \cap (S \cap T)$, $Y = W \cap (N_G(X))$, and $Z = W \setminus (X \cup Y)$.
In other words, $X$ denotes the set of vertices in $S \cap T$, $Y$ denotes the vertices used as entry and exit points for the vertices in $X$,
and $Z$ denotes the vertices in $N_G(S \cap T)$ visited along the shortest path without passing through a vertex of $N_G(Z) \cap (S \cap T)$.
Since $S \cap T$ is $2$-independent, no vertex in $Y \cup Z$ can have two neighbors in $S \cap T$.
Moreover, since we have a shortest path from $u$ to $v$, if there exists $x \in X$ then $N_G(x) \cap Z = \emptyset$. In particular,
the shortest path either visits a vertex $x \in S \cap T$ and two of its neighbors or only visits at most three neighbors of $x$;
as otherwise we can find a shorter path from $u$ to $v$. If the shortest path visits three neighbors $w$, $y$, and $z$, of a vertex $x \in S \cap T$ then we can
safely replace this sub-path by $w$, $x$, $z$. Hence, we assume in what follows that the shortest path visits at most two neighbors of any vertex in $S \cap T$.
We construct, from $W$, the sequence $A = a_1, a_2, \ldots, a_p$ of ``affected'' vertices in $S \cap T$. In other words,
if the shortest path from $u$ to $v$ visits a vertex in $S \cap T$ or visits one or two of its neighbors then we add the vertex to $A$ (in the order in which the visits occur).
We now proceed iteratively as follows. We slide $a_p$ to $v$, then $a_{p-1}$ to $a_p$, $\ldots$, and then finally we slide $u$ to $a_1$.
Note that between every one of those pairs of vertices we have a shortest path; since we are sliding along the shortest path from $u$ to $v$.
Moreover, after moving each token to its target position, we maintain a $2$-independent set $S'$. Therefore, for each such shortest path the intersection
with $N_G[S']$ remains empty.
\end{proof}

Let $G$ be a graph and let $X \subseteq V(G)$. The \emph{interior} of $X$ is
the set of vertices in $X$ at distance at least three from $V(G) \setminus X$ (separated by at least two vertices).
We say a set $X$ is \emph{fat} if its interior is connected and contains a $2$-independent set of size at least $2k$.

\begin{lemma}\label{lem-fat}
Let $G$ be a graph of maximum degree $\Delta$. Let $v \in V(G)$ and $r \in
\mathbb{N}$. If $B(v,r)$ contains more than $2k(1 + \Delta + \Delta^2)^2$ vertices then $B(v,r)$ is fat.
\end{lemma}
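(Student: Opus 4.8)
The plan is to route everything through the smaller ball $B(v,r-2)$, which I claim is a large connected subset of the interior of $B(v,r)$ that already hosts a $2$-independent set of size at least $2k$. First I would check that $B(v,r-2)$ sits inside the interior of $B(v,r)$: if $\textsf{dist}_G(u,v)\le r-2$ and $w\in V(G)\setminus B(v,r)$, then $\textsf{dist}_G(v,w)\ge r+1$, so $\textsf{dist}_G(u,w)\ge \textsf{dist}_G(v,w)-\textsf{dist}_G(v,u)\ge 3$, and hence $u$ lies in the interior. Moreover every vertex of $B(v,r-2)$ is joined to $v$ by a shortest path all of whose vertices are again within distance $r-2$ of $v$, so $G[B(v,r-2)]$ is connected; this is the connected piece of the interior that the notion of fatness needs.

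Next I would lower-bound $|B(v,r-2)|$ using the maximum degree $\Delta$. Along a shortest path to $v$, each vertex of the distance-$i$ layer $N^i_G[v]$ has a neighbour in $N^{i-1}_G[v]$, and each vertex of $N^{i-1}_G[v]$ has at most $\Delta$ neighbours, so $|N^i_G[v]|\le \Delta\,|N^{i-1}_G[v]|$ for $i\ge 1$. Therefore $|N^{r-1}_G[v]|+|N^{r}_G[v]|\le(\Delta+\Delta^2)\,|N^{r-2}_G[v]|\le(\Delta+\Delta^2)\,|B(v,r-2)|$, and since $B(v,r)$ is the disjoint union of $B(v,r-2)$, $N^{r-1}_G[v]$ and $N^{r}_G[v]$, we get $|B(v,r)|\le(1+\Delta+\Delta^2)\,|B(v,r-2)|$, whence $|B(v,r-2)|>2k(1+\Delta+\Delta^2)$. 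The degenerate cases $r\le 1$ or $\Delta\le 1$ do not occur, since then $|B(v,r)|\le 1+\Delta$, contradicting the hypothesis for $k\ge 1$.

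Then I would extract the $2$-independent set greedily inside $B(v,r-2)$: repeatedly pick a vertex $u\in B(v,r-2)$ not yet forbidden, put it into $A$, and forbid all vertices of $B(v,r-2)$ at distance at most $2$ from $u$ in $G$ — at most $1+\Delta+\Delta^2$ of them. Each step forbids at most $1+\Delta+\Delta^2$ vertices, so $|A|\ge |B(v,r-2)|/(1+\Delta+\Delta^2)>2k$, and any two chosen vertices are at distance at least $3$ in $G$ (a vertex added later was never forbidden, hence does not lie in the distance-$2$ ball of an earlier one), so $A$ is $2$-independent in $G$ and contained in the interior of $B(v,r)$. Combined with the first step, this exhibits $B(v,r)$ as fat.

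The step I expect to be most delicate is the accounting: the hypothesis spends the factor $(1+\Delta+\Delta^2)$ exactly twice — once to pass from $B(v,r)$ to the ``deep'' ball $B(v,r-2)$, and once to thin $B(v,r-2)$ down to a $2$-independent set — which is precisely why it carries the square $(1+\Delta+\Delta^2)^2$. Getting each of these two estimates with the right constant, and making sure the set produced is $2$-independent in $G$ itself (not merely in $G[B(v,r-2)]$) and genuinely inside the interior rather than just inside $B(v,r)$, is where the care is needed.
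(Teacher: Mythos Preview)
Your proposal is correct and follows essentially the same approach as the paper: identify $B(v,r-2)$ as the relevant connected subset of the interior, bound $|B(v,r-2)|$ from below by peeling off the two outer layers (losing one factor of $1+\Delta+\Delta^2$), and then greedily extract a $2$-independent set (losing the second factor). Your layer estimate $|B(v,r)|\le(1+\Delta+\Delta^2)\,|B(v,r-2)|$ is in fact a bit cleaner than the paper's contradiction argument, and you are more explicit about the triangle-inequality step placing $B(v,r-2)$ inside the interior and about the degenerate cases $r\le 1$, $\Delta\le 1$; otherwise the two arguments are the same.
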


\begin{proof}
We only need to prove that the interior of $B(v,r)$, that is $B(v, r-2)$, contains a $2$-independent set of size at least $2k$; as $B(v, r - 2)$ is connected by construction.
First, note that any graph of maximum degree $\Delta$ on more than $2k(1 + \Delta + \Delta^2)$ vertices must contain a $2$-independent set of size at least $2k$.
So it suffices to show that $B(v, r-2)$ contains more than $2k(1 + \Delta + \Delta^2)$ vertices.
We divide $B(v,r)$ into layers, where $L_0 = \{v\}$, $L_1 = N(v)$, $\ldots$, and $L_r = N^r(v)$.
Since $G$ has maximum degree $\Delta$, for every $i \geq 1$, layer $L_i$ contains at most $(\Delta-1)^{i-1}\Delta$ vertices.
If $B(v,r-2)$ contains more than $2k(1 + \Delta + \Delta^2)$ vertices then we are done.
Otherwise, $L_{r-2}$ must contain at most $2k(1 + \Delta + \Delta^2)$ vertices. Consequently,
$L_{r-1} \cup L_r$ would contain at most $2k\Delta(1 + \Delta + \Delta^2) + 2k\Delta^2(1 + \Delta + \Delta^2) = (1 + \Delta + \Delta^2)(2k\Delta+2k\Delta^2)$ vertices.
Therefore, $B(v, r)$ contains at most $2k(1 + \Delta + \Delta^2) + (1 + \Delta + \Delta^2)(2k\Delta+2k\Delta^2) = (1 + \Delta + \Delta^2)(2k + 2k\Delta+2k\Delta^2)$
which is equal to $2k(1 + \Delta + \Delta^2)^2$ vertices, a contradiction.
\end{proof}

\begin{lemma}\label{lem-move-fat}
Let $\mathcal{I} = (G, S, T, k)$ be an instance of {\sc Token Sliding} where $G$ is a bounded-degree bipartite graph.
If $V(G) \setminus (S \cup T)$ contains a fat set $X$ then $\mathcal{I}$ is a yes-instance.
\end{lemma}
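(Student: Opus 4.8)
The plan is to use the fat set $X$ as a large "parking lot'' into which all tokens can be moved, and from which they can be delivered to $T$. Since $X$ is fat, its interior $\mathrm{int}(X)$ is connected and contains a $2$-independent set $P$ of size at least $2k$; split $P$ arbitrarily into two $2$-independent sets $P_S$ and $P_T$, each of size $k$, with $P_S \cup P_T$ still $2$-independent (this is automatic since $P$ itself is $2$-independent). The target is to show $S \leftrightsquigarrow P_S$, then $P_S \leftrightsquigarrow P_T$, then $P_T \leftrightsquigarrow T$; the middle step is immediate from \cref{lem-move-2ind} applied to the connected graph $G[\mathrm{int}(X)]$ (or to $G$), since $P_S$, $P_T$, and $P_S \cup P_T$ are all $2$-independent. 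By symmetry (reversing reconfiguration sequences) it suffices to prove $S \leftrightsquigarrow P_S$.

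For $S \leftrightsquigarrow P_S$ I would move the tokens of $S$ into the parking lot one at a time. We may assume $S$ is unlocked (by \cref{lem-rigid-bipartite}, after deleting $N[R(G,S)]$; note $X$ lies far from $S \cup T$ so it survives this deletion, and in fact $R(G,S)=\emptyset$ should follow from the presence of the fat set, but assuming unlockedness is harmless). Process the vertices of $P_S = \{p_1,\dots,p_k\}$ in order. Having already placed tokens on $p_1,\dots,p_{i-1}$, pick a token still on a vertex of $S$ and route it to $p_i$ using \cref{lem-move-far}: take $v = p_i$, which satisfies $N_G[v] \cap S' = \emptyset$ because $p_i$ is in the interior of $X$ (distance at least $3$ from everything outside $X$, hence from the remaining tokens of $S$) and is at distance at least $2$ from $p_1,\dots,p_{i-1}$ since $P_S$ is $2$-independent. \cref{lem-move-far} then hands us a reconfiguration sequence ending in $(S' \setminus \{u\}) \cup \{p_i\}$ for some $u$ among the current tokens of $S$; crucially it moves only tokens of the ``active'' part and returns every auxiliary token to its original spot, so the tokens already parked on $p_1,\dots,p_{i-1}$ are untouched. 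Iterating $k$ times lands us exactly on $P_S$.

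The main obstacle — and the step I would write most carefully — is verifying the non-interference condition $N_G[M_\alpha] \cap N_G[p_i] = \emptyset$ that \cref{lem-move-far} needs internally, together with the fact that the auxiliary slides in \cref{lem-move-far} (which shuffle tokens between the two sides of the bipartition via \cref{lem-switch-bipartite}) do not disturb the already-parked tokens. Both follow from the interior condition: $p_i$ is at distance $\geq 3$ from $V(G) \setminus X$, so any token outside $X$ and any slide of such a token stays clear of $B(p_i,1)$; and the already-parked tokens are at distance $\geq 2$ from $p_i$ with all intermediate routing for token $u$ confined to a shortest $u$–$p_i$ path, which by the choice of closest pair / $2$-independence of $P_S$ avoids $N_G[\text{parked tokens}]$. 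One should also double-check that applying \cref{lem-move-far} repeatedly keeps $S'$ an independent set with the remaining $S$-tokens still forming an unlocked set disjoint from $N_G[\{p_1,\dots,p_{i-1}\}]$ — this is where fatness (the ``$2k$'' slack, so that after removing $k$ parked vertices a $2$-independent set of size $\geq k$ remains available) is used. Finally, concatenating the three sequences and noting each is polynomial-time computable completes the argument.
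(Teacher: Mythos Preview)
Your high-level strategy matches the paper's: park all tokens in a $2$-independent subset of the interior of $X$, then reconfigure inside the parking lot via \cref{lem-move-2ind}. The explicit splitting of $P$ into $P_S$ and $P_T$ is harmless (the paper simply takes $S',T' \subset I$ and applies \cref{lem-move-2ind} directly, which is equivalent).

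There is, however, a genuine gap in your routing step $S \leftrightsquigarrow P_S$. You write ``pick a token still on a vertex of $S$ and route it to $p_i$ using \cref{lem-move-far}'' and assert that the lemma ``moves only tokens of the active part''. But \cref{lem-move-far} gives you no control over \emph{which} token is moved: it slides some $u \in D(p_i, S')$, a token closest to $p_i$ in the current configuration $S'$. The already-parked tokens $p_1,\dots,p_{i-1}$ sit in the interior of $X$ and may well be at distance $3$ from $p_i$, whereas the remaining $S$-tokens lie outside $X$ (hence at distance at least $3$ from the interior) and can be arbitrarily far away. Thus $D(p_i,S')$ can consist entirely of parked tokens, in which case the lemma shuffles one of those to $p_i$ instead of bringing in a fresh $S$-token --- and iterating makes no progress.

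The paper avoids this by invoking \cref{lem-move-far} in the subgraph $G - N[S' \cap S]$, with the parked tokens and their closed neighbourhoods removed, so only unparked tokens are visible to $D(v,\cdot)$. To make that work it also (i) selects the pair $(u,v)$ of minimum distance with $u$ unparked and $v$ a target slot, so the shortest $u$--$v$ path lies in this subgraph, and (ii) handles separately the case where the closest target slot is already occupied, by first rearranging the parked tokens inside the interior of $X$ via \cref{lem-move-2ind} (this is where the $2k$ slack is actually used). Your concern about ``$N_G[M_\alpha] \cap N_G[p_i] = \emptyset$'' is misplaced: that condition is internal to the proof of \cref{lem-move-far} and is already guaranteed by its hypothesis $N_G[v]\cap S = \emptyset$. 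The real obstacle is controlling which token the lemma moves.
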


\begin{proof}
First, recall that we assume that $G$ is connected and both $S$ and $T$ are unlocked.
Let $I$ be a $2$-independent set of size $2k$ in the interior of $X$ (at distance at least three from any vertex outside of $X$).
We prove that $S$ can be transformed into $S' \subset I$. Similar arguments hold for transforming $T$ into $T' \subset I$.
Hence, the statement of the theorem follows by applying Lemma~\ref{lem-move-2ind} on $S'$ and $T'$.

We proceed by induction on $|S \Delta S'|$, i.e., the size of the symmetric difference between $S$ and $S'$.
If $|S \Delta S'| = 0$ then $S = S'$ and we are done. Otherwise, we reduce the size of the symmetric difference as follows.
Recall that initially $S \cap S' = \emptyset$; as $X \subseteq V(G) \setminus (S \cup T)$.
However, the size of the intersection will increase as more tokens are moved to $S'$.
We pick a pair $(u,v)$ such that $u \in S \setminus S'$ and $v \in S'$ and the distance between $u$ and $v$ is minimized.
There are two cases 2 consider: \newline
(1) If $v$ does not contain a token (or $v \in S' \setminus S$) then the shortest path from $u$ to $v$ does not intersect with $N_G[S' \cap S]$. We therefore
invoke Lemma~\ref{lem-move-far} in the graph $G - (N[S' \cap S])$. This guarantees that the token on $u$ slides to $v$ and every other token remains in place.\newline
(2) Otherwise, $v$ already contains a token (or $v \in S' \cap S$). We invoke Lemma~\ref{lem-move-2ind} on the graph
induced by the interior of $X$ and transform $C = S' \cap S \subset I$ into another $2$-independent set $C' \subseteq I$ that does not contain $v$; this is possible since $|C| = |C'| \leq k$.
Now we can again invoke Lemma~\ref{lem-move-far} similarly to the previous case.
\end{proof}

\begin{theorem}\label{thm-bounded-degree-bipartite}
\sloppy%
{\sc Token Sliding} parameterized by $k$ admits a kernel with $\Oh(k^2\Delta^5)$ vertices on bounded-degree bipartite graphs.
Moreover, the problem can be solved in $\Oh^\star(k^{2k}\Delta^{5k})$-time.
\end{theorem}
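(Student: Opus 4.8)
The plan is to turn \cref{lem-fat,lem-move-fat} into a single kernelization rule and then solve the bounded-size kernel by brute force on its reconfiguration graph. First I would preprocess: since {\sc Token Sliding} decomposes over the connected components of $G$ and at most $k$ of them carry tokens, assume $G$ is connected; then apply \cref{lem-rigid-bipartite} to compute $R(G,S)$ and $R(G,T)$ in linear time, rejecting if $R(G,S)\neq R(G,T)$ and otherwise passing to the equivalent instance on $G-N[R(G,S)]$ in which $S$ and $T$ are unlocked. Now set $N := 2k(1+\Delta+\Delta^2)^2$ and apply the rule: \emph{if $V(G)\setminus(S\cup T)$ contains a fat set, output a fixed trivial yes-instance; otherwise output the (preprocessed) instance unchanged.} Soundness of the first case is exactly \cref{lem-move-fat}. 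To test the premise of the rule it suffices, by \cref{lem-fat}, to search for a ball $B(v,r)$ that is disjoint from $S\cup T$ and has more than $N$ vertices: for each $v\in V(G)$ grow a BFS ball, stopping as soon as it meets $S\cup T$ or exceeds $N$ vertices, and fire the rule if some $v$ yields a token-free ball of size larger than $N$. This runs in polynomial time.

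It remains to show that if the rule does not fire -- i.e.\ every ball disjoint from $S\cup T$ has at most $N$ vertices -- then $|V(G)| = \Oh(k^2\Delta^5)$, so that the output is a genuine kernel. I would organise this around the BFS layers $L_0=S\cup T, L_1, L_2,\dots$ from $S\cup T$. Since $|L_0|\le 2k$ and $G$ has maximum degree $\Delta$, the ball $B(S\cup T,2)$ has $\Oh(k\Delta^2)$ vertices, and the number of connected components of $G - B(S\cup T,2)$ is at most $|L_3| = \Oh(k\Delta^3)$. Fix such a component $C$: every vertex $u\in C$ has $\textsf{dist}_G(u,S\cup T)\ge 3$, so $B\!\left(u,\textsf{dist}_G(u,S\cup T)-1\right)$ is token-free and hence has at most $N$ vertices. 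Taking $u$ to be a deepest vertex of $C$ and analysing how such a token-free ball is forced to sweep across $C$ between its few neighbours in $L_2$, one bounds $|C|$ by the product of the number of attachment vertices of $C$ and $N$; summing over all components and adding back $|B(S\cup T,2)|$ yields the claimed bound. I expect this last estimate to be the main technical obstacle: extracting a size bound for a component out of the single uniform bound $N$ on token-free balls together with its $\Oh(\cdot)$-many neighbours in $L_2$, and calibrating exactly which layer to retract to so that the exponent of $\Delta$ comes out as $5$, is where the real work lies; everything else is bookkeeping (and, as a sanity check, a long path already shows the quadratic dependence on $k$ is unavoidable, since every sufficiently long token-free interval induces a fat set).

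Finally, given the kernel on $m = \Oh(k^2\Delta^5)$ vertices, enumerate all at most $\binom{m}{k}\le m^k = \Oh\!\big((k^2\Delta^5)^k\big)$ independent sets of size $k$, take them as the nodes of $\mathcal{R}_{\TS}(G,k)$, join two of them whenever they differ by a single slide, and run a breadth-first search to decide whether $S$ and $T$ lie in the same component. This costs time polynomial in the number of nodes, i.e.\ $\Oh^\star\!\big((k^2\Delta^5)^k\big) = \Oh^\star(k^{2k}\Delta^{5k})$, as required.
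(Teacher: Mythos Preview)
Your high-level plan --- reduce to a connected unlocked instance via \cref{lem-rigid-bipartite}, output a trivial yes-instance whenever \cref{lem-fat,lem-move-fat} detect a fat set in $V(G)\setminus(S\cup T)$, bound $|V(G)|$ otherwise, and finish by brute force on $\mathcal{R}_{\TS}(G,k)$ --- is exactly the paper's, and your running-time analysis for the brute-force step is fine. The divergence is entirely in the counting step, and this is precisely where you leave a gap.

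The paper does not pass to $G-B(S\cup T,2)$, does not search for token-free balls, and does not attempt any covering of a component by balls. It works directly with the connected components of $G-(S\cup T)$: there are at most $2k\Delta$ of them (each contains a vertex of $N(S\cup T)$), and by \cref{lem-fat,lem-move-fat} any component with more than $2k(1+\Delta+\Delta^2)^2=\Oh(k\Delta^4)$ vertices yields a yes-instance. Together with $|S\cup T|\le 2k$ this gives $|V(G)|\le 2k+2k\Delta\cdot\Oh(k\Delta^4)=\Oh(k^2\Delta^5)$ in one line. The point is that a component $C$ of $G-(S\cup T)$ is already a subset of $V(G)\setminus(S\cup T)$, so \cref{lem-move-fat} applies to $C$ itself; the role of \cref{lem-fat} is only to supply the threshold, and its proof (peel off two boundary layers, find a $2$-independent set in what remains) is used with the two outer BFS layers of a ball replaced by the at most $\Oh(k\Delta^2)$ vertices of $C$ at distance $\le 2$ from $S\cup T$.

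The ``main technical obstacle'' you flag is therefore self-inflicted: by committing to detect fatness only through balls $B(v,r)$, you are forced into a covering argument --- bounding a component by token-free balls anchored at its attachment vertices in $L_2$ --- that you do not complete and that, as you suspect, is delicate to calibrate to the right power of $\Delta$. Drop the ball formulation, apply the fat-set machinery to the components of $G-(S\cup T)$ directly, and the bound falls out immediately.
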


\begin{proof}
Let $\mathcal{I} = (G, S, T, k)$ be an instance of {\sc Token Sliding} where $G$ is a bipartite graph of maximum degree $\Delta$.
We assume, without loss of generality, that $G$ is connected and $S$ and $T$ are unlocked; for otherwise we can solve connected components independently and
we can return a trivial no-instance if $R(G,S) \neq R(G,T)$ (Lemma~\ref{lem-rigid-bipartite}).
Next, from Lemmas~\ref{lem-fat} and~\ref{lem-move-fat}, we know that each connected component of $V(G) \setminus (S \cup T)$
contains at most $\Oh(k\Delta^4)$ vertices; otherwise we can return a trivial yes-instance.
Since the number of components in $V(G) \setminus (S \cup T)$ is bounded by $2k\Delta$ and $|S \cup T| \leq 2k$, we get the desired bound.
To solve the problem, it suffices to construct the complete reconfiguration graph and verify if $S$ and $T$ belong to the same connected component. This concludes the proof.
\end{proof}

\subsection{{\sc Token Sliding} on bipartite $C_4$-free graphs}\label{sec:pos3}
Equipped with Theorem~\ref{thm-bounded-degree-bipartite}, we are now ready to prove that {\sc Token Sliding} admits a polynomial kernel on bipartite $C_4$-free graphs.
Our strategy will be simple. We show that if the graph contains a vertex of large degree then we have a yes-instance. Otherwise,
we invoke Theorem~\ref{thm-bounded-degree-bipartite} to obtain the required kernel.

We start with a few simplifying assumptions. Let $\mathcal{I} = (G, S, T, k)$ be an instance
of {\sc Token Sliding} where $G = (L \cup R, E)$ is a connected bipartite $C_4$-free graphs.
We assume that both $S$ and $T$ are unlocked (Lemma~\ref{lem-rigid-bipartite}).
Moreover, we assume that each vertex in $G$ can have at most one pendant neighbor. This assumption is safe because
no two tokens can occupy two pendant neighbors of a vertex; as otherwise $S$ or $T$ would be locked. Moreover, if a token is placed on
a pendant neighbor of a vertex $v$ then no other token can reach $v$.

Let $v \in V(G)$ be a vertex of degree at least $k^{2} + k + 1$ in $G$.  We let $u_p$ denote the pendant neighbor of $v$ (if it exists).
We assume, without loss of generality, that $v \in L$. We let $N_1 = N_G(v) \setminus \{u_p\} = \{u_1, u_2, \ldots, u_q\}$, $N_2 = N^2_G(v)$, and $N_3 = N^3_G(v)$.
Since $G$ is bipartite, $N_1 \subseteq R$, $N_2 \subseteq L$, and $N_3 \subseteq R$.
Moreover, since $G$ is $C_4$-free, each vertex in $N_2$ has exactly one neighbor in $N_1$.
Therefore, we partition $N_2$ into sets $N_{u_1}$, $N_{u_2}$, $\ldots$, $N_{u_q}$, where each set $N_{u_i}$ contains
the neighbors of $u_i$ in $N_2$, that is, $N(u_i) \setminus \{v\}$.
We also partition $N_3$ into two sets $M_{\text{small}}$ and $M_{\text{big}}$.
Each vertex in $M_{\text{big}}$ contains vertices connected to at least $k + 1$ sets in $N_2$. Note that, because of $C_4$-freeness, each vertex in $N_3$ is connected
to at most one vertex of any set $N_{u_i}$. We let $M_{\text{small}} = N_3 \setminus M_{\text{big}}$. Each vertex in $M_{\text{small}}$ has at most $k$ neighbors in $N_2$.
In other words, each vertex in $M_{\text{small}}$ is connected to at most $k$ sets, each one of those sets being the neighborhood of a vertex in $N_1$.

We now proceed in five stages. We first show how to transform $S$ to $S_1$ such that $S_1 \cap B(v,3) \subseteq N_2$. In other words,
we can guarantee that all tokens in the ball of radius three around $v$ are contained in $N_2$.
We then tranform $S_1$ to $S_2$ such that $S_2 \cap B(v,3) \subseteq N_1 \cup N_3$. Next, we
tranform $S_2$ to $S_3$ such that $S_3 \cap B(v,3) \subseteq N_1 \cup M_{\text{small}}$.
Then, we tranform $S_3$ to $S_4$ such that $S_4 \cap B(v,3) \subseteq N_1$ and finally to $S_5$ such that $S_5 \subseteq N_1$.
By applying the same strategy starting from $T$, we obtain $T_5 \subseteq N_1$.
We conclude our proof by showing that $S_5$ can be transformed to $T_5$.

\begin{lemma}\label{lem-move-highdeg1}
Let $G$ be a connected bipartite $C_4$-free graphs and let $v \in V(G)$ be a vertex of degree at least $k^{2} + k + 1$.
Let $S$ be an unlocked independent set of $G$ of size $k$. Then, there exists $S'$ such that $S \leftrightsquigarrow S'$ and $S' \cap B(v,3) \subseteq N_2$.
\end{lemma}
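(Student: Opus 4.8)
The plan is to move, one at a time, every token that currently lies inside $B(v,3)$ but outside $N_2$ to a vertex of $N_2$, without disturbing the tokens already outside $B(v,3)$ or already in $N_2$. The tokens of $S \cap B(v,3)$ that are not in $N_2$ can only be located at $v$ itself, at $u_p$ (the pendant neighbour, if present), in $N_1$, or in $N_3$; I will handle these cases in a fixed order — first the possible token on $v$, then the possible token on $u_p$, then tokens in $N_1$, then tokens in $N_3$ — so that earlier relocations are never undone. The key structural fact I will use throughout is $C_4$-freeness: each vertex of $N_2$ has a \emph{unique} neighbour in $N_1$, so the sets $N_{u_1},\dots,N_{u_q}$ partition $N_2$, and each vertex of $N_3$ meets each $N_{u_i}$ in at most one vertex. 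Combined with $\deg(v)\ge k^2+k+1$, this means that at most $k^2+k$ of the sets $N_{u_i}$ can be ``blocked'' by the other $k-1$ tokens (each token in $N_3$ kills at most $k$ of the $N_{u_i}$ by adjacency to a single vertex in each, and each of the at most $k-1$ other tokens sitting in $N_1\cup N_2$ blocks at most one $N_{u_i}$); hence there is always a free index $i$ whose entire set $N_{u_i}$ is non-empty and token-free, into which we can slide.

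Concretely: if there is a token on $v$, it has at least $k^2+k+1-(k-1)>0$ neighbours $u_i\in N_1$ none of whose neighbourhoods $N_{u_i}$ is touched by another token; slide the token $v\to u_i\to$ (some vertex of $N_{u_i}$), landing it in $N_2$. Next, if there is a token on $u_p$, no other token can be on $v$ (after the previous step $v$ is empty and $u_p$'s only neighbour is $v$), so slide $u_p\to v$ and then apply the previous argument to push it into $N_2$. Now suppose a token sits on some $u_i\in N_1$: by the counting above some set $N_{u_j}$ ($j\ne i$) is entirely free, and since $u_i\ne u_j$ have no common neighbour other than $v$, sliding the token $u_i\to v\to u_j\to$ (a vertex of $N_{u_j}$) is legal and moves it into $N_2$; here I must also check that $v$ is empty at the moment of the slide, which holds because all ``$v$-tokens'' were cleared in the first step and nothing we do afterwards puts a token back on $v$ permanently. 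Finally, a token on a vertex $w\in N_3$: $w$ has at least one neighbour $x\in N_2$ (it lies in $N_3$), and we want to slide $w\to x$; the only obstruction is another token adjacent to $x$, i.e. a token on the unique $u_i\in N_1$ with $x\in N_{u_i}$, or a token in $N_3$ adjacent to $x$ — but again by $C_4$-freeness $x$ has a unique neighbour in $N_1$ and we may instead route $w$ to any of its $N_2$-neighbours, of which (using $\deg(v)$ large and the bound on the number of blocking tokens) at least one is usable. Iterating until no token of $B(v,3)$ lies outside $N_2$ yields the desired $S'$ with $S\leftrightsquigarrow S'$ and $S'\cap B(v,3)\subseteq N_2$.

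The main obstacle I anticipate is bookkeeping the interaction between tokens \emph{during} the multi-slide moves (e.g. the path $u_i\to v\to u_j\to N_{u_j}$ passes through $v$, which must be empty at that instant, and through $u_j$, which must be empty), and making sure that a relocation performed for one token does not re-block a set $N_{u_\ell}$ needed for a later token; the clean way to avoid this is to always relocate into a \emph{completely free} set $N_{u_j}$ (there are always at least two such since the number of blocked indices is at most $k^2+k<k^2+k+1$ minus the tokens we have not yet moved), and to process tokens in the order $v,\,u_p,\,N_1,\,N_3$ so that once a token reaches $N_2$ it never needs to move again in this phase. A secondary technical point is the unlocked hypothesis: it guarantees $S$ has no rigid tokens, so every individual slide we prescribe is actually realisable as part of a genuine reconfiguration sequence; since each elementary move above is a single token sliding along an edge into an independent position, the concatenation is a valid sequence and the lemma follows.
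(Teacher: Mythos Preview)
Your approach has several genuine gaps that prevent it from going through as stated.

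First, the counting claim ``each token in $N_3$ kills at most $k$ of the $N_{u_i}$'' is false: nothing in the hypotheses bounds the number of distinct sets $N_{u_i}$ that a single vertex of $N_3$ can touch. (Indeed, the paper later singles out exactly the vertices of $N_3$ adjacent to \emph{more} than $k$ of the $N_{u_i}$ as the set $M_{\text{big}}$.) So your bound of $k^2+k$ blocked indices is unjustified, and with it the guarantee of a completely free $N_{u_j}$. Second, your route $u_i\to v\to u_j\to N_{u_j}$ for tokens in $N_1$ requires $v$ to have no token neighbour at the moment of the slide, but $v$ is adjacent to \emph{every} vertex of $N_1$; if two or more tokens sit in $N_1$ simultaneously, none of them can slide to $v$. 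Third, in the $N_3$ step you appeal to ``$\deg(v)$ large'' to find a usable $N_2$-neighbour of $w\in N_3$, but the degree of $v$ says nothing about the number of $N_2$-neighbours of $w$, which may be as small as one.

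The paper sidesteps all of this by exploiting the bipartition directly: since $v\in L$, one has $N_1\cup N_3\subseteq R$ and $N_2\subseteq L$, so invoking Lemma~\ref{lem-switch-bipartite} (slide every token from $R$ to $L$, possible because $S$ is unlocked) immediately empties $N_1\cup N_3\cup\{u_p\}$ of tokens. After that, the only possible offender in $B(v,3)\setminus N_2$ is $v$ itself, and a single two-step slide $v\to u_i\to y$ into a token-free $N_{u_i}$ (of which there are plenty, since at most $k-1$ of the $N_{u_i}$ can contain a token) finishes the job. You should use this global switch rather than attempting to relocate tokens one by one.
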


\begin{proof}
We invoke Lemma~\ref{lem-switch-bipartite} and move all tokens in $R$ to $L$ (since $S$ is unlocked). We denote the resulting set by $S'$.
Consequently, we know that $S' \cap B(v,3) \subseteq L$. If there is no token on $v$ then we are done; as $v \in L$, $N_1 \subseteq R$, $N_2 \subseteq L$, and $N_3 \subseteq R$.
Otherwise, given that $v$ has degree at least $k + 1$, there must exist at least one path $P = v, x, y$ such that $N_G[P] \cap S' = \{v\}$.
Hence, we can slide the token on $v$ to $y$. This completes the proof.
\end{proof}

\begin{lemma}\label{lem-move-highdeg2}
Let $G$ be a connected bipartite $C_4$-free graphs and let $v \in V(G)$ be a vertex of degree at least $k^{2} + k + 1$.
Let $S$ be an unlocked independent set of $G$ of size $k$ such that $S \cap B(v,3) \subseteq N_2$.
Then, there exists $S'$ such that $S \leftrightsquigarrow S'$ and $S' \cap B(v,3) \subseteq N_1 \cup N_3$.
\end{lemma}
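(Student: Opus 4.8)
The plan is to move the tokens currently sitting in $N_2$ "down" one level into $N_1 \cup N_3$, using the large degree of $v$ to create room. The key observation is that $S \cap B(v,3) \subseteq N_2$, so by bipartiteness no token is on $v$, none is on $N_1$, and none is on $N_3$; tokens in the ball are confined to the sets $N_{u_1},\dots,N_{u_q}$. Since there are at most $k$ tokens, at most $k$ of the sets $N_{u_i}$ contain a token, and since $v$ has degree at least $k^2+k+1 > k$, at least $k^2+1$ of the vertices $u_i \in N_1$ have $N_G[u_i] \cap (S \cap B(v,3)) = \emptyset$ except possibly via a token on some $N_{u_j}$ adjacent to $u_i$; but $C_4$-freeness forces each vertex of $N_2$ to have a unique neighbor in $N_1$, so a token on $N_{u_j}$ is adjacent only to $u_j$. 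Hence at least $k^2+1 \ge k+1$ of the $u_i$ are adjacent to no token at all, and in particular are free.

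First I would handle the tokens of $S \cap N_2$ one at a time. Take such a token on a vertex $w \in N_{u_i}$. Its unique neighbor in $N_1$ toward $v$ is $u_i$. If $u_i$ is adjacent to no other token (which holds for the token's own home set and fails for at most $k-1$ other sets), then we can simply slide $w$ to $u_i \in N_1$, and that token is now safely in $N_1$ and will not be moved again. The difficulty is when $u_i$ is blocked — but note two tokens $w, w'$ in the same set $N_{u_i}$ cannot both be present since $N_{u_i}$ is an independent set and... actually they can, so the real obstruction is that $u_i$ may be adjacent to another token in $N_{u_i}$ (impossible to avoid by sliding within $N_2$ alone), so instead I would slide $w$ the other way, outward: pick a neighbor $z$ of $w$ in $N_3$. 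We must choose $z$ free of tokens and with $N_G(z) \cap S$ controlled; since $w$ has at most $\Delta$ neighbors this requires a counting argument, but crucially we are allowed to push $w$ into $N_3$ and we have no degree bound there. The cleanest route is: for each token in $N_2$, either its $N_1$-neighbor $u_i$ is unblocked (slide in) or we route it via an available vertex of $N_3$; we must argue such a vertex exists. Because each $u_i$ has many neighbors in $N_2$ and we can first use the $k+1$ truly-free $u_i$'s as temporary parking spots, sliding blocked tokens through $v$ is not allowed, so instead I would move the unblocked tokens first into $N_1$ (freeing their sets $N_{u_i}$), then iterate: once a token leaves $N_{u_i}$ to land on $u_i$, any other token $w' \in N_{u_j}$ whose $u_j$ is now unblocked can move in too.

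The main obstacle I anticipate is tokens in $N_2$ whose $N_1$-neighbor stays blocked throughout — e.g. two tokens on the same $N_{u_i}$. For these I would invoke the outward option: slide such a token to a vertex of $N_3$. To justify that a suitable $N_3$-vertex exists, I would use the bound $\deg(v) \ge k^2+k+1$ more carefully together with $C_4$-freeness: a vertex $w \in N_{u_i}$ together with its $N_3$-neighbors forms a structure where $C_4$-freeness prevents two of those $N_3$-vertices from sharing a second $N_2$-neighbor, so the tokens (at most $k$ of them, minus those already parked) can block only boundedly many of $w$'s escape routes; picking $N_3$ large enough relative to $k$ — which is exactly what the partition into $M_{\text{small}}, M_{\text{big}}$ is set up to exploit in the next lemma — yields a free landing vertex. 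So concretely: apply Lemma~\ref{lem-switch-bipartite}-style reasoning is not needed here; instead process tokens in $N_2$ greedily, each time either sliding into its free $N_1$-neighbor or into a free $N_3$-vertex, maintaining the invariant that processed tokens sit in $N_1 \cup N_3$ and never need to move again because their new positions are non-adjacent to the remaining unprocessed tokens (those are still in $N_2$, and $C_4$-freeness plus the bipartite layering keeps the relevant neighborhoods nearly disjoint). After all $|S \cap N_2| \le k$ tokens are processed we obtain $S'$ with $S' \leftrightsquigarrow S$ and $S' \cap B(v,3) \subseteq N_1 \cup N_3$, as required.
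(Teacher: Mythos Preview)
You have overlooked the one-line argument the paper uses and replaced it with a greedy scheme that does not close. The paper simply invokes Lemma~\ref{lem-switch-bipartite}: since $S$ is unlocked, all tokens in $L$ can be slid to $R$ with one slide each. Every token of $S\cap B(v,3)$ lies in $N_2\subseteq L$, and every neighbor of a vertex of $N_2$ lies in $N_1\cup N_3\subseteq R$; no token can land on the pendant $u_p$ because its sole neighbor is $v$, which carries no token. Hence after the switch $S'\cap B(v,3)\subseteq N_1\cup N_3$. You explicitly write ``Lemma~\ref{lem-switch-bipartite}-style reasoning is not needed here'', which is exactly backwards: that lemma \emph{is} the proof.

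Your greedy alternative has a real gap. In the case you single out---two tokens in the same $N_{u_i}$---you propose to slide one of them outward to some $z\in N_3$. But nothing guarantees such a $z$ exists and is free: the degree bound is on $v$, not on $w\in N_{u_i}$, so $w$ may have arbitrarily few neighbors in $N_3$, and each of them may be adjacent to another token in $N_2$ or to a token at distance~$4$ from $v$, which your argument never controls. The $C_4$-freeness observation you sketch only limits how $N_3$-vertices share $N_2$-neighbors; it does not lower-bound $|N(w)\cap N_3|$. Likewise, your count of $k^2+1$ free vertices in $N_1$ is beside the point: a token $w\in N_{u_i}$ has $u_i$ as its \emph{only} neighbor in $N_1$, so all those other free $u_j$'s are unreachable from $w$ in a single slide. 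The global side-switching lemma is precisely what dissolves these local obstructions in one stroke.
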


\begin{proof}
Since $S \cap B(v,3) \subseteq N_2$, we simply have to invoke Lemma~\ref{lem-switch-bipartite} and move all tokens in $L$ to $R$.
Note that no token can reach $u_p$ in a single slide, as needed.
\end{proof}

\begin{lemma}\label{lem-move-highdeg3}
Let $G$ be a connected bipartite $C_4$-free graphs and let $v \in V(G)$ be a vertex of degree at least $k^{2} + k + 1$.
Let $S$ be an unlocked independent set of $G$ of size $k$ such that $S \cap B(v,3) \subseteq N_1 \cup N_3$.
Then, there exists $S'$ such that $S \leftrightsquigarrow S'$ and $S' \cap B(v,3) \subseteq N_1 \cup M_{\text{small}}$.
\end{lemma}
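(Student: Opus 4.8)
The plan is to remove, one token at a time, the tokens that currently lie on vertices of $M_{\text{big}}$ and relocate them onto vertices of $N_2$ (from which Lemma~\ref{lem-move-highdeg2}-type moves will not be re-needed, since the target condition $S' \cap B(v,3) \subseteq N_1 \cup M_{\text{small}}$ already tolerates tokens on $N_2$... wait — it does not; let me instead relocate them so that, after a final clean-up, no token sits on $M_{\text{big}}$). Concretely: pick a token $t$ sitting on some $x \in M_{\text{big}}$. By definition $x$ is adjacent to vertices in at least $k+1$ distinct sets $N_{u_i}$. At most $k-1$ of the other tokens can block things, so there is an index $i$ with $u_i \notin S$, with $x$'s neighbor $y_i \in N_{u_i}$ free of a token, and with $y_i$ having no token-neighbor other than possibly $x$ (here $C_4$-freeness is used: $x$ meets $N_{u_i}$ in exactly one vertex $y_i$, and $y_i$'s only neighbors are $u_i$, $x$, and vertices of $N_4$; a counting argument over the $\ge k+1$ choices of $i$ against the $\le k-1$ other tokens plus the token on $v$, if any, guarantees such a clean $i$). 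Slide $t$ from $x$ along $x y_i$ onto $y_i \in N_2$.

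The subtlety is that $y_i \in N_2 \subseteq L$, so we have re-introduced a token into $B(v,3) \cap L$, and we must make sure the process terminates and that the final configuration meets the stated condition. First I would handle \emph{all} tokens of $M_{\text{big}}$ this way in succession; each slide strictly decreases $|S \cap M_{\text{big}}|$ and never increases it, because the only vertices we ever push tokens onto are in $N_2$ (and, transiently, we never touch $M_{\text{big}}$ except to vacate it). After all such slides we have a set $S''$ with $S'' \cap B(v,3) \subseteq N_1 \cup N_2 \cup M_{\text{small}}$. Now invoke Lemma~\ref{lem-move-highdeg2} — more precisely, its proof technique via Lemma~\ref{lem-switch-bipartite}: we would like to push the $N_2$-tokens back to $R$. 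But a naive application could drop tokens back onto $M_{\text{big}}$. Instead, observe that each token on some $y \in N_{u_i} \subseteq N_2$ has, besides $u_i$, only neighbors in $N_3$; so sliding it to $R$ either returns it to $u_i \in N_1$ or sends it to $N_3$. To force the former, note $u_i$ is free (we chose it so, and no later slide re-occupies it because we only ever move onto $N_2 \cup N_3$), hence we can slide $y \to u_i$; doing this for every $N_2$-token yields $S'$ with $S' \cap B(v,3) \subseteq N_1 \cup M_{\text{small}}$, as required.

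The main obstacle — and the step I would write most carefully — is the interaction between the several tokens being relocated: I must argue that the choices of the indices $i$ for different $M_{\text{big}}$-tokens can be made \emph{disjoint}, so that sliding one token onto $y_i$ does not spoil the ``clean index'' property needed for the next. This is where the quantitative gap between $k+1$ (sets met by an $M_{\text{big}}$-vertex) and $k$ (total tokens) matters: process the $M_{\text{big}}$-tokens one at a time, and when processing token $t$ on $x$, the $\le k-1$ other tokens (some possibly already moved onto $N_2$) together occupy or block at most $k-1$ of the $\ge k+1$ candidate sets $N_{u_i}$, and the token on $v$ (if present) blocks none of the $u_i$'s since $v u_i$ being an edge does not place a token on $u_i$ — so a clean index always survives. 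One then checks, using $C_4$-freeness once more, that $y_i \in N_2$ has at most one token-neighbor, namely $x$ itself, so the slide $x \to y_i$ is valid; and that after the slide the invariant ``every token in $B(v,3)$ lies in $N_1 \cup N_2 \cup M_{\text{small}}$ and the number of $M_{\text{big}}$-tokens has strictly dropped'' is maintained. Termination is immediate, and the final re-routing of $N_2$-tokens to $N_1$ described above completes the construction.
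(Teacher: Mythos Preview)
Your approach is essentially the paper's: both find, for each token $w\in M_{\text{big}}$, a clean length-$3$ path $w$--$y_i$--$u_i$--$v$ and slide $w$ into $N_1$ along it, iterating until $M_{\text{big}}$ is empty. You merely split each slide into two phases (first all $w\to y_i$, then all $y_i\to u_i$) where the paper does both at once per token.

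There is, however, one concrete error in your counting. You write that ``$y_i$'s only neighbors are $u_i$, $x$, and vertices of $N_4$''. This is false: $y_i\in N_2$ is at distance $2$ from $v$, so its neighbors lie in $N_1\cup N_3$, never in $N_4$. In particular, a token on some $z\in N_3\setminus\{x\}$ adjacent to $y_i$ would block the slide $x\to y_i$, and your argument as written does not account for this. The fix is short and is exactly what the paper uses: by $C_4$-freeness any two vertices of $R$ share at most one common neighbor, so each token $z\in N_3\setminus\{x\}$ is adjacent to at most one of the $y_i$'s (the $y_i$'s being distinct neighbors of $x$). Hence each of the $\le k-1$ other tokens---whether in $N_1$, in $N_3$, or already moved to $N_2$ (the latter block nothing, being on the same side as $y_i$)---together with the $\le k-1$ previously used indices, rule out at most $k-1$ of the $\ge k+1$ candidates, leaving a clean index. (You also need not worry about a token on $v$: the hypothesis $S\cap B(v,3)\subseteq N_1\cup N_3$ already rules that out.) With this correction your two-phase argument goes through and is equivalent to the paper's.
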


\begin{proof}
We make use of the fact that each vertex in $M_{\text{big}}$ is connected to at least $k + 1$
sets in $N_2$ and hence is connected (via a vertex in $N_2$) to at least $k + 1$ vertices in $N_1$.
Let $w$ be a vertex in $S \cap M_{\text{big}}$; if $S \cap M_{\text{big}}$ then $S \cap B(v,3) \subseteq N_1 \cup M_{\text{small}}$ and we are done.
Recall that $|S \cap N_1| + |S \cap N_3| \leq k$, no two vertices
in $N_3$ have two common neighbors in $N_2$, and no two vertices in $N_2$ have two common neighbors in $N_1$.
Hence, there exists at least $k + 1$ vertex-disjoint path connecting $v$ to $w$.
At least one such path, say $P = \{w, x, y, z, v\}$, satisfies $N_G[P] \cap S = \{w\}$.
We slide $w$ to $z$ and call the resulting set again $S$ for simplicity.
This process is repeated as long as there are tokens in $M_{\text{big}}$.
We let $S'$ denote the resulting set, i.e., where $S' \cap M_{\text{big}} = \emptyset$.
\end{proof}

\begin{lemma}\label{lem-move-highdeg4}
Let $G$ be a connected bipartite $C_4$-free graphs and let $v \in V(G)$ be a vertex of degree at least $k^{2} + k + 1$.
Let $S$ be an unlocked independent set of $G$ of size $k$ such that $S \cap B(v,3) \subseteq N_1 \cup M_{\text{small}}$.
Then, there exists $S'$ such that $S \leftrightsquigarrow S'$ and $S' \cap B(v,3) \subseteq N_1$.
\end{lemma}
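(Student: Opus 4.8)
The plan is to mimic the strategy of the previous three lemmas: we repeatedly pick a token sitting on a vertex $w \in S \cap M_{\text{small}}$ and slide it, via a length-$4$ path $w, z, y, x, v$ through $N_2$ and $N_1$, onto a vertex of $N_1$ (specifically onto the vertex $x \in N_1$), until no token remains in $M_{\text{small}}$. To make this work we must produce, for the chosen $w$, a path $P = w, z, y, x, v$ with $z \in N_2$, $y \in N_1$, $x \in N_2$ — wait, let me recount parities: $v \in L$, $N_1 \subseteq R$, $N_2 \subseteq L$, $N_3 \subseteq R$, and $w \in M_{\text{small}} \subseteq N_3 \subseteq R$, so a shortest $w$–$v$ path has the form $w, a, b, v$ with $a \in N_2 \subseteq L$ and $b \in N_1 \subseteq R$. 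Thus the relevant path is $P = \{w, a, b, v\}$ of length $3$, and we want to slide $w$ to $b \in N_1$; the move is legal provided $N_G[P] \cap S = \{w\}$, i.e. none of $a$, $b$, $v$ is in $S$ and none of their neighbors other than $w$ carries a token. First I would record the counting facts available from $C_4$-freeness: each $w \in M_{\text{small}}$ has at least one neighbor in $N_2$ (in fact lies on some shortest path from $v$), distinct vertices of $N_3$ share at most one common neighbor in $N_2$, and distinct vertices of $N_2$ share at most one common neighbor in $N_1$; consequently the shortest $w$–$v$ paths split into internally vertex-disjoint pieces as in Lemma~\ref{lem-move-highdeg3}.

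The key step is a blocking/counting argument bounding how many candidate paths $P$ can be "spoiled" by the at most $k$ tokens of $S$. A token on $v$ spoils every path, so we first dispose of that case: if $v \in S$, then since $v$ has degree at least $k^2+k+1$ and $S \cap B(v,3) \subseteq N_1 \cup M_{\text{small}}$ has at most $k$ tokens outside $v$, there is a path $v, u_i, a$ with $N_G[\{v,u_i,a\}] \cap S = \{v\}$ — here we use that each $u_i \in N_1$ has its $N_2$-neighbourhoods $N_{u_i}$ pairwise sharing nothing, so the at most $k$ other tokens block at most $k$ of the $q \ge k^2+k$ choices of $u_i$ together with at most $k$ more via their own neighbours — so we can slide the token off $v$ first. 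Once $v \notin S$, fix any $w \in S \cap M_{\text{small}}$. Because $w$ has at most $k$ neighbours in $N_2$, but each such neighbour $a \in N_{u_i}$ for a uniquely determined $u_i \in N_1$ (by $C_4$-freeness $a$ has exactly one neighbour in $N_1$), the shortest $w$–$v$ paths use at most $k$ distinct vertices $b = u_i$ of $N_1$ — this is exactly why $M_{\text{small}}$ is defined with the threshold $k$, and it is the crux: unlike $M_{\text{big}}$ we cannot guarantee more than $k$ disjoint paths, so we cannot simply "outnumber" the tokens. Instead I would argue that after Stage~3 (Lemma~\ref{lem-move-highdeg3}) we are free to first park the tokens currently on $N_1$ and $M_{\text{small}}$ out of the way: invoke Lemma~\ref{lem-switch-bipartite} to push all of $S \cap R$ into $L$, landing tokens in $N_2 \cup N_4$; now $S \cap B(v,3) \subseteq N_2$, and we are back in the situation already handled — but that would undo Stage~2.

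So the honest route, and the one I expect the authors take, is the following. Process the tokens of $S \cap M_{\text{small}}$ one at a time. For the current token on $w$, we need a single free $w$–$v$ path of length $3$. The obstacles are: (i) some neighbour $a \in N_2$ of $w$ that could be used is occupied or adjacent to an occupied vertex; (ii) the vertex $b \in N_1$ we would land on is occupied; (iii) $v$ is occupied (already handled). Since $|S| \le k$, at most $k-1$ tokens remain besides the one on $w$; each such token lies in $N_1$ or in $M_{\text{small}}$ (by hypothesis, within $B(v,3)$) or outside $B(v,3)$, and a token outside $B(v,3)$ cannot be adjacent to $a \in N_2$ or to $b \in N_1$ or to $v$. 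A token in $N_1$ blocks at most the one path through it (as $b$) plus, being adjacent to the vertices of its own set $N_{u_i} \subseteq N_2$, it may block those $a$'s; a token in $M_{\text{small}}$ is in $R$, hence non-adjacent to $b \in R$ and to $v$, and adjacent only to its own $\le k$ neighbours in $N_2$. The delicate point is that $w$ itself may have as few as one neighbour in $N_2$, in which case a single badly-placed token kills the only available path. I would resolve this by first using Lemma~\ref{lem-switch-bipartite}-type moves locally: move every token currently in $N_1 \cup M_{\text{small}}$ that is \emph{not} on $w$ one step away from $v$ — tokens on $N_1$ slide into $N_2$-vertices not on $w$'s shortest path (possible since each has $\ge$ many private $N_2$-neighbours once $v$ is vacated), tokens on $M_{\text{small}}$ slide outward into $N_4$ — thereby clearing $N_2 \cup N_3$ of all tokens except the one on $w$; then the unique-token sliding argument of Lemma~\ref{lem-move-bipartite}(1) applies to push $w$ to $b \in N_1$; finally restore the parked tokens by reversing their one-step moves, which is legal because $b \in N_1$ is at distance $\ge 2$ from every parked token (parked tokens sit in $N_2$ or $N_4$, and $b$'s only $B(v,3)$-neighbours are in $N_2 \cup N_3$, none of which are parked). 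The main obstacle, and where the proof needs care, is verifying this last non-interference claim — that the "park, slide $w$, unpark" round-trip never creates an adjacency — and checking that the degree bound $k^2+k+1$ is exactly what is needed to guarantee, at each of the $\le k$ iterations, both a free vacating move for the $N_1$-tokens and a free landing vertex $b$ for $w$; this is a routine but fiddly pigeonhole once the geometry ($C_4$-freeness forcing near-disjointness of the sets $N_{u_i}$ and of the $N_2$-neighbourhoods of $N_3$-vertices) is set up as above.
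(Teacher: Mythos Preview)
You correctly identify the crux: a vertex $w\in M_{\text{small}}$ may have as few as one neighbour in $N_2$, so the ``outnumber the tokens with disjoint paths'' argument from Lemma~\ref{lem-move-highdeg3} breaks down. However, your proposed fix---park every other token in $N_1\cup M_{\text{small}}$ one step away, slide $w$, then unpark---has two concrete gaps. First, ``tokens on $M_{\text{small}}$ slide outward into $N_4$'' need not be possible: a vertex of $M_{\text{small}}$ may have all its neighbours in $N_2$ and none in $N_4$, or its $N_4$-neighbours may themselves be blocked. Second, ``tokens on $N_1$ slide into $N_2$-vertices not on $w$'s shortest path'' can also fail: a vertex $u_i\in N_1$ may have a single non-$v$ neighbour $a\in N_{u_i}\subseteq N_2$, and $a$ may be adjacent to some token in $M_{\text{small}}$. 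Your final sentence defers exactly these checks to a ``routine but fiddly pigeonhole,'' but that is where the actual content of the lemma sits, and as written the parking steps are not justified.

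The paper closes this gap with one extra definition that your sketch is missing. Let $A=S\cap M_{\text{small}}$ and let $C\subseteq N_1$ be the set of $u_i$ such that $N_{u_i}$ contains a neighbour of some vertex of $A$. Since $|A|\le k$ and each vertex of $M_{\text{small}}$ touches at most $k$ of the sets $N_{u_i}$, we get $|C|\le k^2$, and the degree bound now reads $|N_1\setminus C|>k$. The point is that for $u_i\in N_1\setminus C$ the parking move into $N_{u_i}$ is \emph{automatically} safe: vertices of $N_{u_i}$ have no neighbour in $A$ by the definition of $C$, no neighbour in $M_{\text{big}}$ (no tokens there), and their unique $N_1$-neighbour is $u_i$ itself. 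With this in hand the paper proceeds in two stages: first empty $C$ by repeatedly parking the $N_1\setminus C$ tokens, invoking Lemma~\ref{lem-move-far} to pull a token from $C$ to a free vertex of $N_1\setminus C$, and unparking; then, once $C$ is empty, each $w\in A$ is at distance two from an empty vertex of $C$ and the same park--pull--unpark routine moves it there and on to $N_1\setminus C$. The set $C$ is precisely the missing ingredient that makes both your ``free vacating move'' and your ``free landing vertex'' claims go through, and it is where the $k^2$ in the degree bound $k^2+k+1$ is actually spent.
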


\begin{proof}
Since $S \cap B(v,3) \subseteq N_1 \cup M_{\text{small}}$, we know that every token not in $N_1$ must be in $M_{\text{small}}$.
We let $A$ denote the subset of $M_{\text{small}}$ containing tokens. Note that if $A$ is empty, then we are done.
Otherwise, we know that each token in $A$ is connected to at most $k$ sets in $N_2$ (by construction) and therefore at most $k$ vertices in $N_1$.
We let $B$ denote the at most $k^2$ subsets of $N_2$ that contain a vertex with a neighbor in $A$. We let $C$ denote the at most $k^2$ vertices
of $N_1$ whose neighborhoods are in $B$. We proceed in two stages.
First we move all tokens in $C$ to some vertex in $N_1 \setminus C$.
To do so, we invoke Lemma~\ref{lem-move-far} as follows.
If there are any tokens originally in $N_1 \setminus C$, then we move them to one of their neighbors in $N_2$ (this is possible since no two vertices in $N_2$
have two common neighbors in $N_1$ and there are no tokens in $M_{\text{big}}$). We call the resulting set $S''$.
Note that since $|C|$ is at most $k^2$, we have $|N_1 \setminus C| > k$. Therefore, there exists at least one vertex $u$ in $N_1 \setminus C$ such that
$N[u] \cap S = N[u] \cap S'' = \emptyset$.
Consequently, we have $D(u, S'') \subseteq C$ (at distance two) and we can apply Lemma~\ref{lem-move-far} to move one token from $C$ to $u$
and then reverse the slides of the tokens originally in $N_1 \setminus C$. We repeat this procedure as long as there are tokens in $C$.
In the second stage, we apply a similar procedure to move all tokens in $A$ to some vertex in $C$ and then from $C$ to a vertex in $N_1 \setminus C$.
This is possible because after sliding the tokens originally in $N_1 \setminus C$ to their corresponding neighbors in $N_2$ the
vertices in $A$ become closest to vertices in $C$ (at distance two).
\end{proof}

\begin{lemma}\label{lem-move-highdeg5}
Let $G$ be a connected bipartite $C_4$-free graphs and let $v \in V(G)$ be a vertex of degree at least $k^{2} + k + 1$.
Let $S$ and $T$ be two unlocked independent sets of $G$ of size $k$ such that $S \subseteq N_1$ and $T \subseteq N_1$.
Then, $S \leftrightsquigarrow T$ and this sequence can be computed in polynomial time.
\end{lemma}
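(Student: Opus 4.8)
The plan is to reduce the problem to a single "target" vertex of $N_1$ that both configurations can funnel all their tokens onto, using the fact that $v$ has degree at least $k^2+k+1$ so that $N_1$ is very large compared to $k$. First I would observe that since $S\subseteq N_1$ and $N_1$ is an independent set lying entirely in $R$, every token in $S$ sits at distance $2$ from $v$ via its (unique, by $C_4$-freeness) path $u_i, v$; symmetrically for $T$. The strategy is to show $S\leftrightsquigarrow S^\star$ and $T\leftrightsquigarrow S^\star$ for a common canonical configuration, which suffices by reversibility of token sliding.

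The key step is the following. Pick any vertex $u^\star\in N_1$; I want to slide all $k$ tokens onto a $2$-independent set inside $N(u^\star)\cup\{v\}$ — concretely, onto $v$ together with $k-1$ vertices of $N_{u^\star}$ chosen pairwise non-adjacent (they are automatically at distance $2$ from each other through $u^\star$, but $C_4$-freeness forces them to have no other common neighbor, so such a $2$-independent selection of size $k-1$ exists provided $|N_{u^\star}|\ge$ some function of $k$; if $N_{u^\star}$ is too small I instead route through several $N_{u_i}$'s, using that $\sum_i |N_{u_i}| = |N_2|$ and that $|N_1|\ge k^2+k$). More robustly: since $|N_1|>k^2+k$, at most $k$ of the $u_i$ carry a token or are adjacent to one, so I can repeatedly apply Lemma~\ref{lem-move-far} to move the token on $u_i$ (at distance $2$ to a free $u_j$ via $v$ is blocked since $v\in L$; instead move along $u_i, w, u_j$ where $w\in N_{u_i}\cap N_{u_j}$ is impossible by $C_4$-freeness) — so in fact the clean move is to first slide every token from $N_1$ one step into its own private set $N_{u_i}$ (each $u_i$ has its own $N_{u_i}$, and no two tokens land adjacent because distinct $N_{u_i}, N_{u_j}$ share no vertex), reaching a configuration contained in $N_2$, then use Lemma~\ref{lem-switch-bipartite} / the earlier machinery together with the large degree of $v$ exactly as in Lemmas~\ref{lem-move-highdeg1}–\ref{lem-move-highdeg4} to consolidate. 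The cleanest route is: apply the already-proved chain of lemmas in reverse spirit — push the configuration down to one supported on $N_1\cup\{v\}$ with the token on $v$ present, and with the remaining $k-1$ tokens on a canonical set of neighbors of a fixed $u^\star$.

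So concretely the steps, in order, are: (i) slide each token of $S$ from its $u_i\in N_1$ into an arbitrary neighbor in $N_{u_i}\subseteq N_2$, which is always possible since the $N_{u_i}$ are disjoint and each token thereby only blocks its own $u_i$; now $S$ is mapped to a configuration $\widehat S\subseteq N_2$; (ii) apply Lemma~\ref{lem-move-2ind}, or repeated invocations of Lemma~\ref{lem-move-far} inside $G$, to move $\widehat S$ to the fixed canonical $2$-independent set $S^\star$ consisting of $v$ and $k-1$ pairwise non-adjacent vertices of $N_{u^\star}$ — feasibility here uses that $B(v,3)$ contains far more than the $\mathcal{O}(k)$ vertices needed, and that $v$'s huge degree guarantees escape routes whenever a would-be move is blocked, exactly the kind of argument used in Lemmas~\ref{lem-move-highdeg3} and~\ref{lem-move-highdeg4}; (iii) do the same for $T$, reaching the same $S^\star$; (iv) conclude $S\leftrightsquigarrow S^\star\leftrightsquigarrow T$ by reversing the $T$-side sequence. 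Each phase runs in polynomial time since Lemmas~\ref{lem-move-far}, \ref{lem-move-2ind}, \ref{lem-switch-bipartite} are constructive in linear time and there are at most $\mathcal{O}(k)$ token moves to orchestrate, each a path of length $\mathcal{O}(n)$.

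The main obstacle I expect is step (ii): verifying that the canonical target $S^\star$ is genuinely reachable from an arbitrary $\widehat S\subseteq N_2$ without ever getting stuck in a frozen $D(v,S)$ situation. The resolution is precisely the large-degree hypothesis: because $\deg(v)\ge k^2+k+1$, whenever we want to move a token toward $v$ (or toward $u^\star$) and find the short paths blocked, there are at least $k+1$ internally disjoint alternatives through distinct $u_i$'s, so at least one is token-free — this is the same counting used throughout Lemmas~\ref{lem-move-highdeg1}–\ref{lem-move-highdeg4}, and it must be reapplied carefully here to show $D(\cdot,\cdot)$ is never frozen, or if it is, that Lemma~\ref{lem-move-far}'s detour construction applies because the relevant vertex satisfies $N_G[\cdot]\cap S=\emptyset$.
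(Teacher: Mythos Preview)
Your approach is substantially more complicated than what is needed, and step~(ii) as written has a genuine gap. You propose a canonical target $S^\star$ consisting of $v$ together with $k-1$ vertices of $N_{u^\star}$, and you call this a $2$-independent set; but every vertex of $N_{u^\star}$ is at distance exactly $2$ from $v$ (via $u^\star$), so $S^\star$ is \emph{not} $2$-independent and Lemma~\ref{lem-move-2ind} does not apply. Likewise, your intermediate set $\widehat S\subseteq N_2$ need not be $2$-independent: two vertices in different $N_{u_i}$'s can share a neighbor in $N_3$. Finally, there is no guarantee that $|N_{u^\star}|\ge k-1$, since the degree hypothesis is on $v$, not on $u^\star$; you acknowledge this but do not resolve it.

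The paper's proof avoids all of this by a much more direct symmetric-difference reduction. While $S\ne T$, pick $u\in S\setminus T$ and $w\in T\setminus S$; slide every token of $S$ except the one on $u$ into its own private set $N_{u_i}\subseteq N_2$ (these sets are disjoint by $C_4$-freeness, so no conflict arises); now $u$ is the only token adjacent to $v$, so slide $u$ to $v$ and then $v$ to $w$ (valid since $w\notin S$ implies no token was parked in $N_w$); then reverse the parking moves. This replaces $u$ by $w$ while leaving all other tokens fixed, reducing $|S\Delta T|$ by $2$. Iterate. Your step~(i) already contains the key manoeuvre---parking tokens in their private $N_{u_i}$'s---but you use it to leave $N_1$ entirely, whereas the paper uses it only temporarily to free up $v$ as a transit vertex between two positions in $N_1$.
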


\begin{proof}
As long as there exists $u \in S \setminus T$ and $w \in T \setminus S$ we can slide $u$ to $w$ as follows.
Slide all tokens (except $u$) to one of their neighbors in $N_2$. Then slide $u$ to $v$ and then slide from $v$ to $w$.
Finally, reverse all the other slides from $N_2$ to $N_1$.
\end{proof}

\begin{lemma}\label{lem-move-highdeg6}
Let $G$ be a connected bipartite $C_4$-free graphs and let $v \in V(G)$ be a vertex of degree at least $k^{2} + k + 1$.
Let $S$ be an unlocked independent set of $G$ of size $k$ such that $S \cap B(v,3) \subseteq N_1$.
Then, there exists $S'$ such that $S \leftrightsquigarrow S'$ and $S' \subseteq N_1$.
\end{lemma}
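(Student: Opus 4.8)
The goal is to move every token that lies outside $N_1$ but inside $B(v,3)$ out to $N_1$, while keeping the tokens already in $N_1$ inside $N_1$. Since by hypothesis $S \cap B(v,3) \subseteq N_1$, this lemma is in fact \emph{trivially true}: there is nothing left to move, so we may take $S' = S$. The real content that the surrounding argument needs, however, is the \emph{next} step of the five-stage plan: transforming $S_4$ (with $S_4 \cap B(v,3) \subseteq N_1$) into $S_5$ with $S_5 \subseteq N_1$ entirely. So I will assume the intended statement is: \emph{there exists $S'$ with $S \leftrightsquigarrow S'$ and $S' \subseteq N_1$}, i.e. all $k$ tokens end up on neighbors of $v$.

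The plan is as follows. Write $S \cap N_1 = P$ and note $|P| = k - |S \setminus N_1|$; the tokens in $S \setminus N_1$ lie in $V(G) \setminus B(v,3)$, so they are at distance $\geq 4$ from $v$. I process them one at a time. Pick a token $u \in S \setminus N_1$, and let $v' \in D(v, S)$ be a token of $S$ closest to $v$; by construction $\textsf{dist}_G(v',v) \geq 4$ unless $v' \in N_1$ already. Since $v$ has degree $\geq k^2 + k + 1$, and the at most $k$ tokens of $S$ together with $C_4$-freeness (no two vertices of $N_1$ share two common neighbors, no two of $N_2$ share two common neighbors) can "block" only $O(k)$ of the $\geq k^2 + k + 1$ length-$3$ paths leaving $v$, there are at least, say, $k+1$ internally-disjoint paths of length $3$ from $v$ to distinct vertices of $N_3$, and more generally the ball $B(v,3)$ is large. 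The key point is that $v$ is "deep inside" a region with no tokens, so I can invoke Lemma~\ref{lem-move-far} (with the vertex $v$ playing the role of the target, after first evacuating $N_1$-tokens to $N_2$ if necessary to make $N_G[v] \cap S = \emptyset$) to slide one token from $S \setminus N_1$ all the way to $v$, without disturbing the other tokens, at a cost of $2(|S|-1) + d$ slides. Then I slide that token from $v$ one step into $N_1$ (possible since $v$ has $\gg k$ neighbors and only $k$ tokens). This strictly decreases $|S \setminus N_1|$, and iterating handles all tokens outside $N_1$.

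There is a bookkeeping subtlety: after moving some tokens into $N_1$, later applications of Lemma~\ref{lem-move-far} might be obstructed because $v$'s neighbors are getting occupied, or because a token sitting on $v$ itself blocks $N_G[v] \cap S = \emptyset$. I would handle this exactly as in Lemma~\ref{lem-move-highdeg4}: before each evacuation step, slide the $O(k)$ tokens currently on $N_1$ to private neighbors in $N_2$ (possible by $C_4$-freeness and the absence of tokens in $M_{\text{big}}$, which is guaranteed since $S \cap B(v,3) \subseteq N_1$), then apply Lemma~\ref{lem-move-far} inside $G$ minus the closed neighborhood of those relocated tokens, then reverse. Because $|N_1| \geq k^2 + k$, there is always room to place all $k$ tokens on distinct vertices of $N_1$ at the end.

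The main obstacle is making precise that the length-$3$ (and longer) neighborhood structure around $v$ really does give the "no nearby tokens" condition required to apply Lemma~\ref{lem-move-far} with target $v$: one must verify that after evacuating the $N_1$-tokens to $N_2$, every token is at distance $\geq 4$ from $v$ (indeed at distance $\geq 2$ into $N_2$ or farther), so that the sequence $\alpha \cdot \beta \cdot \alpha^{-1}$ of Lemma~\ref{lem-move-far} is valid — i.e. $N_G[M_\alpha] \cap N_G[v] = \emptyset$. This is where $C_4$-freeness and the partition of $B(v,3)$ into $N_1, N_2, N_3$ do the work, and it is essentially the same estimate already carried out in the proofs of Lemmas~\ref{lem-move-highdeg3} and~\ref{lem-move-highdeg4}, so I expect no genuinely new difficulty, only careful case analysis on where each token sits relative to $v$.
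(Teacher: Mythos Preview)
Your overall strategy---pull far tokens toward $v$ using Lemma~\ref{lem-move-far}---has a genuine gap. Lemma~\ref{lem-move-far} moves a token from $D(v,S)$, the set of tokens \emph{closest} to $v$. If any tokens already sit in $N_1$ (or have been evacuated to $N_2$), those tokens are at distance $1$ or $2$ from $v$ and hence are precisely what $D(v,S)$ selects; the far tokens at distance $\geq 4$ are never touched. Your workaround of passing to $G$ minus the closed neighbourhoods of the evacuated $N_2$ tokens is not safe either: deleting those neighbourhoods removes their $N_3$-neighbours, and a single such $N_3$ vertex can be a cut vertex separating all the far tokens from $v$. The subgraph need not be connected, so Lemma~\ref{lem-move-far} does not apply there.

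The paper avoids this by never trying to slip past the $N_1$ tokens. It instead picks an empty target $u\in N_1\setminus X$ (where $X=S\cap N_1$) and a far token $w$ at minimum distance from $u$. If $w$ is uniquely closest and the $u$--$w$ shortest path avoids $N[X]$, the slide is direct; if the path first meets $N[x]$ for some $x\in X$, the paper invokes Lemma~\ref{lem-move-highdeg5} to shuffle tokens \emph{within} $N_1$ so that $u$ becomes occupied and $x$ is freed, and then slides $w$ to $x$. When $w$ is not uniquely closest, it applies Lemma~\ref{lem-switch-bipartite} in the outside graph $G'=G-(\{v\}\cup N_1\cup N_2)$: the first moment this parity switch pushes a token into $M_{\text{small}}$, into $M_{\text{big}}$, or creates a unique closest token, one reduces to Lemma~\ref{lem-move-highdeg4}, Lemma~\ref{lem-move-highdeg3}, or the unique-closest case respectively. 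The idea you are missing is this use of Lemma~\ref{lem-move-highdeg5} to rearrange the $N_1$ tokens out of the way, combined with the parity switch \emph{outside} $B(v,2)$ to break ties---rather than evacuating $N_1$ and hoping the residual graph stays connected.
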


\begin{proof}
We let $X = S \cap B(v,3)$.
Since $S \cap B(v,3) \subseteq N_1$, every vertex of $S \setminus X$ is at distance at least three from $X$.
We compute the shortest path from every vertex in $S \setminus X$ to every vertex in $N_1 \setminus X$.
We let $(u,w)$ denote a pair with the minimum distance, where $u \in N_1 \setminus X$ and $w \in S \setminus X$.
If $w$ is uniquely closest to $u$ then there are two cases to consider:
\begin{itemize}
\item[1.1] When the shortest path from $u$ to $w$ does not intersect with $N[X]$ then we simply slide $u$ to $w$.
\item[1.2] Otherwise, if the shortest path $P$ intersects with $N[X]$, then there exists a
first vertex $x \in X$ such that $P \cap N_G[x] \neq \emptyset$. Therefore, we apply Lemma~\ref{lem-move-highdeg5}
in $G[B(v,r)]$ to transform $X$ into a set $X'$ such that $u \in X'$ and $x \not\in X'$. Then we can safely slide $w$ to $x$.
\end{itemize}
Now if $w$ is not uniquely closest to $u$, we assume without loss of generality that $D(u,S) \subseteq L$. Recall that
$D(u,S)$ is contained in $V(G) \setminus B(v,3)$ (at distance at least three from $u$).
We apply Lemma~\ref{lem-switch-bipartite} in $G' = (L' \cup R', E) = G - (\{v\} \cup N_1 \cup N_2)$ which guarantees
that all tokens in $L'$ will move to $R'$ via a single slide.
Hence, there must exists a first index $j$, in this sequence, where the corresponding independent set $I_j$ falls into one of the following three cases:
\begin{itemize}
\item[2.1] $|I_j \cap M_{\text{small}}| = 1$;
\item[2.2] $|I_j \cap M_{\text{big}}| = 1$; or
\item[2.3] $I_j \cap N_3 = \emptyset$ and there exists a token in $I_j$ which is uniquely closest to $u$.
\end{itemize}
In case (2.1), we apply Lemma~\ref{lem-move-highdeg4}, for case (2.2) we apply Lemma~\ref{lem-move-highdeg3}, and finally for
case (2.3) we apply either case (1.1) or case (1.2).
This completes the proof.
\end{proof}

\begin{lemma}\label{lem-move-highdeg-final}
Let $\mathcal{I} = ((L \cup R, E), S, T, k)$ be an instance of {\sc Token Sliding} where $G$ is a connected bipartite $C_4$-free graphs.
If there exists a vertex $v \in V(G)$ of degree at least $k^{2} + k + 1$ then $\mathcal{I}$ is a yes-instance.
\end{lemma}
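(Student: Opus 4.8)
The plan is to chain together Lemmas~\ref{lem-move-highdeg1}--\ref{lem-move-highdeg6}, which together already do all the work; this final statement is merely their assembly. First I would reduce to the case where $G$ is connected (solve each component carrying tokens separately) and where $S$ and $T$ are unlocked, using Lemma~\ref{lem-rigid-bipartite}: if $R(G,S) \neq R(G,T)$ we immediately answer \emph{no}, and otherwise we delete $N[R(G,S)]$ from $G$ and work with $S \setminus R(G,S)$ and $T \setminus R(G,S)$. A small but essential observation to record up front is that unlockedness is preserved along reconfiguration: if $S \leftrightsquigarrow S'$, then the set of independent sets reachable from $S$ equals that reachable from $S'$, hence $R(G,S) = R(G,S')$; in particular every intermediate configuration produced below is again unlocked, which is exactly the hypothesis each of the cited lemmas requires.

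Now fix the vertex $v$ of degree at least $k^{2}+k+1$ together with its optional pendant neighbour $u_p$ and the partition of $B(v,3)$ into $N_1$, $N_2$, and $N_3 = M_{\text{small}} \cup M_{\text{big}}$ described just before Lemma~\ref{lem-move-highdeg1}. Starting from $S$, I would apply Lemma~\ref{lem-move-highdeg1} to obtain $S_1 \leftrightsquigarrow S$ with $S_1 \cap B(v,3) \subseteq N_2$; then Lemma~\ref{lem-move-highdeg2} to get $S_2$ with $S_2 \cap B(v,3) \subseteq N_1 \cup N_3$; then Lemma~\ref{lem-move-highdeg3} to clear $M_{\text{big}}$, obtaining $S_3$ with $S_3 \cap B(v,3) \subseteq N_1 \cup M_{\text{small}}$; then Lemma~\ref{lem-move-highdeg4} to clear $M_{\text{small}}$, obtaining $S_4$ with $S_4 \cap B(v,3) \subseteq N_1$; and finally Lemma~\ref{lem-move-highdeg6} to obtain $S_5 \leftrightsquigarrow S$ with $S_5 \subseteq N_1$ entirely. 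Running the identical sequence from $T$ yields $T_5 \subseteq N_1$ with $T_5 \leftrightsquigarrow T$. Lemma~\ref{lem-move-highdeg5} then gives $S_5 \leftrightsquigarrow T_5$, and composing $S \leftrightsquigarrow S_5 \leftrightsquigarrow T_5 \leftrightsquigarrow T$ proves that $\mathcal{I}$ is a yes-instance; since every invoked lemma is constructive and runs in polynomial time, the reconfiguration sequence itself can be output.

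The only care needed is at the seams between lemmas: one must check that the output hypothesis of each lemma (the description of $S_i \cap B(v,3)$) is literally the input hypothesis of the next, that $|S_i| = k$ throughout (no step changes the number of tokens), and that the single threshold $k^{2}+k+1$ is simultaneously large enough for all of them. It is: the most demanding requirement is Lemma~\ref{lem-move-highdeg4}'s need for $|N_1 \setminus C| > k$ with $|C| \le k^2$, i.e.\ $|N_1| > k^{2}+k$, and since $v$ may additionally have the pendant neighbour $u_p \notin N_1$ this amounts exactly to $\deg_G(v) \ge k^{2}+k+1$. I expect no genuine obstacle here — all the substantive reconfiguration arguments have already been carried out in Lemmas~\ref{lem-move-highdeg1}--\ref{lem-move-highdeg6}, so this proof is essentially bookkeeping of hypotheses.
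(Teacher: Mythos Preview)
Your proposal is correct and follows essentially the same approach as the paper: chain Lemmas~\ref{lem-move-highdeg1}--\ref{lem-move-highdeg6} to move all tokens of $S$ (and symmetrically of $T$) into $N_1$, then invoke Lemma~\ref{lem-move-highdeg5} to reconfigure between the two subsets of $N_1$. If anything, your version is more careful than the paper's own two-line proof --- you explicitly verify that unlockedness is preserved along the way, that the hypotheses of successive lemmas match up, and that the single degree threshold $k^2+k+1$ suffices for all of them.
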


\begin{proof}
Using Lemmas~\ref{lem-move-highdeg1} to~\ref{lem-move-highdeg5} we tranform $S$ to $S'$ and $T$ to $T'$ such that
$S' \subseteq N_1$ and $T' \subseteq N_1$. Then we transform $S'$ to $T'$ by invoking Lemma~\ref{lem-move-highdeg-final}.
\end{proof}

\begin{theorem}\label{thm-c4free-bipartite}
\sloppy%
{\sc Token Sliding} parameterized by $k$ admits a kernel with $\Oh(k^{12})$ vertices on bipartite $C_4$-free graphs.
\end{theorem}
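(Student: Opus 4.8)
The plan is to combine the ``large degree implies yes-instance'' dichotomy with the kernel for bounded-degree bipartite graphs from Theorem~\ref{thm-bounded-degree-bipartite}. First I would handle the connectivity and unlocking preprocessing: by Lemma~\ref{lem-rigid-bipartite} we may assume $S$ and $T$ are unlocked (returning a trivial no-instance if $R(G,S)\neq R(G,T)$, and deleting $N[R(G,S)]$ otherwise), and we may process each connected component of $G$ independently, so we assume $G$ is connected. We may also assume each vertex has at most one pendant neighbor, exactly as in the discussion preceding Lemma~\ref{lem-move-highdeg1}, since two tokens on pendant neighbors of a common vertex would force a locked configuration.

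Next comes the branching on maximum degree. If $G$ contains a vertex $v$ of degree at least $k^{2}+k+1$, then by Lemma~\ref{lem-move-highdeg-final} the instance $\mathcal{I}$ is a yes-instance, and we simply output a trivial constant-size yes-instance. Otherwise every vertex of $G$ has degree at most $\Delta := k^{2}+k = \Oh(k^{2})$, so $G$ is a bounded-degree bipartite graph (with the degree bound expressed in terms of $k$), and we may invoke Theorem~\ref{thm-bounded-degree-bipartite} to obtain an equivalent instance with $\Oh(k^{2}\Delta^{5})$ vertices. Substituting $\Delta = \Oh(k^{2})$ gives $\Oh(k^{2}\cdot k^{10}) = \Oh(k^{12})$ vertices, which is the claimed bound. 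All steps — the rigidity check, the component decomposition, the pendant-neighbor cleanup, testing for a high-degree vertex, and running the kernelization of Theorem~\ref{thm-bounded-degree-bipartite} — run in polynomial time, so this is a genuine kernelization.

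The only subtlety I would be careful about is that Theorem~\ref{thm-bounded-degree-bipartite} is stated for graphs of bounded degree $\Delta$ treated as a constant, whereas here $\Delta$ depends on $k$; since the vertex bound $\Oh(k^{2}\Delta^{5})$ is stated explicitly as a function of both $k$ and $\Delta$, this substitution is legitimate and the resulting size is still polynomial in $k$, so no real obstacle arises. The heavy lifting has already been done in the earlier lemmas: Lemma~\ref{lem-move-highdeg-final} (via Lemmas~\ref{lem-move-highdeg1}–\ref{lem-move-highdeg6}) supplies the large-degree case, and Theorem~\ref{thm-bounded-degree-bipartite} supplies the bounded-degree case, so the proof of this theorem is essentially just assembling these two pieces.
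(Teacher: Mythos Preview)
Your proposal is correct and follows essentially the same approach as the paper: reduce to connected unlocked instances via Lemma~\ref{lem-rigid-bipartite}, use Lemma~\ref{lem-move-highdeg-final} to dispose of the high-degree case, and then invoke Theorem~\ref{thm-bounded-degree-bipartite} with $\Delta = \Oh(k^2)$ to get the $\Oh(k^{12})$ bound. The paper's proof is slightly terser (it omits the explicit pendant-neighbor remark and the discussion of $\Delta$ depending on $k$), but the argument is the same.
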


\begin{proof}
Let $\mathcal{I} = (G, S, T, k)$ be an instance of {\sc Token Sliding} where $G$ is a bipartite $C_4$-free graphs.
We assume, without loss of generality, that $G$ is connected and $S$ and $T$ are unlocked; we can solve connected components independently and
we can return a trivial no-instance if $R(G,S) \neq R(G,T)$ (Lemma~\ref{lem-rigid-bipartite}).
Next, from Lemma~\ref{lem-move-highdeg-final}, we know that each vertex has maximum degree $\Oh(k^{2})$; otherwise we can return a trivial yes-instance.
Finally, we invoke Theorem~\ref{thm-bounded-degree-bipartite} to obtain the required kernel.
\end{proof}

\section{Hardness results}\label{sec-negative}

\subsection{{\sc Token Sliding} and {\sc Token Jumping} on $C_4$-free graphs}\label{sec:neg1}

In the \textsc{Grid Tiling} problem we are given an integer $k\geq 0$ and $k^2$ sets $S_{i,j}\subseteq [m]\times[m]$, for $0\leq i,j \leq k-1$, of cardinality
$n$ called \textit{tiles} and we are asked whether it is possible to find an
element $s^*_{i,j}\in S_{i,j}$ for every $0\leq i,j \leq k-1$ such that $s^*_{i,j}$ and $s^*_{i,j+1}$ share the same first coordinate
while $s^*_{i,j}$ and $s^*_{i+1,j}$ share the same second coordinate for each $0\leq i,j \leq k-1$ (including modulo $k$).
It was proven in \cite{pcbook} that \textsc{Grid Tiling} parameterized by $k$ is $W[1]$-hard.
We prove the next theorem via a reduction from \textsc{Grid Tiling}.
Following the construction in \cite{DBLP:conf/iwpec/BonnetBCTW18} to give a graph $G$ with the desired properties and extending it to a
$\{C_4,\dots,C_p\}$-free graph $G'$ which gives a reduction to \textsc{Token Sliding}.

\begin{theorem}\label{thrm-gt-to-ts}
For any $p\geq 4$, \textsc{Token Sliding} is $W[1]$-hard on $\{C_4,\dots,C_p\}$-free graphs.
\end{theorem}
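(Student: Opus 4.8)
The plan is to reduce from \textsc{Grid Tiling}, following the $C_4$-free construction of Bonnet et al.~\cite{DBLP:conf/iwpec/BonnetBCTW18} for \textsc{Independent Set} and then (i) adapting it so that choosing a token configuration encodes a choice of tile elements, and (ii) padding the graph with long subdivided paths so that it becomes $\{C_4,\dots,C_p\}$-free for the prescribed constant $p$. First I would recall the gadget from~\cite{DBLP:conf/iwpec/BonnetBCTW18}: for each cell $(i,j)$ one builds a vertex set $V_{i,j}$ in bijection with $S_{i,j}\subseteq[m]\times[m]$, made into a graph of girth larger than any fixed constant (this is exactly where the $C_4$-freeness, and more generally $\{C_4,\dots,C_p\}$-freeness, is engineered — one uses a high-girth graph, or a suitably long subdivision, so that short cycles simply cannot occur), with the property that a maximum independent set must pick exactly one vertex per cell; the consistency constraints ``same first coordinate horizontally'' and ``same second coordinate vertically'' are encoded by (non-)edges between adjacent cells, again routed through long subdivided paths to avoid creating short cycles. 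The key point to verify is that after all these subdivisions the resulting graph $G'$ still has the property that its independent sets of the target size $k'$ are in correspondence with partial/total solutions to \textsc{Grid Tiling}, and that $G'$ genuinely contains no induced cycle of length in $\{4,\dots,p\}$ — short cycles are destroyed by subdivision while the only surviving short structures are the edges internal to gadgets, which are triangle-free by construction, and $C_3$ is explicitly allowed.

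Next I would set up the \textsc{Token Sliding} instance. Let $k'$ be the size of a maximum independent set in $G'$ (equal to $k^2$ plus the contribution of the subdivision vertices, a quantity determined by the construction and bounded by a function of $k$ and $p$). Take $S$ to be some fixed ``canonical'' independent set of size $k'$ — e.g. the one corresponding to always picking the lexicographically first element of each tile together with a canonical placement of tokens on the subdivision paths — and take $T$ to be a designated ``target'' independent set of the same size. The standard trick (as in the $W[1]$-hardness of \textsc{Token Sliding} on general graphs~\cite{DBLP:conf/wads/LokshtanovMPRS15}) is to attach to $G'$ a small auxiliary gadget — a clique or a carefully chosen path — so that the only way to move the tokens off $S$ at all is to first route them, one at a time, through a ``staging area'', forcing the reconfiguration sequence to at some intermediate step realize a full independent set of maximum size in the core graph, i.e. a genuine \textsc{Grid Tiling} solution; and symmetrically from $T$. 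One must make sure this auxiliary gadget itself introduces no cycle of length $4,\dots,p$, which again is arranged by subdividing its edges enough times.

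The correctness argument then splits into the two usual directions. For the forward direction, given a \textsc{Grid Tiling} solution $(s^*_{i,j})$, I would exhibit an explicit reconfiguration sequence: slide tokens out of $S$ into the staging area, slide the ``core'' tokens to the vertices encoding $(s^*_{i,j})$ (the feasibility of each individual slide follows because consecutive gadgets are joined by subdivided paths along which a token can walk without conflict), and then slide everything into $T$. For the reverse direction, I would argue that any reconfiguration sequence from $S$ to $T$ must, because of the forcing gadget, pass through a configuration in which the core tokens form a maximum independent set of $G'$ restricted to the cell gadgets; by the Bonnet et al.\ construction such a set necessarily selects one element per tile and respects the shared-coordinate constraints, yielding a \textsc{Grid Tiling} solution. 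The main obstacle I expect is the second ingredient — simultaneously enforcing the ``one token per cell / consistency'' combinatorics of the original $C_4$-free reduction \emph{and} the long-subdivision requirement for $\{C_4,\dots,C_p\}$-freeness without breaking the slide-reachability arguments, since subdividing edges changes distances and can turn a harmless move into an impossible one (or vice versa); getting the parities and path lengths right so that tokens can always be shepherded between gadgets, while no short induced cycle is created, is the delicate part of the proof.
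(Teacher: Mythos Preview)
Your high-level plan (reduce from \textsc{Grid Tiling} via the Bonnet et al.\ construction, then set up source/target sets and a forcing mechanism) matches the paper's, but the concrete realisation you sketch has real gaps and a few misconceptions that would prevent the proof from going through.

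First, the base graph from~\cite{DBLP:conf/iwpec/BonnetBCTW18} is \emph{not} triangle-free: it is a disjoint union of $k' = 8k^2(p+1)$ \emph{cliques} $V_1,\dots,V_{k'}$ with some inter-clique edges, and its $\{C_4,\dots,C_p\}$-freeness is engineered around that clique structure, not via subdivisions or high girth. So the line ``the only surviving short structures \dots\ are triangle-free by construction'' is incorrect, and more importantly it matters: the paper's construction crucially adds, for each clique $V_i$, two \emph{universal} vertices $x_i,y_i$ adjacent to all of $V_i$. The source set $S$ lives on the $x_i$'s and the target $T$ on the $y_i$'s --- neither $S$ nor $T$ touches the cell vertices at all. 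Your proposal to take $S$ as a canonical maximum independent set ``picking the lexicographically first element of each tile'' does not force anything: if $S$ and $T$ already sit inside the cell gadgets, there is no reason a reconfiguration between them must visit a configuration corresponding to a \emph{valid} tiling.

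Second, the forcing mechanism you describe (``a small auxiliary gadget --- a clique or a carefully chosen path --- so that the only way to move the tokens off $S$ at all is to first route them, one at a time, through a staging area'') is too vague, and the paper's actual device is quite specific. Besides the $x_i,y_i$, the paper attaches \emph{guard paths} of length $p$ between each $x_i$ and a single central hub (and symmetrically for the $y_i$), plus a direct guard path between $x_i$ and $y_i$; tokens on these paths alternate between odd-indexed (in $S$) and even-indexed (in $T$) vertices. A counting argument then shows that every intermediate configuration has exactly one token in each $W_i = V_i\cup\{x_i,y_i\}$, and that at the moment the last $x_i$-token leaves $x_i$, \emph{all} $y_j$ are still blocked by guard tokens, so every $W_j$-token must lie inside $V_j$ simultaneously --- yielding an independent set of size $k'$ in $G$ and hence a tiling. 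This synchronisation is the heart of the argument, and your proposal does not contain it. Your final worry about ``getting the parities and path lengths right'' is apt: that is precisely what the guard paths of length $p$ accomplish, and they simultaneously ensure that any new induced cycle through them has length $>p$.
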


\paragraph{Construction of $G$.}
Given an instance of \textsc{Grid Tiling}, $S_{i,j}\subseteq [m]\times[m]$ $(0\leq i,j \leq k-1)$ and an integer $p\geq 4$, we use the
construction described in~\cite{DBLP:conf/iwpec/BonnetBCTW18} to create a graph $G$ with the following properties:
\begin{itemize}
\item \textbf{P1} - $G$ can be partitioned into $8k^2(p + 1)$ cliques $V_1,\dots,V_{8k^2(p + 1)}$ of size $n$ with some edges between them.
\item \textbf{P2} -  $G$ is $\{C_4,\dots,C_p\}$-free.
\item \textbf{P3} - The instance of \textsc{Grid Tiling} has a solution if and only if $\exists I\subseteq V(G)$, such that $I$ is an independent set of size $8k^2(p+1)$
\end{itemize}
Note that as each $V_i$ is a clique, any independent set of $G$ can have at most one vertex in every $V_i$.

\paragraph{Construction of $G'$.}
For $k'=8k^2(p+1)$, we construct an instance of \textsc{Token Sliding} $(G', S, T, k' + (3k'+1)\frac{p}{2} + \frac{p}{2})$ by extending the graph $G$ to a new graph $G'$.
We label the $k'$ cliques in $G$ arbitrarily as $V_1,\dots, V_{k'}$. For each $1\leq i\leq k'$ we add two vertices $x_i$ and $y_i$ adjacent to
all vertices in $V_i$. These will respectively be starting and ending positions of tokens.
Informally, we want to force all the tokens to be in their respective $V_i$ at the same time to obtain an independent set in $G$ of size $k'$.
We do this by creating \textit{guard paths}, which are paths on $p$ vertices that will be alternating between starting and target positions of tokens.
Note that we can assume $p$ is even, since if $p$ is odd we can use $p+1$ instead to create a graph which is $\{C_4,\dots,C_p\}$-free.
Let $P_G$ be a guard path with vertices $g_1,\dots, g_{p}$ and for each $x_i$ let $P_{x_i}$ be a guard path
with vertices $x_{i1},\dots, x_{ip}$ such that $x_i$ is adjacent to $x_{ip}$ and $g_p$ is adjacent to $x_{i1}$.
For each $y_i$ let $P_{y_i}$ be a guard path with vertices $y_{i1},\dots, y_{ip}$ such that $y_i$ is adjacent to $y_{i1}$ and $g_1$ is
adjacent to $y_{ip}$. Finally, for each $i$ let $P_{z_i}$ be a guard path between $x_i$ and $y_i$ with vertices $z_{i1},\dots,z_{ip}$ such
that $x_i$ is adjacent to $z_{ip}$ and $y_i$ is adjacent to $z_{i1}$. This completes the construction of $G'$. The source independent
set $S$ is the set containing all of the $x_i$ and all of the guard path vertices with odd indices:
\[
	S=\bigcup_i \{x_i, x_{ij}, y_{ij}, z_{ij}, g_j \mid j\text{ is odd}\}\,.
\]

The target independent set $T$ consists of all of the $y_i$ and all of the guard path vertices with even indices:
\[
	T=\bigcup_i \{y_i, x_{ij}, y_{ij}, z_{ij}, g_j \mid j\text{ is even}\}\,.
\]

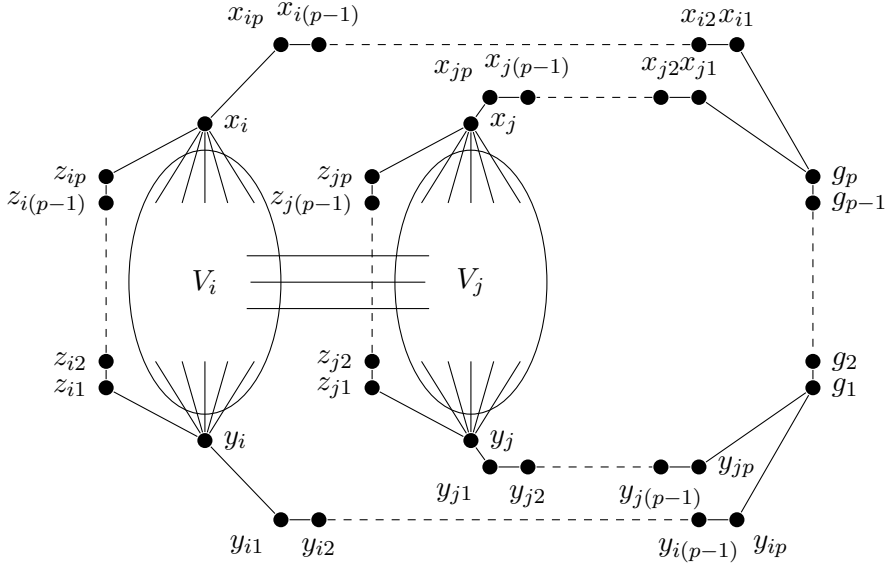
\begin{figure}
\centering
\usetikzlibrary{shapes}
\tikzstyle{clique}=[ellipse,draw,minimum height=3.5cm, minimum width=2cm]

\begin{tikzpicture}[yscale=.7,vertex/.style={circle, fill=black,thick, inner sep=2pt, minimum size=0.1cm}]
\node (Vi) at (2,2) [clique] {$V_i$};
\node (xi) at (2,5) [vertex][label=right:$x_i$]{};
\node (yi) at (2,-1) [vertex][label=right:$y_i$]{};
\node (zi1) at (0.7,0) [vertex][label=left:$z_{i1}$]{};
\node (zi2) at (0.7,0.5) [vertex][label=left:$z_{i2}$]{};
\node (zip-1) at (0.7,3.5) [vertex][label=left:$z_{i(p-1)}$]{};
\node (zip) at (0.7,4) [vertex][label=left:$z_{ip}$]{};
\draw (yi) -- (zi1) -- (zi2);
\draw [dashed] (zi2)--(zip-1);
\draw (zip-1) -- (zip) -- (xi);
\foreach \d in {1.35,1.7,2,2.3,2.65}
{
	\draw (xi)--(\d, 3.5);
	\draw (yi)--(\d, 0.5);
}

\node (Vj) at (5.5,2) [clique] {$V_j$};
\node (xj) at (5.5,5) [vertex][label=right:$x_j$]{};
\node (yj) at (5.5,-1) [vertex][label=right:$y_j$]{};
\node (zj1) at (4.2,0) [vertex][label=left:$z_{j1}$]{};
\node (zj2) at (4.2,0.5) [vertex][label=left:$z_{j2}$]{};
\node (zjp-1) at (4.2,3.5) [vertex][label=left:$z_{j(p-1)}$]{};
\node (zjp) at (4.2,4) [vertex][label=left:$z_{jp}$]{};
\draw (yj) -- (zj1) -- (zj2);
\draw [dashed] (zj2)--(zjp-1);
\draw (zjp-1) -- (zjp) -- (xj);
\foreach \d in {4.85,5.2,5.5,5.8,6.15}
{
	\draw (xj)--(\d, 3.5);
	\draw (yj)--(\d, 0.5);
}

\draw (2.6, 2) -- (4.9, 2);
\draw (2.55, 2.5) -- (4.95, 2.5);
\draw (2.55, 1.5) -- (4.95, 1.5);

\node (g1) at (10,0) [vertex][label=right:$g_1$]{};
\node (g2) at (10,0.5) [vertex][label=right:$g_2$]{};
\node (gp-1) at (10,3.5) [vertex][label=right:$g_{p-1}$]{};
\node (gp) at (10,4) [vertex][label=right:$g_{p}$]{};

\draw (gp) -- (gp-1);
\draw [dashed] (gp-1)--(g2);
\draw (g1) -- (g2);

\node (xj1) at (8.5,5.5) [vertex][label=above:$x_{j1}$]{};
\node (xj2) at (8,5.5) [vertex][label=above:$x_{j2}$]{};
\node (xjp-1) at (6.25,5.5) [vertex][label=above:$x_{j(p-1)}$]{};
\node (xjp) at (5.75,5.5) [vertex][label=above left:$x_{jp}$]{};

\node (xi1) at (9,6.5) [vertex][label=above:$x_{i1}$]{};
\node (xi2) at (8.5,6.5) [vertex][label=above:$x_{i2}$]{};
\node (xip-1) at (3.5,6.5) [vertex][label=above:$x_{i(p-1)}$]{};
\node (xip) at (3,6.5) [vertex][label=above left:$x_{ip}$]{};

\draw (xi) -- (xip) -- (xip-1);
\draw [dashed] (xi2)--(xip-1);
\draw (gp) -- (xi1) -- (xi2);

\draw (xj) -- (xjp) -- (xjp-1);
\draw [dashed] (xj2)--(xjp-1);
\draw (gp) -- (xj1) -- (xj2);

\node (yjp) at (8.5,-1.5) [vertex][label=right:$y_{jp}$]{};
\node (yjp-1) at (8,-1.5) [vertex][label=below:$y_{j(p-1)}$]{};
\node (yj2) at (6.25,-1.5) [vertex][label=below:$y_{j2}$]{};
\node (yj1) at (5.75,-1.5) [vertex][label=below left:$y_{j1}$]{};

\node (yip) at (9,-2.5) [vertex][label=below right:$y_{ip}$]{};
\node (yip-1) at (8.5,-2.5) [vertex][label=below:$y_{i(p-1)}$]{};
\node (yi2) at (3.5,-2.5) [vertex][label=below:$y_{i2}$]{};
\node (yi1) at (3,-2.5) [vertex][label=below left:$y_{i1}$]{};

\draw (yi) -- (yi1) -- (yi2);
\draw [dashed] (yi2)--(yip-1);
\draw (g1) -- (yip) -- (yip-1);

\draw (yj) -- (yj1) -- (yj2);
\draw [dashed] (yj2)--(yjp-1);
\draw (g1) -- (yjp) -- (yjp-1);

\end{tikzpicture}
\caption{The construction of $G'$ for two cliques $V_i,V_j$ in $G$.} \label{fig:GPrime}
\end{figure}

\begin{lemma}\label{lem:G'Cl-free}
For any $p\geq 4$, $G'$ is $\{C_4,\dots,C_p\}$-free.
\end{lemma}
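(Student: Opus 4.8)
The plan is to show that every cycle in $G'$ has length either $3$ (coming from the original cliques $V_i$) or at least $p+1$. Since $G$ itself is $\{C_4,\dots,C_p\}$-free by property \textbf{P2}, it suffices to analyze cycles that use at least one vertex outside $V(G)$, i.e.\ at least one vertex of the form $x_i$, $y_i$, or an internal guard-path vertex. First I would classify the ``new'' vertices by how they attach to the rest of the graph: each internal guard-path vertex (the $x_{ij}$, $y_{ij}$, $z_{ij}$ with $2\le j\le p-1$, and the $g_j$ with $2\le j\le p-1$) has degree exactly $2$, so in any cycle it must be traversed together with both of its path-neighbors; the endpoints of guard paths attach to $x_i$, $y_i$, or $g_1$, $g_p$; and the vertices $x_i$, $y_i$ are the only ``branch'' vertices (besides the clique vertices), with $x_i$ adjacent to all of $V_i$, to $x_{ip}$, and to $z_{ip}$, and symmetrically for $y_i$.

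The key structural observation is that the vertices $\{x_i,y_i : i\}\cup\{g_1,g_p\}$ together with the clique vertices form the only places where a cycle can ``branch'' or ``turn'', and consecutive such branch vertices along a cycle are joined by internally-disjoint segments each of which is either a single edge, an edge through one clique vertex, or an entire guard path of $p$ internal vertices (hence a segment of length $p+1$ between its two attachment points). So I would argue: take any cycle $C$ in $G'$ not entirely inside some $V_i$. Following $C$, list the maximal subpaths that lie inside guard paths; each such subpath, if it contains an internal guard-path vertex, must in fact be a full guard path $P_{x_i}$, $P_{y_i}$, $P_{z_i}$ or $P_G$ (because internal guard vertices have degree $2$, a cycle entering a guard path must traverse it end to end). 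If $C$ uses at least one full guard path, then $C$ already has length $\ge p+2 > p$ (the $p$ internal vertices plus at least two more vertices closing it up), so $C$ is not a $C_j$ for $j\le p$. The remaining case is that $C$ uses no internal guard-path vertex at all; then $C$ lives in the subgraph induced by $V(G)\cup\{x_i,y_i\}$ with only the edges from $x_i,y_i$ to their cliques. But in that subgraph, $x_i$ and $y_i$ are each of the form ``universal to a clique $V_i$ and otherwise only adjacent to degree-structures that are now pendant,'' so the only cycles are: cycles inside a single $V_i$ (length $3$, fine since $V_i$ is a clique), or cycles of the form $x_i$–$a$–$b$–$x_i$ with $a,b\in V_i$ which again have length $3$; no cycle through both $x_i$ and $y_i$ can close up without using a guard path. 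Hence no new cycle of length in $\{4,\dots,p\}$ is created.

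The main obstacle I anticipate is the bookkeeping in the last case: one must carefully check that the only edges incident to $x_i$ and $y_i$ inside $G'$ are the clique edges and the guard-path endpoint edges ($x_i$–$x_{ip}$, $x_i$–$z_{ip}$, $y_i$–$y_{i1}$, $y_i$–$z_{i1}$), and in particular that there is \emph{no} edge directly between $x_i$ and $y_i$, nor between $x_i$ (or $y_i$) and any vertex of $G$ other than those in $V_i$. Given that, a path in $G'$ from $x_i$ to $y_i$ avoiding all internal guard-path vertices would have to pass through $V_i$ and then has nowhere to go, so the claim follows. A secondary check is that two guard paths never create a short cycle by sharing their endpoints: the only shared endpoints are $g_1$ and $g_p$ on $P_G$, and the $x_i$, $y_i$, but any cycle using two of these shared endpoints must traverse at least one full guard path of length $p+1$, again giving total length $> p$. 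I would also note explicitly that when $p$ is odd we replace it by $p+1$ (even), which only strengthens the forbidden-cycle requirement, and that property \textbf{P2} handles all cycles entirely within $G$; combining these cases completes the proof.
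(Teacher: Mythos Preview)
Your overall strategy matches the paper's: split into cycles that touch a guard path (forced to have length $>p$) and cycles contained in $V(G)\cup\{x_i,y_i:i\}$. The problem is in your analysis of the second case. Your enumeration of cycles there is incomplete and contains a false assertion. The claim ``no cycle through both $x_i$ and $y_i$ can close up without using a guard path'' is simply wrong: since $y_i$ is adjacent to every vertex of $V_i$, the $4$-cycle $x_i\,a\,y_i\,b\,x_i$ with $a,b\in V_i$ exists. Likewise, ``a path from $x_i$ to $y_i$ \ldots\ would have to pass through $V_i$ and then has nowhere to go'' is false; $x_i\,a\,y_i$ is such a path. You also never treat cycles through $x_i$ that wander into other cliques $V_j$ via the edges of $G$ before returning to $V_i$.

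What saves all of these cases at once, and what the paper actually uses, is the following one-line observation: in any induced cycle of length $\ge 4$ through $x_i$ that avoids guard-path vertices, the two cycle-neighbours of $x_i$ must both lie in $V_i$ (those are $x_i$'s only remaining neighbours), and since $V_i$ is a clique these two vertices are adjacent, giving a chord. Hence no induced $C_\ell$ with $4\le\ell\le p$ passes through any $x_i$ or $y_i$, and by \textbf{P2} none lies inside $G$ either. Replacing your case enumeration with this chord argument closes the gap.
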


\begin{proof}
By \textbf{P2}, $G$ is $\{C_4,\dots,C_p\}$-free.
Any cycle which contains a vertex on one of the guard paths has length greater than $p$.
Thus if a cycle of length $\ell$ exists for some $4\leq \ell\leq p$ it must only have vertices
in $V(G)\cup\{x_i,y_i \mid 1\leq i \leq k'\}$ and contains at least one of $x_i$ or $y_i$.
Assume, without loss of generality, that it contains $x_i$, then the vertices adjacent to $x_i$ in the cycle must be
in $V_i$. As $V_i$ is a clique the cycle contains a $C_3$ so is not induced.
\end{proof}

\begin{lemma}\label{lem:tiling-to-reconfig}
If there is a solution to the \textsc{Grid Tiling} instance then there is a reconfiguration sequence from $S$ to $T$ in $G'$.
\end{lemma}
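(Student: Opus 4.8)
The plan is to turn a \textsc{Grid Tiling} solution into an independent set $I=\{w_1,\dots,w_{k'}\}$ of $G$ with $w_i\in V_i$ for every $i$ (this is exactly what property \textbf{P3} provides), and then to exhibit an explicit reconfiguration sequence from $S$ to $T$ in $G'$ built from single‑token slides and organised into six phases. Phase one: slide the token on each $x_i$ to $w_i$, one $i$ at a time. Such a slide is legal because $x_iw_i\in E(G')$, $w_i$ is unoccupied, and the only $G'$‑neighbours of $w_i$ that could carry a token lie in $V_i$ (where the unique occupied vertex is the $w_i$ being moved), in $\{x_i,y_i\}$ ($x_i$ is being vacated, $y_i$ is empty), or in another clique $V_{i'}$ (a token there would be some $w_{i'}$, impossible since $I$ is independent) — and no guard‑path vertex is adjacent to $V(G)$. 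This is the only place the \textsc{Grid Tiling} solution is used.

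Phases two through five shift the guard paths from their odd‑indexed tokens (the ones present in $S$) to their even‑indexed tokens (the ones required by $T$). The elementary move is: on a path $q_1,\dots,q_p$ with $p$ even that carries tokens exactly on the odd‑indexed vertices, if no external neighbour of $q_1$ or of $q_p$ is occupied, then sliding $q_{p-1}\to q_p$, then $q_{p-3}\to q_{p-2}$, \dots, and finally $q_1\to q_2$ pushes every token one step towards $q_p$ and leaves tokens exactly on the even‑indexed vertices; each slide is legal because the target's path‑neighbour is either just vacated or already emptied, and the token resting on $q_1$ is safe since its external neighbours stay empty until it moves. I would apply this, in this order, to: every $P_{x_i}$ (its ends attach to $x_i$, emptied in phase one, and to $g_p$, still untouched); then every $P_{z_i}$ (ends attach to $x_i$, emptied, and to $y_i$, empty until the last phase); then $P_G$ (ends attach to the vertices $x_{i1}$, now empty since every $P_{x_i}$ has been shifted, and to the vertices $y_{ip}$, still empty); and finally every $P_{y_i}$ (ends attach to $y_i$, still empty, and to $g_1$, now empty since $P_G$ has been shifted). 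The purpose of this ordering is precisely to make the ``no occupied external neighbour'' hypothesis hold during each shift.

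Phase six: slide the token on each $w_i$ to $y_i$, one $i$ at a time. This is legal because $w_iy_i\in E(G')$, $y_i$ is unoccupied (it carries no token in $S$, nothing was moved onto it, and its guard‑path neighbours $y_{i1}$ and $z_{i1}$ are empty after phases five and three, those indices being odd), $V_i$ contains only the moving token $w_i$, and $y_i$ is adjacent to no $w_{i'}$ and no $y_{i'}$. After this phase every token lies on some $y_i$ or on an even‑indexed guard‑path vertex, i.e.\ the configuration equals $T$, so the concatenation of the six phases is the desired reconfiguration sequence.

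I expect the only genuinely delicate point to be the verification that the phase ordering really does keep, throughout each guard‑path shift, both of that path's attachment points to the rest of $G'$ unoccupied; the bookkeeping in the previous paragraph is exactly what is needed to check this, and once it is in place every individual slide (inside the shifts, and in the two ``clique'' phases) is an immediate adjacency check. A minor point to state carefully is that the slides within a shift must be performed in decreasing order of index so that each target vertex is free when its token arrives, and that the path's still‑occupied endpoint never conflicts with its external neighbour because that neighbour is empty until the shift completes.
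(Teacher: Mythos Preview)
Your proposal is correct and follows essentially the same three-stage plan as the paper: use \textbf{P3} to extract an independent transversal $\{w_i\}$, slide each $x_i$ onto $w_i$, shift all guard paths from odd to even positions, then slide each $w_i$ onto $y_i$. Your ordering of the guard-path shifts ($P_{x_i}$, $P_{z_i}$, $P_G$, $P_{y_i}$) differs slightly from the paper's ($P_{z_i}$, $P_{x_i}$, $P_G$, $P_{y_i}$), but both orderings keep the relevant attachment points empty at the right times, and your write-up verifies this more carefully than the paper's terse description.
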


\begin{proof}
By \textbf{P3}, there exists an independent set $I$ containing one vertex $v_i$ in every $V_i$. This gives the following reconfiguration sequence from $S$ to $T$.
\begin{enumerate}
\item Move each token on $x_i$ to $v_i$.
\item Move the tokens along the guard paths: for all odd $j$ starting with the greatest $j$ values move the token on each $z_{ij}$ to $z_{i(j+1)}$, then move the tokens on $x_{ij}$ to $x_{i(j+1)}$, $g_{j}$ to $g_{j+1}$ and finally $y_{ij}$ to $y_{i(j+1)}$
\item Move each token on $v_i$ to $y_i$.
\end{enumerate}
This completes the proof.
\end{proof}

Finally let us prove the converse direction.
For each $i$, let $W_i := \{x_i, y_i\} \cup V_i$. Let us first show that in any valid reconfiguration sequence
the tokens initially on the guard paths, $P_{x_i}$, $P_{y_i}$, $P_{z_i}$, and $P_G$ are stuck on their respective paths.
We first need the following simple observation.

\begin{observation}\label{obs:nb-tok-path}
Let $I$ be an independent set of $G'$ of size $k' + (3k'+1)\frac{p}{2}$ such that for every $i \leq k'$, $|W_i \cap I| = 1$. Then for every guard path $P$ of $G'$ we have $|I \cap P| = \frac{p}{2}$.
\end{observation}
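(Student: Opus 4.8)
The plan is to reduce the statement to a one-line counting argument, once we observe that $V(G')$ decomposes into pieces whose independence number is transparent.

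\emph{Step 1: a partition of $V(G')$.} First I would record that, by property \textbf{P1}, $V(G)$ is partitioned into the $k'$ cliques $V_1,\dots,V_{k'}$, so $V(G)\cup\{x_i,y_i\mid 1\le i\le k'\}$ is partitioned into the sets $W_1,\dots,W_{k'}$ with $W_i=\{x_i,y_i\}\cup V_i$. Every remaining vertex of $G'$ lies on exactly one of the guard paths $P_G$ or $P_{x_i},P_{y_i},P_{z_i}$ ($1\le i\le k'$); by the construction of $G'$ these paths are pairwise vertex-disjoint and disjoint from all the $W_i$. Hence $V(G')$ is partitioned into the $k'$ sets $W_1,\dots,W_{k'}$ together with the $3k'+1$ guard paths.

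\emph{Step 2: independence number of a guard path.} Each guard path is an induced path on exactly $p$ vertices, and $p$ is even (recall $p$ was replaced by $p+1$ if it was odd), so any independent set of $G'$ meets any fixed guard path in at most $\tfrac p2$ vertices.

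\emph{Step 3: counting.} Using the partition from Step~1 and the hypothesis $|W_i\cap I|=1$ for every $i\le k'$, we obtain
\[
|I| \;=\; \sum_{i=1}^{k'}|W_i\cap I| \;+\; \sum_{P}|I\cap P| \;=\; k' \;+\; \sum_{P}|I\cap P|,
\]
where $P$ ranges over the $3k'+1$ guard paths. Since $|I| = k' + (3k'+1)\tfrac p2$ by assumption, this gives $\sum_{P}|I\cap P| = (3k'+1)\tfrac p2$. By Step~2 each of the $3k'+1$ summands is at most $\tfrac p2$, so all of them must equal $\tfrac p2$; that is, $|I\cap P| = \tfrac p2$ for every guard path $P$, as claimed.

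There is no genuine obstacle here: the only point requiring (trivial) care is verifying in Step~1 that the guard-path vertices introduced when building $G'$ are new and pairwise distinct, so that we really do have a partition of $V(G')$; this is immediate from the definition of $G'$. The observation is then exactly the tool needed afterwards to show that, in any configuration with one token in each $W_i$, the guard paths are saturated and hence the tokens there cannot leave their paths.
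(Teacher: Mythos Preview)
Your proof is correct and follows essentially the same counting argument as the paper: partition $V(G')$ into the $W_i$ and the $3k'+1$ guard paths, use that each guard path carries at most $\tfrac p2$ tokens, and conclude by pigeonhole that each carries exactly $\tfrac p2$. The paper's proof is just a terser version of the same idea.
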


\begin{proof}
We assume there are exactly $k'$ tokens on $\cup_{i=1}^{k'} W_i$. Then since for any guard path $P$ we have $|P \cap I| \leq \frac{p}{2}$, there must be exactly $\frac{p}{2}$ tokens on each of the $3k'+1$ guard paths.
\end{proof}

\begin{lemma}\label{lem:stuck-path}
Let $I_1, I_2, \ldots, I_e$ be a valid reconfiguration sequence such that $I_1 = S$ and $I_e = T$. For every $s \leq e$ and for every $i \leq k'$, $|W_i \cap I_s| = 1$.
\end{lemma}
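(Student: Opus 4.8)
The plan is to show, by a counting argument combined with a potential/invariant argument, that the number of tokens on $\bigcup_i W_i$ can never drop below $k'$ along a valid reconfiguration sequence, and since it starts and ends at exactly $k'$, it stays exactly $k'$ throughout. First I would record the basic capacities: each $W_i$ is a union of the clique $V_i$ with the two apex vertices $x_i,y_i$, and $G'[W_i]$ is itself a clique plus... actually $x_i$ and $y_i$ are nonadjacent, so $G'[W_i]$ contains independent sets of size at most $2$ in principle; but $x_i$ is adjacent to $x_{ip}$ and $z_{ip}$, and $y_i$ to $y_{i1}$ and $z_{i1}$, while the guard paths $P_{x_i},P_{y_i},P_{z_i}$ each connect $x_i$ to $y_i$ (via $g_p,\dots,g_1$ in the case of $P_{x_i}$ and $P_{y_i}$, directly in the case of $P_{z_i}$). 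The key structural point is that $x_i$ and $y_i$ are joined by three internally disjoint paths through the guard-path gadgets, and each such path has exactly $p$ internal guard vertices, of which an independent set can contain at most $p/2$; moreover any independent set containing both $x_i$ and $y_i$ forces, on each of the three connecting paths, a ``shifted'' parity pattern that is incompatible with picking up a full $p/2$ on all of them simultaneously and on the central path $P_G$.

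\medskip

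The cleaner route, which I would actually pursue, is the global count of Observation~\ref{obs:nb-tok-path}. The total number of tokens is $k' + (3k'+1)\tfrac p2$ (reading off $|S|$). There are exactly $3k'+1$ guard paths (one $P_{x_i}$, one $P_{y_i}$, one $P_{z_i}$ per $i$, plus the single $P_G$), each an induced path on $p$ vertices, so each carries at most $p/2$ tokens of any independent set; thus the guard paths together carry at most $(3k'+1)\tfrac p2$ tokens, and consequently $\bigl|\,I_s \cap \bigcup_i W_i\,\bigr| \ge k'$ for every $s$. The remaining work is to upgrade this inequality to an equality, i.e. to rule out the possibility that some $W_i$ holds two tokens at some step (which would force some guard path to hold $<p/2$, and by the count some other $W_j$ to hold $0$, or rather it would need $|I_s \cap \bigcup_i W_i| = k'$ still — wait, two in one $W_i$ and the rest one each gives $k'+1 > k'$, contradicting the count once we know every guard path is saturated). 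So actually the argument is: if $|W_i \cap I_s| \ge 2$ for some $i$, then because $|W_j \cap I_s| \le ?$ — here I need the bound $|W_j \cap I_s| \le 2$, or even better I want $\sum_j |W_j \cap I_s| \le k' + (\text{slack})$ and $\sum_P |P \cap I_s| \le (3k'+1)\tfrac p2$ with total $k' + (3k'+1)\tfrac p2$, forcing both bounds tight and hence every $|W_j \cap I_s| \le 1$... but that needs $\sum_j |W_j \cap I_s| \le k'$, which is exactly what I'm trying to prove. So a pure count is circular; I need the parity obstruction.

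\medskip

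Therefore the real argument must use the guard paths as ``rigidity enforcers'': I would show that the three paths joining $x_i$ and $y_i$ through the guards, together with $P_G$, force that whenever a full complement of $p/2$ tokens sits on $P_{x_i}$, $P_{y_i}$, $P_{z_i}$ and $P_G$, the endpoints $x_i,y_i$ interact in a fixed way. Concretely, I would prove the stronger statement by induction on $s$: for all $s$, $|W_i \cap I_s| = 1$ for every $i$ \emph{and} every guard path has exactly $p/2$ tokens with a fixed parity pattern on each path (the ``$S$-pattern'' or the ``$T$-pattern'', possibly shifted by one at the single vertex currently being moved). The base case $s=1$ is $I_1 = S$, which satisfies this. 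For the inductive step, a single token slide changes one vertex; I case on which gadget the moving token lies in. If it is a guard-path token, the parity pattern on that path together with the endpoints adjacent to it (and the fact that the endpoints of the path are in fixed states by the inductive hypothesis applied to the $W$'s) shows the slide either stays within the path or is blocked; crucially a guard-path token adjacent to $x_i$ (namely $x_{i1}$ or $z_{ip}$, etc., via $x_i$) cannot slide onto $x_i$ without violating independence with whatever token currently occupies $W_i$, and vice versa — here is where $C_4$-freeness / the clique structure of $V_i$ is used. If the moving token is in $W_i$, it can only move within $W_i$ (to another clique vertex or between $x_i, y_i$ and $V_i$) or onto a guard vertex, and the latter is blocked by the saturated parity pattern on that guard path. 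In every case $|W_j \cap I_{s+1}| = 1$ is preserved and the pattern is maintained. The main obstacle, and the step I expect to require the most care, is exactly this case analysis on a guard-path slide: one must verify that the ``wave'' of token movements along a guard path (as in the forward direction of Lemma~\ref{lem:tiling-to-reconfig}, moving $z_{ij}\to z_{i(j+1)}$, then $x_{ij}\to x_{i(j+1)}$, etc.) is the \emph{only} thing that can happen, i.e. a token can never ``leak'' off the end of a guard path into a $W_i$, because doing so would momentarily put two tokens in $W_i$ or leave a guard path with $p/2 - 1$ tokens and simultaneously some $W_j$ empty, which a careful look at the adjacencies at $x_i,y_i$ forbids. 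I would organize this as: (i) establish the global count giving $|I_s \cap \bigcup W_i| \ge k'$; (ii) show by induction that equality holds and that each guard path stays saturated, the inductive step resting on the local claim that no single slide can move a token across the boundary between a guard path and a set $W_i$; (iii) conclude $|W_i \cap I_s| = 1$ for all $i,s$.
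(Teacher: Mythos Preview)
Your proposal has the right raw ingredients—the counting bound via Observation~\ref{obs:nb-tok-path}, induction along the sequence, and a parity argument on the guard paths—but the way you organize them introduces an unnecessary and problematic strengthening. You propose to carry, as part of the inductive invariant, a ``parity pattern'' on every guard path. That is not actually an invariant: in the intended reconfiguration (Lemma~\ref{lem:tiling-to-reconfig}) the tokens on $P_{z_i}$ are shifted one by one from odd to even indices, so midway the pattern is mixed and fits neither the $S$- nor the $T$-pattern. Your hedge ``possibly shifted by one at the single vertex currently being moved'' does not capture this, and making it precise would be painful.

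The paper avoids this entirely by \emph{not} strengthening the invariant. It argues by minimal counterexample directly on the statement: take the first $s$ with some $|W_i\cap I_s|\neq 1$. Then $I_{s-1}$ satisfies the lemma, so by Observation~\ref{obs:nb-tok-path} every guard path is saturated at $p/2$ in $I_{s-1}$. Now split on whether $|W_i\cap I_s|=0$ or $2$. If $0$: the token leaving $W_i$ cannot go to another $W_j$ (that one is occupied, and $V_i,V_j$ share edges only between cliques so the target clique vertex would conflict), so it goes to an adjacent guard path—but that path is already full, contradiction. If $2$: the token in $V_i$ (if any) blocks entry to $V_i$, so WLOG $I_{s-1}\cap W_i=\{x_i\}$ and a token slides from $y_{i1}$ or $z_{i1}$ onto $y_i$. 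Either way one of $y_{i1},z_{i1}$ is absent from $I_{s-1}$; saturation then forces the even-index pattern on that path, which propagates through $g_1,\dots,g_p$ and then through $P_{x_i}$, putting a token on $x_{ip}$—adjacent to $x_i\in I_{s-1}$, contradiction. The parity argument is deployed \emph{once}, locally, at the moment of contradiction, rather than maintained globally. Your step~(ii) should be replaced by exactly this two-case analysis; everything else in your outline (the global count, the ``no leaking across the boundary'' intuition) is correct and is what drives the paper's proof too.
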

\begin{proof}
By construction the statement is true for $I_1$. Consider the smallest integer $s \leq e$ such that $I_e$ does not satisfy the condition of Lemma \ref{lem:stuck-path}.
By this choice of $s$ we have $|W_j \cap I_r| = 1$ for every $r < s$ and every $j \leq k'$, hence there exists a
unique $i \leq k'$ such that $|W_i \cap I_s| = 0$ or $|W_i \cap I_s| = 2$. Let us show that we obtain a contradiction in both cases:\newline
Case $1$: $|W_i \cap I_s| = 0$. Since $|W_j \cap I_s| = 1$ for every $j \neq i$, there can be no move from $V_i$ to $V_j$ if there is a token
on $V_i$ in $I_{s-1}$. So there must be a token on one of $x_i$, $y_i$ in $I_{s-1}$ and this token must move on an adjacent guard path $P$.
But then since $I_{s-1}$ satisfies the condition of Observation \ref{obs:nb-tok-path}, we have $|P \cap I_{s}| = \frac{p}{2} + 1$, a contradiction.\newline
Case $2$: $|W_i \cap I_s| = 2$. If $|V_i \cap I_{s-1}| = 1$ then by construction no token can
move to $V_i$ between times $s-1$ and $s$. Hence we can suppose w.l.o.g that $I_{s-1} \cap W_i = \{x_i\}$ and $I_s \cap W_i = \{x_i, y_i\}$.
So it must be that a token moves either from $y_{i1}$ to $y_i$ or from $z_{i1}$ to $y_i$ at time $s-1$ and then either $z_{i1} \notin I_{s-1}$ or $y_{i1} \notin I_{s-1}$.
In both case, since $I_{s-1}$ satisfies the condition of Observation \ref{obs:nb-tok-path}, we obtain that there must be a token on every vertex with even index
on the guard paths $P_{x_i}$, $P_{y_i}$, $P_{z,i}$ and $P_G$. In particular we have $\{x_{ip}, x_i\} \subseteq I_s$, a contradiction.
\end{proof}

\begin{lemma}\label{lem:reconfig-to-tiling}
If there is a reconfiguration sequence from $S$ to $T$ in $G'$ then there is a solution to the \textsc{Grid Tiling} instance.
\end{lemma}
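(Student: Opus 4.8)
The plan is to pin down one configuration in the reconfiguration sequence $I_1=S,\dots,I_e=T$ in which every clique $V_i$ of $G$ carries a token, and then invoke \textbf{P3}. Two facts are already available: Lemma~\ref{lem:stuck-path} gives $|W_i\cap I_s|=1$ for every step $s$ and every $i\le k'$, and hence (Observation~\ref{obs:nb-tok-path}) $|I_s\cap P|=\frac p2$ for every guard path $P$ and every $s$. The only extra ingredient is elementary: in a chordless path on $p$ vertices $v_1\cdots v_p$ with $p$ even, an independent set of size $\frac p2$ that misses $v_1$ must equal $\{v_2,v_4,\dots,v_p\}$, and one that misses $v_p$ must equal $\{v_1,v_3,\dots,v_{p-1}\}$ (deleting the missed endpoint leaves a path on an odd number of vertices, which has a unique maximum independent set). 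Since $P_G,P_{x_i},P_{y_i},P_{z_i}$ are chordless in $G'$, $I_s$ restricted to any of them is an independent set of the corresponding $P_p$, so this fact applies.

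\emph{Choosing the critical step.} Since $g_p\notin S$ but $g_p\in T$, let $t$ be the least index with $g_p\in I_t$; then $t\ge 2$, so $I_{t-1}$ is defined. At step $t$ a single token slides onto $g_p$ from a neighbour of $g_p$, i.e.\ from $g_{p-1}$ or from some $x_{i1}$. A slide from $x_{i1}$ is impossible: it would give $|I_t\cap P_G|=|I_{t-1}\cap P_G|+1=\frac p2+1$, contradicting Observation~\ref{obs:nb-tok-path}. Hence the slide is $g_{p-1}\to g_p$, so at step $t-1$ we have $g_p\notin I_{t-1}$, $g_{p-1}\in I_{t-1}$, and every $x_{i1}$ (the remaining neighbours of $g_p$) is unoccupied. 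Since $|I_{t-1}\cap P_G|=\frac p2$ and $g_p\notin I_{t-1}$, the elementary fact forces $I_{t-1}\cap P_G=\{g_1,g_3,\dots,g_{p-1}\}$; in particular $g_1\in I_{t-1}$.

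\emph{Reading off the tokens at step $t-1$, and concluding.} Fix $i$. From $x_{i1}\notin I_{t-1}$ and $|I_{t-1}\cap P_{x_i}|=\frac p2$ we get $I_{t-1}\cap P_{x_i}=\{x_{i2},x_{i4},\dots,x_{ip}\}$, so $x_{ip}\in I_{t-1}$, and since $x_{ip}$ is adjacent to $x_i$, this forces $x_i\notin I_{t-1}$. Now suppose $y_i\in I_{t-1}$; then $y_{i1}\notin I_{t-1}$ (adjacent to $y_i$), so the same argument on $P_{y_i}$ gives $y_{ip}\in I_{t-1}$, and hence $g_1\notin I_{t-1}$ because $g_1$ is adjacent to $y_{ip}$ — contradicting $g_1\in I_{t-1}$. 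So $y_i\notin I_{t-1}$ as well. By Lemma~\ref{lem:stuck-path} there is exactly one token of $I_{t-1}$ in $W_i=\{x_i,y_i\}\cup V_i$, and since it is neither on $x_i$ nor on $y_i$, it lies in $V_i$. Thus $I_{t-1}\cap V_i$ is a single vertex for each $i\le k'$, so $I_{t-1}\cap V(G)=\bigcup_i(I_{t-1}\cap V_i)$ is an independent set of $G$ (it is independent in $G'$, and the new edges of $G'$ are all incident to new vertices, so $G'[V(G)]=G$) with exactly one vertex in each of the $k'=8k^2(p+1)$ cliques, i.e.\ an independent set of size $k'$. By \textbf{P3}, the \textsc{Grid Tiling} instance has a solution.

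The step I expect to be the crux is showing $y_i\notin I_{t-1}$. Unfreezing $P_{x_i}$ only forces $x_i$ to be empty, so a priori the $W_i$-token could sit on $y_i$ instead of inside $V_i$, which would not suffice; ruling this out uses that $P_G$ has not yet started to change at step $t-1$ (so $g_1$ is still occupied), which is incompatible with any $P_{y_i}$ having reached the only configuration that frees $y_{i1}$ — a prerequisite for a token resting on $y_i$. Everything else is bookkeeping on top of Lemma~\ref{lem:stuck-path}, Observation~\ref{obs:nb-tok-path}, and the path-configuration fact; in particular the guard paths $P_{z_i}$ are not needed for this direction.
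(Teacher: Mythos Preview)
Your proof is correct. Both your argument and the paper's use the same engine --- Lemma~\ref{lem:stuck-path}, Observation~\ref{obs:nb-tok-path}, and the ``unique maximum independent set on an odd path'' fact --- and both proceed by fixing a critical step and then propagating constraints along the chain $P_{x_i}\to P_G\to P_{y_i}$ to force the $W_i$-token into $V_i$ for every $i$. The difference is purely in the choice of critical moment: the paper picks the \emph{last} time a token leaves some $x_i$, whereas you pick the \emph{first} time a token lands on $g_p$.

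Your choice is arguably a bit tidier on one point. You get $x_{i1}\notin I_{t-1}$ for \emph{all} $i$ for free (from the independence of $I_t$ around $g_p$, transported back through the single slide $g_{p-1}\to g_p$), and this immediately propagates to $x_i\notin I_{t-1}$ for all $i$. The paper's critical step only directly touches one index $i$; showing $x_j\notin I_t$ for the remaining $j$ relies on the ``last time'' clause (any $x_j$ still occupied at time $t$ would have to be vacated later, contradicting maximality of $t-1$), which the paper leaves implicit. Conversely, the paper's choice spares it your small case analysis ruling out a slide $x_{i1}\to g_p$. Either way the combinatorics is the same, and neither route needs the $P_{z_i}$ paths for this direction.
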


\begin{proof}
Given the reconfiguration sequence $I_1, I_2, \ldots, I_e$ such that $I_1 = S$ and $I_e = T$ let
us consider the last time $t-1$ at which a token moves from $x_i$ for some $i \leq k'$.
Such a time exists since all the tokens must move at least one time in a reconfiguration sequence from $S$ to $T$.
By Lemma \ref{lem:stuck-path} this token moves from $x_i$ to $V_i$. In particular, there is no token on $x_{ip}$ in $I_{t-1}$, and since $I_{t-1}$ satisfies
the condition of Observation \ref{obs:nb-tok-path}, there must be a token on $g_{s}$ for every $s$ odd.
This in turn implies that for every $j$ there must be a token on $y_{js}$ for every $s$ odd and in particular for $y_{j1}$, so there cannot be a token
on any $y_j$. Thus for every $j \leq k'$, $|V_j \cap I_t| = 1$ by Lemma \ref{lem:stuck-path}, giving an independent set of size $k'$ in $G$.
By \textbf{P3}, we know that this implies a solution for the \textsc{Grid Tiling} instance.
\end{proof}

The combination of Lemmas \ref{lem:G'Cl-free}, \ref{lem:tiling-to-reconfig} and \ref{lem:reconfig-to-tiling} give us the result of Theorem \ref{thrm-gt-to-ts}.

\begin{lemma}\label{lem:maximum-IS}
Let $I$ be an independent set of $G'$ of size $k' + (3k'+1)\frac{p}{2}$ then $I$ is a maximum independent set of $G'$.
\end{lemma}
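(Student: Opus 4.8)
The plan is to prove the single inequality $\alpha(G') \le k' + (3k'+1)\tfrac p2$, where $\alpha$ denotes the independence number; combined with the fact that $S$ (indeed $I$ itself) is an independent set of exactly that size, this forces $\alpha(G') = k' + (3k'+1)\tfrac p2$, so every independent set of that size is maximum. To obtain the inequality I would use a vertex partition of $G'$. For each $i \le k'$ set $H_i := W_i \cup V(P_{x_i}) \cup V(P_{y_i}) \cup V(P_{z_i})$, where $W_i = V_i \cup \{x_i,y_i\}$, and take $V(P_G)$ as the remaining part. Since the cliques $V_i$ are pairwise disjoint and all guard paths together with the vertices $x_i,y_i$ carry distinct labels, $\{V(P_G)\} \cup \{V(H_i) : i \le k'\}$ is a partition of $V(G')$. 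Hence for any independent set $I$ of $G'$ we have $|I| = |I \cap V(P_G)| + \sum_{i=1}^{k'}|I \cap V(H_i)| \le \alpha(P_G) + \sum_{i=1}^{k'}\alpha(G'[V(H_i)])$, because the restriction of $I$ to any induced subgraph is independent there.

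Since $p$ is even, $\alpha(P_G)=\tfrac p2$, and in general a path on $m$ vertices has independence number $\lceil m/2\rceil$; in particular a path on $p-1$ vertices has independence number $\tfrac p2$ and a path on $p-2$ vertices has independence number $\tfrac p2-1$. The heart of the argument is the claim $\alpha(G'[V(H_i)]) \le 1 + \tfrac{3p}{2}$ for every $i$, proved by case analysis on how an independent set $I$ meets $W_i$. If $I$ meets $V_i$ then (as $V_i$ is a clique) it does so in a single vertex and $x_i,y_i \notin I$, so $|I\cap V(H_i)| \le 1 + \alpha(P_{x_i})+\alpha(P_{y_i})+\alpha(P_{z_i}) = 1+\tfrac{3p}{2}$. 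If $I$ avoids $V_i$ and also avoids $x_i$ and $y_i$, then $|I\cap V(H_i)| \le \tfrac{3p}{2}$. If $I$ avoids $V_i$ and contains exactly one of $x_i,y_i$, say $x_i$, then $x_{ip}$ and $z_{ip}$ are excluded from $I$, so the traces of $I$ on $P_{x_i}$ and $P_{z_i}$ are independent sets in paths on $p-1$ vertices and $|I\cap V(H_i)| \le 1 + \tfrac p2+\tfrac p2+\tfrac p2 = 1+\tfrac{3p}{2}$. Finally, if $x_i,y_i \in I$ then $x_{ip},z_{ip},y_{i1},z_{i1}$ are all excluded, so the trace of $I$ on $P_{z_i}$ is an independent set in a path on $p-2$ vertices, giving $|I\cap V(H_i)| \le 2 + \tfrac p2+\tfrac p2+(\tfrac p2-1) = 1+\tfrac{3p}{2}$.

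Summing the per-part estimates gives $\alpha(G') \le \tfrac p2 + k'(1+\tfrac{3p}{2}) = k' + (3k'+1)\tfrac p2$, which finishes the proof. The only step needing a moment's care is the last case: placing tokens on both $x_i$ and $y_i$ seems at first to gain a vertex over placing one token in $V_i$, but it forces the removal of an endpoint at each end of $P_{z_i}$, and that loss of one vertex exactly cancels the extra token, keeping the per-gadget bound at $1 + \tfrac{3p}{2}$. Every other ingredient is just the elementary value of the independence number of a path.
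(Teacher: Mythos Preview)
Your proof is correct and follows essentially the same approach as the paper: partition $V(G')$ into the central guard path $P_G$ and the per-index gadgets $H_i = W_i \cup P_{x_i}\cup P_{y_i}\cup P_{z_i}$, then cap the contribution of each part, with the key observation being that placing tokens on both $x_i$ and $y_i$ forces the loss of a vertex on $P_{z_i}$. The paper presents this more tersely as a contradiction argument (``if $|I'|>|I|$ then some $W_i$ carries two tokens, hence $P_{z_i}$ loses one''), whereas you spell out all four cases for $I\cap W_i$ explicitly; your version is the same idea but written out more carefully.
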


\begin{proof}
First note that, by Observation \ref{obs:nb-tok-path}, $I$ has $\frac{p}{2}$ tokens on every guard path
and exactly one token in every $W_i := \{x_i, y_i\} \cup V_i$.
Assume $I$ is not maximum, so there is some independent set $I'$ of $G'$ with $|I'|>|I|$.
The maximum size of an independent set on a path of length $p$ is $\frac{p}{2}$, so $I'$ must have 2 tokens in some $W_i$ which must be on $x_i$ and $y_i$.
However this implies that there can only be $\frac{p}{2} - 1$ tokens in $I'$ on $P_{z_i}$. Thus $|I'|\leq |I|$.
\end{proof}

\begin{corollary}
\sloppy%
For any $p\geq 4$, \textsc{Token Jumping} is $W[1]$-hard on $\{C_4,\dots,C_p\}$-free graphs.
\end{corollary}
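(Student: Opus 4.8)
The plan is to observe that the graph $G'$ together with the same independent sets $S$ and $T$ already yields the hardness reduction for \textsc{Token Jumping}, with essentially no additional combinatorial work, once we exploit Lemma~\ref{lem:maximum-IS}. The point is that $S$ and $T$ have size exactly $k' + (3k'+1)\frac{p}{2}$, which by Lemma~\ref{lem:maximum-IS} equals $\alpha(G')$; hence $S$, $T$, and \emph{every} independent set occurring in a \textsc{Token Jumping} reconfiguration sequence starting from $S$ is a \emph{maximum} independent set of $G'$ (token jumping preserves the number of tokens). So I would keep the instance $(G',S,T,k'+(3k'+1)\frac{p}{2})$ and argue that the token jumping reachability question on it coincides with the token sliding reachability question already analysed in Theorem~\ref{thrm-gt-to-ts}.

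The key step is the following elementary observation, which I would state and prove first: if $k=\alpha(G)$ and $I,I'$ are independent sets of size $k$ with $|I\,\Delta\,I'|=2$, then the unique vertex $a\in I\setminus I'$ and the unique vertex $c\in I'\setminus I$ are adjacent in $G$. Indeed, setting $B=I\cap I'$, neither $a$ nor $c$ has a neighbour in $B$ (since $B\cup\{a\}$ and $B\cup\{c\}$ are independent), so if $ac\notin E(G)$ then $B\cup\{a,c\}$ would be an independent set of size $|B|+2=\alpha(G)+1$, a contradiction. Consequently a token jump between two maximum independent sets is always a token slide along the edge $ac$, and therefore $\mathcal{R}_{\TJ}(G,\alpha(G))$ and $\mathcal{R}_{\TS}(G,\alpha(G))$ are literally the same graph. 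Since every configuration reachable from $S$ under token jumping in $G'$ has size $\alpha(G')$ (by Lemma~\ref{lem:maximum-IS}), this gives $S\leftrightsquigarrow T$ under token jumping in $G'$ if and only if $S\leftrightsquigarrow T$ under token sliding in $G'$.

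I would then stitch everything together: by Lemmas~\ref{lem:tiling-to-reconfig} and~\ref{lem:reconfig-to-tiling}, $S\leftrightsquigarrow T$ under token sliding in $G'$ if and only if the \textsc{Grid Tiling} instance has a solution; by Lemma~\ref{lem:G'Cl-free}, $G'$ is $\{C_4,\dots,C_p\}$-free; and the number of tokens is $k'+(3k'+1)\frac{p}{2}=\Oh(p^2k^2)$, which is bounded by a function of $k$ for every fixed $p$. Hence the construction is a valid parameterized reduction from \textsc{Grid Tiling} (which is $W[1]$-hard parameterized by $k$) to \textsc{Token Jumping} on $\{C_4,\dots,C_p\}$-free graphs, parameterized by the number of tokens, and the corollary follows. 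I do not expect a real obstacle here: all the delicate combinatorics — the girth-type property of $G'$, the forward and backward correspondence with \textsc{Grid Tiling}, and the computation of $\alpha(G')$ via Lemma~\ref{lem:maximum-IS} — is already done for the token sliding reduction, and the only thing the corollary adds is the one-line maximality observation above. The single place where care is needed (and where the argument would break if $S$ or $T$ were not maximum) is the claim that jumps stay within the maximum independent sets of $G'$; this is precisely what Lemma~\ref{lem:maximum-IS} guarantees.
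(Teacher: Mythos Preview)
Your proposal is correct and follows essentially the same approach as the paper: both rely on Lemma~\ref{lem:maximum-IS} to deduce that $S$ (and hence every configuration reachable from it) is a maximum independent set of $G'$, and then use the standard fact that \textsc{Token Sliding} and \textsc{Token Jumping} coincide on maximum independent sets. The paper's proof simply asserts this equivalence in one sentence, whereas you spell out the underlying one-line argument (that a jump between maximum independent sets must be along an edge); this is exactly the content the paper takes for granted.
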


\begin{proof}
$G'$ is a single fully-connected component and by Lemma \ref{lem:maximum-IS} the starting set $S$ is a maximum set of $G'$.
Thus the \textsc{Token Sliding} instance is equivalent to a \textsc{Token Jumping} instance and the reduction from \textsc{Grid Tiling} holds.
\end{proof}

\subsection{{\sc Token Sliding} on bipartite graphs}\label{sec:neg2}
This section is devoted to proving the following theorem:

\begin{theorem}\label{thm:isr-ts-bipartite}
\textsc{Token sliding} on bipartite graphs is W[1]-hard parameterized by $k$.
\end{theorem}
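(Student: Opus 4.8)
The plan is to reduce from \textsc{Multicolored Independent Set} (equivalently \textsc{Grid Tiling}, which the paper already uses), producing a bipartite graph in which \textsc{Token Sliding} simulates the choice of one vertex per color class. The overall skeleton mirrors the construction of $G'$ in Section~\ref{sec:neg1}: for each color class we want a gadget in which a token can ``park'' on a chosen vertex, guard paths of even length ($p=2$ suffices when bipartiteness is the only constraint, or longer paths if extra room is needed) force synchronization between all gadgets, and the source set $S$ and target set $T$ differ by a parity shift along every guard path. The crucial difference from Section~\ref{sec:neg1} is that we may no longer use cliques $V_i$ to encode ``at most one token per class'' --- cliques are not bipartite. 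Instead I would replace each clique $V_i$ by an independent set of vertices, and enforce the ``exactly one token moves through class $i$'' condition purely through the guard-path counting argument (Observation~\ref{obs:nb-tok-path} and Lemma~\ref{lem:stuck-path} style): by a global token-count argument, if the guard paths are saturated, exactly one token can be transiting each class gadget at the synchronization time. Adjacencies encoding the edges of the \textsc{Multicolored Independent Set} instance (or the coordinate-agreement constraints of \textsc{Grid Tiling}) are placed between the ``parking'' vertices of different gadgets; since these vertices lie on the same side of the bipartition within a gadget but the constraint edges go between gadgets, one must lay out the bipartition classes carefully so that every constraint edge is bichromatic. Subdividing each constraint edge once is a clean way to guarantee bipartiteness while only lengthening sliding distances by a controllable amount.

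The key steps, in order, would be: (1) describe the class gadget --- an entry vertex $x_i$, an exit vertex $y_i$, a set $V_i$ of ``selection'' vertices representing the $n$ candidate vertices of color $i$, with $x_i$ adjacent to all of $V_i$ on one side and $y_i$ adjacent to all of $V_i$, plus a $P_{z_i}$-type guard path connecting $x_i$ and $y_i$ directly; (2) attach guard paths $P_{x_i}$, $P_{y_i}$ and a global path $P_G$ exactly as in the $G'$ construction, choosing all path lengths even so the whole graph stays bipartite and the $S$/$T$ parity shift is well-defined; (3) add constraint gadgets between selection vertices of different color classes encoding non-edges/edges of the \textsc{Multicolored Independent Set} instance, subdivided as needed for bipartiteness; (4) define $S$ (tokens on all $x_i$, odd-indexed guard vertices) and $T$ (tokens on all $y_i$, even-indexed guard vertices) with $k = k' + (\text{number of guard paths})\cdot\frac{p}{2}$ where $k'$ is the number of color classes; (5) verify bipartiteness; (6) the forward direction: a multicolored independent set yields the reconfiguration sequence ``slide each $x_i$ to its chosen $v_i$, shift all guard paths, slide each $v_i$ to $y_i$'', exactly as in Lemma~\ref{lem:tiling-to-reconfig}; (7) the backward direction: adapt Lemmas~\ref{obs:nb-tok-path}, \ref{lem:stuck-path}, \ref{lem:reconfig-to-tiling} to show that any reconfiguration sequence must at some moment have exactly one token in each $V_i$, and that the independence of the current set forces these chosen vertices to form a valid multicolored independent set.

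The main obstacle I anticipate is step~(7), specifically replacing the clique-based ``at most one token per class'' guarantee with a purely counting-based one that is robust to token sliding. In Section~\ref{sec:neg1}, cliques make ``one token per $V_i$'' automatic and local; here, nothing locally prevents two tokens from sitting on non-adjacent vertices of $V_i$ simultaneously, so one must instead argue globally that the guard-path token budget is tight at all times (the analogue of Lemma~\ref{lem:stuck-path}), then argue that when all $x_i$ have emptied (the analogue of the ``last time a token leaves $x_i$'' argument in Lemma~\ref{lem:reconfig-to-tiling}) the configuration restricted to $\bigcup V_i$ is an independent set of size exactly $k'$ with one vertex per class, and that the constraint gadgets force this to be a genuine solution. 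Handling the subdivided constraint edges without creating short cycles or accidental token-passing routes, and ensuring the guard paths cannot ``leak'' tokens into the selection vertices, are the fiddly points; choosing $p$ large enough (e.g.\ $p$ even and at least $4$) and keeping the constraint gadgets at distance more than $1$ from the guard attachment points should make the counting arguments go through, at the cost of a blow-up in $k$ that is still a function of $k'$ only, preserving the parameterized reduction.
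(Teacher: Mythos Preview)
Your proposal has a genuine gap in step~(7), and it is precisely the obstacle you flag but do not resolve. In the $C_4$-free reduction of Section~\ref{sec:neg1}, the clique on $V_i$ is what makes Lemma~\ref{lem:stuck-path} work: both cases of that proof use the clique property (``if $|V_i\cap I_{s-1}|=1$ then no token can enter $V_i$'' in Case~2, and ``no move from $V_i$ to $V_j$ since $V_j$ already holds a token'' in Case~1). If $V_i$ is replaced by an independent set, a second token \emph{can} enter $V_i$, and the global guard-path count of Observation~\ref{obs:nb-tok-path} only forces $k'$ tokens in total across the $W_i$; it does not distinguish the distribution $(1,1,\dots,1)$ from $(2,0,1,\dots,1)$. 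So the tight-budget argument cannot by itself force one token per class. Worse, once constraint edges (subdivided or not) connect selection vertices of different classes, a token can slide along them from $V_i$ into $V_j$, so the ``token stays in its gadget'' invariant is lost unless you block those routes---but blocking them destroys the constraint encoding.

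The paper's construction is entirely different and uses neither guard paths nor anything resembling the Section~\ref{sec:neg1} gadget. It places \emph{four} copies of each color class $V_p$, two on each side of the bipartition, and connects copies of the same class across the bipartition by a complete bipartite graph \emph{minus the perfect matching of equivalent vertices}. This anti-matching is the key device: if a token sits on $u'\in A_{2p-1}$ and another on $v'\in B_{2p-1}$, independence forces $u'\sim v'$, and two tokens on equivalent vertices within the same side have identical neighborhoods on the other side, so they lock each other in place (Observations~\ref{obs:construction-equivalent-lock}--\ref{obs:construction-noneq-lock}). Synchronization and ordering are enforced not by paths but by nested neighborhoods: vertex $a_{s,p}$ is complete to $B\setminus\bigcup_{i<p}B_i$, so the tokens on $A_{start}$ can leave only in the order $a_{s,1},a_{s,2},\dots$, each forced into the next $B_p$. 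The backward direction then tracks ``well-organized'' configurations and rules out three types of bad moves via a fairly delicate invariant analysis (Lemmas~\ref{lem:tok-A-froz} and~\ref{lem:tok-A-froz-bis}). None of the Section~\ref{sec:neg1} machinery survives; the bipartite case needs a genuinely new gadget.
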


The proof of Theorem \ref{thm:isr-ts-bipartite} consists in a reduction from \textsc{Multicolored Independent Set}.
In what follows, $\mathcal{I} := (G, k , (V_1, \ldots, V_k))$ denotes an instance of \textsc{Multicolored Independent Set}, which is known to be $W[1]$-hard
parameterized by $k$ \cite{pcbook}. In Section \ref{sec:construction}, we detail the construction of the equivalent
instance $\mathcal{I'} := (G', I_s, I_e, 4k+2)$ of \textsc{Token Sliding}, where $G'$ is a bipartite graph and $I_s$, $I_e$ are independent
sets of size $4k+2$, and we prove that if $\mathcal{I}$ is a yes-instance then $\mathcal{I'}$ is a yes-instance.
The more involved proof of the converse direction is detailed in Sections \ref{sec:well-organized-conf} and \ref{sec: bad-moves}.

\subsubsection{Construction of G'}\label{sec:construction}
In what follows, $V(G') := (\mathcal{A}, \mathcal{B})$ denotes the bipartition of $G'$.
For every $p \in \{1, \ldots, k\}$, both $\mathcal{A}$ and $\mathcal{B}$ contain two copies of
the set $V_p$ denoted as $A_{2p-1}$, $A_{2p}$ and $B_{2p-1}, B_{2p}$ respectively, plus some additional
vertices that will be described in the next subsection. Two vertices $u',v' \in V(G')$ are said to be \emph{equivalent}
and we write $u' \sim v'$ if and only if they are copies of the same vertex in $G$. With this definition, every
vertex $u \in V_p$ has exactly four copies in $G'$ (one in each copy of $V_p$). Note that the $\sim$ relation is
transitive and symmetric. We also define the sets $A := \cup_{p=1}^k A_{2p-1} \cup A_{2p}$ and $B := \cup_{p=1}^k B_{2p-1} \cup B_{2p}$.
For every vertex $u'$ of $A \cup B$, the \emph{corresponding vertex} of $u'$ denoted as $orr(u')$ is the unique vertex $u \in V(G)$ that $u'$ is a copy of.
With these definitions at hand, we can now explain how the copies of the sets $V_1, V_2, \ldots, V_k$ are connected in $G'$.
For every two vertices $u' \in A_i$ and $v' \in B_j$ there is an edge connecting $u'$ to $v'$ in $G'$ if and only if:

\begin{enumerate}
    \item $A_i$ and $B_j$ are not copies of the same subset of $V(G)$ and $(orr(u'), orr(v')) \in E(G)$, or
    \item $A_i$ and $B_j$ are copies of the same subset of $V(G)$ and $u' \nsim v'$.
\end{enumerate}

In other words, if $A_i$ and $B_j$ are not copies of the same subset, we connect these sets in the same way there corresponding sets are connected in $G$.
If at the contrary $A_i$ and $B_j$ are copies of the same subset, then $G'[A_i \cup B_j]$ induces a complete bipartite graph minus the matching consisting
of every two pairs of equivalent vertices in $A_i \cup B_j$. The connection between four copies of the same subset of $V(G)$ is illustrated in
Figure \ref{fig:connections-copies}. Let us explain how we make use of such a construction. The following observation follows directly from the definition of $G'$:

\begin{observation}\label{obs:construction-multicolored}
Let $I'$ be an independent set of $G'$ such that for every $p \in 1, 2, \ldots, k$ we have $I' \cap A_{2p-1} = \{u_{2p-1}\}$
and $I' \cap B_{2p-1} = \{v_{2p-1}\}$. Then the set $I := \{orr(u_1), \ldots, orr(u_k)\}$ is a multicolored independent set of $G$.
\end{observation}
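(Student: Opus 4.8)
The plan is to check the two defining requirements of a multicolored independent set separately: that $I$ contains exactly one vertex from each color class $V_p$, and that $I$ is an independent set in $G$. The first requirement is immediate from the construction. By hypothesis $u_{2p-1} \in A_{2p-1}$, and $A_{2p-1}$ is a copy of $V_p$, so $orr(u_{2p-1}) \in V_p$; since the color classes $V_1, \dots, V_k$ are pairwise disjoint, the $k$ vertices $orr(u_1), orr(u_3), \dots, orr(u_{2k-1})$ lie in distinct classes, so $|I| = k$ and $I$ is multicolored. (Here I am reading the set $I$ of the statement as $\{orr(u_{2p-1}) : 1 \le p \le k\}$, i.e.\ the vertices singled out by the hypothesis on the odd-indexed copies.)

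For independence, the main point to keep in mind is that $G'$ is bipartite with sides $\mathcal{A} \supseteq A$ and $\mathcal{B} \supseteq B$, so the $u_{2p-1}$ all lie on the same side of $G'$ and carry no edges between them; independence of $I'$ restricted to $\mathcal{A}$ is vacuous. Hence to recover non-adjacency of a pair $orr(u_{2p-1}), orr(u_{2q-1})$ in $G$ I must route the argument through the $\mathcal{B}$-side witness $v_{2q-1}$. The first step is to use the rule of the construction governing copies of the \emph{same} subset: for each $p$, $A_{2p-1}$ and $B_{2p-1}$ are both copies of $V_p$, and $G'[A_{2p-1} \cup B_{2p-1}]$ is the complete bipartite graph minus the matching pairing equivalent vertices. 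Since $u_{2p-1}, v_{2p-1} \in I'$ and $I'$ is independent, the edge $u_{2p-1}v_{2p-1}$ must be one of the missing ones, so $u_{2p-1} \sim v_{2p-1}$, i.e.\ $orr(u_{2p-1}) = orr(v_{2p-1})$.

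The second step handles a pair $p \ne q$. Since $V_p \ne V_q$, the sets $A_{2p-1}$ and $B_{2q-1}$ are not copies of the same subset, so the other rule of the construction applies: $u_{2p-1}v_{2q-1} \in E(G')$ if and only if $(orr(u_{2p-1}), orr(v_{2q-1})) \in E(G)$. As $u_{2p-1}, v_{2q-1} \in I'$ and $I'$ is independent, this $G'$-edge is absent, hence $(orr(u_{2p-1}), orr(v_{2q-1})) \notin E(G)$; substituting $orr(v_{2q-1}) = orr(u_{2q-1})$ from the first step yields $(orr(u_{2p-1}), orr(u_{2q-1})) \notin E(G)$. Ranging over all $p \ne q$ shows $I$ is independent in $G$, which together with the multicolored property finishes the argument. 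There is no serious obstacle here — this is the ``easy'' direction of the reduction and follows mechanically from the two connection rules; the only step needing a moment's care is the indirect one, namely realizing that one must first invoke the same-subset rule to identify $orr(u_{2q-1})$ with $orr(v_{2q-1})$ before the other rule can transport non-adjacency back to $G$.
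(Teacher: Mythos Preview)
Your proof is correct and follows essentially the same approach as the paper's: both arguments use the same-subset connection rule on $A_{2p-1}, B_{2p-1}$ to identify $orr(u_{2p-1})$ with $orr(v_{2p-1})$, and then the different-subset rule on $A_{2p-1}, B_{2q-1}$ for $p \neq q$ to transport non-adjacency back to $G$. Your exposition is in fact a bit more explicit than the paper's, which states $orr(u_{2j-1}) = orr(v_{2j-1})$ without spelling out the matching argument, and your reading of $I$ as $\{orr(u_{2p-1}) : 1 \le p \le k\}$ is the intended one.
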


\begin{proof}
For any two $i,j \in 1,2, \ldots, k$, $u_{2i-1}$ and $v_{2j-1}$ are non-neighbor in $G'$ since $I'$ is an independent set.
Furthermore, if $i \neq j$ then $A_{2i-1}$ and $B_{2j-1}$ are not copies of the same subset of $V(G)$ and
thus $orr(u_{2i-1}) \neq orr(v_{2j-1})$, so the set $I$ contains $k$ distinct vertices of $G$.
Since $orr(u_{2j-1}) = orr(v_{2j-1})$, we have that $(orr(u_{2i-1}), orr(u_{2j-1})) \notin E(G)$ for any two $i \neq j$, and
since $orr(u_{2i-1}) \in V_{2i-1}$ by construction, the set $I$ is a multicolored independent set of $G$.
\end{proof}

Observation \ref{obs:construction-multicolored} ensures that any independent set of a reconfiguration
sequence of $G'$ having exactly one vertex in $A_{2p-1}$ and one vertex in $B_{2p-1}$ for every $p \in 1,2, \ldots k$ corresponds
to a multicolored independent set of $G$. Note that up to that point, we did not make use of the sets $A_{2p}$ and $B_{2p}$.
The following observation explains why we need two copies of every $V_p$ in both sides of the bipartition:

\begin{observation}\label{obs:construction-equivalent-lock}
Let $I'$ be an independent set of $G'$ and $p \in 1,2, \ldots, k$ such that $I' \cap A_{2p-1} = \{u_{2p-1}\}$, $I' \cap A_{2p} = \{u_{2p}\}$, and $u_{2p-1} \sim u_{2p}$.
Then the tokens on $u_{2p-1}$ and $u_{2p}$ cannot move to $B$.
\end{observation}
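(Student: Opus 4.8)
The plan is to show that $u_{2p-1}$ and $u_{2p}$ have \emph{exactly the same} neighbourhood inside $B$, so that each of the two tokens blocks every possible slide of the other into $B$. First I would record the elementary observation that, since $G'$ is bipartite with bipartition $(\mathcal{A}, \mathcal{B})$ and both $u_{2p-1}$ and $u_{2p}$ lie in $\mathcal{A}$, every neighbour of either vertex lies in $\mathcal{B}$; in particular, a single slide of either token necessarily takes it to a vertex of $\mathcal{B}$, so it suffices to rule out that it lands on a vertex of $B$.

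Next I would establish the key equality
\[
 N_{G'}(u_{2p-1}) \cap B \;=\; N_{G'}(u_{2p}) \cap B .
\]
Write $u := orr(u_{2p-1}) = orr(u_{2p})$, the equality holding because $u_{2p-1} \sim u_{2p}$. I would verify the equality block by block: fix $j \in \{1, \dots, 2k\}$ and $v' \in B_j$. If $B_j$ is not a copy of $V_p$, then the first edge rule says that $v'$ is adjacent to $u_{2p-1}$ (respectively to $u_{2p}$) if and only if $(u, orr(v')) \in E(G)$, which is the same condition in both cases. If $B_j$ is a copy of $V_p$, i.e.\ $j \in \{2p-1, 2p\}$, then the second edge rule (recall that $G'[A_i \cup B_j]$ is the complete bipartite graph with the equivalence matching removed) says that $v'$ is adjacent to $u_{2p-1}$ (respectively to $u_{2p}$) if and only if $orr(v') \neq u$, again the same condition in both cases. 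Hence the two neighbourhoods agree on every $B_j$, which gives the displayed equality.

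Finally I would conclude. Suppose, for contradiction, that in the reconfiguration step following $I'$ the token on $u_{2p-1}$ slides to some vertex $v' \in B$. Then $v' \in N_{G'}(u_{2p-1}) \cap B = N_{G'}(u_{2p}) \cap B$, so $v'$ is adjacent to $u_{2p}$; since only the token on $u_{2p-1}$ moves, the token on $u_{2p}$ stays put, and the resulting set would contain the two adjacent vertices $v'$ and $u_{2p}$ (note $v' \neq u_{2p}$, as they lie on opposite sides of the bipartition), contradicting that it is independent. The case of the token on $u_{2p}$ is symmetric. There is essentially no obstacle here; the only point requiring care is the case analysis for the neighbourhood equality — in particular reading off the second edge rule correctly and using that $orr$ is constant on $\sim$-classes — which is why I would single it out as a displayed claim.
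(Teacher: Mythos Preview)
Your proof is correct and follows exactly the same approach as the paper: you establish $N_{G'}(u_{2p-1}) \cap B = N_{G'}(u_{2p}) \cap B$ from the equivalence $u_{2p-1} \sim u_{2p}$ and conclude that each token blocks the other from sliding into $B$. The paper simply asserts the neighbourhood equality ``by construction'' in one line, whereas you spell out the block-by-block verification via the two edge rules, but the argument is the same.
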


\begin{proof}
By construction $N(u_{2p-1}) \cap B = N(u_{2p}) \cap B$ since these two vertices are equivalent.
It follows that none of the two tokens on $u_{2p}$ nor $u_{2p-1}$ can move to $B$.
\end{proof}

If at some point in the reconfiguration sequence two tokens are positioned on equivalent vertices in $A$, then these tokens lock each other at their respective position in some sense. Note that by symmetry of the construction, the same observation can be made when two tokens are positioned on equivalent vertices in $B$. On the contrary, if two tokens on the same copies of $V_p$ in $A$ are positioned on two non-equivalent vertices we have the following:
\begin{observation}\label{obs:construction-noneq-lock}
Let $I'$ be an independent set of $G'$ and $p \in 1, 2, \ldots k$ such that $I' \cap A_{2p-1} = \{u_{2p-1}\}$, $I' \cap A_{2p} = \{u_{2p}\}$, and $u_{2p-1} \nsim u_{2p}$. Then $I' \cap (B_{2p-1} \cup B_{2p}) = \emptyset$.
\end{observation}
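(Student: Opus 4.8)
The plan is to derive a contradiction from the assumption that some token of $I'$ lies in $B_{2p-1} \cup B_{2p}$, exploiting the fact that all four sets $A_{2p-1}, A_{2p}, B_{2p-1}, B_{2p}$ are copies of the same subset $V_p$ of $V(G)$. Consequently, the adjacency between any vertex of $A_{2p-1} \cup A_{2p}$ and any vertex of $B_{2p-1} \cup B_{2p}$ is governed by the second case of the construction, i.e.\ each such pair of sets induces a complete bipartite graph minus the matching of equivalent pairs.

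First I would record the key consequence of case~2: for every $w' \in B_{2p-1} \cup B_{2p}$ and every $u' \in A_{2p-1} \cup A_{2p}$, the vertices $w'$ and $u'$ are non-adjacent in $G'$ if and only if $w' \sim u'$ (equivalently $orr(w') = orr(u')$). This requires only the observation that each of the four relevant pairs $(A_{2p-1},B_{2p-1})$, $(A_{2p-1},B_{2p})$, $(A_{2p},B_{2p-1})$, $(A_{2p},B_{2p})$ consists of copies of the same subset $V_p$, so that the biconditional ``non-adjacent $\iff$ equivalent'' is available in all four cases.

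Next, suppose for contradiction that there is a vertex $w' \in I' \cap (B_{2p-1} \cup B_{2p})$. Since $I'$ is independent and $u_{2p-1}, u_{2p} \in I'$, the vertex $w'$ is non-adjacent to both $u_{2p-1}$ and $u_{2p}$. Applying the equivalence from the previous step twice yields $w' \sim u_{2p-1}$ and $w' \sim u_{2p}$, and then transitivity and symmetry of $\sim$ give $u_{2p-1} \sim u_{2p}$, contradicting the hypothesis $u_{2p-1} \nsim u_{2p}$. Hence $I' \cap (B_{2p-1} \cup B_{2p}) = \emptyset$.

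I do not anticipate any genuine obstacle: the statement is an immediate structural consequence of the ``complete bipartite minus a perfect matching'' gadget connecting equivalent copies, combined with transitivity of $\sim$. The only point meriting a line of care is verifying that all four pairs of sets in play really fall under case~2 of the construction, since the argument uses the biconditional in each of them.
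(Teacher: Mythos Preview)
Your proof is correct and follows essentially the same idea as the paper's one-line argument: the paper simply states that $B_{2p-1}\cup B_{2p}\subseteq N(u_{2p-1})\cup N(u_{2p})$ because the two tokens are non-equivalent, which is exactly your observation unpacked via the ``complete bipartite minus a matching'' structure and transitivity of $\sim$.
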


\begin{proof}
By construction $B_{2p-1} \cup B_{2p} \subseteq N(u_{2p-1}) \cup N(u_{2p})$ since these two vertices are not equivalent.
\end{proof}

This observation not only ensures that $B_{2p-1} \cup B_{2p} = \emptyset$ but also ensures that no other token but the ones positioned
on $u_{2p-1}$ and $u_{2p}$ can move to $B_{2p-1} \cup B_{2p}$. Then, by Observations \ref{obs:construction-equivalent-lock} and \ref{obs:construction-noneq-lock}, either
there are two tokens on equivalent vertices in $A_{2p-1} \cup A_{2p}$ and then these tokens cannot move to $B$ (and ensures that if there is a token
on $B_{2p-1} \cup B_{2p}$ it must be on an equivalent vertex), or there are two tokens on non-equivalent vertices forbidding any other token to move to $B_{2p-1} \cup B_{2p}$.

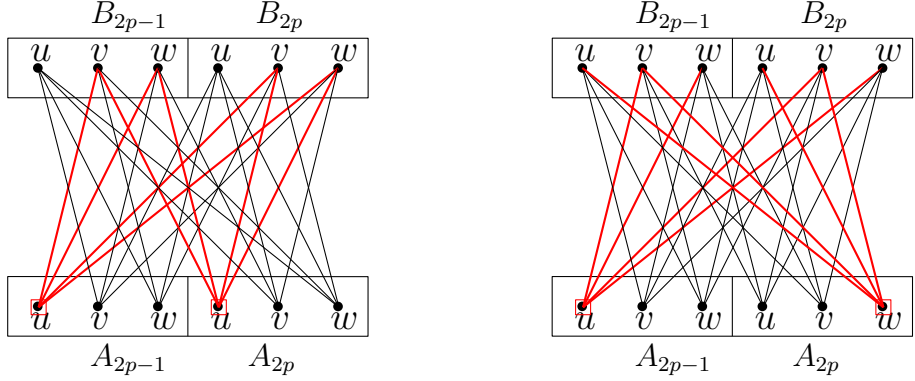
\begin{figure}[tbph]
    \centering
    \begin{minipage}[b]{.45\textwidth}
        \centering
        \tikzstyle{ipe stylesheet} = [
  ipe import,
  even odd rule,
  line join=round,
  line cap=butt,
  ipe pen normal/.style={line width=0.4},
  ipe pen heavier/.style={line width=0.8},
  ipe pen fat/.style={line width=1.2},
  ipe pen ultrafat/.style={line width=2},
  ipe pen normal,
  ipe mark normal/.style={ipe mark scale=3},
  ipe mark large/.style={ipe mark scale=5},
  ipe mark small/.style={ipe mark scale=2},
  ipe mark tiny/.style={ipe mark scale=1.1},
  ipe mark normal,
  /pgf/arrow keys/.cd,
  ipe arrow normal/.style={scale=7},
  ipe arrow large/.style={scale=10},
  ipe arrow small/.style={scale=5},
  ipe arrow tiny/.style={scale=3},
  ipe arrow normal,
  /tikz/.cd,
  ipe arrows, %
  <->/.tip = ipe normal,
  ipe dash normal/.style={dash pattern=},
  ipe dash dotted/.style={dash pattern=on 1bp off 3bp},
  ipe dash dashed/.style={dash pattern=on 4bp off 4bp},
  ipe dash dash dotted/.style={dash pattern=on 4bp off 2bp on 1bp off 2bp},
  ipe dash dash dot dotted/.style={dash pattern=on 4bp off 2bp on 1bp off 2bp on 1bp off 2bp},
  ipe dash normal,
  ipe node/.append style={font=\normalsize},
  ipe stretch normal/.style={ipe node stretch=1},
  ipe stretch normal,
  ipe opacity 10/.style={opacity=0.1},
  ipe opacity 30/.style={opacity=0.3},
  ipe opacity 50/.style={opacity=0.5},
  ipe opacity 75/.style={opacity=0.75},
  ipe opacity opaque/.style={opacity=1},
  ipe opacity opaque,
]
\definecolor{red}{rgb}{1,0,0}
\definecolor{blue}{rgb}{0,0,1}
\definecolor{green}{rgb}{0,1,0}
\definecolor{yellow}{rgb}{1,1,0}
\definecolor{orange}{rgb}{1,0.647,0}
\definecolor{gold}{rgb}{1,0.843,0}
\definecolor{purple}{rgb}{0.627,0.125,0.941}
\definecolor{gray}{rgb}{0.745,0.745,0.745}
\definecolor{brown}{rgb}{0.647,0.165,0.165}
\definecolor{navy}{rgb}{0,0,0.502}
\definecolor{pink}{rgb}{1,0.753,0.796}
\definecolor{seagreen}{rgb}{0.18,0.545,0.341}
\definecolor{turquoise}{rgb}{0.251,0.878,0.816}
\definecolor{violet}{rgb}{0.933,0.51,0.933}
\definecolor{darkblue}{rgb}{0,0,0.545}
\definecolor{darkcyan}{rgb}{0,0.545,0.545}
\definecolor{darkgray}{rgb}{0.663,0.663,0.663}
\definecolor{darkgreen}{rgb}{0,0.392,0}
\definecolor{darkmagenta}{rgb}{0.545,0,0.545}
\definecolor{darkorange}{rgb}{1,0.549,0}
\definecolor{darkred}{rgb}{0.545,0,0}
\definecolor{lightblue}{rgb}{0.678,0.847,0.902}
\definecolor{lightcyan}{rgb}{0.878,1,1}
\definecolor{lightgray}{rgb}{0.827,0.827,0.827}
\definecolor{lightgreen}{rgb}{0.565,0.933,0.565}
\definecolor{lightyellow}{rgb}{1,1,0.878}
\definecolor{black}{rgb}{0,0,0}
\definecolor{white}{rgb}{1,1,1}
\begin{tikzpicture}[ipe stylesheet, xscale=0.7, yscale=0.7]
  \draw[shift={(159.516, 448.484)}, rotate=89.981]
    (0, 0) rectangle (32, -192);
  \draw[shift={(159.569, 576.484)}, rotate=89.981]
    (0, 0) rectangle (32, -192);
  \draw[shift={(255.516, 448.452)}, rotate=89.981]
    (0, 0)
     -- (32, 0);
  \draw[shift={(255.569, 576.452)}, rotate=89.981]
    (0, 0)
     -- (32, 0);
  \pic
     at (175.5215, 464.4786) {ipe disk};
  \pic
     at (207.5215, 464.468) {ipe disk};
  \pic
     at (239.5215, 464.4574) {ipe disk};
  \pic
     at (271.5215, 464.4467) {ipe disk};
  \pic
     at (303.5215, 464.4361) {ipe disk};
  \pic
     at (335.5215, 464.4255) {ipe disk};
  \pic
     at (335.5746, 592.4255) {ipe disk};
  \pic
     at (303.5746, 592.4361) {ipe disk};
  \pic
     at (271.5746, 592.4467) {ipe disk};
  \pic
     at (239.5746, 592.4574) {ipe disk};
  \pic
     at (207.5746, 592.468) {ipe disk};
  \pic
     at (175.5746, 592.4786) {ipe disk};
  \draw[red, ipe pen heavier]
    (176, 464)
     -- (207.574, 592.468)
     -- (207.574, 592.468);
  \draw[red, ipe pen heavier]
    (176, 464)
     -- (239.574, 592.458)
     -- (239.574, 592.458);
  \draw[red, ipe pen heavier]
    (176, 464)
     -- (303.574, 592.437)
     -- (303.574, 592.437);
  \draw[red, ipe pen heavier]
    (176, 464)
     -- (335.574, 592.426)
     -- (335.574, 592.426);
  \draw
    (208, 464)
     -- (175.574, 592.479)
     -- (175.574, 592.479);
  \draw
    (208, 464)
     -- (239.574, 592.458)
     -- (239.574, 592.458);
  \draw
    (208, 464)
     -- (271.574, 592.447)
     -- (271.574, 592.447);
  \draw
    (208, 464)
     -- (335.574, 592.426)
     -- (335.574, 592.426);
  \draw
    (240, 464)
     -- (175.574, 592.479)
     -- (175.574, 592.479);
  \draw
    (240, 464)
     -- (207.574, 592.468)
     -- (207.574, 592.468);
  \draw
    (240, 464)
     -- (271.574, 592.447)
     -- (271.574, 592.447);
  \draw
    (240, 464)
     -- (303.574, 592.437)
     -- (303.574, 592.437);
  \draw[red, ipe pen heavier]
    (272, 464)
     -- (207.574, 592.468)
     -- (207.574, 592.468);
  \draw[red, ipe pen heavier]
    (272, 464)
     -- (239.574, 592.458)
     -- (239.574, 592.458);
  \draw[red, ipe pen heavier]
    (272, 464)
     -- (303.574, 592.437)
     -- (303.574, 592.437);
  \draw[red, ipe pen heavier]
    (272, 464)
     -- (335.574, 592.426)
     -- (335.574, 592.426);
  \draw
    (304, 464)
     -- (175.574, 592.479)
     -- (175.574, 592.479);
  \draw
    (304, 464)
     -- (239.574, 592.458)
     -- (239.574, 592.458);
  \draw
    (304, 464)
     -- (271.574, 592.447)
     -- (271.574, 592.447);
  \draw
    (304, 464)
     -- (335.574, 592.426)
     -- (335.574, 592.426);
  \draw
    (336, 464)
     -- (175.574, 592.479)
     -- (175.574, 592.479);
  \draw
    (336, 464)
     -- (207.574, 592.468)
     -- (207.574, 592.468);
  \draw
    (336, 464)
     -- (271.574, 592.447)
     -- (271.574, 592.447);
  \draw
    (336, 464)
     -- (303.574, 592.437)
     -- (303.574, 592.437);
  \node[ipe node, font=\large]
     at (203.498, 432.468) {$A_{2p-1}$};
  \node[ipe node, font=\large]
     at (287.498, 432.437) {$A_{2p}
$};
  \node[ipe node, font=\large]
     at (291.591, 616.437) {$B_{2p}$};
  \node[ipe node, font=\large]
     at (203.591, 616.466) {$B_{2p-1}$};
  \node[ipe node, font=\Large]
     at (171.508, 452.479) {$u$};
  \node[ipe node, font=\Large]
     at (203.508, 452.468) {$v$};
  \node[ipe node, font=\Large]
     at (235.508, 452.458) {$w$};
  \node[ipe node, font=\Large]
     at (267.508, 452.479) {$u$};
  \node[ipe node, font=\Large]
     at (299.508, 452.468) {$v$};
  \node[ipe node, font=\Large]
     at (331.508, 452.458) {$w$};
  \node[ipe node, font=\Large]
     at (171.508, 596.479) {$u$};
  \node[ipe node, font=\Large]
     at (203.508, 596.468) {$v$};
  \node[ipe node, font=\Large]
     at (235.508, 596.457) {$w$};
  \node[ipe node, font=\Large]
     at (267.508, 596.479) {$u$};
  \node[ipe node, font=\Large]
     at (299.508, 596.468) {$v$};
  \node[ipe node, font=\Large]
     at (331.508, 596.457) {$w$};
  \draw[red]
    (172, 468) rectangle (180, 460);
  \draw[red]
    (268, 468) rectangle (276, 460);
\end{tikzpicture}
    \end{minipage}%
    \begin{minipage}[b]{0.45\textwidth}
        \centering
        \tikzstyle{ipe stylesheet} = [
  ipe import,
  even odd rule,
  line join=round,
  line cap=butt,
  ipe pen normal/.style={line width=0.4},
  ipe pen heavier/.style={line width=0.8},
  ipe pen fat/.style={line width=1.2},
  ipe pen ultrafat/.style={line width=2},
  ipe pen normal,
  ipe mark normal/.style={ipe mark scale=3},
  ipe mark large/.style={ipe mark scale=5},
  ipe mark small/.style={ipe mark scale=2},
  ipe mark tiny/.style={ipe mark scale=1.1},
  ipe mark normal,
  /pgf/arrow keys/.cd,
  ipe arrow normal/.style={scale=7},
  ipe arrow large/.style={scale=10},
  ipe arrow small/.style={scale=5},
  ipe arrow tiny/.style={scale=3},
  ipe arrow normal,
  /tikz/.cd,
  ipe arrows, %
  <->/.tip = ipe normal,
  ipe dash normal/.style={dash pattern=},
  ipe dash dotted/.style={dash pattern=on 1bp off 3bp},
  ipe dash dashed/.style={dash pattern=on 4bp off 4bp},
  ipe dash dash dotted/.style={dash pattern=on 4bp off 2bp on 1bp off 2bp},
  ipe dash dash dot dotted/.style={dash pattern=on 4bp off 2bp on 1bp off 2bp on 1bp off 2bp},
  ipe dash normal,
  ipe node/.append style={font=\normalsize},
  ipe stretch normal/.style={ipe node stretch=1},
  ipe stretch normal,
  ipe opacity 10/.style={opacity=0.1},
  ipe opacity 30/.style={opacity=0.3},
  ipe opacity 50/.style={opacity=0.5},
  ipe opacity 75/.style={opacity=0.75},
  ipe opacity opaque/.style={opacity=1},
  ipe opacity opaque,
]
\definecolor{red}{rgb}{1,0,0}
\definecolor{blue}{rgb}{0,0,1}
\definecolor{green}{rgb}{0,1,0}
\definecolor{yellow}{rgb}{1,1,0}
\definecolor{orange}{rgb}{1,0.647,0}
\definecolor{gold}{rgb}{1,0.843,0}
\definecolor{purple}{rgb}{0.627,0.125,0.941}
\definecolor{gray}{rgb}{0.745,0.745,0.745}
\definecolor{brown}{rgb}{0.647,0.165,0.165}
\definecolor{navy}{rgb}{0,0,0.502}
\definecolor{pink}{rgb}{1,0.753,0.796}
\definecolor{seagreen}{rgb}{0.18,0.545,0.341}
\definecolor{turquoise}{rgb}{0.251,0.878,0.816}
\definecolor{violet}{rgb}{0.933,0.51,0.933}
\definecolor{darkblue}{rgb}{0,0,0.545}
\definecolor{darkcyan}{rgb}{0,0.545,0.545}
\definecolor{darkgray}{rgb}{0.663,0.663,0.663}
\definecolor{darkgreen}{rgb}{0,0.392,0}
\definecolor{darkmagenta}{rgb}{0.545,0,0.545}
\definecolor{darkorange}{rgb}{1,0.549,0}
\definecolor{darkred}{rgb}{0.545,0,0}
\definecolor{lightblue}{rgb}{0.678,0.847,0.902}
\definecolor{lightcyan}{rgb}{0.878,1,1}
\definecolor{lightgray}{rgb}{0.827,0.827,0.827}
\definecolor{lightgreen}{rgb}{0.565,0.933,0.565}
\definecolor{lightyellow}{rgb}{1,1,0.878}
\definecolor{black}{rgb}{0,0,0}
\definecolor{white}{rgb}{1,1,1}
\begin{tikzpicture}[ipe stylesheet, xscale=0.7, yscale=0.7]
  \draw[shift={(159.516, 448.484)}, rotate=89.981]
    (0, 0) rectangle (32, -192);
  \draw[shift={(159.569, 576.484)}, rotate=89.981]
    (0, 0) rectangle (32, -192);
  \draw[shift={(255.516, 448.452)}, rotate=89.981]
    (0, 0)
     -- (32, 0);
  \draw[shift={(255.569, 576.452)}, rotate=89.981]
    (0, 0)
     -- (32, 0);
  \pic
     at (175.5215, 464.4786) {ipe disk};
  \pic
     at (207.5215, 464.468) {ipe disk};
  \pic
     at (239.5215, 464.4574) {ipe disk};
  \pic
     at (271.5215, 464.4467) {ipe disk};
  \pic
     at (303.5215, 464.4361) {ipe disk};
  \pic
     at (335.5215, 464.4255) {ipe disk};
  \pic
     at (335.5746, 592.4255) {ipe disk};
  \pic
     at (303.5746, 592.4361) {ipe disk};
  \pic
     at (271.5746, 592.4467) {ipe disk};
  \pic
     at (239.5746, 592.4574) {ipe disk};
  \pic
     at (207.5746, 592.468) {ipe disk};
  \pic
     at (175.5746, 592.4786) {ipe disk};
  \draw[red, ipe pen heavier]
    (176, 464)
     -- (207.574, 592.468)
     -- (207.574, 592.468);
  \draw[red, ipe pen heavier]
    (176, 464)
     -- (239.574, 592.458)
     -- (239.574, 592.458);
  \draw[red, ipe pen heavier]
    (176, 464)
     -- (303.574, 592.437)
     -- (303.574, 592.437);
  \draw[red, ipe pen heavier]
    (176, 464)
     -- (335.574, 592.426)
     -- (335.574, 592.426);
  \draw
    (208, 464)
     -- (175.574, 592.479)
     -- (175.574, 592.479);
  \draw
    (208, 464)
     -- (239.574, 592.458)
     -- (239.574, 592.458);
  \draw
    (208, 464)
     -- (271.574, 592.447)
     -- (271.574, 592.447);
  \draw
    (208, 464)
     -- (335.574, 592.426)
     -- (335.574, 592.426);
  \draw
    (240, 464)
     -- (175.574, 592.479)
     -- (175.574, 592.479);
  \draw
    (240, 464)
     -- (207.574, 592.468)
     -- (207.574, 592.468);
  \draw
    (240, 464)
     -- (271.574, 592.447)
     -- (271.574, 592.447);
  \draw
    (240, 464)
     -- (303.574, 592.437)
     -- (303.574, 592.437);
  \draw
    (272, 464)
     -- (207.574, 592.468)
     -- (207.574, 592.468);
  \draw
    (272, 464)
     -- (239.574, 592.458)
     -- (239.574, 592.458);
  \draw
    (272, 464)
     -- (303.574, 592.437)
     -- (303.574, 592.437);
  \draw
    (272, 464)
     -- (335.574, 592.426)
     -- (335.574, 592.426);
  \draw
    (304, 464)
     -- (175.574, 592.479)
     -- (175.574, 592.479);
  \draw
    (304, 464)
     -- (239.574, 592.458)
     -- (239.574, 592.458);
  \draw
    (304, 464)
     -- (271.574, 592.447)
     -- (271.574, 592.447);
  \draw
    (304, 464)
     -- (335.574, 592.426)
     -- (335.574, 592.426);
  \draw[red, ipe pen heavier]
    (336, 464)
     -- (175.574, 592.479)
     -- (175.574, 592.479);
  \draw[red, ipe pen heavier]
    (336, 464)
     -- (207.574, 592.468)
     -- (207.574, 592.468);
  \draw[red, ipe pen heavier]
    (336, 464)
     -- (271.574, 592.447)
     -- (271.574, 592.447);
  \draw[red, ipe pen heavier]
    (336, 464)
     -- (303.574, 592.437)
     -- (303.574, 592.437);
  \node[ipe node, font=\large]
     at (203.498, 432.468) {$A_{2p-1}$};
  \node[ipe node, font=\large]
     at (287.498, 432.437) {$A_{2p}
$};
  \node[ipe node, font=\large]
     at (291.591, 616.437) {$B_{2p}$};
  \node[ipe node, font=\large]
     at (203.591, 616.466) {$B_{2p-1}$};
  \node[ipe node, font=\Large]
     at (171.508, 452.479) {$u$};
  \node[ipe node, font=\Large]
     at (203.508, 452.468) {$v$};
  \node[ipe node, font=\Large]
     at (235.508, 452.458) {$w$};
  \node[ipe node, font=\Large]
     at (267.508, 452.479) {$u$};
  \node[ipe node, font=\Large]
     at (299.508, 452.468) {$v$};
  \node[ipe node, font=\Large]
     at (331.508, 452.458) {$w$};
  \node[ipe node, font=\Large]
     at (171.508, 596.479) {$u$};
  \node[ipe node, font=\Large]
     at (203.508, 596.468) {$v$};
  \node[ipe node, font=\Large]
     at (235.508, 596.457) {$w$};
  \node[ipe node, font=\Large]
     at (267.508, 596.479) {$u$};
  \node[ipe node, font=\Large]
     at (299.508, 596.468) {$v$};
  \node[ipe node, font=\Large]
     at (331.508, 596.457) {$w$};
  \draw[red]
    (172, 468) rectangle (180, 460);
  \draw[red]
    (332, 468) rectangle (340, 460);
\end{tikzpicture}
    \end{minipage}

    \caption{Connections between the four copies of $V_p$ in $A \cup B$. Vertices with the same name are equivalent vertices. The red square represent tokens: two tokens are positioned on equivalent vertices at the left, and on non-equivalent vertices at the right.}
    \label{fig:connections-copies}
\end{figure}

\paragraph{Definition of the initial and target independent sets.}

The initial independent set $I_s$ consists in two sets of $2k$ vertices $A_{start}$ and $A_{end}$ plus two vertices $s_A, e_A$ included in $\mathcal{A}$, and the target independent set $I_e$ consists in two sets of $2k$ vertices $B_{start}$ and $B_{end}$ plus two vertices $s_B, e_B$ included in $\mathcal{B}$. The two sets $I_s$ and $I_e$ are disjoint from $A \cup B$. The graph induced by $A_{start} \cup B_{end} \cup \{s_A, e_B\}$ and the graph induced by $A_{end} \cup B_{start} \cup \{s_B, e_A\}$ are complete bipartite graphs. The main goal of this section is to explain how to connect the  set $A_{start} \cup  B_{start}$ and the set $A_{end} \cup B_{end}$ to $A \cup B$ in order to ensure that any reconfiguration sequence transforming one into the other enforces the $2k$ tokens starting on $A_{start}$ and the $2k$ tokens starting on $B_{start}$ to switch sides by going through $A \cup B$. More particularly, we will show the existence of an independent set that satisfies the condition of Observation \ref{obs:construction-multicolored} in any such reconfiguration sequence, giving a multicolored independent set of $G$. %
For $p \in 1, 2, \ldots, 2k$, we denote by $a_{s,p}$ and $b_{s,p}$ the vertices of $A_{start}$ and $B_{start}$ respectively and we denote by $a_{e,p}$ and $b_{e,p}$ the vertices of $A_{end}$ and $B_{end}$ respectively. These vertices are connected to $A \cup B$ as follows:
\begin{enumerate}
    \item the vertices $a_{s,p}$ and $a_{e,p}$ are complete to $B - \cup_{i = 1}^{p-1} B_i$, and
    \item the vertices $b_{s,p}$ and $b_{e,p}$ are complete to $A - \cup_{i = 1}^{p-1} A_i$.
\end{enumerate}
An illustration of the full construction is given in Figure \ref{fig:complete_schem}. By construction, no token starting on $A_{start} \cup \{s_A\}$ can move to $B_{end} \cup \{e_B\}$ as long as there are at least two tokens on $A_{start} \cup \{s_A\}$ (and the same goes for $B_{start} \cup \{s_B\}$ and $A_{end} \cup \{e_A\}$). Since there are initially $2k + 1$ tokens on $A_{start} \cup \{s_A\}$ and since $N(s_A) \cap B = \emptyset$, the $2k$ tokens initially on $A_{start}$ must move to $B$ at some point in the sequence, and the same goes for $B_{start}$ and $A$. The tokens initially on $s_A$ and $s_B$ have a special role and act as "locks": without these token, the last token remaining on $A_{start}$ (resp. $B_{start}$) would be able to move directly to $B_{end}$ without never going through $B$ (resp. $A$). Let us now explain the connections to $A \cup B$.
\begin{observation}\label{obs:construction-astart-frozen}
\sloppy%
Let $I'$ be an independent set of $G'$ such that $\{a_{s,p}, a_{s,p+1}, \ldots, a_{s, 2k}\} \subseteq I'$ for some $p < 2k$. Then the tokens on $\{a_{s,p+1}, a_{s,p+2}, \ldots, a_{s, 2k}\}$ are frozen. Furthermore the token on $a_{s,p}$ cannot move to $\cup_{i=p+1}^{2k} B_p$.
\end{observation}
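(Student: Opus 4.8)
The plan is to extract from the construction a nesting of the neighbourhoods of the vertices $a_{s,p}, a_{s,p+1}, \ldots, a_{s,2k}$ of $A_{start}$, and then to observe that any single one of these tokens already blocks every slide available to the others.

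\textbf{Step 1: neighbourhood nesting.} I would first prove the claim that $N_{G'}(a_{s,q}) \subseteq N_{G'}(a_{s,p})$ for every $q$ with $p \leq q \leq 2k$. Since $a_{s,q}$ lies in $\mathcal{A}$, all of its neighbours lie in $\mathcal{B}$, so it suffices to list them. By the connection rule for $A_{start}$, the neighbours of $a_{s,q}$ inside $B$ are exactly those of $B - \bigcup_{i=1}^{q-1} B_i = B_q \cup \cdots \cup B_{2k}$. Outside $B$, the only edges incident to $A_{start}$ are those of the complete bipartite graph on $A_{start} \cup B_{end} \cup \{s_A, e_B\}$, so $a_{s,q}$ is in addition adjacent to every vertex of $B_{end} \cup \{e_B\}$ and to nothing else --- in particular to no vertex of $B_{start} \cup \{s_B\}$ and to no vertex of $B_1 \cup \cdots \cup B_{q-1}$. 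Applying the same description to $a_{s,p}$ gives $N_{G'}(a_{s,p}) = (B_p \cup \cdots \cup B_{2k}) \cup B_{end} \cup \{e_B\}$, and since $q \geq p$ the asserted inclusion follows.

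\textbf{Step 2: the tokens on $a_{s,p+1}, \ldots, a_{s,2k}$ are frozen.} Here ``frozen'' means that, in $I'$, the token cannot be slid. Fix $q \in \{p+1, \ldots, 2k\}$. A valid slide of the token on $a_{s,q}$ would move it to a vertex $v \in N_{G'}(a_{s,q})$ with $v \notin I'$ and $v$ nonadjacent to every other token of $I'$. But by Step 1 we have $v \in N_{G'}(a_{s,q}) \subseteq N_{G'}(a_{s,p})$, while $a_{s,p} \in I'$ is a token distinct from $a_{s,q}$ (this is where $p < 2k$ is used, so that $\{a_{s,p}, \ldots, a_{s,2k}\}$ contains $a_{s,p}$ in addition to $a_{s,q}$); hence $v$ is adjacent to a token of $I'$, a contradiction. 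So no such slide exists.

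\textbf{Step 3: the token on $a_{s,p}$ cannot move into $B_{p+1} \cup \cdots \cup B_{2k}$.} Suppose it slides to some $v \in B_i$ with $p+1 \leq i \leq 2k$. Since $a_{s,p+1}$ is complete to $B - \bigcup_{i=1}^{p} B_i = B_{p+1} \cup \cdots \cup B_{2k}$, we get $v \in N_{G'}(a_{s,p+1})$; and as $a_{s,p+1} \in I'$ (because $p+1 \leq 2k$) with $a_{s,p+1} \neq a_{s,p}$, this again contradicts the validity of the slide. I do not expect a genuine obstacle anywhere in this argument: it is pure neighbourhood bookkeeping, and the only point requiring care is Step 1 --- checking that the construction attaches the vertices of $A_{start}$ to $B$ only through the ``tail'' sets $B_p, \ldots, B_{2k}$ and, outside $B$, only to $B_{end} \cup \{e_B\}$, which is exactly what makes their neighbourhoods totally ordered by inclusion.
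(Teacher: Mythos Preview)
Your proposal is correct and follows essentially the same approach as the paper: establish the neighbourhood inclusion $N(a_{s,q}) \subseteq N(a_{s,p})$ for $q \geq p$, use it to freeze the tokens on $a_{s,p+1},\ldots,a_{s,2k}$, and then note that $B_{p+1}\cup\cdots\cup B_{2k}\subseteq N(a_{s,p+1})$ to block the token on $a_{s,p}$ from moving there. Your Step~1 merely spells out the inclusion the paper asserts ``by construction,'' and your Steps~2 and~3 match the paper's reasoning directly.
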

\begin{proof}
Let $q > p$ and suppose there is a token on $a_{s,q}$. This token cannot move to $B_{end}$ nor $e_B$ since there is a token on $a_{s,p}$ with $p < q$ and $G'[A_{start} \cup B_{end} \cup \{s_A, e_B\}]$ induces a complete bipartite graph. By construction $N(a_{s,q}) \subseteq N(a_{s,p})$ hence the token on $a_{s,q}$ cannot move to $B$ and this token is frozen. The second statement follows from the fact that $\cup_{i=p+1}^{2k} B_p \subseteq N(a_{s,p}) \cap N(a_{s,p+1})$.
\end{proof}
\begin{figure}
    \centering
    \includegraphics[scale=0.7]{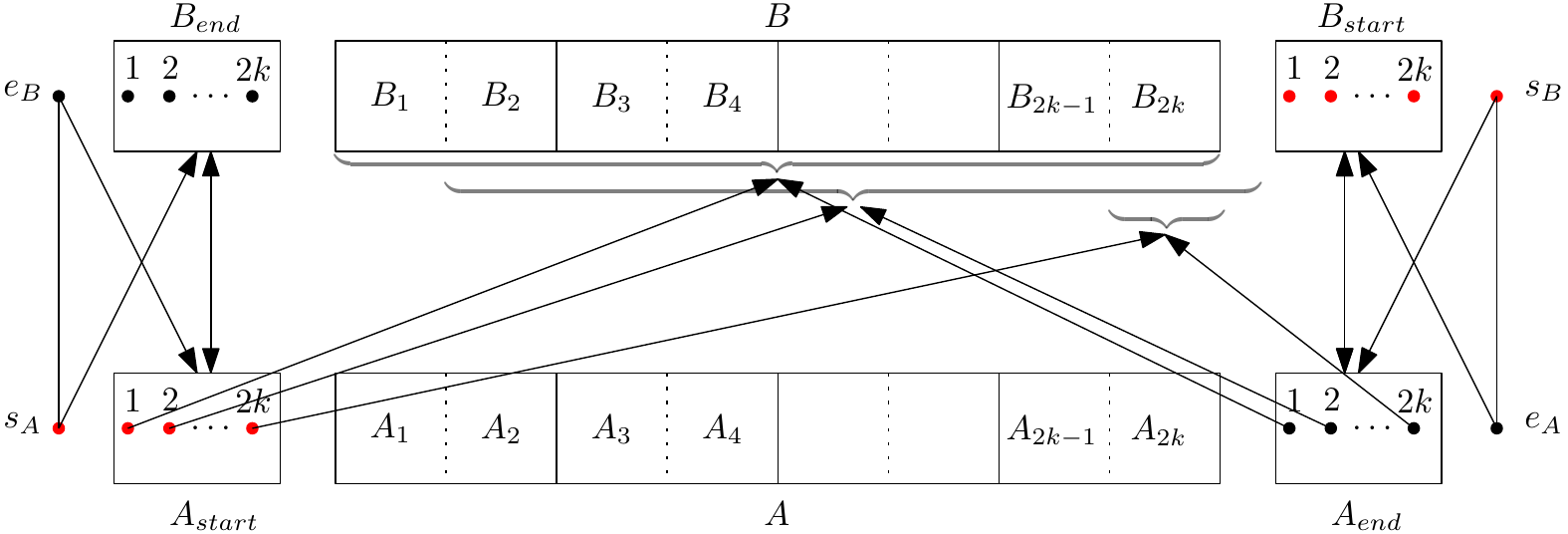}
    \caption{The constructed graph $G'$. Vertices in red are the vertices of $I_s$. An arrow between a vertex $v$ and a subset of vertices indicates that $v$ is complete to this subset. An arrow between a vertex $v$ and a brace indicates that $v$ is complete to the subsets included in the brace. A double arrow between two sets indicate these sets induce a complete bipartite graph. The connections between $A$ and $B_{end} \cup B_{start}$ are symmetric and have been omitted for the sake of clarity.}
    \label{fig:complete_schem}
\end{figure}

By symmetry, the same observation can be made for tokens on $B_{start}$.
This shows that the tokens initially on $A_{start}$ and $B_{start}$ must respect a strict order to move respectively to $B$ and $A$: the only tokens that can
initially move are the tokens on $a_{s,1}$ and $b_{s,1}$ and these have no choice but to move to $B_1$ and $A_1$ respectively.
After such a move the tokens on $a_{s,2}$ and $b_{s,2}$ are free to move to $B_2$ and $A_2$ respectively, and so on.
Suppose that after the first $4$ moves, there is exactly one token in each of the four subset $A_1$, $B_1$, $A_2$ and $B_2$.
Then it is not hard to see - but will be formally proved in the next section - that these tokens lie on equivalent vertices, corresponding to a
unique vertex of $G$. By Observation \ref{obs:construction-equivalent-lock} these tokens cannot move to the other side of the bipartite graph and
must stay at the same position while the remaining tokens on $A_{start}$ and $B_{start}$ moves to $A \cup B$. With the full
constructions of $G'$, $I_s$ and $I_e$ at hand we can prove the direct part of the reduction:

\begin{lemma}
If there is a multicolored independent set of size $k$ in $G$ then there exists a reconfiguration sequence transforming $I_s$ to $I_e$ in $G'$.
\end{lemma}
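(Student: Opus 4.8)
The plan is to exhibit an explicit reconfiguration sequence built around the multicolored independent set. Write $\{x_1,\dots,x_k\}$ for the multicolored independent set of $G$, with $x_p\in V_p$, and for each $p$ let $x_p^{(2p-1)}\in A_{2p-1}$, $x_p^{(2p)}\in A_{2p}$, $\widehat x_p^{(2p-1)}\in B_{2p-1}$, $\widehat x_p^{(2p)}\in B_{2p}$ denote the four copies of $x_p$ in $G'$, and set $X:=\bigcup_{p=1}^k\{x_p^{(2p-1)},x_p^{(2p)},\widehat x_p^{(2p-1)},\widehat x_p^{(2p)}\}$. The first step is to check that $X$ is an independent set of $G'$ containing exactly one vertex of each of the $4k$ blocks $A_1,\dots,A_{2k},B_1,\dots,B_{2k}$. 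Since $G'$ is bipartite with parts $\mathcal{A},\mathcal{B}$, it suffices to consider a pair $x\in X\cap A_i$ and $y\in X\cap B_j$: if $A_i$ and $B_j$ are copies of the same $V_p$ then $x$ and $y$ are two copies of $x_p$, hence $x\sim y$ and there is no edge by construction; otherwise $A_i$ and $B_j$ are copies of distinct $V_p,V_q$, the pair corresponds to $(x_p,x_q)\notin E(G)$ because $\{x_1,\dots,x_k\}$ is independent, so again there is no edge. Consequently $J:=X\cup\{e_B,s_B\}$ is an independent set of size $4k+2$, using that $e_B,s_B$ lie in $\mathcal{B}$ with $N(e_B)\subseteq A_{start}\cup\{s_A\}$ and $N(s_B)\subseteq A_{end}\cup\{e_A\}$, both disjoint from $X$.

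The reconfiguration sequence is then assembled in two halves meeting at $J$. For the first half, $I_s\leftrightsquigarrow J$, the idea is to push the $2k$ tokens of $A_{start}$ and the $2k$ tokens of $A_{end}$ into the blocks and onto the vertices of $X$, and only afterwards slide the two remaining tokens $s_A\to e_B$ and $e_A\to s_B$; the latter two slides are legal because by then $A_{start}$ (respectively $A_{end}$) is empty and $N(e_B)\subseteq A_{start}\cup\{s_A\}$, $N(s_B)\subseteq A_{end}\cup\{e_A\}$. The tokens of $A_{start}$ (and of $A_{end}$) have to be released in the order forced by Observation~\ref{obs:construction-astart-frozen} — $a_{s,1}$ first, then $a_{s,2}$, and so on — and each released token is routed, one elementary slide at a time, through the inner gadgets (which induce complete bipartite graphs minus a perfect matching and hence leave ample free room) to its designated vertex of $X$. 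The invariant maintained along the way is that at every moment there is at most one token in each block, that whenever both copies $A_{2p-1},A_{2p}$ of some $V_p$ carry tokens these tokens sit on \emph{equivalent} vertices (and symmetrically on the $\mathcal{B}$-side), and that the staircase vertices are vacated in the forced order. Observation~\ref{obs:construction-equivalent-lock} then guarantees that such a pair of tokens never blocks a move on the opposite side of the bipartition, while Observation~\ref{obs:construction-noneq-lock} controls the transient configurations with a partially filled block-pair, so that each intermediate vertex set is indeed independent.

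For the second half, $J\leftrightsquigarrow I_e$, I would invoke the symmetry of the construction: $G'$ admits an automorphism exchanging $\mathcal{A}\leftrightarrow\mathcal{B}$ and, accordingly, $A_{start}\leftrightarrow B_{start}$, $A_{end}\leftrightarrow B_{end}$, $s_A\leftrightarrow s_B$, $e_A\leftrightarrow e_B$ (sending copies of $x_p$ to copies of $x_p$); this automorphism maps $I_s$ to $I_e$ and fixes $X$ setwise. Applying the routine of the first half through this automorphism (and reversing it) transforms $X\cup\{s_A,e_A\}$ into $I_e$, so after a couple of bridging slides among $\{s_A,e_A,s_B,e_B\}$ — each legal since the relevant blocks $A_{start}\cup\{s_A\}$ and $A_{end}\cup\{e_A\}$ stay empty throughout the second half, hence the tokens on $s_B,e_B$ never obstruct anything — one obtains $J\leftrightsquigarrow I_e$. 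Concatenating the two halves yields the desired sequence $I_s\leftrightsquigarrow I_e$.

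The only genuine work, and where I expect the technical obstacle to lie, is the bookkeeping inside the two halves: making the invariant above fully precise (at each moment at most one token per block; occupied block-pairs lying on equivalent vertices; staircase vertices vacated in the prescribed order) and verifying that it survives every single slide, so that each intermediate vertex set is an independent set of the right size. No individual step is hard; the care is entirely in the volume of case analysis, which is consistent with the authors deferring only the (much harder) converse direction to the subsequent sections.
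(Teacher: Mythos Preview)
Your proposal rests on a misreading of the construction that makes the whole argument collapse. You take $I_s=A_{start}\cup A_{end}\cup\{s_A,e_A\}$ (following the sentence in the paper verbatim), but that sentence is a typo: the remainder of Section~4.2---and in particular the very proof you are trying to reproduce, which in step~1 slides tokens off $b_{s,2p-1}$ and $b_{s,2p}$---only makes sense with the intended sets $I_s=A_{start}\cup B_{start}\cup\{s_A,s_B\}$ and $I_e=A_{end}\cup B_{end}\cup\{e_A,e_B\}$. With your $I_s$ no token can move at all: any vertex of $B_p$ is adjacent to both $a_{s,1}$ and $a_{e,1}$, any vertex of $B_{start}\cup\{s_B\}$ is adjacent to at least two vertices of $A_{end}\cup\{e_A\}$, and any vertex of $B_{end}\cup\{e_B\}$ is adjacent to at least two vertices of $A_{start}\cup\{s_A\}$. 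So your ``first half'' cannot even begin, and no amount of invariant bookkeeping will fix it.

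With the correct $I_s$ the paper's argument is also much simpler than what you outline. There is no multi-step routing ``through the inner gadgets'': since $a_{s,p}$ is complete to $B_p$ and $b_{s,p}$ is complete to $A_p$, each of the $4k$ tokens on $A_{start}\cup B_{start}$ slides \emph{directly}, in a single move, to the appropriate copy of $x_{\lceil p/2\rceil}$ in $B_p$ (resp.\ $A_p$), processed in the order $p=1,2,\ldots,2k$. One then slides $s_A\to e_B$ and $s_B\to e_A$, and finally undoes the first phase towards $A_{end}\cup B_{end}$ in the order $p=2k,\ldots,1$ (the $A_p$-token slides to $b_{e,p}$, the $B_p$-token to $a_{e,p}$, again single edges by construction). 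That each intermediate set is independent follows immediately from the two adjacency rules for $G'[A\cup B]$ together with the fact that $\{x_1,\ldots,x_k\}$ is independent in $G$; no symmetry automorphism and no inductive invariant are needed.
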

\begin{proof}
Let $u_1 \in V_1, \ldots, u_k \in V_k$ be a multicolored independent set of $G$. For $p$ in $1 \ldots, k$, let $u'_{2p-1}, u'_{2p}$ (resp. $v'_{2p-1}, v'_{2p}$) be the copies of $u_p$ in $A$ (resp. B). Consider the following sequence:
\begin{enumerate}
    \item For $p \in \{1, \ldots, k\}$ in increasing order, move the token on $a_{s,2p-1}$ to $u'_{2p-1}$, then move the token on $a_{s,2p}$ to $u'_{2p}$. Move the token on $b_{s,2p-1}$ to $v'_{2p-1}$, then move the token on $b_{s,2p}$ to $v'_{2p}$.
    \item Move the token on $s_A$ to $e_B$ then move the token on $s_B$ to $e_A$.
    \item For $p \in k, \ldots, 1$ in decreasing order, move the token on $u'_{2p}$ to $a_{e,2p}$, then move the token on  $u'_{2p-1}$ to $a_{e,2p-1}$. Move the token on $v'_{2p}$ to  $b_{e,2p}$, then move the token on $v'_{2p-1}$ to $b_{e,2p-1}$.
\end{enumerate}
\end{proof}

The remainder of the section is dedicated to the converse part of the reduction.
More particularly, we formally show that there is an independent set satisfying the condition of Observation \ref{obs:construction-multicolored} in any shortest
reconfiguration sequence transforming $I_s$ to $I_e$.

\subsubsection{Well-organized configurations}\label{sec:well-organized-conf}
To simplify the tracking of tokens along the transformation, we give different colors to the tokens initially on $A_{start}$ and $B_{start}$.
The tokens initially on $A_{start}$ are the \emph{blue tokens} and the tokens initially on $B_{start}$ are the \emph{red tokens}.
We say a vertex $v$ is \emph{dominated} by a vertex $u$ in $G$ if $v \in N_G(u)$. Similarly, we say a set $U$ is dominated by $W$ if $U \subseteq N_G(W)$.
Given a configuration $C$, $M_A(C)$ (resp. $M_B(C)$) is the maximum integer $p \in \llbracket 1, 2k \rrbracket$ such that there is a token on $A_p$ (resp. $B_p$).
By convention, if there is no token on $X \in \{A,B\}$, we set $M_X(C) = 0$. A configuration $C$ is  \emph{well-organized} if there is a token
on either $s_A$ or $e_B$ and on either $s_B$ or $e_A$ and if it satisfies the following conditions:
\begin{enumerate}
    \item For every $p \leq M_A(C)$ and every $q \leq M_B(C)$ there is exactly one token on $A_p$ and  exactly one token on $B_q$.
    \item If $M_A(C) < 2k$ then for every $M_A(C) < p \leq 2k$ there is a token on $a_{s,p}$. If $M_B(C) < 2k$ then for every $M_B(C) < q \leq 2k$ there is a token on $b_{s,q}$.
\end{enumerate}

Since both the construction and the definition of well-organized configurations are symmetric, we can always assume that $M_A(C) \leq M_B(C)$ for any well-organized configuration $C$.
Note that the initial configuration is well-organized.
We say that two configurations $C$ and $C'$ are \emph{adjacent} if $C$ can be transformed into $C'$ by moving exactly one token.\\
Throughout the proof let $S := C_1, \ldots, C_N$ denote a shortest reconfiguration sequence from $I_s$ to $I_e$. We say that a token  \emph{moves from a set $X$ to a set $Y$ at time $t$} and we write $(t: X \rightarrow Y)$ if there exists two set $X,Y \subseteq V(G')$ and two vertices $x \in X$, $y \in Y$ such that $C_{t+1} =  C_{t} - \{x\} + \{y\}$. When the sets $X$ and $Y$ contain exactly one vertex we write $(t: x \rightarrow y)$ by abuse of notation. A move that transforms a well-organized configuration into a configuration that is not well-organized is a \emph{bad move}. We aim to show the following:
\begin{lemma}\label{lem:no-bad-move-shortest}
A shortest reconfiguration sequence from $I_s$ to $I_e$ contains no bad move.
\end{lemma}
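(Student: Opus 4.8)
**The plan is to proceed by induction on time, tracking the first bad move and deriving a contradiction with the minimality of the sequence.** Let $S = C_1, \ldots, C_N$ be a shortest reconfiguration sequence from $I_s$ to $I_e$. Since $C_1 = I_s$ is well-organized (as noted in the text), if there is a bad move, there is a first index $t$ such that $C_t$ is well-organized but $C_{t+1}$ is not. I would fix this $t$ and consider the unique token that moves at time $t$, say from vertex $x$ to vertex $y$. The strategy is a careful case analysis on which part of the graph $x$ lies in, using the structural observations already established (Observations on $A_{start}$ being frozen, the locking behavior of equivalent vertices, etc.). Throughout, I would assume without loss of generality $M_A(C_t) \le M_B(C_t)$ by the symmetry of the construction.

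\textbf{The main case distinctions I would carry out.} First, consider moves involving the ``lock'' tokens on $s_A, e_B, s_B, e_A$: a well-organized configuration has a token on one of $\{s_A, e_B\}$ and one of $\{s_B, e_A\}$; since $N(s_A) \cap B = \emptyset$ and $N(s_B) \cap A = \emptyset$, such a token can only move along the complete-bipartite gadgets $A_{start} \cup B_{end} \cup \{s_A, e_B\}$ or $A_{end} \cup B_{start} \cup \{s_B, e_A\}$, and I would check that any such move either keeps well-organization or is impossible because another token blocks it (this is where Observation~\ref{obs:construction-astart-frozen} and its $B$-symmetric version are used). Second, consider a token moving \emph{into} or \emph{out of} the ``staging'' sets $A_{start}, B_{start}, A_{end}, B_{end}$: here I would argue that the order enforced by Observation~\ref{obs:construction-astart-frozen} means the only mobile token on $A_{start}$ is on $a_{s, M_A(C_t)+1}$ and it can only go to $B_{M_A(C_t)+1}$, which \emph{increases} $M_A$ by one and preserves condition~1 and~2 of well-organization — hence not bad; similarly a token can only leave $B_{M_B(C_t)}$ (the topmost occupied layer). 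Third, and most delicately, consider a token already sitting on some $A_p$ or $B_p$ with $p \le M_A(C_t)$ or $p \le M_B(C_t)$ that tries to move: I would show that moving such a token either lands it in another $A_q/B_q$ layer (and I must check this cannot create two tokens in one layer or a gap, using that in a well-organized configuration each occupied layer has exactly one token and the layers below $M_A$, resp. $M_B$, are all full), or lands it back onto a staging vertex (which would force $M_A$ or $M_B$ to decrease, and I must rule this out — most likely by invoking that the sequence is \emph{shortest}, since such a back-and-forth would be wasteful, or by showing the relevant staging vertex is blocked).

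\textbf{The hard part.} The genuinely delicate step is ruling out ``shortcut'' moves that would collapse well-organization in a subtle way: for instance, a token on $A_{2p}$ moving to $B$ in a way that leaves $A_{2p}$ empty while $A_{2p-1}$ is still occupied (creating a ``gap'', violating condition~1), or the creation of a configuration where two tokens of the same layer both sit in $B$. I expect I will need to combine Observation~\ref{obs:construction-equivalent-lock} (once two tokens in $A_{2p-1} \cup A_{2p}$ are equivalent they are stuck) with Observation~\ref{obs:construction-noneq-lock} (two non-equivalent tokens in $A_{2p-1}\cup A_{2p}$ forbid any token entering $B_{2p-1}\cup B_{2p}$) to show that the ``interface'' layers behave rigidly, and then use the shortest-sequence hypothesis to kill any remaining reversible detours. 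Once all cases are dispatched, we conclude $C_{t+1}$ is well-organized after all, contradicting the choice of $t$; hence no bad move exists. This lemma then feeds into the next section, where well-organization of every configuration in a shortest sequence is leveraged to extract, via Observation~\ref{obs:construction-multicolored}, a multicolored independent set of $G$.
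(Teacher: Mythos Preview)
Your overall framework has a genuine gap. You aim to conclude that ``$C_{t+1}$ is well-organized after all, contradicting the choice of $t$'', i.e.\ that from a well-organized configuration no bad move is actually \emph{possible}. But bad moves are possible. Concretely, take $i := M_A(C_t)$ odd (so $A_i$ and $A_{i-1}$ are \emph{not} copies of the same $V_p$); then the single token on $A_i$ has no equivalent partner on the $A$-side locking it (Observation~\ref{obs:construction-equivalent-lock} does not apply), and nothing prevents it from sliding to some $B_q$ with $q\ge i$, landing two tokens in $B_q$. Likewise the token on $A_i$ can slide to $b_{e,i}$ when $M_B(C_t)=2k$, and the lock token on $s_A$ can slide to some $b_{e,q}$. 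None of these are blocked by the structural observations you cite, so a purely local argument at time $t$ cannot succeed.

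The paper's proof takes a different route: it \emph{accepts} that a bad move may occur and then argues forward in time. For each of three bad-move types (token from $A$ to $B$; token from $A$ to $B_{end}$; token $s_A$ to $B_{end}$) it proves freezing lemmas showing that, once the bad token has moved, a large portion of the remaining tokens are pinned to their $C_t$-positions (or to a tiny orbit around them) until the bad token moves again. From this one deduces that there is a later time $t'>t$ at which the bad move is \emph{cancelled}: the bad token is forced back to its $C_t$-position, and $C_{t'+1}$ is again well-organized. The contradiction with minimality then comes not at time $t$ but globally: the sub-sequence $C_1,\ldots,C_{t'+1}$ reaches a well-organized configuration in which some token has moved at least three times, whereas Observation~\ref{obs:min-move-wo} says any well-organized configuration is reachable with each token moving at most once. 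Your brief remark about ``reversible detours'' points in this direction, but the substance of the argument is exactly in proving those freezing lemmas and the forced return; without them one cannot rule out that the bad move enables genuine progress along a different route before any reversal.
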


With Lemma \ref{lem:no-bad-move-shortest} at hand, the proof of the converse part of the reduction easily follows:
\begin{lemma}\label{lem:converse-direction}
If there exists a reconfiguration sequence from $I_s$ to $I_e$ in $G'$, then there exists a multicolored independent set in $G$.
\end{lemma}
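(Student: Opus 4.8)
The plan is to derive Lemma~\ref{lem:converse-direction} from Lemma~\ref{lem:no-bad-move-shortest} together with Observation~\ref{obs:construction-multicolored}. Given a reconfiguration sequence from $I_s$ to $I_e$, fix a \emph{shortest} one, $S = C_1, \ldots, C_N$. The configuration $C_1 = I_s$ is well-organized, and the only moves that can turn a well-organized configuration into a non-well-organized one are, by definition, bad moves. Since Lemma~\ref{lem:no-bad-move-shortest} guarantees that $S$ contains no bad move, a trivial induction shows that \emph{every} configuration $C_t$ occurring in $S$ is well-organized.

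The core of the proof is then to exhibit an index $t$ with $M_A(C_t) = M_B(C_t) = 2k$. Once this is done, condition~1 of well-organizedness forces exactly one token on each of $A_1, \ldots, A_{2k}$ and exactly one token on each of $B_1, \ldots, B_{2k}$; in particular $|C_t \cap A_{2p-1}| = |C_t \cap B_{2p-1}| = 1$ for every $p \in \{1, \ldots, k\}$, so the hypothesis of Observation~\ref{obs:construction-multicolored} is met and we obtain a multicolored independent set of $G$, which is exactly the conclusion sought.

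To locate such a $t$, I would reason as follows. As observed in the construction, $N(s_A) \cap B = \emptyset$ and no token of $A_{start} \cup \{s_A\}$ can reach $B_{end} \cup \{e_B\}$ as long as two tokens remain on $A_{start} \cup \{s_A\}$ (complete bipartite structure); since $I_s$ places $2k+1$ tokens there and $I_e$ places none, all $2k$ blue tokens must leave $A_{start}$ at some point, and by Observation~\ref{obs:construction-astart-frozen} they must do so in increasing order of index. Combined with the well-organized invariant, the $p$-th blue token to leave is forced to land on $B_p$ (any other target would leave a hole in $B_1, \ldots, B_{M_B}$, i.e.\ produce a non-well-organized configuration, hence a bad move), so $M_B$ attains the value $2k$ at the step where the last blue token leaves $A_{start}$; by the symmetric argument $M_A$ attains the value $2k$ at the step where the last red token leaves $B_{start}$. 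It remains to \emph{synchronize} these two events into one configuration, and here I would exploit the ``lock'' role of $s_A$ and $s_B$: the token on $s_A$ cannot move before $A_{start}$ is emptied, i.e.\ not before the first time $M_B = 2k$, and symmetrically for $s_B$ and $M_A$; since $I_e$ vacates both $s_A$ and $s_B$, both of these ``last crossing'' steps occur, and an exchange argument against minimality of $S$ rules out a ``relay'' scenario (blue tokens filling $B$, retreating, then red tokens independently filling $A$) — between the first time $M_B = 2k$ and any later moment at which a blue token starts leaving $B$, the red tokens must already have finished crossing into $A$, for otherwise one could shorten $S$ by postponing the moves of the $s$-tokens. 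This produces a configuration $C_t$ with $M_A(C_t) = M_B(C_t) = 2k$.

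The main obstacle is precisely this last synchronization step. That $M_A$ and $M_B$ each \emph{individually} reach $2k$ is essentially immediate from Observation~\ref{obs:construction-astart-frozen} and the well-organized structure; forcing them to reach $2k$ \emph{simultaneously} is where the hypothesis that $S$ is a shortest sequence is genuinely needed, in conjunction with the blocking behaviour of $s_A$ and $s_B$. I expect the rest of the argument to be routine bookkeeping with the two conditions defining well-organized configurations.
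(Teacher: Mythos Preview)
Your overall framework — take a shortest sequence, invoke Lemma~\ref{lem:no-bad-move-shortest} so that every configuration is well-organized, locate one with $M_A=M_B=2k$, and apply Observation~\ref{obs:construction-multicolored} — is exactly the paper's. Where you diverge is in \emph{how} you locate that configuration, and there you make your life much harder than necessary.

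The paper sidesteps the whole synchronization issue by taking $C$ to be the configuration immediately \emph{before} the first token ever reaches $A_{end}\cup B_{end}$ (such a moment exists since $A_{end}\cup B_{end}\subseteq I_e$). In any well-organized configuration the $4k+2$ tokens are fully accounted for on $A\cup B\cup A_{start}\cup B_{start}\cup\{s_A,e_A,s_B,e_B\}$, so in particular $C$ carries no token on $A_{end}\cup B_{end}$. The very next move places a token there, and since by Lemma~\ref{lem:no-bad-move-shortest} this move is not bad, case~2 of Observation~\ref{obs:well-nonwell} forces the relevant ``start'' set to be empty, while independence with the complete-bipartite block forces the other ``start'' set to be empty as well. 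With both $A_{start}$ and $B_{start}$ empty, the contrapositive of condition~2 in the definition of well-organized gives $M_A(C)=M_B(C)=2k$ in one line. No tracking of $M_A,M_B$ over time, no relay analysis.

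Your proposed synchronization step, by contrast, is not convincing as written. The exchange you suggest (``postpone the moves of the $s$-tokens'') does not shorten anything: those tokens are already frozen until their respective start sets empty, so delaying them saves no moves. And the relay scenario is \emph{not} vacuously excluded by ``no bad moves'': a token on $B_{M_B}$ sliding back to $a_{s,M_B}$ keeps the configuration well-organized (this is explicitly noted in the proof of Observation~\ref{obs:well-nonwell}), so such back-and-forth is compatible with Lemma~\ref{lem:no-bad-move-shortest}. Ruling it out would require a genuine length-counting argument on top of minimality, which is doable but substantially more work than the paper's single well-chosen configuration.
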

\begin{proof}
Consider a shortest reconfiguration sequence $S$ from $I_s$ to $I_e$, which exists by supposition. By Lemma \ref{lem:no-bad-move-shortest} this sequence contains no bad moves, therefore all the configurations of $S$ are well-organized since the initial configuration is. Consider the configuration $C$ just before the first token reaches $A_{end} \cup B_{end}$ (which exists since $A_{end} \cup B_{end} \subseteq I_e$). By definition of well-organized configurations there can be no token on $A_{start} \cup B_{start}$ in $C$ and thus we have $M_A(C) = M_B(C) = 2k$. Then by Observation \ref{obs:construction-multicolored} there exists a multicolored independent set in $G$.
\end{proof}

The remainder of this section is dedicated to the proof of Lemma \ref{lem:no-bad-move-shortest}. Let us begin with a few observations about well-organized configurations, which will be useful throughout all the subsections:

\begin{observation}\label{obs:wo-same-vertex}
  \sloppy%
Let $C$ be a well-organized configuration. For every $p \leq M_A(C)$ we have $|A_p \cap C| = |B_p \cap C| = 1$, and the unique vertex of $A_p \cap C$ and the unique vertex of $B_p \cap C$ are equivalent.
\end{observation}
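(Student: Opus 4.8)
The plan is to unwind the definition of a well-organized configuration and then invoke the local structure of $G'$ between two copies of the same color class. By the symmetry of the construction we may assume $M_A(C)\le M_B(C)$. Fix $p\le M_A(C)$ (if $M_A(C)=0$ there is nothing to prove). Condition~1 in the definition of well-organized immediately yields $|A_p\cap C|=1$, and since $p\le M_A(C)\le M_B(C)$ it also yields $|B_p\cap C|=1$; this settles the first half of the statement.

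For the second half, let $u$ be the unique vertex of $A_p\cap C$ and $v$ the unique vertex of $B_p\cap C$. The key point is that $A_p$ and $B_p$ are copies of the \emph{same} subset of $V(G)$, namely $V_{\lceil p/2\rceil}$, since by construction each of $A_{2q-1},A_{2q}$ and each of $B_{2q-1},B_{2q}$ is a copy of $V_q$. Hence case~2 of the edge definition applies to the pair $(A_p,B_p)$: the graph $G'[A_p\cup B_p]$ is the complete bipartite graph between $A_p$ and $B_p$ with exactly the edges joining equivalent vertices removed. In particular, for $u\in A_p$ and $v\in B_p$ we have $uv\in E(G')$ unless $u\sim v$. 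Since $C$ is an independent set containing both $u$ and $v$, the edge $uv$ is absent, so $u\sim v$, as required.

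I do not expect a genuine obstacle here: once one observes that $A_p$ and $B_p$ are copies of the same class $V_{\lceil p/2\rceil}$, the conclusion is immediate from the fact that the only non-edges inside $G'[A_p\cup B_p]$ join equivalent vertices. The only things to be careful about are the indexing conventions (that $M_A(C),M_B(C)$ range over $\llbracket 1,2k\rrbracket$ and that the assumption $M_A(C)\le M_B(C)$ is indeed without loss of generality by the symmetry of both the construction and the definition of well-organized), but this is routine bookkeeping rather than a real difficulty.
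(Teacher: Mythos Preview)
Your proof is correct and follows essentially the same approach as the paper: both use the definition of well-organized to get one token on each of $A_p$ and $B_p$, and then observe that since $A_p$ and $B_p$ are copies of the same set $V_{\lceil p/2\rceil}$, the only non-edge between them joins equivalent vertices, forcing $u\sim v$ by independence. Your version is slightly more explicit about the bookkeeping (in particular why $|B_p\cap C|=1$ follows from $p\le M_A(C)\le M_B(C)$), but the argument is the same.
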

\begin{proof}
By definition of well-organized configuration, there is exactly one token on $A_p$ and one token on $B_p$ for $p \leq M_A(C)$. Let $u$ be the unique vertex of $A_p \cap C$: by construction the only vertex $v$ of $B_p$ that is not in $N(u)$ is the copy of $u$ in $B_p$. %
\end{proof}

\begin{observation}\label{obs:wo-same-vertex-bis}
  \sloppy%
Let $C$ be a well-organized configuration and $p \leq 2k$ be an odd integer such that $|A_p \cap C| = |A_{p+1} \cap C| = 1$. Then $|B_p \cap C| = |B_{p+1} \cap C| = 1$ and the four vertices in these sets are equivalent.
\end{observation}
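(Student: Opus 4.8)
The plan is to first reduce the statement to a pure claim about equivalence, and then read off that claim from the two items in the definition of the edges of $G'$. First I would observe that since $|A_{p+1}\cap C|=1$ there is a token on $A_{p+1}$, and since $p$ is odd with $p\le 2k$ we have $p+1\le 2k$, so $M_A(C)\ge p+1$. Hence both $p$ and $p+1$ are at most $M_A(C)$, and Condition~1 in the definition of a well-organized configuration immediately yields $|B_p\cap C|=|B_{p+1}\cap C|=1$. Thus the only thing left to prove is that the four tokens on $A_p$, $A_{p+1}$, $B_p$, $B_{p+1}$ lie on equivalent vertices.

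For that, I would write $p=2r-1$, so that $A_p=A_{2r-1}$ and $A_{p+1}=A_{2r}$ are exactly the two copies of $V_r$ inside $\mathcal{A}$, and $B_p=B_{2r-1}$, $B_{p+1}=B_{2r}$ the two copies of $V_r$ inside $\mathcal{B}$ (this is the only place where oddness of $p$ is used). Let $u$, $u'$, $v$, $v'$ be the unique vertices of $A_p\cap C$, $A_{p+1}\cap C$, $B_p\cap C$, $B_{p+1}\cap C$ respectively. Applying Observation~\ref{obs:wo-same-vertex} to the indices $p$ and $p+1$ gives $u\sim v$ and $u'\sim v'$. To connect the two pairs I would use that $A_p=A_{2r-1}$ and $B_{p+1}=B_{2r}$ are copies of the \emph{same} subset $V_r$, so by the second item in the definition of the edges of $G'$ the graph $G'[A_p\cup B_{p+1}]$ is complete bipartite minus the matching between equivalent vertices; since $u$ and $v'$ both lie in the independent set $C$ they are non-adjacent in $G'$, which forces $u\sim v'$. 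Symmetrically, using that $A_{p+1}=A_{2r}$ and $B_p=B_{2r-1}$ are copies of $V_r$ and that $u'$, $v$ are non-adjacent in $C$, we get $u'\sim v$. Finally, symmetry and transitivity of $\sim$ close the argument: from $u\sim v$, $v\sim u'$, $u'\sim v'$ we conclude that $u$, $u'$, $v$, $v'$ are all equivalent.

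I do not expect a genuine obstacle here; the proof is a short unwinding of the definitions. The only points requiring care are the index bookkeeping — checking that, because $p$ is odd, $\{A_p,A_{p+1}\}$ and $\{B_p,B_{p+1}\}$ really are the two copies of a single $V_r$, and that the relevant pairs $(A_p,B_{p+1})$ and $(A_{p+1},B_p)$ fall under item~2 (the ``same subset'') case of the construction rather than item~1 — and checking that Condition~1 of well-organizedness is legitimately invoked, which hinges on the observation that the token on $A_{p+1}$ forces $M_A(C)\ge p+1$.
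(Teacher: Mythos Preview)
Your proof is correct and follows essentially the same approach as the paper: both use Observation~\ref{obs:wo-same-vertex} on the indices $p$ and $p+1$, then exploit that (because $p$ is odd) $A_p$ and $B_{p+1}$ are copies of the same $V_r$ to force $u\sim v'$ via non-adjacency in $C$, and finish by transitivity of $\sim$. The only small omission is that Condition~1 gives a token on $B_q$ only for $q\le M_B(C)$, so you need the standing convention $M_A(C)\le M_B(C)$ to pass from $p+1\le M_A(C)$ to $p+1\le M_B(C)$; the paper invokes this explicitly.
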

\begin{proof}
Since $C$ is well-organized, $M_B(C) \geq M_A(C)$ and there is one token on $B_p$ and one token on $B_{p+1}$. Let $u$ (resp. $u'$) be the unique vertex of $A_{p} \cap C$ (resp. $B_{p} \cap C$) and $v$ (resp. $v'$) be the unique vertex of $A_{p+1} \cap C$ (resp. $B_{p+1} \cap C$). By Observation \ref{obs:wo-same-vertex} we have $u \sim u'$ and $v \sim v'$. By construction the only vertex of $B_{p+1}$ that is not in $N(u)$ is a copy of $u$ since $p$ is odd ($B_p$ and $B_{p+1}$ are copies of the same subset of $V(G)$). We obtain that $u \sim v'$, and the proof follows by the transitivity of the $\sim$ relation.
\end{proof}

\begin{observation}\label{obs:well-organized-frozen}
  \sloppy%
Let $C$ be a well-organized configuration. For every $p < M_A(C)$ and every $q < M_B(C)$, the token on $A_p$ and the token on $B_q$ are frozen.
\end{observation}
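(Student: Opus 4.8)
The plan is to show that the token in question has no legal slide in $C$, by running through all of its possible destinations and exhibiting in each case either an occupant or a second occupied neighbour. Recall that $G'$ is bipartite, with one side containing $A$ together with the gadget vertices $a_{s,\cdot},a_{e,\cdot},s_A,e_A$ and the other side containing $B$ together with $b_{s,\cdot},b_{e,\cdot},s_B,e_B$. Hence a token on a vertex of $A_p\subseteq A$ can only slide to a neighbour on the opposite side, and by the construction those neighbours are the vertices of $B$ it is joined to together with the vertices $b_{s,r},b_{e,r}$ with $r\le p$ (symmetrically for a token on $B_q$). As in the rest of the section we may assume $M_A(C)\le M_B(C)$.

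First I would handle the slides towards the start/end gadgets. Each vertex $b_{s,r}$ or $b_{e,r}$ with $r\le p<M_A(C)$ is complete to $A-\bigcup_{i<r}A_i$, a set that contains $A_{M_A(C)}$; since $C$ is well-organized there is a unique token on $A_{M_A(C)}$, and it is distinct from the token on $A_p$, so $b_{s,r}$ (resp.\ $b_{e,r}$) already has an occupied neighbour and cannot receive the token. The symmetric argument, using the unique token on $B_{M_B(C)}$ and $r\le q<M_B(C)$, rules out the analogous slides for the token on $B_q$. Next I would rule out a slide into $A\cup B$ by invoking the block structure: since $p<M_A(C)$, the pair-block $\{2i-1,2i\}$ containing $p$ is contained in $\{1,\dots,M_A(C)\}$, so by Observation~\ref{obs:wo-same-vertex-bis} the tokens on $A_{2i-1}$ and $A_{2i}$ are equivalent, and then Observation~\ref{obs:construction-equivalent-lock} forbids either of them --- in particular the token on $A_p$ --- from moving to $B$. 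This settles the $A_p$ case, so the token on $A_p$ is frozen. For the token on $B_q$ one argues likewise that the tokens on $B_{2i-1}$ and $B_{2i}$ (both present because $q<M_B(C)$) are equivalent: this is Observation~\ref{obs:wo-same-vertex-bis} when the block $\{2i-1,2i\}$ lies in $\{1,\dots,M_A(C)\}$, and it is forced by the symmetric form of Observation~\ref{obs:construction-noneq-lock} when at least one of $A_{2i-1},A_{2i}$ is occupied (two non-equivalent tokens on $B_{2i-1}\cup B_{2i}$ would leave $A_{2i-1}\cup A_{2i}$ empty); then the symmetric form of Observation~\ref{obs:construction-equivalent-lock} forbids moving the token on $B_q$ to $A$.

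The step I expect to be the main obstacle is the $B$-side case in which the whole block of $q$ sits strictly above $M_A(C)$, which is possible since we only assume $M_A(C)\le M_B(C)$; here Observation~\ref{obs:wo-same-vertex-bis} is not available (its symmetric form would need $M_A(C)\ge M_B(C)$), so the equivalence of the two block tokens on the $B$-side must be obtained differently or the slide ruled out by hand. The argument I would use is to show directly that every neighbour in $A$ of the token on $B_q$ is already blocked, exploiting that equivalent vertices of $A$ and $B$ have identical neighbourhoods on the opposite side: a vertex of $A$ adjacent to the token on $B_q$ is then also adjacent to the occupied block-partner token on $B_{q'}$, while vertices of $A$ with index exceeding $M_B(C)$ are blocked by the gadget tokens $b_{s,\cdot}$ of index exceeding $M_B(C)$, and vertices of $A_r$ with $M_A(C)<r\le M_B(C)$ are blocked by the token on $B_r$ (which lies in the same copy and is adjacent to all but its equivalent vertex). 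Once this is in place, combining it with the first two steps shows the token on $B_q$ has no legal move, completing the proof.
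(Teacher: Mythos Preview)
Your treatment of the token on $A_p$, and of the token on $B_q$ when its block $\{2i-1,2i\}$ meets $\{1,\dots,M_A(C)\}$, is correct and matches the paper.

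The gap is in your last case, where the block of $q$ lies strictly above $M_A(C)$. Two of your three mechanisms break down there. The claim that every $A$-neighbour of $v^B_q$ is also adjacent to the block-partner token $v^B_{q'}$ is precisely the statement that $v^B_q$ and $v^B_{q'}$ have the same neighbourhood in $A$, i.e.\ that $v^B_q\sim v^B_{q'}$ --- which is exactly what you just said you cannot assume; when they are non-equivalent, the copy of $orr(v^B_{q'})$ in $A_q$ is adjacent to $v^B_q$ but not to $v^B_{q'}$. And ``$A_r$ with $M_A(C)<r\le M_B(C)$ is blocked by the token on $B_r$'' leaves the vertex of $A_r$ equivalent to $v^B_r$ uncovered, and for $r=q$ invokes the very token you are moving. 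So a destination in $A_q$ can pass all three filters. Your scheme also offers no mechanism at all for $r\le M_A(C)$.

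What you are missing is the token on $b_{s,M_A(C)+1}$. In a well-organized configuration the occupied $b_{s,\cdot}$ vertices are those with index above $M_A(C)$, not above $M_B(C)$ (the paper's condition~2 has the $a$/$b$ subscripts swapped --- with the literal reading the configuration would not even be independent --- and the paper's own proof uses the corrected version). The single token on $b_{s,M_A(C)+1}$ dominates every $A_r$ with $r>M_A(C)$, disposing of that whole range at once. For $r\le M_A(C)$ the paper observes that $A_r\setminus\{v^A_r\}$ is dominated by the token $v^B_r$ (equivalent to $v^A_r$ by Observation~\ref{obs:wo-same-vertex}), while $v^A_r$ itself is a non-neighbour of $v^B_q$ simply because $C$ is independent.
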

\begin{proof}
Let $p < M_A(C)$ and let $\{v^A_{p}\} := A_p \cap C$. Since $p < M_A(C)$, there is a token on another vertex $v^A_{p'} \in A_p'$ such that $v^A_{p'} \sim v^A_{p}$ by Observation \ref{obs:wo-same-vertex-bis}. Since these two vertices share the same neighborhood in $B$, the token on $v^A_p$ cannot move to $B$. Furthermore, there is a token on $A_{q}$ for any $q \leq p$ thus this token cannot move to $b_{s,q}$ nor $b_{e,q}$ and since $p < M_A(C)$, there is a token on $A_{p+1}$ and the token cannot go to $b_{s,p}$ nor $b_{e,p}$. It follows that it cannot move to $B_{start}$ nor $B_{end}$ and that the token on $v^A_p$ is frozen. By symmetry, the same goes for the token on $B_{p}$ for $p \leq M_A(C)$. We then have to be careful about the tokens on $B_q$ for $M_A(C) < q < M_B(C)$. Let $\{v^B_q\} := B_q \cap C$. Since $A_q \cap C = \emptyset$ for any such $q$, we cannot guarantee that $v^B_{q} \sim v^B_{q+1}$ even when $A_q$ and $A_{q+1}$ are copies of the same set. However, for any $p \leq M_A(C)$ the set $A_p - v^A_p$ is dominated by $v^B_p$ and $v^A_p \notin N(v^B_q)$ for any $q$ since $C$ is an independent set, hence the token on $B_q$ cannot move to $A_p$. Furthermore, since $b_{s, M_A(C)+1} \in C$, no token can move from $B$ to $A_p$ for any $p > M_A(C)$. It follows that the tokens on $B_q$ for $M_A(C) < q < M_B(C)$ are also frozen.
\end{proof}

\begin{observation}\label{obs:min-move-wo}
Let $C$ be any well-organized configuration reachable from $C_1$. Each token moves at most one time in a shortest reconfiguration sequence from $C_1$ to $C$.
\end{observation}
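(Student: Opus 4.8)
The plan is to prove the observation by a counting argument on the length of the shortest sequence. Fix a shortest reconfiguration sequence $S := C_1, \ldots, C_N = C$ and let $d$ be the number of tokens whose position in $C$ differs from its position in $C_1 = I_s$. I will argue that $N = d$. Since each move slides exactly one token, $N = \sum_t n_t$, where $n_t$ is the number of times token $t$ is slid; every \emph{displaced} token has $n_t \ge 1$ and the others have $n_t \ge 0$, so $N \ge d$ is immediate. If in addition $N \le d$, then $\sum_t n_t = d$ forces $n_t = 1$ for every displaced token and $n_t = 0$ for every other token, which is exactly the claim.

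So the real work is the upper bound $N \le d$, and for this I would exhibit a valid reconfiguration sequence from $C_1$ to $C$ of length $d$ in which each displaced token slides exactly once. Assuming w.l.o.g. $M_A(C) \le M_B(C)$, Observations~\ref{obs:wo-same-vertex}, \ref{obs:wo-same-vertex-bis}, \ref{obs:construction-astart-frozen} and \ref{obs:well-organized-frozen} pin down the shape of $C$: for each $p \le M_A(C)$ there is one token on $A_p$ and one on $B_p$, and these are copies of a common vertex of $G$; for $M_A(C) < q \le M_B(C)$ there is one token on $B_q$; the $A_{start}$-tokens of index $> M_B(C)$ and the $B_{start}$-tokens of index $> M_A(C)$ are still on their starting vertices; and the two lock tokens are on $s_A$ or $e_B$, respectively on $s_B$ or $e_A$. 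In particular a displaced token lies on the opposite side of the bipartition from its starting vertex, hence it can be brought to its $C$-position by a single slide (the token from $a_{s,p}$ slides into $B_p$ since $a_{s,p}$ is complete to $B_p$; the token on $s_A$ slides to $e_B$ since $G'[A_{start}\cup B_{end}\cup\{s_A,e_B\}]$ is complete bipartite; symmetrically on the $B$-side). The canonical sequence performs exactly these $d$ slides, processing the displaced tokens originally on $B_{start}$ in increasing order of index, then those originally on $A_{start}$ in increasing order of index, and finally the displaced lock tokens.

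The main obstacle is then the move-by-move verification that every intermediate configuration of this canonical sequence is independent. The delicate point is that an un-moved start token $a_{s,p}$ (resp.\ $b_{s,p}$) is complete to $B_p\cup\dots\cup B_{2k}$ (resp.\ $A_p\cup\dots\cup A_{2k}$), so the processing order is essential: moving the $A_{start}$-tokens in increasing order of index ensures that when the token from $a_{s,p}$ enters $B_p$ the only start tokens left in $A_{start}$ are the $a_{s,p'}$ with $p'>p$, which are non-adjacent to $B_p$; $N(s_A)\cap B=\emptyset$ and $N(s_B)\cap A=\emptyset$ keep the lock tokens harmless; and Observation~\ref{obs:wo-same-vertex} guarantees that the vertex used in $A_p$ and the vertex used in $B_p$ are equivalent, so the two tokens placed there do not conflict. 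All already-moved tokens sit on vertices of the independent set $C$ and are mutually non-adjacent, all not-yet-moved tokens sit on same-side start vertices, and the remaining cross adjacencies are checked using that, e.g., $e_B$ is complete among the $\mathcal A$-vertices only to $A_{start}\cup\{s_A\}$ (so $e_B$ can only be occupied once all of $A_{start}$ is vacated, forcing $M_B(C)=2k$ in that case). The one remaining subtlety is to confirm that the increasing-index order is always feasible for a \emph{reachable} well-organized $C$ — i.e.\ that no reachable $C$ places a token on a vertex that can be reached only via a detour — which follows by tracing, via Observation~\ref{obs:construction-astart-frozen}, the staircase discipline the construction imposes on the start vertices.
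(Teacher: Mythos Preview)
Your proof is correct and follows essentially the same idea as the paper's: exhibit a canonical sequence of length $|C\setminus C_1|$ (each displaced token slides once, start tokens processed by increasing index, then the lock tokens), observe that this is a lower bound on the length of any sequence, and conclude. The paper's proof is terser---it simply writes down the canonical sequence and asserts it has length $|C\setminus C_1|$---whereas you add the counting wrapper $\sum_t n_t = d$ to deduce that \emph{every} shortest sequence (not just the canonical one) moves each token at most once, which is a harmless strengthening. Two minor remarks: your processing order ($B_{start}$ before $A_{start}$) is the reverse of the paper's but works equally well; and your closing worry about ``reachability'' and ``detours'' is unnecessary---the validity of the canonical sequence only uses that $C$ is an independent well-organized configuration, not that it is reachable, so you can drop that paragraph.
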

\begin{proof}
We can reach $C$ by moving the tokens in the following order: for $p \in 1, \ldots, M_B(C)$ the token on $a_{s,p}$ moves to $B_p \cap C$ and for $p \in 1, \ldots, M_A(C)$ the token on $b_{s,p}$ moves to $A_p \cap C$. Then, if $C \cap \{s_A, e_B\} = \{e_B\}$ (resp. $C \cap \{s_B, e_A\} = \{e_A\}$), move the token from $s_A$ to $e_B$ (resp. from $s_B$ to $e_A$). This is a shortest sequence since it contains exactly $|C \backslash C_1|$ moves, and every token moves at most one time.
\end{proof}

The strategy to prove Lemma \ref{lem:no-bad-move-shortest} is as follows: we show that if there is a bad move at time $t$, then there exists a time $t' > t$ at which this bad move is canceled in the sense that the configuration obtained a time $t' + 1$ is, again, well-organized. Such a reconfiguration sequence contains at least $M_A(C_{t'+1}) + M_B(C_{t'+1}) + 1$ moves since at least one token moved twice, and then Observation \ref{obs:min-move-wo} ensures that it is not a shortest sequence, contradicting our choice of $S$.
The remainder of the proof is organized as follows. In Section \ref{sec: bad-moves} we identify, up to symmetry, three different types of bad moves and give some observations about the structure of configurations obtained after such moves. In Section \ref{sec:bad-move-t1} we then
show how to cancel (in the sense mentioned above) bad moves of type $1$, and we deal with types $2$ and $3$ in Section \ref{sec:bad-move-t2-t3}.

\subsubsection{Bad moves}\label{sec: bad-moves}

\begin{observation}\label{obs:well-nonwell}
Let $t \in 1, 2, \ldots, N$ be such that the configuration $C_t$ is well-organized and $C_{t+1}$ is not. Then one of the following holds:
\begin{enumerate}
    \item $(t: A \rightarrow B)$ or $(t: B \rightarrow A)$, or
    \item $(t: A \rightarrow B_{end})$ and $B_{start} \cap C_t \neq \emptyset$ or $(t: B \rightarrow A_{end})$ and $A_{start} \cap C_t \neq \emptyset$, or
    \item $(t: s_A \rightarrow B_{start})$ or $(t: s_B \rightarrow A_{start})$.
\end{enumerate}
\end{observation}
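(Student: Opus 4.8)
The plan is to carry out a case analysis on the single token move performed between $C_t$ and $C_{t+1}$. Say this move slides a token from a vertex $x$ to an adjacent vertex $y$, so that $C_{t+1} = C_t - \{x\} + \{y\}$; since $G'$ is bipartite we may assume $x \in \mathcal{A}$ and $y \in \mathcal{B}$, and since the whole construction (graph, endpoints, and the notion of well-organized configuration) is symmetric under swapping $\mathcal{A} \leftrightarrow \mathcal{B}$ --- a swap that also interchanges the two alternatives in each of (1), (2), (3) --- this is without loss of generality.

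Before the analysis I would record the \emph{rigidity} of a well-organized configuration $C$: combining its two defining conditions with $|C| = 4k+2$ forces that $C$ has no token on $A_{end} \cup B_{end}$, exactly one token on each $A_p$ with $p \le M_A(C)$ and on each $B_q$ with $q \le M_B(C)$, all of its other ``$A$-side'' tokens on a prescribed block of trailing vertices of $A_{start}$ (and symmetrically for $B_{start}$), and exactly one of $\{s_A, e_B\}$ and one of $\{s_B, e_A\}$ occupied. From this one also reads off, for each block, which tokens are free to move at all --- essentially only the token on $A_{M_A(C)}$ (and then only when $M_A(C) = M_B(C)$), the first trailing tokens of $A_{start}$ and of $B_{start}$, and the token on the occupied lock vertex of each pair. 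This rigidity is the engine of the whole argument, and most of the structural observations already established (e.g.\ Observations~\ref{obs:wo-same-vertex}--\ref{obs:well-organized-frozen}) feed directly into it.

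The case analysis then runs over the block of $V(G')$ containing $x$ --- namely $A$, $A_{start}$, $A_{end}$, $\{s_A\}$, $\{e_A\}$ --- and, via the adjacency rules of $G'$, over the few blocks that can contain $y$. In each case I would show one of three things: the move has one of the forms listed in (1)--(3); the move is \emph{impossible} from a well-organized configuration (typically because $y$ has a neighbour that rigidity forces to be occupied --- e.g.\ every vertex of $B_{end}$ is complete to $A_{start}\cup\{s_A\}$, so no token can enter $B_{end}$ while a token sits on $A_{start}$); or the move sends $C_t$ to another well-organized configuration, contradicting the hypothesis on $C_{t+1}$. Concretely: $x\in A$ forces $y\in B$ (this is (1)), $y\in B_{end}$ (if $B_{start}\cap C_t\neq\emptyset$ this is (2); if $B_{start}\cap C_t=\emptyset$ one checks $C_{t+1}$ is still well-organized), or $y\in B_{start}$ (excluded by rigidity); $x\in A_{start}$ is impossible unless $x$ is the first trailing vertex, and then every legal target keeps $C_{t+1}$ well-organized; $x\in A_{end}$ is impossible since $A_{end}\cap C_t=\emptyset$; $x=s_A$ admits only $s_A\to e_B$ (well-organized) and $s_A\to B_{start}$ (this is (3)), the move $s_A\to B_{end}$ being impossible by the complete-bipartite argument; and $x=e_A$ behaves exactly like $x=s_A$.

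The step I expect to be the real obstacle is the bookkeeping inside the ``stays well-organized'' and ``impossible'' sub-cases: one must track how $M_A$ and $M_B$ change under the move, re-derive the rigid token layout of the resulting configuration, and re-verify both defining conditions --- in particular for moves out of $A_{start}$, $B_{start}$ and the locks, where a move may legitimately shift one of $M_A,M_B$ without breaking well-organization, and for the ``endgame'' moves $A\to B_{end}$ and $B\to A_{end}$, which are exactly what (2) isolates. Once rigidity is in hand each individual verification is a short finite check, so the difficulty here is organizational rather than conceptual.
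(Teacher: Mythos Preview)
Your overall strategy---a block-by-block case analysis on the source vertex of the move, showing each possibility is either listed, impossible, or preserves well-organization---is exactly what the paper does, only more explicitly laid out.

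There is, however, a genuine slip in your handling of the $x=s_A$ case, caused by what is almost certainly a typo in the stated observation: item~(3) should read $(t: s_A \rightarrow B_{end})$, not $B_{start}$. The paper itself confirms this later (Observation~\ref{obs:bad-move-t3} states that a type-3 bad move is $(t:s_A\to b_{e,p})$). In the construction, $s_A$ sits in the complete bipartite graph $G'[A_{start}\cup\{s_A\},\,B_{end}\cup\{e_B\}]$ and is \emph{not} adjacent to $B_{start}$ at all, so the move $s_A\to B_{start}$ you list as ``this is (3)'' is never even an edge of $G'$. Conversely, your claim that $s_A\to B_{end}$ is ``impossible by the complete-bipartite argument'' is only valid while $A_{start}\cap C_t\neq\emptyset$; once $A_{start}$ is empty (i.e.\ $M_B(C_t)=2k$), $s_A$ is the sole occupied vertex on its side of that complete bipartite graph and the slide $s_A\to b_{e,q}$ with $q>M_A(C_t)$ is a perfectly legal move---indeed it is the actual type-3 bad move.

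A smaller point: in the sub-case $x\in A$, $y\in B_{end}$ with $B_{start}\cap C_t=\emptyset$, you assert $C_{t+1}$ is still well-organized. Your own rigidity count shows a well-organized configuration carries no token on $B_{end}$, so this cannot hold. This sub-case only occurs when $M_A(C_t)=2k$, which the paper disposes of separately (then $C_t$ already encodes a multicolored independent set), so the observation is not meant to cover it cleanly. Once you correct item~(3) and drop this sub-case, your argument goes through and matches the paper's.
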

\begin{proof}
First, there can be no move from $A_{start}$ to $B_{end}$ at time $t$. Indeed if there is a token on $A_{start}$ then there must be a token on $s_A$ since $C_t$ is well-organized  and both of these tokens dominate all of $B_{end}$. By symmetry, the same goes for $A_{end}$ and $B_{start}$. By construction, the only token that can move from $A_{start}$ is the token on $a_{s, M_B(C)+1}$ which can only go to $B_{M_{B(C)+1}}$ and such a move leads to a well-organized configuration and cannot be a bad move. Conversely, the only token that can move from $B$ is the token on $B_{M_B(C)}$ by Observation \ref{obs:well-organized-frozen} and the only vertex it can reach on $A_{start}$ is $a_{M_B(C)}$, which also leads to a well-organized configuration. By symmetry, the same goes for the moves between $B_{start}$ and $A$. It follows that the only possible bad moves are the moves of condition $1$, $2$ and $3$.
\end{proof}

We consider the smallest integer $t$ such that the move between $C_t$ and $C_{t+1}$ is a bad move.
Since $C_1$ is well-organized, $C_t$ is well-organized by definition of a bad move.
For brevity we set $i := M_A(C_t)$ and $j := M_B(C_t)$. Note that $i \leq j$ so there can be no
move from $B$ to $A$ unless $i = j$, in which case the move must be from $B_i$ to $A_i$ by Observation \ref{obs:well-organized-frozen}. By symmetry we can
thus always suppose that if the first bad move is a move between $A$ and $B$, then it is a move from $A$ to $B$. Furthermore, we can suppose that $i < 2k$ for otherwise
the configuration $C_t$ yields a multicolored independent set of size $k$ as shown in Section \ref{sec:well-organized-conf} and we are done. Using these symmetries and
Observation \ref{obs:well-nonwell} we can restrict ourselves to three cases: either the bad move is a move from $A$ to $B$, or it is a move from $A$ to $B_{end}$, or it is a move from $s_A$ to $B_{end}$. We denote these moves as bad moves of \emph{type 1}, \emph{type 2}, and \emph{type 3} respectively, and we denote the blue token making the bad move at time $t$ as the \emph{bad token}. Note that Observation \ref{obs:well-organized-frozen} ensures that if the bad move at time $t$ is of type $1$ or $2$, then the bad token is on $A_i$ in $C_t$. Since $i < 2k$, $C_t$ is well-organized, and the move at time $t$ is the first bad move of the sequence we have:
\begin{observation}\label{obs:token-sb-ct}
There is a red token on $s_B$ in $C_t$.
\end{observation}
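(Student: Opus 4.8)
}
The plan is a proof by contradiction. Since $\mathcal{S} = C_1,\dots,C_N$ is a shortest reconfiguration sequence, its prefix $C_1,\dots,C_t$ is itself a shortest reconfiguration sequence, and it ends at the well-organized configuration $C_t$; hence Observation~\ref{obs:min-move-wo} applies to this prefix, so \emph{every token moves at most once} during $C_1,\dots,C_t$. As $C_t$ is well-organized, it carries exactly one token on the pair $\{s_B,e_A\}$, so it suffices to rule out that this token lies on $e_A$.

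Assume it does. From the construction, the neighbourhood of $e_A$ in $G'$ is exactly $B_{start}\cup\{s_B\}$ (the vertex $e_A$ lies only in the complete bipartite graph induced by $A_{end}\cup\{e_A\}$ and $B_{start}\cup\{s_B\}$, and no further connection rule touches $e_A$), and $I_s$ places no token on $e_A$. Hence the token occupying $e_A$ in $C_t$ performed exactly one slide, arriving at some time $s<t$ from $B_{start}\cup\{s_B\}$. This slide cannot have come from $B_{start}$: that would require $e_A$ to be empty in $C_s$, and since $C_s$ is well-organized this forces a token on $s_B$ in $C_s$, which blocks the slide onto $e_A$. So the slide is $(s:\, s_B\rightarrow e_A)$, and legality requires $B_{start}\cap C_s=\emptyset$. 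Therefore all $2k$ tokens that started on $B_{start}$ have already made their unique move by time $s$, and together with the $s_B$-token these $2k+1$ tokens do not move again up to time $t$; since, moreover, none of the remaining tokens can slide onto $B_{start}$ before the first bad move, we obtain $B_{start}\cap C_t=\emptyset$.

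Finally I would invoke the definition of a well-organized configuration twice. Emptiness of $B_{start}$ in $C_t$ forces $M_B(C_t)=2k$, so each copy $B_1,\dots,B_{2k}$ carries exactly one token; and since every token moved at most once and a token that started on $B_{start}$ can reach, in one step, only the copies $A_p$, the set $A_{end}$, and $\{s_A,e_A\}$ (never a copy $B_q$), these $2k$ tokens on the $B$-side copies are all blue tokens. Thus all $2k$ blue tokens lie on $B_1,\dots,B_{2k}$ in $C_t$. On the other hand, the standing hypothesis $i=M_A(C_t)<2k$ and the definition of a well-organized configuration force a token on $a_{s,2k}\in A_{start}$; a short check (again using that each token moves at most once) shows this token must be a blue token, contradicting the previous sentence. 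Hence the token on $\{s_B,e_A\}$ in $C_t$ is on $s_B$, and having never moved it is the red token initially placed on $s_B$, which is the claim.

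The step I expect to require the most care is the bookkeeping in the second and third paragraphs: one must verify precisely that $N_{G'}(e_A)=B_{start}\cup\{s_B\}$, that a well-organized configuration never puts a token on $A_{end}$ (the token-count identity $M_A+M_B+(2k-M_A)+(2k-M_B)+1+1=4k+2$ behind the definition), and that once $B_{start}$ is emptied no token can re-enter it or occupy $a_{s,2k}$ with a non-blue token before time $t$ — each of these follows, but from the adjacency rules of the construction rather than being immediate. Given those facts together with Observation~\ref{obs:min-move-wo} applied to the prefix $C_1,\dots,C_t$, the rest is routine.
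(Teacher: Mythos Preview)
Your argument is correct, but it is considerably more elaborate than what the paper has in mind. The paper treats the observation as a one-line consequence of the three listed facts: since $C_t$ is well-organized with $M_A(C_t)=i<2k$, condition~2 of the definition of well-organized (read consistently with the proof of Observation~\ref{obs:min-move-wo}, i.e.\ $M_A<2k$ forces tokens on $b_{s,p}$ for $p>i$) places a token on $b_{s,i+1}\in B_{start}$; as $G'[A_{end}\cup B_{start}\cup\{s_B,e_A\}]$ is complete bipartite, $b_{s,i+1}$ is adjacent to $e_A$, so no token can sit on $e_A$, and the token guaranteed on $\{s_B,e_A\}$ is therefore on $s_B$. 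That is the whole argument; the ``first bad move'' hypothesis is only there so that $C_t$ is well-organized and so that Observation~\ref{obs:min-move-wo} applies to the prefix (identifying the $s_B$ token as the original one).

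Your route instead invokes Observation~\ref{obs:min-move-wo} on the prefix and traces token histories to reach a contradiction. This works, and it has the side benefit of being robust to the apparent $M_A/M_B$ swap in the paper's stated condition~2: you use only the literal text and still succeed. The cost is the extra bookkeeping in your last two paragraphs, none of which is needed once one uses the adjacency $b_{s,i+1}e_A$. One small slip: you list $s_A$ among the vertices reachable in one step from $B_{start}$, but $s_A$ lies in the other complete bipartite block $A_{start}\cup B_{end}\cup\{s_A,e_B\}$ and is not adjacent to any $b_{s,p}$; this does not affect your conclusion.
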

The following observations give some more information about the configurations $C_t$ and $C_{t+1}$ that we obtain after the first bad move, depending on its type.
\begin{observation}\label{obs:bad-move-t1}
If the move at time $t$ is a bad move of type $1$, then $i := M_A(C_t)$ is odd. Furthermore, $(t: A_i \rightarrow B_q)$ with $q \geq i$.
\end{observation}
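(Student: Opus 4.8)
The plan is to establish both assertions by contradiction, in each case producing a token of $C_{t+1}$ that is adjacent to the vertex onto which the bad token slides, contradicting that $C_{t+1}$ is independent. As recorded just above the statement, since the bad move at time $t$ has type~$1$ the bad token sits on $A_i$ in $C_t$, where $i := M_A(C_t)$; write $\{u\} := A_i \cap C_t$, and let $v$ be the vertex this token slides to, so that $uv \in E(G')$ and $v \in B_q$ for the (unique) index $q$ that we must control. Recall also that $i \le j := M_B(C_t)$ and $i < 2k$.

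For the first assertion I would assume, for contradiction, that $i = 2p$ is even. Since $C_t$ is well-organized with $M_A(C_t) = 2p$, its first defining condition gives $|A_{2p-1}\cap C_t| = |A_{2p}\cap C_t| = 1$, so Observation~\ref{obs:wo-same-vertex-bis} applies with the odd index $2p-1$: writing $\{u'\} := A_{2p-1}\cap C_t$, it yields $u' \sim u$. Now Observation~\ref{obs:construction-equivalent-lock}, applied to the pair of equivalent tokens on $A_{2p-1}$ and $A_{2p}$, says the token on $u$ cannot slide to $B$; concretely $N_{G'}(u)\cap B = N_{G'}(u')\cap B$, so the landing vertex $v$ lies in $N_{G'}(u')$, i.e.\ $v$ is adjacent to the (untouched) token on $u'$ in $C_{t+1}$ — a contradiction. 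Hence $i$ is odd.

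For the second assertion, $i$ is now odd and it suffices to exclude $q < i$. Suppose $q < i$; since $i \le j = M_B(C_t)$ we get $q < M_B(C_t)$ and $q < M_A(C_t)$, so by well-organizedness both $A_q$ and $B_q$ carry exactly one token, say $\{v^A_q\} := A_q\cap C_t$ and $\{v^B_q\} := B_q\cap C_t$. By Observation~\ref{obs:wo-same-vertex} (valid as $q \le M_A(C_t)$) we have $v^A_q \sim v^B_q$, and since $A_q$ and $B_q$ are copies of the same set, the construction makes $v^A_q$ adjacent to every vertex of $B_q$ except its own copy, which is precisely $v^B_q$; that is, $B_q \setminus \{v^B_q\} \subseteq N_{G'}(v^A_q)$. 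The landing vertex $v \in B_q$ cannot equal $v^B_q$, as that vertex is already occupied by a token, so $v \in N_{G'}(v^A_q)$. Because $q \ne i$ the sets $A_q$ and $A_i$ are disjoint, hence $v^A_q \ne u$ and its token survives into $C_{t+1}$; thus $v$ is adjacent to a token of $C_{t+1}$, the desired contradiction. Therefore $q \ge i$.

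The argument is essentially bookkeeping on the construction; the only steps requiring care are checking that the hypotheses of Observations~\ref{obs:wo-same-vertex} and~\ref{obs:wo-same-vertex-bis} hold — which is exactly what the two defining conditions of a well-organized configuration supply — together with the (harmless but used) fact that a slide cannot land on an already-occupied vertex. I do not anticipate a genuine obstacle.
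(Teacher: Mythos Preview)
Your proof is correct and essentially follows the paper's approach. For the first assertion you use Observation~\ref{obs:wo-same-vertex-bis} and Observation~\ref{obs:construction-equivalent-lock} exactly as the paper does. For the second assertion the paper simply cites Observation~\ref{obs:well-organized-frozen} as a ``direct consequence,'' whereas you unpack the relevant piece of that observation explicitly (that $v^A_q$ dominates $B_q\setminus\{v^B_q\}$); this is the same underlying argument, just written out rather than referenced.
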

\begin{proof}
If $i$ is even, $i \geq 2$ and $A_{i-1}$ is a copy of $A_i$. Since $i \leq j$, Observation \ref{obs:wo-same-vertex-bis} ensures that there is a token on $A_{i-1}$, $A_{i}$ and $B_i$ on equivalent vertices, in which case the tokens on $A_{i-1}$ and $A_{i}$ cannot move to $B$, proving the first statement. The second statement is a direct consequence of Observation \ref{obs:well-organized-frozen}.
\end{proof}

\begin{observation}\label{obs:bad-move-t23}
If the move at time $t$ is a bad move of type $2$ or $3$, then $j := M_B(C_t) = 2k$.
\end{observation}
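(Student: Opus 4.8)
The plan is to argue by contradiction. I would assume that the first bad move of $S$ is of type $2$ or type $3$, occurs at time $t$, and that $j := M_B(C_t) < 2k$; from this I would exhibit two tokens of $C_{t+1}$ sitting on adjacent vertices, contradicting that $C_{t+1}$ is an independent set. Throughout I may use that $C_t$ is well-organized (the move at time $t$ being the first bad move) and that $i := M_A(C_t) < 2k$ (otherwise there is nothing to prove).

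The first step is to pin down the two endpoints of the bad move. By Observation~\ref{obs:well-organized-frozen}, every token on $A_p$ with $p<i$ is frozen in $C_t$, so the only token on $A$ that can move is the one on $A_i$; hence a type-$2$ bad move slides a token from $A_i$ to some vertex $b_{e,p}\in B_{end}$, and a type-$3$ bad move slides the token from $s_A$ to some vertex $b_{e,p}\in B_{end}$. In both cases the move deposits a token on a vertex of $B_{end}$, while the vacated vertex (which is on $A_i$, respectively is $s_A$) does not belong to $A_{start}$.

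For the second step I would use the hypothesis $j<2k$: since $C_t$ is well-organized and $M_B(C_t)=j<2k$, the set $A_{start}$ still carries a token in $C_t$, so fix a vertex $a_{s,q}\in A_{start}\cap C_t$. Because the subgraph induced by $A_{start}\cup B_{end}\cup\{s_A,e_B\}$ is complete bipartite, $a_{s,q}$ is adjacent to $b_{e,p}$. Moreover $a_{s,q}$ is not the vertex vacated at time $t$, so $a_{s,q}$ and $b_{e,p}$ are both occupied in $C_{t+1}$ — a pair of adjacent occupied vertices, contradicting that $C_{t+1}$ is an independent set. Hence $j = M_B(C_t) = 2k$.

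I expect the only point requiring care to be the first step: extracting from Observations~\ref{obs:well-nonwell} and~\ref{obs:well-organized-frozen} that a type-$2$ or type-$3$ bad move necessarily lands in $B_{end}$ and originates outside $A_{start}$. Once that is settled, the contradiction is immediate from the complete-bipartiteness of the $A_{start}$-to-$B_{end}$ gadget, so I do not anticipate a genuine obstacle.
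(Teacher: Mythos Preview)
Your argument is correct and is essentially the same as the paper's: both rely on the fact that when $j<2k$ the well-organized configuration $C_t$ still has tokens on $A_{start}$, and the complete bipartiteness of $A_{start}\cup B_{end}\cup\{s_A,e_B\}$ then forbids any slide into $B_{end}$. The paper states this in one sentence, while you spell out explicitly that the vacated vertex lies outside $A_{start}$ so that the blocking token survives in $C_{t+1}$; this extra care is fine but not a different approach.
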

\begin{proof}
If $j < 2k$ then by definition of a well-organized configuration there are some blue tokens on $A_{start}$ and no token can move to $B_{end}$.
\end{proof}
Finally, the two following Observations follow from the fact that $C_t$ is well-organized:
\begin{observation}\label{obs:bad-move-t2}
If the move at time $t$ is a bad move of type $2$, then $(t: A_i \rightarrow b_{e,i})$.
\end{observation}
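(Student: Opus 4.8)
The plan is to determine the exact target of the bad token by intersecting the two necessary conditions for the slide at time $t$ to be a legal reconfiguration step: the sliding edge must exist in $G'$, and the configuration $C_{t+1}$ produced by the slide must be an independent set.

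First I would record what is already known at this point. Since the bad move at time $t$ is of type $2$, the discussion following Observation~\ref{obs:well-organized-frozen} tells us that the bad token lies on $A_i$ in $C_t$, where $i := M_A(C_t)$, and by the definition of a type-$2$ move it slides to some vertex $b_{e,p}$ of $B_{end}$. It therefore suffices to prove $p = i$. I would also use that $C_t$ is well-organized, so in particular there is exactly one token on each of $A_1, \dots, A_i$.

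The core of the argument is the adjacency rule for $B_{end}$: by construction $b_{e,p}$ is complete to $A - \bigcup_{l=1}^{p-1} A_l = A_p \cup \dots \cup A_{2k}$ and has no neighbour in $A_1 \cup \dots \cup A_{p-1}$. For the token on $A_i$ to slide to $b_{e,p}$ there must be an edge joining them, which forces $A_i \subseteq A - \bigcup_{l=1}^{p-1} A_l$, i.e.\ $i \ge p$. Conversely, assume $p < i$. Then $A_p$ carries a token in $C_t$, and since this token lives in the block $A_p \ne A_i$ it is not the bad token, hence it is still present in $C_{t+1}$; but it is adjacent to $b_{e,p}$, which is now occupied, contradicting independence of $C_{t+1}$. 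Hence $p \ge i$, and together with $i \ge p$ this yields $p = i$, that is, $(t: A_i \rightarrow b_{e,i})$.

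I do not anticipate any real obstacle here: the statement is essentially bookkeeping about how the gadget vertices $b_{e,p}$ are attached to the blocks $A_1, \dots, A_{2k}$, combined with the fact that well-organizedness of $C_t$ keeps the lower blocks $A_1, \dots, A_i$ occupied. The only mild care needed is to observe that the token producing the contradiction in the case $p < i$ is genuinely distinct from the bad token, which is immediate since they occupy different blocks.
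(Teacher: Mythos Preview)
Your proposal is correct and matches the paper's approach: the paper states this observation without proof, noting only that it ``follows from the fact that $C_t$ is well-organized,'' and your argument is precisely the natural unpacking of that remark using the adjacency rule $N(b_{e,p}) \cap A = A_p \cup \dots \cup A_{2k}$ together with the occupancy of $A_1,\dots,A_i$ in a well-organized configuration.
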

\begin{observation}\label{obs:bad-move-t3}
If the move at time $t$ is a bad move of type $3$, then $(t: s_{A} \rightarrow b_{e,p})$ with $p > i$.
\end{observation}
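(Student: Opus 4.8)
The plan is to unwind the definition of a bad move of type~$3$ and then pin down the landing vertex with a single adjacency check. By definition (and the case analysis preceding this observation), a bad move of type~$3$ at time $t$ slides the token on $s_A$ into $B_{end}$; since $B_{end} = \{b_{e,1}, \dots, b_{e,2k}\}$, that token lands on some $b_{e,p}$, so $(t : s_A \to b_{e,p})$ for some $p \in \{1, \dots, 2k\}$, and the only remaining task is to show $p > i$, where $i := M_A(C_t)$.

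To prove $p > i$ I would argue by contradiction and assume $p \le i$. Recall from the construction that $b_{e,p}$ is complete to $A \setminus \bigcup_{q=1}^{p-1} A_q$; in particular $b_{e,p}$ is adjacent to every vertex of $A_p$. Since $C_t$ is well-organized and $p \le i = M_A(C_t)$, condition~$1$ in the definition of a well-organized configuration yields a (unique) token of $C_t$ lying on $A_p$. This token is distinct from the bad token on $s_A$ and still belongs to $C_{t+1}$, so $C_{t+1}$ would contain two adjacent tokens, contradicting that $C_{t+1}$ is an independent set. Hence $p > i$, and such a $p$ exists because $i < 2k$. This mirrors the bookkeeping behind Observation~\ref{obs:bad-move-t2}, where the type-$2$ move slides the token out of $A_i$ rather than out of $s_A$ and the same adjacency rule (that $b_{e,q}$ is complete to $A \setminus \bigcup_{r<q} A_r$) pins the landing vertex down to $b_{e,i}$.

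I do not expect a real obstacle. The only step needing care is the first one: confirming, from Observation~\ref{obs:well-nonwell} together with the symmetry reductions used above, that a bad move of type~$3$ is precisely a slide of the $s_A$-token into $B_{end}$; once that reading is fixed, the statement collapses to the one-line adjacency argument. It is also worth noting that $b_{e,p}$ has further neighbours (all of $A_{start} \cup \{s_A\}$, through the complete-bipartite gadget), but since we only need the inequality $p > i$ it suffices to invoke the adjacency between $b_{e,p}$ and $A_p$ and nothing more.
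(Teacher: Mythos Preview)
Your argument is correct and is precisely the unpacking the paper intends: it states only that the observation ``follows from the fact that $C_t$ is well-organized,'' and your adjacency check (that $b_{e,p}$ dominates $A_p$, which carries a token whenever $p \le i = M_A(C_t)$) is exactly the one-line reason behind that sentence. Your caution about confirming that type~$3$ really is a slide $s_A \to B_{end}$ is well placed given the typo in Observation~\ref{obs:well-nonwell} (which writes $B_{start}$ where $B_{end}$ is meant), but the subsequent paragraph defining the three types makes the intended reading unambiguous.
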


\subsubsection{Bad moves of type 1}\label{sec:bad-move-t1}
In this subsection, we suppose that the move at time $t$ is a bad move of type $1$. By Observation $\ref{obs:wo-same-vertex}$ we have $|A_p \cap C_t| = 1$ for every $p \leq M_A(C_t)$ and $|B_p \cap C_t| = 1$ for every $p \leq M_B(C_t)$. In this section $v^A_p$ (resp. $v^B_p)$ denote the only vertex of $|A_p \cap C_t|$ for $p \leq M_A(C_t)$ (resp. $|B_p \cap C_t|$ for $p \leq M_B(C_t)$. By Observation \ref{obs:bad-move-t1} we have $(t:A_i \rightarrow B_q)$ for some $q \geq i$. In the next lemma, we show that as long as no token moves from $B_q$ after time $t+1$, the blue tokens on $B$ and the red tokens on $B_{start}$ at time $t+1$ remain frozen.

\begin{lemma}\label{obs:tok-B-froz}
Let $t' \geq t+1$ such that no token has moved from $B_q$ between $C_{t+1}$ and $C_{t'}$. Then for any configuration between $C_{t+1}$ and $C_{t'}$ we have:
\begin{enumerate}
    \item for every $p \leq q$ there is a blue token on $v^B_p$.
    \item for every $p > i$ there is a red token on $b_{s,p}$.
\end{enumerate}
\end{lemma}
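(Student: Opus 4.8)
The plan is to prove statements~1 and~2 together, by induction on $s$ over $t+1, t+2, \dots, t'$.

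\emph{Base case $s = t+1$.} I would first note that $q \le j = M_B(C_t)$: since $C_t$ is well-organized, the reservoir tokens present in it dominate every $B_r$ with $r > j$ (Observation~\ref{obs:construction-astart-frozen}), so the bad token cannot slide out of $A_i$ into such a set. Hence $v^B_1, \dots, v^B_q$ are defined and carry (blue) tokens in $C_t$; the move at time $t$ only deletes the token on the vertex $v^A_i$ of $A_i$ and creates one on $B_q$, so those tokens survive in $C_{t+1}$, giving~1. For~2, since $C_t$ is well-organized with $M_A(C_t) = i < 2k$, each of $b_{s,i+1}, \dots, b_{s,2k}$ carries a (red) token in $C_t$, untouched by the bad move, so~2 holds in $C_{t+1}$.

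\emph{Inductive step.} Assume 1 and 2 for $C_s$, $t+1 \le s < t'$. As $C_{s+1}$ differs from $C_s$ by a single move which, by the choice of $t'$, does not remove a token from $B_q$, it suffices to show that every token named in 1 and 2 is \emph{frozen} in $C_s$. For statement~2: the hypothesis gives $\{b_{s,i+1}, \dots, b_{s,2k}\} \subseteq C_s$, so Observation~\ref{obs:construction-astart-frozen} with $p = i+1$ freezes $b_{s,i+2}, \dots, b_{s,2k}$ and forbids the token on $b_{s,i+1}$ to move into $\bigcup_{r \ge i+2} A_r$; it cannot move into $A_{i+1}$ either, because, writing $\widehat v \in A_{i+1}$ for the copy of $v^A_i$, the blue token $v^B_i$ --- present by statement~1 and equivalent to $v^A_i$ by Observation~\ref{obs:wo-same-vertex} --- dominates $A_{i+1} \setminus \{\widehat v\}$, while the bad token, having slid off $v^A_i$, dominates $\widehat v$ (a short check according to whether $A_i$ and $B_q$ are copies of the same subset of $V(G)$). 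For statement~1: the $v^B_p$ with $p < i$ are already frozen in $C_t$ by Observation~\ref{obs:well-organized-frozen}, and deleting the token on $v^A_i$ cannot unfreeze them; for $i \le p \le q$, the token on $v^B_p$ can reach no reservoir vertex $a_{s,r}$ or $a_{e,r}$, since $v^B_p$ is adjacent to $a_{s,r}$ only for $r \le p \le q$ and each such vertex is dominated by the bad token on $B_q$, and it can reach no copy of $A$: the two copies sharing $v^B_p$'s subset, namely $A_i$ and $A_{i+1}$ (here $i$ is odd, Observation~\ref{obs:bad-move-t1}), are dominated by $v^B_i$ and the bad token as above, the copies above level $i+1$ are dominated by the still-present token on $b_{s,i+1}$, and the copies below level $i$ are blocked by the paired tokens $v^B_r \sim v^A_r$ of Observation~\ref{obs:wo-same-vertex-bis}. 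Since no named token moves at step~$s$, statements 1 and 2 pass to $C_{s+1}$.

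The step I expect to be the real obstacle is the freezing of the ``boundary'' tokens --- the blue token $v^B_q$ (and $v^B_i$, when $q > i$) and the reservoir token on $b_{s,i+1}$ --- since in an ordinary well-organized configuration these are exactly the tokens free to move. What makes them immobile here is that the bad token now sits on a \emph{second}, non-equivalent vertex of $B_q$, and that one extra token is precisely what dominates the copy $\widehat v$ of $v^A_i$ left uncovered by the others; making this rigorous needs a small case analysis on whether $q = i$, $q = i+1$, or $q \ge i+2$, using the equivalence structure of the four-copies gadget in Observations~\ref{obs:construction-equivalent-lock}--\ref{obs:construction-noneq-lock} and~\ref{obs:wo-same-vertex}--\ref{obs:wo-same-vertex-bis}.
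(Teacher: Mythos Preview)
Your overall strategy---induction on $s$, arguing at each step that every token named in conditions~1 and~2 is immobile---is exactly the paper's, and your treatment of condition~2 (freezing $b_{s,i+1}$ via $v^B_i$ together with the bad token, which does dominate the copy $\widehat v$ of $v^A_i$ in $A_{i+1}$ since equivalent vertices in $A$ share the same neighbourhood in $B$) matches the paper's argument.

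There is, however, a genuine gap in your handling of condition~1 for $p < i$. You write that these tokens ``are already frozen in $C_t$ by Observation~\ref{obs:well-organized-frozen}, and deleting the token on $v^A_i$ cannot unfreeze them'', but Observation~\ref{obs:well-organized-frozen} is a statement about the well-organized configuration $C_t$, and its proof leans on the red tokens occupying $v^A_1,\dots,v^A_{i-1}$---tokens that are \emph{not} constrained by conditions~1 and~2 and may well have wandered away by time $s>t+1$. Deleting $v^A_i$ is far from the only change you must account for in the inductive step. The paper's fix stays entirely inside condition~1: since $i$ is odd, for every $p<i$ the partner index $p'$ also satisfies $p'<i\le q$; Observation~\ref{obs:wo-same-vertex-bis} applied once to $C_t$ gives $v^B_p\sim v^B_{p'}$, and by the inductive hypothesis both carry tokens at time $s$, so neither can slide to $A$; the two tokens on $B_q$ then block $A_{start}$ and condition~2 blocks $A_{end}$. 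Your treatment of $i\le p<q$ is similarly loose: the phrase ``the two copies sharing $v^B_p$'s subset, namely $A_i$ and $A_{i+1}$'' is only right for $p\in\{i,i+1\}$, and saying $A_i\cup A_{i+1}$ is dominated by $v^B_i$ and the bad token is of no use when the token you are trying to move \emph{is} $v^B_i$. The paper separates $p\le i$ (partner argument as above) from $i+1\le p<q$ (the only vertex of each $A_r$, $r\le i$, not dominated by the token on $v^B_r$ is $v^A_r$, and $v^B_p$ is non-adjacent to it because both lay in the independent set $C_t$).
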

\begin{proof}
First, note that $C_{t+1}$ satisfies conditions $1$ and $2$ since $C_t$ is well-organized and the move between $C_t$ and $C_{t+1}$ is a bad move. Suppose for a contradiction that there exists a time $t+1 < \tau < t'$ such that for every $t+1 \leq \ell \leq \tau$ the configuration $C_{\ell}$ satisfies conditions $1$ and $2$ and that the configuration $C_{\tau + 1}$ does not. Then it must be that at time $\tau$, either a red token moves from $b_{s,p}$ for some $p > i$ or a blue token moves from $v^B_p$ for some $p < q$. Let us show that none of these moves is actually possible since $C_{\tau}$ satisfies conditions $1$ and $2$. \\
Suppose first that a red token moves from $b_{s,p}$ for $p > i$. Since $C_{\tau}$ satisfies condition $2$ the tokens on $b_{s,x}$ for $x > i+1$ are frozen and we have $p = i+1$. By construction the red token on $b_{s, i+1}$ can only move to $A_{i+1}$. Furthermore, $i$ is odd by Observation \ref{obs:bad-move-t1} and $A_{i+1}$ is a copy of $A_i$. By the choice of $\tau$ there is a blue token on $v^B_i$ and a red token on $N(v^A_{i+1})$ since $(\tau : v^A_{i+1} \rightarrow B_q)$. It follows that $A_{i+1}$ is fully dominated at time $\tau$ and that the red token on $b_{s, i+1}$ cannot move, a contradiction. \\
Suppose then that a blue token moves from $v^B_p$ for some $p < q$. Since there are tokens on $B_q$ and $p < q$, this token cannot move to $A_{start}$, and since condition $2$ is satisfied by $C_{\tau}$, it cannot move to $A_{end}$. Furthermore, since $C_{\tau}$ also satisfies condition $1$, no blue token can move to $A_x$ for $x \geq i+1$. We then have two sub-cases to consider :
\begin{enumerate}
    \item $p \leq i$: let $B_p'$ be the other copy of $B_p$ in $G$. By the choice of $\tau$ we have $B_p \cap C_{\tau} = \{v^B_p\}$, $B_{p'} \cap C_{\tau} = \{v^B_{p'}\}$ and by Observation \ref{obs:wo-same-vertex-bis} we have $v^B_p \sim v^B_{p'}$ hence the token on $v^B_p$ cannot move to $A$.
    \item $i+1 \leq p < q$: By the choice of $\tau$ we have $B_x \cap C_{\tau} = \{v^B_x\}$ with $v^B_x \sim v^A_x$ for every $x \leq i$ by Observation \ref{obs:wo-same-vertex-bis}. For such $x$, $v^A_x$ is the only vertex that is not dominated by the blue token on $A_x$, and since $C_t$ is an independent set we have $v^A_x \notin N(v^B_p)$. It follows that the blue token on $v^B_p$ cannot move to $A_x$ for $x \leq i$. Since $C_{\tau}$ satisfies condition $2$ it cannot move to $B_x$ for $x \leq i+1$, which concludes the proof.
\end{enumerate}
\end{proof}

So as long as there are two tokens on $B_q$ some tokens remain frozen and cannot reach the targeted independent set. Hence one of the two tokens on $B_q$ has to move again at some point in the reconfiguration sequence. The following Observation shows that one of the tokens on $B_q$ necessarily moves back to $A_i$.

\begin{observation}\label{obs:ct+1}
There exists $t' \geq t+1$ such that $(t': B_q \rightarrow A_i)$.
\end{observation}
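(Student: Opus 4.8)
The strategy is to show that the two tokens placed on $B_q$ right after the bad move cannot both remain there forever, and that the first time one of them leaves, it is forced onto $A_i$. For the first part, note that the target set $I_e = B_{start} \cup B_{end} \cup \{s_B, e_B\}$ is disjoint from $B$, and in particular from $B_q$, whereas $C_{t+1}$ carries two tokens on $B_q$: the token on $v^B_q$ already present in $C_t$, together with the bad token. If no token ever left $B_q$ after step $t+1$, then the number of tokens on $B_q$ would be non-decreasing from $C_{t+1}$ up to $C_N = I_e$ (moves inside $\mathcal{B}$ are impossible, $B_q$ being independent), hence always at least $2$, contradicting $|I_e \cap B_q| = 0$. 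So there is a least index $t' \ge t+1$ for which some move $(t': B_q \rightarrow Y)$ takes place. By the minimality of $t'$, no token left $B_q$ strictly between $C_{t+1}$ and $C_{t'}$, so Lemma~\ref{obs:tok-B-froz} applies at time $t'$: in $C_{t'}$ there is a token on $v^B_p$ for every $p \le q$ and a token on $b_{s,p}$ for every $p > i$.

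It remains to show that $Y$ is a subset of $A_i$. Since $G'$ is bipartite and the moving token lies in $B_q \subseteq \mathcal{B}$, it slides to a vertex of $\mathcal{A}$, i.e.\ of $A_{start} \cup A_{end} \cup \{s_A, e_A\} \cup \bigcup_{p=1}^{2k} A_p$, and the plan is to eliminate every target except $A_i$ using the structure of $C_{t'}$ forced by Lemma~\ref{obs:tok-B-froz}. First, $N(s_A) \cap B = \emptyset$, so the token cannot slide to $s_A$. Next, $G'[A_{end} \cup B_{start} \cup \{s_B, e_A\}]$ is complete bipartite and $b_{s,i+1} \in C_{t'} \cap B_{start}$, so no slide into $A_{end} \cup \{e_A\}$ is possible. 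Since $b_{s,i+1}$ is complete to $A_{i+1} \cup \dots \cup A_{2k}$ and is occupied in $C_{t'}$, no slide into $A_p$ with $p \ge i+1$ is possible either; and by Observation~\ref{obs:construction-astart-frozen} the tokens on the $a_{s,p}$ with $p > i$ stay in place (they are $C_t$-frozen and $t$ is the first bad move), while each $a_{s,p}$ with $p \le i$ is dominated by the token still on $v^B_p$, which rules out $A_{start}$. Finally, for $p < i$ the vertex $v^A_p$ occupies $A_p$, and using Observation~\ref{obs:wo-same-vertex-bis} together with the fact that the equivalent copy of $v^A_p$ is still occupied in $C_{t'}$, every other vertex of $A_p$ is dominated. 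Hence the only admissible target lies in the (empty) set $A_i$, i.e.\ $(t': B_q \rightarrow A_i)$.

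The step I expect to be the real work is the last part of the case analysis: one must carefully bookkeep that, apart from the two tokens sitting on $B_q$, nothing has migrated between $C_{t+1}$ and $C_{t'}$ into a position that would open an alternative target in $\mathcal{A}$ — in particular that the tokens on $v^A_1, \dots, v^A_{i-1}$ and on the $a_{s,\cdot}$ vertices stay put (the case $p = i-1$ being delicate, since the partner of $v^A_{i-1}$ is exactly the vertex $A_i$ vacated by the bad token) — and that the vertex of $A_i$ the token actually slides to is genuinely undominated in $C_{t'}$. This is where the oddness of $i$ (Observation~\ref{obs:bad-move-t1}) and the equivalence relations among $v^A_i$, $v^B_i$ and the bad token get used; the rest is the bipartiteness-and-domination bookkeeping above.
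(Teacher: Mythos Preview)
Your overall plan is right: argue that some token must eventually leave $B_q$, then eliminate every landing spot in $\mathcal{A}$ except $A_i$. The elimination of $s_A$, $e_A$, $A_{end}$, and $A_p$ for $p>i$ is fine. But the step you single out as ``the real work'' --- tracking that the red tokens on $v^A_1,\dots,v^A_{i-1}$ have not wandered off by time $t'$ --- is precisely the step the paper \emph{does not} take, and you should not either: that bookkeeping is the content of the later Lemma~\ref{lem:tok-A-froz}, which is proved \emph{after} (and using the setup around) this observation, so appealing to it here risks circularity.

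The paper sidesteps this entirely. The two tokens sitting on $B_q$ at time $t'$ are at $v^B_q$ and at the vertex $u$ to which the bad token slid. Both $v^B_q$ and $u$ belong to the independent set $C_{t+1}$, which also contains $v^A_p$ for every $p<i$; hence neither $v^B_q$ nor $u$ is adjacent to any $v^A_p$. Since Lemma~\ref{obs:tok-B-froz} guarantees a token on $v^B_p$, which dominates all of $A_p\setminus\{v^A_p\}$, it follows that no token on $B_q$ can slide into $A_p$ for $p<i$ --- \emph{regardless of whether $v^A_p$ is still occupied}. This non-adjacency argument is the missing idea in your sketch.

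Your treatment of $A_{start}$ has the same flavour of over-reach. Observation~\ref{obs:construction-astart-frozen} only freezes a contiguous tail $\{a_{s,p},\dots,a_{s,2k}\}$ when that whole tail is occupied, and you have not established this at time $t'$ (indeed, at time $t$ the occupied tail starts at $p=j+1$, not $p=i+1$). The clean argument is again non-adjacency plus domination: for $p\le q$ the vertex $a_{s,p}$ is adjacent to the token on $v^B_p$ and hence blocked, while for $p>q$ the vertex $a_{s,p}$ is complete only to $\bigcup_{r\ge p}B_r$ and therefore has no edge to $B_q$ at all.
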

\begin{proof}
To reach the target configuration, every token on $B_{start}$ must move at least one time. By Lemma \ref{obs:tok-B-froz}.1, the tokens on $b_{s,p}$ for $p > i$ cannot move as long as there are two tokens on $B_q$. It follows that one of these token has to move at a time $t' \geq t+1$. Let $u \in B_q$ be the vertex such that $(t: v^A_i \rightarrow u)$: note that $u \notin N(v^A_p)$ for any $p < i$. Then by Lemma \ref{obs:tok-B-froz}.1 there can be no move from $B_q$ to $A_p$ for $p < i$ and by \ref{obs:tok-B-froz}.2 there can be no move from $B_q$ to $A_{p}$ for $p > i+1$ at time $\tau$. Furthermore, Lemma \ref{obs:tok-B-froz}.1 also ensures that there can be no move from $B_q$ to $a_{s,p}$ for $p < q$, and since there are two tokens on $B_q$ at time $\tau$, none of them can move to $b_{s,q}$.
Thus, $(\tau: B_q \rightarrow A_i)$ is the only possible move at time $\tau$.
\end{proof}

In other words, the bad move at time $t$ is in some sense "canceled" at time $t'$. Note that, however, it is not necessarily the red token that moves at time $t'$: in the particular case where $q = j = i$, the blue token on $B_q$ can move to $A_i$, switching role with the blue token. The next lemma shows that in-between $t$ and $t'$ every token has a very restricted pool of possible moves and remains locked in the closed neighborhood of the token it lies on in $C_t$.

\begin{lemma}\label{lem:tok-A-froz}
Let $t' \geq t+1$ be the first time after $t$ such that $(t': B_q \rightarrow A_i)$. Then any configuration $C_{\ell}$ with $t+1 \leq \ell \leq t'$ satisfies the following conditions:
\begin{enumerate}
        \item For every even $p < i$, there is either a red token on $b_{s,p}$ or a red token on $v^A_p$ or a red token on $b_{e,p}$.
        \item For every odd $p < i$, there is either a red token on $b_{s,p}$, or a red token on $v^A_p$, or a red token on $N(v^A_p) \cap B$, or red token on $b_{e,p}$.
        \item For every $p > q$ there is either a blue token on $a_{s,p}$ or a blue token on $B_p$.
\end{enumerate}
\end{lemma}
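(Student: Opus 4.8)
The plan is to prove conditions 1, 2 and 3 simultaneously by induction on $\ell$, running from $\ell = t+1$ up to $\ell = t'$. Before starting, I would record two facts valid on the whole interval. By Observation~\ref{obs:ct+1} --- whose proof in fact shows that the first token ever to leave $B_q$ must land on $A_i$ --- the $t'$ of the statement is the first time any token leaves $B_q$, so every configuration $C_\ell$ with $t+1 \le \ell \le t'-1$ still carries two tokens on $B_q$. Consequently Lemma~\ref{obs:tok-B-froz} applies throughout $[t+1,t'-1]$ and gives, in each such configuration, a blue token on $v^B_p$ for every $p \le q$ (hence that token is frozen there) and a red token on $b_{s,p}$ for every $p > i$.

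For the base case $\ell = t+1$: the only token displaced between $C_t$ and $C_{t+1}$ is the bad token, which by Observation~\ref{obs:bad-move-t1} slides from $v^A_i$ to a vertex $u$ of $B_q$, with $i$ odd and $i \le q \le j := M_B(C_t)$. As $C_t$ is well-organized, for every $p < i$ there is still a red token on $v^A_p$ (settling conditions 1 and 2 at time $t+1$), and since $q \ge i$, for every $p$ with $q < p \le j$ there is a blue token on $v^B_p \in B_p$ while for every $p$ with $j < p \le 2k$ there is a blue token on $a_{s,p}$ (settling condition 3). The last step $C_{t'-1} \to C_{t'}$ is equally immediate: that move is $(t'-1:B_q \to A_i)$, which touches only $B_q$ and $A_i$, and these two sets occur in none of the three conditions (which refer only to indices $p < i$ and $p > q \ge i$).

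The core is the step $C_\ell \to C_{\ell+1}$ for $t+1 \le \ell \le t'-2$. Assuming $C_\ell$ satisfies 1, 2, 3, I would suppose for contradiction that $C_{\ell+1}$ does not. Since $C_{\ell+1}$ still carries two tokens on $B_q$, the move $(\ell:x\to y)$ must take the last red token out of $\{b_{s,p},v^A_p,b_{e,p}\}$ for some even $p<i$, or out of $\{b_{s,p},v^A_p,b_{e,p}\}\cup(N(v^A_p)\cap B)$ for some odd $p<i$, or the last blue token out of $\{a_{s,p}\}\cup B_p$ for some $p>q$. I would rule each out by exhibiting in $C_\ell$ a token adjacent to the destination $y$. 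The routine directions use that $G'[A_{start}\cup B_{end}\cup\{s_A,e_B\}]$ and $G'[A_{end}\cup B_{start}\cup\{s_B,e_A\}]$ are complete bipartite (any move toward a start/end gadget is blocked while that gadget is still occupied), the nestedness $N(a_{s,p'})\subseteq N(a_{s,p'-1})$ from Observation~\ref{obs:construction-astart-frozen} and its symmetric counterpart (no token can slide ``upward'' along the $a_{s,\cdot}$'s or $b_{s,\cdot}$'s), and the fact that for an even $p<i$ the paired odd index $p-1<i$ carries a token on the equivalent vertex $v^A_{p-1}\sim v^A_p$ by Observation~\ref{obs:wo-same-vertex-bis}; since equivalent vertices have the same neighbourhood in $B$, Observation~\ref{obs:construction-equivalent-lock} then forbids the red token on $v^A_p$ from ever entering $B$, which is exactly why condition 1 omits the term $N(v^A_p)\cap B$.

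I expect this last step to be the main obstacle, and inside it the delicate sub-cases are the ``long jumps'': a red token that has slid from an odd $v^A_p$ into $N(v^A_p)\cap B$ and now tries to move on to an $A$-vertex other than $v^A_p$, and a blue token on $B_p$ with $p>q$ trying to reach some $a_{s,p'}$ or some $A$-vertex. For these I would show that every vertex of $A$ reachable in one slide is occupied or dominated in $C_\ell$: a target in $A_{p''}$ with $p''>i$ is dominated by the frozen red token on $b_{s,p''}$ from Lemma~\ref{obs:tok-B-froz}.2; a target in $A_{p''}$ with $p''\le i$ that differs from $v^A_{p''}$ is dominated by the blue token on $v^B_{p''}$ (which exists by Lemma~\ref{obs:tok-B-froz}.1 since $p''\le i\le q$, and which, being equivalent to $v^A_{p''}$ by Observation~\ref{obs:wo-same-vertex}, dominates $A_{p''}\setminus\{v^A_{p''}\}$); the vertex $v^A_{p''}$ itself is, for $p''<i$, occupied by or adjacent to the red token that conditions 1 and 2 place around index $p''$, and for $p''=i$ it is adjacent to the bad token on $u\in N(v^A_i)$; targets in $\mathcal B$ are handled symmetrically via Lemma~\ref{obs:tok-B-froz}.1 and the gadget completeness. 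Grinding through this case check while keeping the two token colours apart is the genuinely laborious part, but it uses only the structural observations already established in Sections~\ref{sec:construction} and~\ref{sec:well-organized-conf}.
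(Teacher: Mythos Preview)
Your plan is essentially the paper's own proof: both argue by induction on $\ell$, use Lemma~\ref{obs:tok-B-froz} as a standing hypothesis on the interval, verify the base case from the well-organized structure of $C_t$, and then rule out each possible violating move by the same case analysis (token on $b_{s,p}$, on $v^A_p$, on $b_{e,p}$, on $N(v^A_p)\cap B$, on $a_{s,p}$, on $B_p$).

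One small indexing slip: in the paper's convention $(t':B_q\to A_i)$ is the move from $C_{t'}$ to $C_{t'+1}$, not from $C_{t'-1}$ to $C_{t'}$. Hence $C_{t'}$ itself still carries two tokens on $B_q$, Lemma~\ref{obs:tok-B-froz} applies on the full range $[t+1,t']$, and your inductive step should simply run for $t+1\le\ell\le t'-1$; the separate ``last step'' paragraph is then unnecessary.
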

\begin{proof}
Let us first show that $C_{t+1}$ satisfies conditions $1$ to $3$. Since $C_t$ is well-organized and since $(t: A_i \rightarrow B_q)$ there is a red token on $v^A_p$ for every $p < i$ thus conditions $1$ and $2$ are satisfied. Furthermore, there is a blue token on $B_p$ for every $q < p \leq M_B(C)$ and a blue token on $a_{s,p}$ for every $M_B(C) < p \leq 2k$ and condition $3$ is satisfied by $C_{t+1}$. Let us now prove that these conditions are satisfied by any configuration between times $t$ and $t'$. Suppose otherwise and let $\tau$ be the first time after $t+1$ such that $C_{\tau}$ does not satisfy one of the three conditions. Note that by Lemma \ref{obs:tok-B-froz}.1 , we know that for any $t+1 \leq \ell \leq t'$ there is a blue token on $v^B_p$ for every $p < q$ in $C_{\ell}$ and a red token on $b_{s,p}$ for every $p > i$.

\begin{enumerate}
\item \textbf{$C_{\tau}$ does not satisfy condition $1$.} Since $C_{\tau-1}$ satisfies the three conditions, there exist exactly one even integer $p_0 < i$ for which condition $1$ is not satisfied in $C_{\tau}$.
\begin{enumerate}
    \item Suppose first there is a token on $b_{s,p_0}$ in $C_{\tau-1}$. Since there is a token on $b_{s,p}$ for every $p > i$, this token cannot move to $A_{end}$ nor $B_{p}$ for any such $p$, and there is no token on $A_p$ for any $p > p_0$ in $C_{\tau-1}$. Then, since conditions $1$ and $2$ are satisfied by $C_{\tau-1}$, there must be a red token on $\{b_{s,p}, b_{e,p}, v^A_p\} \cup N(v^A_p) \cap B$ for every $p_0 < p \leq i$. The token on $b_{s,p_0}$ then has to move to $A_{p_0}$ a time $\tau-1$, and since there is a blue token on $v^B_{p_0}$, the only vertex it can move to is $v^A_{p_0}$. But then $C_{\tau}$ satisfies condition $1$, a contradiction.
    \item Suppose then that there is a red token on $b_{e,p_0}$ in $C_{\tau-1}$: since $N(b_{e,p_0}) \cap A$ = $N(b_{s,p_0}) \cap A$, one can easily see that the only vertex this token can move to is also $v^A_{p_0}$, again leading to a contradiction.
    \item Finally suppose that there is a red token on $v^A_{p_0}$ in $C_{\tau-1}$. Then there can be no token on $b_{s, p_0-1}$ nor on $b_{e, p_0-1}$. Furthermore - recall that since
    $p_0$ is even $A_{p_0}$ and $A_{p_0-1}$ are copies of the same set - there can be no token in $N(v^A_{p_0-1}) \cap B$ in $C_{\tau-1}$. Since condition $2$ is satisfied for $p_0-1$, there must then be a token on $v^A_{p_0-1}$. It follows that the token on $A_{p_0}$ can only move to $b_{s,p_0}$ or $b_{e, p_0}$ and condition $1$ is satisfied by $C_{\tau}$, a contradiction.
\end{enumerate}

\item \textbf{$C_{\tau}$ does not satisfy condition $2$.} As in case $1$, there exists exactly one odd integer $p_0 < i$ for which condition $2$ is not satisfied in $C_{\tau}$. If there is a token on $b_{s,p_0}$ or on $b_{e,p_0}$ in $C_{\tau-1}$ we obtain a contradiction using the same arguments (which do not make use of the parity of $p_0$) than in case $1.a$ and $1.b$ respectively. Two cases remain to be considered:
\begin{enumerate}
    \item Suppose that there is a token on $v^A_{p_0}$ in $C_{\tau-1}$. Then there can be no token on $b_{s, p_0-1}$ nor on $b_{e, p_0-1}$ and since $C_{\tau-1}$ satisfies condition $1$ ($p_0-1$ is even), there must be a token on $v^A_{p_0-1}$ in $C_{\tau-1}$. It follows that the token on $A_{p_0}$ can either move to $b_{s,p_0}$, $b_{e,p_0}$ or to $B$, and $C_{\tau}$ satisfies condition $2$.
    \item Finally suppose there is a red token on $N(v^A_{p_0}) \cap B $ in $C_{\tau-1}$. Let $p_1$ be such that this red token is on $B_{p_1}$. Then by construction there can be no token on $a_{s,p_1}$ in $C_{\tau-1}$ and since condition $3$ is satisfied by $C_{\tau-1}$ there is also a blue token on $B_{p_1}$ in $C_{\tau-1}$. It follows that there are two tokens on $B_{p_1}$ in $C_{\tau-1}$ and that these tokens cannot move to $A_{start}$. Furthermore, since $i < 2k$ we have $B_{start} \cap C_{\tau} \neq \emptyset$, so the red token on $B_{p_1}$ cannot move to $A_{end}$ and must move to $B$ at time $\tau-1$. Let us show it can only move back to $v^A_{p_0}$. By Lemma \ref{obs:tok-B-froz}.1 this token can only move to $v^A_{p}$ for some $p < i$. But since $C_{\tau-1}$ satisfies condition $1$ and $2$, we have that for any $p \neq p_0$, there is a red token on $\{b_{s,p}, b_{e,p}, v^A_{p}\} \cup N(v^A_p) \cap B$. It follows that the only vertex of $A$ this token can move to is $v^A_{p_0}$ and condition $2$ is satisfied by $C_{\tau}$
\end{enumerate}
\item \textbf{$C_{\tau}$ does not satisfy condition $3$.} As in the previous cases there exists exactly one integer $p_0 > q$ for which condition $3$ is not satisfied in $C_{\tau}$.
\begin{enumerate}
    \item Suppose first there is a blue token on $a_{s,p_0}$ in $C_{\tau-1}$. If $p_0 = 2k$ this token can only move to $B_{2k}$ and we are done. Otherwise, there can be no token on $B_{p_0+1}$ in $C_{\tau-1}$ and since this configuration satisfies condition $3$, there must then be a token on $a_{s,p_0+1}$. It follows that the blue token on $a_{s, p_0}$ can only move to $B_{p_0}$ and we obtain a contradiction.
    \item Suppose then that there is a blue token on $B_{p_0}$ in $C_{\tau-1}$. Since there are still tokens on $B_{start}$ this token cannot go to $A_{end}$ nor to any $A_p$ for $p > i$. Furthermore by Lemma \ref{obs:tok-B-froz}.1, the only vertex on $A_p$ that is not dominated by tokens on $B$ is $v^A_{p}$ for any $p \leq i$. But since $C_{\tau-1}$ satisfies condition $1$ and $2$, there is a red token on $\{b_{s,p}, b_{e,p}\} \cup B$ that dominates this vertex. It follows that the blue token on $B_{p_0}$ can only move to $A_{start}$. Furthermore since there is a token on $B_{p_0}$ there can be no token on $a_{s,p}$ for $p \leq p_0$ and since $C_{\tau-1}$ satisfies condition $3$ there must be a blue token on $B_{p}$ for any $p \leq p_0$. It follows that the blue token on $B_{p_0}$ in $C_{\tau-1}$ can only move to $a_{s,p_0}$ and that condition $3$ is satisfied by $C_{\tau}$, which concludes the proof.
\end{enumerate}
\end{enumerate}

\end{proof}

Furthermore, up to removing a move from the sequence we have the following:
\begin{observation}\label{obs:lock-token-t1}
Let $t' \geq t+1$ such that no token has moved from $B_q$ between $C_{t+1}$ and $C_{t'}$. Then for any configuration between $C_{t+1}$ and $C_{t'}$ there is a token on $\{s_A, e_B\}$.
\end{observation}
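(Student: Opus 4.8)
The plan is a contradiction argument resting on the minimality of $S$ together with the frozen structure established in Lemma~\ref{obs:tok-B-froz} and Lemma~\ref{lem:tok-A-froz}. Note first that $C_{t+1}$ carries a token on $\{s_A,e_B\}$: the configuration $C_t$ is well-organized and hence does, and by Observation~\ref{obs:bad-move-t1} the move at time $t$ is $(t\colon A_i\to B_q)$, which touches neither $s_A$ nor $e_B$. So if the conclusion fails there is a smallest index $\ell$ with $t+2\le\ell\le t'$ such that $C_\ell$ has no token on $\{s_A,e_B\}$; at time $\ell-1$ the unique token $z$ on $\{s_A,e_B\}$ slides off it, and since $s_Ae_B\in E(G')$, $N(s_A)=B_{end}\cup\{e_B\}$, and $N(e_B)=A_{start}\cup\{s_A\}$ (from the complete bipartite gadget on $A_{start}\cup B_{end}\cup\{s_A,e_B\}$ and from $N(s_A)\cap B=\emptyset$), this move goes either $s_A\to b_{e,r}$ for some $b_{e,r}\in B_{end}$ or $e_B\to a_{s,r}$ for some $a_{s,r}\in A_{start}$. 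The second possibility should be discarded at once: in a shortest sequence a token never leaves its target position only to return, so $z$, whose target is $e_B$, cannot sit on $e_B$ ``in transit''; hence $z$ is on $s_A$ in $C_{\ell-1}$ and the move is $s_A\to b_{e,r}$.

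The core of the proof is to excise this excursion of $z$. Since $z$ must occupy $e_B$ in $C_N=I_e$, it returns to $\{s_A,e_B\}$ at some later time; let $m$ be the first such time, so $z$ makes at least two moves during $[\ell-1,m]$. I would delete all of them, keeping $z$ frozen on $s_A$ over $[\ell-1,m]$ and, if $z$ returns to $e_B$ rather than $s_A$, re-inserting the single slide $s_A\to e_B$ at time $m-1$. The resulting sequence still runs from $I_s$ to $I_e$, and it is valid because re-freezing $z$ on $s_A$ can forbid a non-$z$ move only if that move targets $N(s_A)=B_{end}\cup\{e_B\}$, whereas a slide onto any $b_{e,r'}\in B_{end}$ requires $A_{start}$, $s_A$, and all of $A_{r'},\dots,A_{2k}$ to carry no token, while simultaneously having to originate inside $N(b_{e,r'})=A_{start}\cup\{s_A\}\cup(A_{r'}\cup\cdots\cup A_{2k})$ — impossible — and no token slides onto $e_B$ in a shortest sequence by the same ``never leave your target'' principle applied to whoever will occupy $e_B$. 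Hence the modified sequence is strictly shorter than $S$, contradicting the choice of $S$; this is exactly the ``up to removing a move from the sequence'' alluded to in the statement.

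The step I expect to cost the most effort is making the validity of the surgery watertight, in two respects: (i) justifying that in a shortest sequence no token other than $z$ ever occupies or passes through $s_A$ or $e_B$ — which should follow from $I_s$ and $I_e$ placing exactly one token ($z$) on these two vertices, together with a small exchange argument; and (ii) controlling the part of the excursion that may stretch beyond $t'$ — here one argues that every move pertinent to the surgery happens before $z$ first re-enters $\{s_A,e_B\}$, and that up to that time the two tokens on $B_q$ have not moved, so Lemma~\ref{obs:tok-B-froz} and Lemma~\ref{lem:tok-A-froz} (whose proofs never mention $s_A$ or $e_B$) keep the remaining $4k+1$ tokens pinned down regardless of $z$'s whereabouts.
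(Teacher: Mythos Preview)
Your overall strategy (modify the sequence so that the token stays on $\{s_A,e_B\}$) is the same as the paper's, but your execution diverges in two places and one of them is a genuine gap.

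\textbf{The $e_B\to a_{s,r}$ case.} Your dismissal rests on the principle ``in a shortest sequence a token never leaves its target position only to return'', applied to the token $z$ with ``target $e_B$''. Tokens in this problem are not labeled with targets; there is no a priori reason why the token currently on $e_B$ must be the one that ends on $e_B$ in $I_e$. An exchange argument of the kind you hint at can sometimes be made, but it is not automatic for sliding (the excursion of $z$ and the approach of the other token to $e_B$ may interact), and you do not supply it. The paper avoids this entirely with a direct structural argument: if a token sits on $e_B$, then $A_{start}\cap C=\emptyset$, whence by Lemma~\ref{lem:tok-A-froz}.3 together with Lemma~\ref{obs:tok-B-froz}.1 every $B_p$ carries a token; since $a_{s,r}$ is complete to $B_r\cup\cdots\cup B_{2k}$, every $a_{s,r}$ is dominated and no slide $e_B\to a_{s,r}$ is possible. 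This is both simpler and does not rely on any target-tracking principle.

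\textbf{The $s_A\to b_{e,r}$ case.} Your excision surgery (delete all moves of $z$ between $\ell-1$ and $m$, possibly re-insert one) is correct in spirit but heavier than needed, and you yourself flag the two steps that would require real work (who else might pass through $s_A$ or $e_B$, and what happens if the excursion runs past $t'$). The paper sidesteps both: it simply \emph{replaces} the single slide $(\tau\colon s_A\to b_{e,r})$ by $(\tau\colon s_A\to e_B)$. This is valid because $N(e_B)\subseteq N(b_{e,r})$, so every subsequent non-$z$ move that was legal before remains legal, and because Lemmas~\ref{obs:tok-B-froz} and~\ref{lem:tok-A-froz} show the token cannot move from $b_{e,r}$ to $A$ before $t'+1$ anyway. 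The resulting sequence has the \emph{same} length, not a shorter one; the observation is a WLOG statement (``up to modifying the sequence''), not a contradiction with minimality.
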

\begin{proof}
Since the move at time $t$ is the first bad move, there is a token on $\{s_A, e_B\}$ at time $t$. If there is a token on $e_B$, there can be no token on $A_{start}$ and by Lemma \ref{obs:tok-B-froz}.3 there must be blue token on $B_p$ for every $p \leq 2k$ so there can be no move from $e_B$ to $A_{start}$. Suppose there exists $\tau > t$ such that $(\tau: s_A \rightarrow B_{e,p})$ for some $p$. By Lemma \ref{lem:tok-A-froz} and \ref{obs:tok-B-froz}, this token cannot move to $A$ before time $t'+1$. But then we can replace the move at time $\tau$ by $(\tau: s_A \rightarrow e_B)$: since $N(e_B) \subseteq N(b_{e,p})$ all the moves between time $\tau$ and $t'+1$ remain valid.
\end{proof}

Let us now consider the configurations $C_{t'}$ and $C_{t'+1}$. We know that $(t': B_q \rightarrow A_i)$ and in particular there can be no token on $b_{s,p}$ nor $b_{e,p}$ for any $p \leq i$ in $C_{t'+1}$. Since configuration $C_{t'+1}$ satisfies condition $1$ of Lemma \ref{lem:tok-A-froz} we have that for any even $p  < i$ there is a red token on $v^A_p$, and since $v^A_{p} \sim v^A_{p-1}$ there cannot be any red token on $N(v^A_{p-1}) \cap B$. Then by condition $2$ of Lemma \ref{lem:tok-A-froz} there is necessarily a token on $v^A_{p-1}$. Furthermore, by Lemma \ref{obs:tok-B-froz}.2 there is a red token on $b_{s,p}$ for every $p > i$, and by Lemma \ref{obs:tok-B-froz}.1 there is a blue token on $B_p$ for every $p < q$ in $C_{t'+1}$. Condition $3$ of Lemma \ref{lem:tok-A-froz} ensures that there is a blue token on $\{b_{s,p}\} \cup B_p$ for every $p > q$ in $C_{t+1}$. Furthermore, Observation \ref{obs:lock-token-t1} and Observation \ref{obs:token-sb-ct} ensure that there is token on $\{s_A, e_B\}$ and a token on $\{s_B, e_A\}$. Finally, there are two tokens on $B_q$ at time $t'$ and one of these moves to $A_i$, which ensures that $C_{t'+1}$ is well-organized.

In the considered shortest sequence $S$, the token that moves from $B$ to $A$ at time $t'$ moves at least three times before we reach the well-organized configuration $C_{t'+1}$, a contradiction with the choice of $S$ by Lemma \ref{obs:min-move-wo}.

\subsubsection{Bad moves of type $2$ and $3$}\label{sec:bad-move-t2-t3}
The proof for bad moves of type $2$ and $3$ follows similar reasoning as for type $1$. We first show that as long as the bad token does not move after time $t+1$, a large part of the other tokens remain frozen. We then show that the bad token has to move again after time $t+1$ and that we subsequently either obtain a well-organized configuration or cancel a bad move. By Observation \ref{obs:bad-move-t2} and \ref{obs:bad-move-t3} we have either $(t: A_i \rightarrow b_{e,q}$) or $(t: s_A \rightarrow b_{e,q})$ for some $q \geq i$. Note that in the particular case of a bad move a type $2$ we have $q = i$. By Observation \ref{obs:bad-move-t23}, there is exactly one blue token on $B_p$ for every $1\leq p \leq 2k$ in $C_t$. We denote by $v^B_{p}$ the only vertex of  $I_t \cap B_p$ and by $v^A_p$ the copy of $v_p^B$ in $A_p$. \\
Let $p \leq 2k$ be odd. Recall that by construction, if $v^B_p \sim v^B_{p+1}$ then the blue tokens on $\{v^B_p,  v^B_{p+1}\}$ cannot move to $A$, and no other token can move to $(A_p \cup A_{p+1}) - \{v_p^A, v_{p+1}^A\}$. If $v_p^B \nsim v_{p+1}^B$, no token can move $A_p \cup A_{p+1}$ except for the tokens on $\{v_p^B, v_{p+1}^B\}$.
Let us first show that there necessarily exists a time $t' > t$ at which the bad token moves again:
\begin{observation}\label{obs:token-sb-frozen}
Let $t_1 > t$ be such that for any $t \leq \tau \leq \ell$, $b_{e,q} \in C_{\tau}$. Then for any $t \leq \tau \leq \ell$, $s_B \in C_{\tau}$.
\end{observation}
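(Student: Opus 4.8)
The plan is to run the same ``freezing'' argument that drives the type-$1$ analysis (Lemma~\ref{obs:tok-B-froz} together with Observation~\ref{obs:lock-token-t1}), transported to the situation in which the bad token rests on $b_{e,q}\in B_{end}$ rather than inside $B$. Two structural facts will be used throughout. First, $b_{e,q}$ is complete to $A-\bigcup_{p<q}A_p=A_q\cup\dots\cup A_{2k}$, so as long as $b_{e,q}$ is occupied no token may sit on any of $A_q,\dots,A_{2k}$. Second, since the move at time $t$ is a bad move of type~$2$ or~$3$, Observation~\ref{obs:bad-move-t23} gives $M_B(C_t)=2k$, so in $C_t$ every one of $B_1,\dots,B_{2k}$ carries exactly one token. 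For the base of the induction, recall that Observation~\ref{obs:token-sb-ct} provides a token on $s_B$ in $C_t$.

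I would then argue by contradiction. Suppose $s_B$ becomes unoccupied somewhere on the interval on which $b_{e,q}$ stays occupied, and let $\tau$ be the first time the token currently on $s_B$ moves. Since $s_B\in\mathcal{B}$ and its only neighbours in $G'$ are $A_{end}\cup\{e_A\}$, this token moves either to $e_A$ or to some $a_{e,r}\in A_{end}$. The heart of the proof is an invariant, proved by induction on $\tau$ exactly as in Lemma~\ref{obs:tok-B-froz}: while $b_{e,q}$ is occupied, every $B_p$ with $p\le 2k$ remains occupied and no token leaves $B_{start}$. The inductive step amounts to checking that a token on $B$ could only try to escape into $A_1\cup\dots\cup A_{q-1}$ (the rest of $A$ being forbidden by $b_{e,q}$), and that each such target is either occupied or dominated by one of the tokens that Observations~\ref{obs:well-organized-frozen}, \ref{obs:wo-same-vertex} and~\ref{obs:wo-same-vertex-bis} pin in place, and similarly that a token on $B_{start}$ has no admissible target in $A$; the move at time $t$ being the \emph{first} bad move is what allows the well-organization of $C_t$ to be invoked here, exactly as in the proof of Lemma~\ref{lem:tok-A-froz}. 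Granting the invariant, each $a_{e,r}\in A_{end}$ is dominated by the token on $B_r$, hence unreachable, and $e_A$ is handled as in the proof of Observation~\ref{obs:lock-token-t1} (the token on $B_{start}$ kept by the invariant, or else a rerouting of the $s_B$-move that shows it is superfluous in a shortest sequence). Either way the token on $s_B$ has no legal move at time $\tau$, a contradiction; so $s_B$ stays occupied throughout the interval.

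I expect the freezing invariant to be the real work, just as it was for type~$1$: one has to enumerate every candidate move of a token on $B$ and of a token on $B_{start}$ and rule each out, using independence of the current configuration together with the facts that $C_t$ is well-organized and that no bad move precedes time $t$ (so that the tokens on $A_1,\dots,A_{q-1}$ and the matched tokens on $B$ lie on pairwise equivalent vertices, and $A_{start}$ is controlled by the well-organization conditions). Once that bookkeeping is in place, the case split of Observation~\ref{obs:well-nonwell} together with Observations~\ref{obs:bad-move-t2}--\ref{obs:bad-move-t3} already pins the bad token to $b_{e,q}$, and excluding the two destinations $e_A$ and $A_{end}$ of the $s_B$-token is immediate.
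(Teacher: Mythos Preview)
Your plan would eventually work, but it is far more elaborate than what the observation requires, and the invariant you propose is not self-sustaining as stated: to keep all of $B_1,\dots,B_{2k}$ occupied you must simultaneously track the red tokens on $A$ and on $B_{start}\cup B_{end}$ (conditions~1 and~2 of Lemma~\ref{lem:tok-A-froz-bis}), since that is what blocks a token on $B$ from sliding into $A_1\cup\dots\cup A_{q-1}$. You also omit the possibility that a token on $B_p$ slides to $A_{start}$ or $A_{end}$ (each $a_{s,r},a_{e,r}$ is adjacent to $B_r\cup\dots\cup B_{2k}$). These gaps can all be filled, but doing so amounts to proving the entirety of Lemma~\ref{lem:tok-A-froz-bis} up front.

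The paper's argument is a three-line observation that avoids all of this. The only vertices it looks at are $s_B$ and the $b_{s,p}$ with $p\geq q$ (or simply $p>q$); it never touches $B$. Since $N(b_{s,p})\cap A\subseteq A_q\cup\dots\cup A_{2k}\subseteq N(b_{e,q})$ for $p\geq q$, the token on any such $b_{s,p}$ cannot enter $A$ while $b_{e,q}$ is occupied. Its only other neighbours lie in $A_{end}\cup\{e_A\}$; but $G'[B_{start}\cup A_{end}\cup\{s_B,e_A\}]$ is complete bipartite, so as long as at least two vertices of $\{s_B\}\cup\{b_{s,p}:p>i\}$ carry tokens, none of them can reach $A_{end}\cup\{e_A\}$. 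Since $C_t$ is well-organized with $M_A(C_t)=i<2k$, both $s_B$ and $b_{s,2k}$ (indeed every $b_{s,p}$ with $p>i$) are occupied in $C_t$, and Observation~\ref{obs:token-sb-ct} gives the base case for $s_B$. These tokens then mutually freeze one another for as long as $b_{e,q}$ stays occupied. The point you are missing is that the complete bipartite block $B_{start}\cup\{s_B\}$ versus $A_{end}\cup\{e_A\}$ already provides all the locking needed; no appeal to the state of $B$ is necessary.
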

\begin{proof}
As long as there is a token on $b_{e,q}$, no token on $b_{s,p}$ for $p \geq q$ can move to $A$. Since by Observation \ref{obs:token-sb-ct}, $s_B \in C_{t}$ and since $b_{s,p} \in C_t$ for $p > q$, these token are frozen as long as there is a token on $b_{e,q}$.
\end{proof}

\begin{observation}\label{obs:type-2-eb}
If the move at time $t$ is a bad move of type $2$, then there is a blue token on $e_B$ in $C_t$ and this token cannot move before the bad token moves again.
\end{observation}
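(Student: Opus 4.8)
By Observation~\ref{obs:bad-move-t2} the bad move at time $t$ is $(t\colon A_i\rightarrow b_{e,i})$, so $b_{e,i}\in C_{t+1}$, and the bad token remains on $b_{e,i}$ in every configuration from $C_{t+1}$ onward until it next moves. Both parts of the statement rest on the complete bipartite graph $G'[A_{start}\cup B_{end}\cup\{s_A,e_B\}]$. The first step of my plan is to show that $A_{start}\cup\{s_A\}$ is empty in $C_t$. Indeed $b_{e,i}$ is adjacent to every vertex of $A_{start}\cup\{s_A\}$, so independence of $C_{t+1}$ forces $(A_{start}\cup\{s_A\})\cap C_{t+1}=\emptyset$; since the bad token lies on $A_i$ in $C_t$ --- a set disjoint from $A_{start}\cup\{s_A\}$ --- the same holds for $C_t$. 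As $C_t$ is well-organized it must carry a token on $s_A$ or on $e_B$, and with $s_A$ empty there is a token on $e_B$ in $C_t$.

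It remains to pin down the colour of this token and to show it is stuck. Since $N_{G'}(e_B)=A_{start}\cup\{s_A\}$ lies entirely on the $\mathcal A$-side, the token currently on $e_B$ must have slid onto $e_B$ from $A_{start}$ or from $s_A$. A short bookkeeping argument then rules out that it is the lock token that started on $s_A$: before the first bad move that token can only leave $s_A$ via the type-$3$ bad move $(s_A\to B_{start})$ --- impossible, since $t$ is the first bad move --- or via a move onto $B_{end}\cup\{e_B\}$, which would require $A_{start}$ to be empty and hence $M_A=2k$ at that earlier time, a case already handled in Section~\ref{sec:well-organized-conf}. Hence the token on $e_B$ is one of the tokens that started on $A_{start}$, that is, a blue token. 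For the freezing part, let $t'$ be the first time after $t$ at which the bad token leaves $b_{e,i}$, so $b_{e,i}\in C_\ell$ for every $t+1\le\ell\le t'$. Every neighbour of $e_B$ belongs to $A_{start}\cup\{s_A\}$, and every vertex of $A_{start}\cup\{s_A\}$ is adjacent to $b_{e,i}$; hence by independence $N_{G'}(e_B)\cap C_\ell=\emptyset$, so the token on $e_B$ has no neighbour to slide to and does not move before time $t'$ --- that is, not before the bad token moves again.

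The only delicate point is the colour determination in the second paragraph, where one must carefully exclude the lock token that starts on $s_A$ from being the token found on $e_B$; everything else is an immediate consequence of the complete bipartite structure of $G'$ on $A_{start}\cup B_{end}\cup\{s_A,e_B\}$ together with the definition of a well-organized configuration.
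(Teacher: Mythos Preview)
The paper states this observation without proof, so there is nothing to compare your approach against; I evaluate your argument on its own.

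Your first paragraph and the freezing argument at the end of the second paragraph are correct and clean: the complete bipartite structure on $A_{start}\cup B_{end}\cup\{s_A,e_B\}$ forces $A_{start}\cup\{s_A\}$ to be empty in $C_t$, hence a token sits on $e_B$; and as long as the bad token stays on $b_{e,i}$ all of $N(e_B)=A_{start}\cup\{s_A\}$ is dominated, so that token is stuck. This is exactly what is needed downstream (Observation~\ref{obs:bad-token-moves-back} only uses that \emph{some} token is on $e_B$ at time $t'$).

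Your argument that this token is \emph{blue}, however, is flawed. You write that if the lock token had moved from $s_A$ to $e_B$ at an earlier time, then $A_{start}$ would have been empty and ``hence $M_A=2k$ at that earlier time, a case already handled.'' But $A_{start}$ empty in a well-organized configuration means $M_B=2k$, not $M_A=2k$; and $M_B=2k$ is not a handled case --- it is precisely what Observation~\ref{obs:bad-move-t23} asserts for $C_t$. So nothing prevents the lock token from having slid $s_A\to e_B$ before time $t$. In fact, since the prefix $C_1,\ldots,C_t$ of a shortest sequence must itself be shortest and $C_t$ is well-organized, Observation~\ref{obs:min-move-wo} forces each token to move at most once up to time $t$; the lock token started on $s_A$, is no longer there, can only reach $B_{end}\cup\{e_B\}$ in one step, and there is no token on $B_{end}$ by a count of the $4k+2$ tokens in $C_t$ --- so the token on $e_B$ \emph{is} the lock token, not a blue one. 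The word ``blue'' in the statement appears to be a slip in the paper; the colour is never used afterwards, and your proof of the substantive content is sound.
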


In order to show that the configuration $C_{t'+1}$ is well-organized, we need a lemma similar to Lemma \ref{lem:tok-A-froz}:
\begin{lemma}\label{lem:tok-A-froz-bis}
Let $t' \geq t+1$ be the smallest integer such that the move between $C_{t'}$ and $C_{t'+1}$ is a move of the bad token. Then any configuration $C_{\ell}$ with $t+1 \leq \ell \leq t'$ satisfies the following conditions:
\begin{enumerate}
        \item If $p < q$ is even, there is either a red token on $b_{s,p}$ or a red token on $v^A_{p}$, or a red token on $b_{e,p}$.
        \item If $p < q$ is odd, there is either a red token on $b_{s,p}$, or a red token on $v^A_p$, or a red token in $N(v^A_p) \cap B$, or red token on $b_{e,p}$.
        \item For every $p \leq 2k$ there is a blue token on $v^B_p$.
        \item For every $p \geq q$ there is a red token on $b_{s,p}$ and there is a red token on $s_B$.
\end{enumerate}
\end{lemma}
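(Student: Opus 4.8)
The plan is to mirror the proof of Lemma~\ref{lem:tok-A-froz}: choose a minimal counterexample and run a case analysis on the last move before the first violated condition. Recall that, by Observations~\ref{obs:bad-move-t23},~\ref{obs:bad-move-t2} and~\ref{obs:bad-move-t3}, we have $M_B(C_t)=2k$ and the bad move at time $t$ is either $(t:A_i\rightarrow b_{e,q})$ with $q=i$ (type~$2$) or $(t:s_A\rightarrow b_{e,q})$ with $q>i$ (type~$3$), so the bad token sits on $b_{e,q}$ in $C_{t+1}$ and, crucially, dominates the vertex $v^A_q$. I first check that $C_{t+1}$ satisfies conditions~$1$--$4$: conditions~$1$ and~$2$ hold because $C_t$ is well-organized, so every $A_p$ with $p\le i$ carries a red token on $v^A_p$ and every $b_{s,p}$ with $p>i$ carries a red token, none of which is moved; condition~$3$ holds because $M_B(C_t)=2k$ puts a blue token on each $v^B_p$ in $C_t$ and the bad move does not touch $B$; and condition~$4$ holds because the red tokens on $b_{s,p}$ ($p>i$) are untouched while, by Observation~\ref{obs:token-sb-ct}, there is an unmoved red token on $s_B$ (for type~$2$ the index $p=q=i$ is accounted for by the bad token now on $b_{e,q}$, which plays the role of a token on $b_{s,q}$ in the freezing arguments below).

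Now suppose some configuration among $C_{t+1},\dots,C_{t'}$ violates a condition, and let $\tau$ be minimal with $C_\tau$ violating one; since $C_{t+1}$ does not, we have $t+1\le\tau-1<t'$, the configuration $C_{\tau-1}$ satisfies all four conditions, and the bad token, not having moved yet, is still on $b_{e,q}$ throughout $\langle C_{t+1},\dots,C_\tau\rangle$. By Observation~\ref{obs:token-sb-frozen} the token on $s_B$ is then frozen on this interval, and by Observation~\ref{obs:type-2-eb} so is the token on $e_B$ in the type-$2$ case. I then split on the move performed at time $\tau-1$, exactly as in the three-condition analysis of Lemma~\ref{lem:tok-A-froz}. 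If condition~$3$ fails, a blue token left some $v^B_{p_0}$; it cannot reach $A_{start}$ because every $a_{s,x}$ is dominated by the blue tokens resident on $B$ (condition~$3$), it cannot reach $A_{end}$ because the red tokens on $B_{start}$ (present by condition~$4$) dominate all of $A_{end}$ via the complete bipartite graph $G'[A_{end}\cup B_{start}\cup\{s_B,e_A\}]$, and the only vertex of any $A_x$ avoiding the resident blue token is $v^A_x$, which is occupied or dominated by a red token for $x<q$ (conditions~$1$--$2$), by the bad token for $x=q$, and by the red token on $b_{s,x}$ for $x>q$ (condition~$4$); hence no legal move exists, a contradiction. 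If condition~$4$ fails, then since the tokens on $s_B$ and on $b_{s,p}$ ($p>q$) are frozen a red token left $b_{s,q}$, whose only destinations lie in $\bigcup_{x\ge q}A_x$ and are blocked exactly as above, again a contradiction. Finally, if condition~$1$ or~$2$ fails for some $p_0<q$, I reuse verbatim the three subcases of the proof of Lemma~\ref{lem:tok-A-froz}: using conditions~$1$--$4$ of $C_{\tau-1}$, Observation~\ref{obs:wo-same-vertex-bis}, and the fact that the blue token on $v^B_{p_0}$ dominates $A_{p_0}\setminus\{v^A_{p_0}\}$, a red token on $b_{s,p_0}$ or $b_{e,p_0}$ can only move to $v^A_{p_0}$, a red token on $v^A_{p_0}$ can only move to $b_{s,p_0}$, $b_{e,p_0}$ or (when $p_0$ is odd) into $N(v^A_{p_0})\cap B$, and a red token in $N(v^A_{p_0})\cap B$ can only move back to $v^A_{p_0}$; so in every case $C_\tau$ still satisfies the condition, a contradiction.

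I expect the case analysis of the preceding paragraph to be the main obstacle: each of the (roughly six) kinds of moves requires ruling out every alternative destination using the four conditions of $C_{\tau-1}$, and the bookkeeping is more delicate than in Lemma~\ref{lem:tok-A-froz} because the lock freezing the red tokens is now a single bad token parked on $b_{e,q}$, together with the tokens on $s_B$ and (in the type-$2$ case) $e_B$, rather than a doubly-occupied set $B_q$; in particular one must treat the boundary index $p=q$, equal to $i$ for type-$2$ moves, separately throughout, and it is there that the bad token's domination of $v^A_q$ is used.
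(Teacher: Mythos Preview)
Your overall strategy---minimal counterexample followed by a four-way case split on which condition first fails---matches the paper exactly, and your verification that $C_{t+1}$ satisfies all four conditions is careful (you even flag the tension in condition~4 at $p=q=i$ for type~2 that the paper itself glosses over). There are, however, two places where the sketch is too quick compared with the paper's proof.

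First, in the condition~3 case your claim that ``the only vertex of any $A_x$ avoiding the resident blue token is $v^A_x$'' silently assumes $p_0\neq x$: when $p_0=x$ the token on $v^B_x$ is precisely the one moving, so the only blue token still covering $A_x$ sits on $v^B_{x'}$ (the paired index), and that token leaves uncovered the copy of $v^B_{x'}$ in $A_x$, which equals $v^A_x$ only if $v^B_x\sim v^B_{x'}$. For type~3 with $i<x<q$ this equivalence need not hold (Observation~\ref{obs:wo-same-vertex-bis} requires tokens on $A_x$ and $A_{x'}$ as well, and there are none). The paper handles exactly this by splitting on whether $v^B_x\sim v^B_{x+1}$; in the non-equivalent subcase it first argues that the red token for the even index $x{+}1$ cannot sit on $v^A_{x+1}$ (it is dominated by $v^B_x$) and hence must be on $b_{s,x+1}$ or $b_{e,x+1}$, and uses that to close the case.

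Second, conditions~1 and~2 are not a verbatim reuse of Lemma~\ref{lem:tok-A-froz}. There, the subcase ``red token in $N(v^A_{p_0})\cap B$'' is disposed of using the doubly-occupied set $B_q$ together with that lemma's condition~3 (blue tokens on $\{a_{s,p}\}\cup B_p$). Here the freezing mechanism is different---a single bad token on $b_{e,q}$, with condition~3 now asserting blue tokens on \emph{every} $v^B_p$---and the paper's argument for this subcase is correspondingly longer: it first rules out $x$ even, then forces $x<p_0$, and finally argues that no earlier step could have carried the red token from $v^A_{p_0}$ to its current position (so a second witness for condition~2 is still present). Your one-line ``can only move back to $v^A_{p_0}$'' skips this.
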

\begin{proof}
As the configuration $C_t$ is well-organized and $(t: A_i \rightarrow B_{start})$, conditions $1$ to $4$ are satisfied by configuration $C_{t+1}$. As for the proof of Lemma \ref{lem:tok-A-froz}, we suppose for a contradiction that there exist a time $t+1 < \tau < t'$ such that $C_{\tau}$ satisfies conditions $1$ to $4$ and $C_{\tau+1}$ does not. We consider the smallest such time $\tau$.

\begin{enumerate}
    \item \textbf{$C_{\tau+1}$ does not satisfy condition $1$.} Since $C_{\tau}$ satisfies the four conditions, there exist exactly one even integer $p_0 < q$ for which condition $1$ is not satisfied in $C_{\tau+1}$.
    \begin{enumerate}
        \item Suppose there is a red token on $b_{s,p_0}$ in $C_{\tau}$. Since $C_{\tau}$ satisfies condition $4$, this token cannot move to $B_{p}$ for any $p > q$ nor it can move to $A_{end}$. Since there is a token on $b_{e,q}$, it cannot move to $A_q$ either. So we must have $(\tau: b_{s,p_0} \rightarrow A_x)$ for some $p_0 \leq x < q$. Suppose w.l.o.g that $x$ is even: by condition $3$, there is a token on both $v^B_x$ and $v^B_{x-1}$ with $v^B_x \sim v^B_{x-1}$, thus we have $(\tau: b_{s,p_0} \rightarrow v^A_x)$. But then if $x \neq p_0$, condition $1$ and $2$ ensure that there is either a token on $v^A_x$, $b_{e,x}$, $b_{s,x}$ or on $N(v^A_x) \cap B$, a contradiction.
        \item Suppose there is a red token on $v^A_{p_0}$ in $C_{\tau}$. Since $C_{\tau}$ satisfies condition $2$, there is also a token on $v^A_{p_0-1}$ and by condition $3$ we have $v^A_{p_0-1} \sim v^A_{p_0-1}$. So the token on $v^A_{p_0}$ can either move to $b_{e, p_0}$ or $b_{s, p_0}$ since any other of its neighbors in $B_{end}$ or $B_{start}$ dominates $A_{p_0 - 1}$, and condition $1$ remains satisfied.
        \item Suppose there is a red token on $b_{e, p_0}$ in $C_{\tau}$. Since there is a token on $b_{e, q}$, the token on $b_{e, p_0}$ cannot move to $A_{end}$, and it must move to $A$. Since the vertices $b_{s, p_0}$ and $b_{e, p_0}$ share the same neighborhood in $A$, we can apply the same arguments as for case $1.a$ showing that $(\tau: b_{e, p_0} \rightarrow v^A_{p_0})$, and condition $1$ remains satisfied.

    \end{enumerate}
    \item \textbf{$C_{\tau+1}$ does not satisfy condition $2$.} Since $C_{\tau}$ satisfies the four conditions, there exist exactly one odd integer $p_0 < q$ for which condition $2$ is not satisfied in $C_{\tau+1}$. First note that the proof for case $1$ do not make use at any point of the parity of $p_0$. Hence if there is a token on $b_{s,p_0}$, $v^A_{p_0}$ or on $b_{e,p_0}$  case $1.a$, $1.b$ and $1.c$ apply respectively. Only the case where there is a red token on $N(v^A_{p_0}) \cap B$ in $C_{\tau}$ remains to be considered. Let $u$ denote the vertex on which the token is. Since condition $3$ is satisfied by $C_{\tau}$, this token cannot move to $A_{start}$ nor
    $A_{end}$, so we have $(\tau: N(u) \rightarrow v^A_x)$ for some $x$. Suppose that $A_x$ and $A_{p_0}$ are not copies of the same set. First note that $x$ must be odd, for otherwise there must be a token on $b_{s,x}$ or $b_{e,x}$ at time $\tau$ by condition $1$ and it is not possible to move from $B$ to $A_x$ at time $\tau$. Furthermore we must have $x < p$: if not, then by condition $1$ there must be a token on $b_{s,p_0+1}$ or $b_{e,p_0+1}$ which dominates $A_x$ since $p_0$ is odd and since there can be no token on $v^A_{p_0+1}$.\\
    There can be no move from $v^A_{p_0}$ to $u$ at any time between $t$ and $\tau$. Suppose otherwise and consider a time $t_0$ such that $t < t_0 < \tau$ at which such a move occurs: in $C_{t_0}$ there is a token on $v^A_{p_0}$ so there cannot be any token on $b_{e,x+1}$ nor $b_{s,x+1}$ hence there is a token on $v^A_{x+1} \in N(u)$ by condition $1$. It follows that, since condition $2$ is satisfied for any time $t \leq \tau$, there must be a token either on $b_{s,p_0}$ or $b_{e,p_0}$ or on a vertex of $N(v^A_{p_0}) \cap B$ which is distinct from $u$. But then after the token on $u$ moves to $v^A_x$ at time $\tau$ condition $2$ is still satisfied, a contradiction.

    \item \textbf{$C_{\tau+1}$ does not satisfy condition $3$.} Since $C_{\tau}$ satisfies condition $3$, there exists a unique $p_0 \leq 2k$ such that at time $\tau$ a blue token moves from $v^B_{p_0}$. Since condition $3$ is satisfied and since there is a token on $b_{e,q}$, this token cannot move to $A_{end}$ nor $A_{start}$. So we can suppose that $(\tau: v^B_{p_0} \rightarrow A_x)$ for some $x$ odd without loss of generality. If $v^B_x \nsim v^B_{x+1}$ then there must be a red token on $b_{s,x+1}$ or $b_{e,x+1}$ by condition $1$ since there can be no token on $v^A_{x+1}$. So it must be that $v^B_x \sim v^B_{x+1}$: but then again by condition $1$ and $2$ there must be either a red token on $b_{s,x}$, $b_{e,x}$ or on $N(v^A_x)\cap B$ and it follows that no blue token can move to $A_x$.

    \item \textbf{$C_{\tau+1}$ does not satisfy condition $4$.} As long as there are some tokens on $B_{start}$, the red token on $s_B$ cannot move, so there exists a unique $p_0 \geq q$ such that at time $\tau$ a red token moves from $b_{s,p_0}$. Since there is a token on $s_B$ this token cannot move to $A_{start}$ and thus can only move to $A$. By construction it can only move to $A_x$ for some $x \geq q$ and any such set is dominated by the token on $b_{e,q}$, a contradiction.
\end{enumerate}
\end{proof}
As long as the bad token does not move again after time $t$, condition $4$ of Lemma \ref{lem:tok-A-froz-bis} ensures that the red tokens on $B_{start} \cup \{s_B\}$ remain frozen. So there must exist a time $t' > t$ such that the bad token moves at time $t'$. The following observation actually show that this token moves back to the position it had in $C_t$:
\begin{observation}\label{obs:bad-token-moves-back}
Let $t' > t$ denote the time at which the bad token moves again. We have the following:
\begin{enumerate}
    \item $(t': b_{e,q} \rightarrow v^A_i)$ if the move at time $t$ is a bad move of type $2$.
    \item $(t': b_{e,q} \rightarrow s_A)$ if the move at time $t$ is a bad move of type $3$.
\end{enumerate}
\end{observation}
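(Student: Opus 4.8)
The plan is to pin down, directly, the set of vertices the bad token is allowed to slide onto at time $t'$, and to show that this set is a single vertex — namely $v^A_i$ for a bad move of type~$2$ and $s_A$ for a bad move of type~$3$. Throughout the configurations $C_{t+1},\dots,C_{t'}$ the bad token sits on $b_{e,q}$ (it slid there at time $t$ by Observations~\ref{obs:bad-move-t2} and~\ref{obs:bad-move-t3}, and by the choice of $t'$ it does not move in between), and we already know it must slide at time $t'$. By construction the only neighbours of $b_{e,q}$ lie in $\mathcal{A}$, and in fact $N(b_{e,q})=A_{start}\cup\{s_A\}\cup\bigcup_{p\geq q}A_p$, since $b_{e,q}$ is complete to $A_{start}\cup\{s_A\}$ and to $A-\bigcup_{r<q}A_r$. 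So it suffices to exhibit, for every vertex of $N(b_{e,q})$ other than the claimed target, a token of $C_{t'}$ distinct from the bad token that dominates it; the bad token is then forced onto the one remaining vertex, which a posteriori proves that this vertex is free.

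First we rule out the sets $A_p$. By Lemma~\ref{lem:tok-A-froz-bis} the configuration $C_{t'}$ satisfies conditions~$1$--$4$; in particular there is a red token on $b_{s,p}$ for every $p>i$, and since $b_{s,p}$ is complete to $A-\bigcup_{r<p}A_r\supseteq A_p$, every vertex of $A_p$ is dominated by that token. For a type~$3$ move $q>i$, so this already accounts for all of $\bigcup_{p\geq q}A_p$; for a type~$2$ move $q=i$, and only $A_i$ remains to be handled. Next we rule out $A_{start}$: by condition~$3$ of Lemma~\ref{lem:tok-A-froz-bis} there is a blue token on $v^B_p$ for every $p\leq 2k$, and since $a_{s,p}$ is complete to $B-\bigcup_{r<p}B_r\ni v^B_p$, every vertex $a_{s,p}$ of $A_{start}$ is dominated. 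In the type~$3$ case this leaves $s_A$ as the only candidate in $N(b_{e,q})$, so $(t':b_{e,q}\to s_A)$.

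For a type~$2$ move it remains to exclude $s_A$ and to identify the destination inside $A_i$. By Observation~\ref{obs:type-2-eb} there is a blue token on $e_B$ in $C_t$ which does not move before the bad token moves again, hence it still lies on $e_B$ in $C_{t'}$; since $e_B$ is complete to $A_{start}\cup\{s_A\}$, the vertex $s_A$ is dominated and is excluded. Finally, the blue token on $v^B_i$ (condition~$3$) is adjacent to every vertex of $A_i$ except the one equivalent to $v^B_i$, which by definition is $v^A_i$; hence every vertex of $A_i$ but $v^A_i$ is dominated, and the only legal slide left for the bad token is $b_{e,i}\to v^A_i$.

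The main obstacle is not in this observation but in the frozenness lemma it relies on: the entire force of the argument is that Lemma~\ref{lem:tok-A-froz-bis} pins down the positions of all the other tokens along $[t+1,t']$, and that lemma is where the real case analysis lives. The only local point to be careful about is the index range of condition~$4$: for a type~$2$ move $q=i$, but in $C_t$ (which is well-organized with $M_B(C_t)=2k$ by Observation~\ref{obs:bad-move-t23}) the red tokens of $B_{start}$ sit exactly on $b_{s,i+1},\dots,b_{s,2k}$, so condition~$4$ is invoked only for $p>i$ — which is precisely what leaves $A_i$ undominated except at $v^A_i$. Note also that the ``forced move'' phrasing makes a separate check that $v^A_i$ (resp.\ $s_A$) is unoccupied in $C_{t'}$ unnecessary: a legal slide of the bad token exists at time $t'$, and once every alternative has been excluded, that vertex must be the one it slides to.
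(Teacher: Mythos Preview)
Your proof is correct and follows essentially the same line as the paper's: you identify $N(b_{e,q})$, then use condition~3 of Lemma~\ref{lem:tok-A-froz-bis} to kill $A_{start}$, condition~4 to kill the high-index $A_p$'s, Observation~\ref{obs:type-2-eb} to kill $s_A$ in the type~2 case, and condition~3 again to pin the destination in $A_i$ down to $v^A_i$. Your explicit remark about the index range in condition~4 (only $p>i$ is actually available for type~2, which is exactly enough) is a useful clarification of a point the paper glosses over.
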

\begin{proof}
We prove the two statements separately:
\begin{enumerate}
    \item \textbf{The move at time $t$ is a bad move of type 2.} By Lemma \ref{lem:tok-A-froz-bis}.$3$ there is a blue token on $B_p$ for every $p \leq 2k$ at time $t'$ so the bad token cannot move to $A_{start}$. Furthermore by Observation \ref{obs:type-2-eb} there is a blue token on $e_B$ at time $t'$ so it cannot move to $s_A$ either. By Observation \ref{obs:bad-move-t2} we have $q = i$ and thus $(t': b_{e,q} \rightarrow A_x)$ for some $x \geq i$. By Lemma \ref{lem:tok-A-froz-bis}.4 there are red tokens on $b_{s,p}$ for every $p \geq i$ so it must be that $x = i$. Finally Lemma \ref{lem:tok-A-froz-bis}.3 ensures that there is a token on $v^B_i$ in $C_{t'}$ and since $v^A_i \sim v^B_i$ it follows that $(t': b_{e,q} \rightarrow v^A_i$) is the only possible move for the bad token at time $t'$.
    \item \textbf{The move at time $t$ is a bad move of type 3.} As for the previous case, Lemma \ref{lem:tok-A-froz-bis}.$3$ ensures that the bad token cannot move to $A_{start}$. By Observation \ref{obs:bad-move-t3} we have $q > i$ and by Lemma \ref{lem:tok-A-froz-bis}.4 there is a token on $b_{s,q}$ so the bad token cannot move to $A$. It follows that $(t': b_{e,q} \rightarrow s_A)$ is the only possible move for the bad token at time $t'$.
\end{enumerate}
\end{proof}

The following Observation allows us to conclude about the bad moves of type 2:
\begin{observation}\label{obs:bad-move-t2-conclusion}
Let $t' \geq t+1$ denote the time at which the bad token moves again. If the move at time $t$ is a bad move of type $2$, then the configuration $C_{t'+1}$ is well-organized.
\end{observation}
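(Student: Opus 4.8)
The plan is to pin down the configuration $C_{t'}$ closely enough that the single move $(t':b_{e,i}\rightarrow v^A_i)$ supplied by Observation~\ref{obs:bad-token-moves-back}.1 is seen to turn it into a well-organized configuration. Recall that, by Observation~\ref{obs:bad-move-t2}, a type-$2$ bad move is $(t:A_i\rightarrow b_{e,i})$, so $q=i$; by Observation~\ref{obs:bad-move-t23} we have $M_B(C_t)=2k$; and we may assume $i:=M_A(C_t)<2k$, as otherwise $C_t$ already yields a multicolored independent set of $G$. Since $C_{t'+1}=C_{t'}\setminus\{b_{e,i}\}\cup\{v^A_i\}$, it suffices to check the four defining properties of a well-organized configuration for $C_{t'+1}$.

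First I would invoke Lemma~\ref{lem:tok-A-froz-bis}, which holds for every configuration of $S$ lying strictly between $C_{t+1}$ and $C_{t'}$ and whose assertions about tokens other than the bad token survive the single move made at time $t'$. Its condition~(3) places exactly one blue token on each $B_p$ with $p\le 2k$, so $M_B(C_{t'+1})=2k$ and the $B$-side of the well-organized conditions holds automatically. Its condition~(4), together with Observations~\ref{obs:token-sb-ct} and~\ref{obs:token-sb-frozen}, keeps a red token on each $b_{s,p}$ with $p>i$ and a token on $s_B$ throughout, while Observation~\ref{obs:type-2-eb} keeps a token on $e_B$; thus in $C_{t'+1}$ both lock slots $\{s_A,e_B\}$ and $\{s_B,e_A\}$ are occupied, and no $A_p$ with $p>i$ carries a token (again by Lemma~\ref{lem:tok-A-froz-bis}).

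The heart of the argument is the $A$-side for $p\le i$. The bad token lies on $v^A_i\in A_i$ in $C_{t'+1}$, settling the index $i$. For $p<i$, note that $b_{s,p}$ (for every $p\le i$) and $b_{e,p}$ (for every $p<i$) are all complete to $A_i$; since $(t':b_{e,i}\rightarrow v^A_i)$ is a valid slide of the sequence $S$, none of these vertices can carry a token in $C_{t'}$, hence none does in $C_{t'+1}$. Feeding this into conditions~(1)--(2) of Lemma~\ref{lem:tok-A-froz-bis} confines the index-$p$ red token to $v^A_p$ when $p$ is even, and to $\{v^A_p\}\cup(N(v^A_p)\cap B)$ when $p$ is odd. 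To remove the residual set $N(v^A_p)\cap B$ for odd $p$, I would pair $p$ with $p+1$: by Observation~\ref{obs:wo-same-vertex-bis} applied to $C_t$ one has $v^A_p\sim v^A_{p+1}$, equivalent vertices of $A$ have the same neighborhood in $B$, and the index-$(p+1)$ red token, already known to sit on $v^A_{p+1}$, then forbids any token on $N(v^A_{p+1})\cap B=N(v^A_p)\cap B$, so the index-$p$ token must be on $v^A_p$. The one delicate case is the boundary: when $i$ is even, the odd index $i-1$ pairs not with an index below $i$ but with $i$ itself, whose slot is held by the bad token on $v^A_i$ (equivalent to $v^A_{i-1}$, again by Observation~\ref{obs:wo-same-vertex-bis}), so this instance of the pairing must be run in $C_{t'+1}$ rather than in $C_{t'}$, where $A_i$ is empty. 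Granting this, every $A_p$ with $p\le i$ carries exactly one token, and $C_{t'+1}$ is well-organized with $M_A(C_{t'+1})=i$ and $M_B(C_{t'+1})=2k$.

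I expect the main obstacle to be exactly the third step: verifying that the validity of the return move genuinely clears the $b_{s,\cdot}$ and $b_{e,\cdot}$ vertices of small index, running the even/odd pairing for both parities of $i$ (unlike for type-$1$ bad moves, $i$ need not be odd here), and dispatching the $i-1$ boundary case. Individually these are routine, but the index bookkeeping is where a slip would most easily hide. A secondary point worth re-examining is that Observations~\ref{obs:type-2-eb}, \ref{obs:token-sb-ct} and~\ref{obs:token-sb-frozen} indeed pin the two lock tokens over the whole interval from $t$ to $t'+1$, and not merely up to $t'$, which holds because the only move performed at time $t'$ is that of the bad token.
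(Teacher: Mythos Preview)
Your proposal is correct and follows essentially the same route as the paper: invoke Observation~\ref{obs:bad-token-moves-back}.1 for the return move, use Lemma~\ref{lem:tok-A-froz-bis}(3)--(4) to settle the $B$-side and the $b_{s,\cdot}$ tokens, observe that the validity of the slide into $v^A_i$ forces all $b_{s,p},b_{e,p}$ with $p\le i$ to be empty, and then use conditions (1)--(2) together with the odd/even pairing $v^A_p\sim v^A_{p+1}$ to place the remaining red tokens on $v^A_p$. You are in fact more explicit than the paper on two points it leaves implicit: the verification of the lock slots $\{s_A,e_B\}$ and $\{s_B,e_A\}$ at time $t'+1$, and the separate treatment of the boundary pair $(i-1,i)$ when $i$ is even.
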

\begin{proof}
By Observation \ref{obs:bad-token-moves-back}.1, we have that $(t': b_{e,i} \rightarrow v^A_i)$. Since $q = i$ by Observation \ref{obs:bad-move-t2}, Lemma \ref{lem:tok-A-froz-bis} ensures that there is a red token on $b_{s,p}$ for every $p \geq i$ and a blue token on $B_x$ for every $x \leq 2k$ a time $t'$. It remains to show that there is a red token on $A_y$ for every $y \leq i$ to obtain a well-organized configuration. Since there is a token on $A_i$ in $C_{t'+1}$ there can be no token on $b_{s,p}$ nor $b_{e,p}$ for any $p \leq i$ and condition $1$ of Lemma \ref{lem:tok-A-froz-bis} then ensures that there is a red token on $v^A_p$ for every even $p \leq i$. Since furthermore $v^A_p \sim v^A_{p+1}$ for every odd $p < i$, there can be no token on $N(v^A_{p+1}) \cap B = N(v^A_p) \cap B$ for any such $p$, and condition $2$ of Lemma \ref{lem:tok-A-froz-bis} ensures that there is a red token on $v^A_p$ for every odd $p < i$.
\end{proof}
As for bad moves of type $1$, there is a token that moves at least three times to reach the well-organized configuration $C_{t'+1}$, a contradiction with Observation \ref{obs:min-move-wo}.

It remains to check the case of bad moves of type 3. If $(t: s_A \rightarrow b_{e,q})$ we proceed as follows: we replace the move at time $t$ by the move $(t: s_A \rightarrow e_B)$ and the move at time $t'$ by $(t': e_B \rightarrow s_A)$. Since $N(e_B) \subseteq N(b_{e,q})$, the moves at times $t, t+1, \ldots, t'$ remain valid. Furthermore by Observation \ref{obs:bad-token-moves-back}.2 we obtain the same independent set at time $t'+1$. Although the modified sequence is not shorter, it does not contain any bad move of type $3$: either we obtain a well-organized configuration at time $t'+1$ and we are done, or there is a bad move of type $1$ or $2$ between time $t+1$ and $t'$, in which case one of the previous cases apply. It follows that the sequence contains no bad move of type 3. The proof of Lemma \ref{lem:no-bad-move-shortest} is now straightforward:
\begin{proof}
Let $S$ be a shortest reconfiguration sequence from $I_s$ to $I_e$. In section \ref{sec:bad-move-t1} we showed that $S$ contains no bad move of type $1$, and we showed that $S$ contains no bad moves of type $2$ or $3$ in section \ref{sec:bad-move-t2-t3}. Then by Observation \ref{obs:well-nonwell} it follows that $S$ contains no bad move.
\end{proof}

\bibliographystyle{plain}
\bibliography{paper-references}
\end{document}